\theoremstyle{definition} 
\newtheorem{definition}{Definition}[section] 
\theoremstyle{plain} 
\newtheorem{theorem}{Theorem}[section] 
\newtheorem{proposition}[theorem]{Proposition} 
\newtheorem{corollary}[theorem]{Corollary} 
\newtheorem{lemma}[theorem]{Lemma} 
\theoremstyle{definition} 
\newtheorem{example}[theorem]{Example}
\theoremstyle{remark} 
\newtheorem{remark}[theorem]{Remark}
\newcommand{\R}{\mathbb R}
\newcommand{\V}{\mathbb V}
\newcommand{\W}{\mathbb W}
\newcommand{\Z}{\mathbb Z}
\newcommand{\supp}[1] { \mathrm{supp}(#1) } 
\newcommand{\loc}[2] { {#1}_{#2} } 
\newcommand{\locprob}[2] { \overline{{#1}_{#2}} } 
\newcommand{\tensor}[1] { {\left(#1\right)}^{\otimes 2} } 
\newcommand{\tensoro}[1] { {#1}^{\otimes 2} } 
\newcommand{\loccov}[2] { \Sigma_{#1}(#2) } 
\newcommand{\loccovnorm}[2] { \overline{\Sigma}_{#1}(#2) } 
\newcommand{\ball}[2] { \overline{\BB}\left(#1,#2\right) } 
\newcommand{\openball}[2] { \BB\left(#1,#2\right) } 
\newcommand{\openballM}[3] { \BB_{#3}\left(#1,#2\right) } 
\newcommand{\closedball}[2] { \overline{\BB}\left(#1,#2\right) } 
\newcommand{\closedballB} { \overline{\BB} } 
\newcommand{\openballG}[3] { \BB_{#3}\left(#1,#2\right) } 
\newcommand{\closedballG}[3] { \overline{\BB}_{#3}\left(#1,#2\right) } 
\newcommand{\closedballM}[3] { \overline{\BB}_{#3}\left(#1,#2\right) } 
\newcommand{\sphere}[2] { \partial\mathcal{B}\left(#1,#2\right) } 
\newcommand{\sphereG}[3] { \partial\mathcal{B}_{#3}\left(#1,#2\right) } 
\newcommand{\indicatrice}[1] { 1_{#1} } 
\newcommand{\normalreacho}[1] {  \lambda_0(#1) } 
\newcommand{\normalreach}[1] {  \lambda(#1) } 
\newcommand{\normalreachmap} {  \lambda } 
\newcommand{\normalreachmapo} {  \lambda_0 } 
\newcommand{\reach}[1] { \mathrm{reach}(#1) } 
\newcommand{\Tan}[2] { \mathrm{Tan}(#1, #2) } 
\newcommand{\dd} { \mathrm{d} } 
\newcommand{\Grass}[2] { \mathcal{G}_{#1}(#2) } 
\newcommand{\BB} { \mathcal{B} } 
\newcommand{\HH} { \mathcal{H} } 
\newcommand{\CC} { \mathcal{C} } 
\newcommand{\MM} { \mathcal{M} } 
\newcommand{\MMo} { \mathcal{M}_0 } 
\newcommand{\NNo} { \mathcal{N}_0 } 
\newcommand{\MMcheck} { \check \MM } 
\newcommand{\muo} { \mu_0 } 
\newcommand{\nuo} { \nu_0 } 
\newcommand{\mucheck} { \check \mu } 
\newcommand{\nucheck} { \check \nu } 
\newcommand{\muchecko} { \check{\mu}_0 } 
\newcommand{\imm} { u } 
\newcommand{\immcheck} { \check u } 
\newcommand{\matrixspace}[1] { \mathrm{M}(#1) } 
\newcommand{\Wasssymbol}[1] { \mathrm{W}_{#1} } 
\newcommand{\Wassun}[2] { \Wasssymbol{1}\left(#1,#2\right) } 
\newcommand{\Wassdeux}[2] { \Wasssymbol{2}\left(#1,#2\right) } 
\newcommand{\Wassp}[2] { \Wasssymbol{p}\left(#1,#2\right) } 
\newcommand{\Wassq}[2] { \Wasssymbol{q}\left(#1,#2\right) } 
\newcommand{\gammaWassersteinp} { \mathrm{W}_{p,\gamma} } 
\newcommand{\gammaWassersteindeux} { \mathrm{W}_{2,\gamma} } 
\newcommand{\fmax} { f_{\mathrm{max}} } 
\newcommand{\fmin} { f_{\mathrm{min}} } 
\newcommand{\Jmax} { J_{\mathrm{max}} } 
\newcommand{\Jmin} { J_{\mathrm{min}} } 
\newcommand{\volball}[1] { V_{#1} } 
\newcommand{\volsphere}[1] { S_{#1} } 
\newcommand{\petito}[1] { o(#1) } 
\newcommand{\grando}[1] { O(#1) } 
\newcommand{\eucN}[1] { \left\|#1\right\| } 
\newcommand{\eucNbig}[1] { \bigg\Vert #1 \bigg\Vert } 
\newcommand{\frobN}[1] { \left\|#1\right\|_\mathrm{F} } 
\newcommand{\opN}[1] { \|#1\|_{\mathrm{op}} } 
\newcommand{\gammaN}[1] { \left\|#1\right\|_\gamma } 
\newcommand{\eucP}[2] { \left\langle #1, #2\right\rangle } 
\newcommand{\geoD}[3] { d_{#3}(#1, #2) } 
\newcommand{\transp}[1] { {}^t#1 } 
\newcommand{\outerP}[1] { {#1}^{\otimes 2} } 
\newcommand{\CechF}[1] { V[#1] }
\newcommand{\DTMF}[1] { W[#1] }
\newcommand{\DTMFt}[2] { W^{#2}[#1] }
\newcommand{\DTMFpers}[1] { \W[#1] }
\newcommand{\Hdist}[2] { \mathrm{d}_\mathrm{H}\left(#1, #2\right) } 
\newcommand{\NN} { \mathcal{N} } 
\newcommand{\vbb} { \mathbbm{v} } 
\newcommand{\X} { \mathbb{X} } 
\newcommand{\Y} { \mathbb{Y} } 
\renewcommand{\S}{\ifmmode\operatorname{\mathbb{S}}\else\origS\fi} 
\renewcommand{\P}{\ifmmode\operatorname{\mathbb{P}}\else\origS\fi} 
\renewcommand{\SS} {\ifmmode\operatorname{\mathcal{S}}\else\origS\fi} 
\newcommand{\dist}[2] { \mathrm{dist}\left(#1, #2\right) } 
\newcommand{\med}[1] { \mathrm{med}\left( #1 \right) } 
\newcommand{\barcode}[1] { \mathrm{Barcode}\left(#1\right) } 
\newcommand{\bdist}[2] { \mathrm{d}_\mathrm{b}\left(#1, #2\right) } 
\newcommand{\idist}[2] { \mathrm{d}_\mathrm{i}\left(#1, #2\right) } 
\newcommand{\secondF} { \mathrm{II}} 
\newcommand{\encadrer}[1] {\noindent\begin{center}\fbox{\begin{minipage}{.95\linewidth} #1 \end{minipage}} \end{center} }
\newcommand{\Wdist}[3] { \mathrm{W}_{#1}\left(#2, #3\right) } 
\newcommand{\Wdistgamma}[3] { \mathrm{W}_{#1,\gamma}\left(#2, #3\right) } 
\newcommand{\DTM}[1] { \mathrm{d}_{#1} }
\newcommand{\reachgamma}[1] { \mathrm{reach}_\gamma(#1) } 
\newcommand{\mini}[1] { \min\left(#1\right) }  
\begin{document}
\sloppy 

\begin{titlepage}
\begin{center}
\Large{
\textsc{ Recovering the homology of immersed manifolds }}
\vspace{.3cm}

\large{
Raphaël \textsc{Tinarrage} \\
}
\vspace{.3cm} 
\large{ 
Datashape, Inria Paris-Saclay -- LMO, Université Paris-Saclay
}
\end{center}
\vspace{.2cm} 

\paragraph{Abstract.} 
Given a sample of an abstract manifold immersed in some Euclidean space, we describe a way to recover the singular homology of the original manifold. It consists in estimating its tangent bundle---seen as subset of another Euclidean space---from a measure theoretical point of view, and in applying measure-based filtrations for persistent homology.  
We show that our construction is consistent and stable. The proof relies on two main ingredients.
First, we introduce and study the normal reach, a notion of reach adapted to immersed manifolds. It allows to quantify the deviation of geodesics around self-intersections. 
Secondly, we study the estimation of tangent spaces via local principal component analysis, with respect to the Wasserstein distance.
We illustrate our method on a few synthetic datasets, in the context of homology estimation and transverse manifolds clustering.

\paragraph{Numerical experiments.} 
A Python notebook can be found at \url{https://github.com/raphaeltinarrage/ImmersedManifolds/blob/master/Demo.ipynb}.
Some animations are gathered at \url{https://youtube.com/playlist?list=PL_FkltNTtklDlIFg1djM5XprlL8Ys0hW4}.


\paragraph{MSC codes.} 55N31, 53C42, 53C20, 49Q15, 49Q22, 68U05.

\vspace{.3cm} 
\tableofcontents
\end{titlepage}

\section{Introduction}
\label{intro}

A central challenge in Topological Data Analysis (TDA) consists in estimating the topology of a subset $\MM \subset \R^n$ based on a finite collection of points $X$ that lie in $\MM$ or close to. By estimating the topology of $\MM$, we mean inferring its homotopy type, or more simply inferring its singular homology groups. In what follows, the subset $\MM$ will be referred to as the \emph{underlying space}, and $X$ as the \emph{observation}. 

Inferring the homotopy type of $\MM$ may be done by constructing a homotopy equivalent simplicial complex. 
A usual method consists in considering the union of balls of radius $t \geq 0$ centered around every point of $X$, and in taking the nerve of this covering \cite{Hatcher_Algebraic}. This simplicial complex is called the \emph{\v{C}ech complex} of $X$ with parameter $t$. One can also consider the \emph{Vietoris-Rips complex} of $X$ with parameter $t$, defined as the clique complex of the underlying graph of the previous complex.   
The parameter $t$ is to be chosen in accordance with the Hausdorff distance $\Hdist{X}{\MM}$ and some geometric quantities associated to $\MM$, such as its reach \cite{niyogi2008finding,chazal2008smooth,kim2019homotopy} or its $\mu$-reach \cite{chazal2009sampling,attali2013vietoris}.
Several variations of this construction have been studied, for instance by letting the parameter $r$ vary across the points of $X$ \cite{chazal2008smooth,kim2019homotopy}, by considering ellipsoids instead of balls \cite{kalisnik2020finding}, or by using balls rectricted to $\MM$ \cite{kim2019homotopy}.
Besides the \v{C}ech and the Rips complex, one may also consider the \emph{$\alpha$-shape}, obtained by first building the Delaunay triangulation of $X$, and then keeping simplices that fit in an empty ball of radius $\alpha$. This construction yields a simplicial complex homotopy equivalent to the \v{C}ech complex \cite{edelsbrunner1993union,edelsbrunner1994three}. Developments of this construction include the \emph{witness complex} \cite{de2004topological,attali2007weak}, obtained by choosing a subset of ‘landmark’ points, or the \emph{tangential Delaunay complex} \cite{boissonnat2014manifold}, that incorporates tangent space information.

Besides, the problem of inference of homology groups of $\MM$ can be solved by computing a homotopy equivalent simplicial complex, such as those listed in the previous paragraph. However, other solutions to this problem have been proposed. They often consist in computing the image of the map induced in simplicial homology by a simplicial inclusion $K^s \hookrightarrow K^t$, where $K^s$ (resp. $K^t$) is the \v{C}ech or the Vietoris-Rips complex at time $s$ (resp. $t$).
The parameters $s$ and $t$ are still to be chosen in accordance with the Hausdorff distance $\Hdist{X}{\MM}$ and some geometric quantities of $\MM$, such as its weak feature size \cite{chazal2007stability,Chazal_Towards} or its convexity radius and distorsion \cite{fasy2018reconstruction}.

Another point of view on inference of homology groups, that allows to avoid the selection of the parameters $s$ and $t$, is \emph{persistent homology} \cite{edelsbrunner2010computational,boissonnat2018geometric}. It consists in building from $X$ an algebraic structure, called a \emph{persistence module}, which can be summarized in a \emph{persistence barcode}. The bars of the barcode can be interpreted as homological features of $X$ at different scales. 
These persistence modules are obtained from {\em filtrations}, that is, increasing families of subspaces built on top of $X$.
Among the many filtrations available to the user, the most used are the sublevel sets of the distance function to $X$, its simplicial equivalent the \emph{\v{C}ech filtration}, and its clique-complex version the \emph{Vietoris-Rips filtration}. 
The main theoretical advantage of these filtrations is their stability: small perturbations of $X$ in Hausdorff distance implies only small perturbations of the barcodes in bottleneck distance \cite{Chazal_Persistencemodules}. 
This stability allows to design statistical procedures for inferring the homology groups of $\MM$ from $X$ \cite{bubenik2010statistical,fasy2014confidence,turner2014frechet}.

A critical problem, both in the context of homotopy type inference and homology inference, is the presence of \emph{anomalous points} in $X$, that is, roughly speaking, points that cause the Hausdorff distance $\Hdist{X}{\MM}$ to be large. In presence of anomalous points, the results presented above cannot be used.
Among the attemps that have been made to overcome this issue, the filtration defined by the sublevel sets of the \emph{distance-to-measure} (DTM) introduced in \cite{Chazal_Geometricinference}, and some of its variants \cite{pwz-gikde-15}, have been proven to provide relevant information.
Unfortunately, from a practical perspective, the exact computation of the sublevel sets filtration of the DTM turn out to be far too expensive in most cases. 
To address this problem, the \emph{witnessed $k$-distance} \cite{guibas2013witnessed}, the \emph{weighted Vietoris-Rips complex filtration} \cite{Buchet_Efficient} and the \emph{DTM-filtrations} \cite{anai2020dtm} have been proposed.

In this paper, we address the problem of homotopy type and homological inference, by weakening the assumptions of \cite{niyogi2008finding}, where it is supposed that $\MM$ is a submanifold with positive reach.
Here, we consider that $\MM$ is an immersed manifold, not embedded.
That is to say, we suppose that there exists an abstract $\mathcal{C}^2$-manifold $\MMo$, immersed in the Euclidean space via a $\mathcal{C}^2$-immersion $\imm\colon \MMo \rightarrow \R^n$, whose image is $\MM$.
As before, the observation $X$ is a subset of $\R^n$, that we suppose close to $\MM$ in Hausdorff distance.
Throughout this paper, we will use the example of a circle immersed in the plane as a lemniscate, as represented in Figure \ref{Paper2:fig:lemniscate}.
Being an immersion, $\MM$ may self-intersect, and the sets $\MMo$ and $\MM$ may have different homotopy types.
The \v{C}ech filtration of $\MM$, or $X$, would reveal the homology of $\MM$, not that of $\MMo$.
Consequently, the usual approach based on the \v{C}ech filtration no longer applies here, and new methods must be developed.

\begin{figure}[H]
\centering
\begin{minipage}{.25\linewidth}
\centering
\includegraphics[width=.70\linewidth]{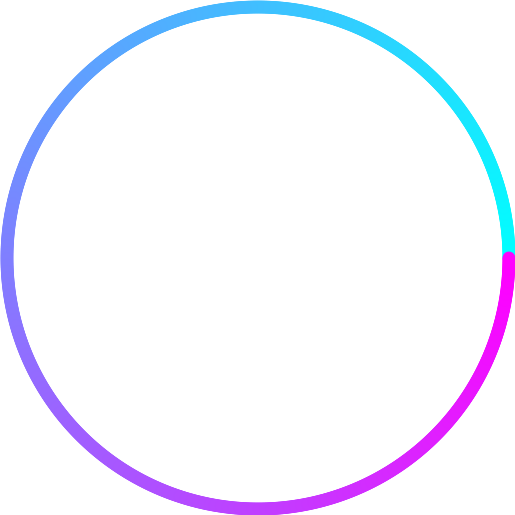}
\\$\MMo$
\end{minipage}
\begin{minipage}{.36\linewidth}
\centering
\vspace{.5cm}
\includegraphics[width=.9\linewidth]{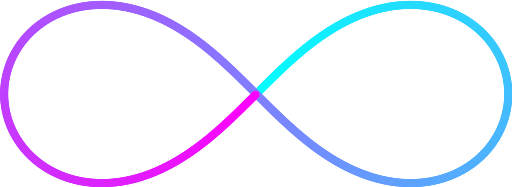}
\\ \vspace{.4cm} $\MM$
\end{minipage}
\begin{minipage}{.36\linewidth}
\centering
\vspace{.5cm}
\includegraphics[width=.9\linewidth]{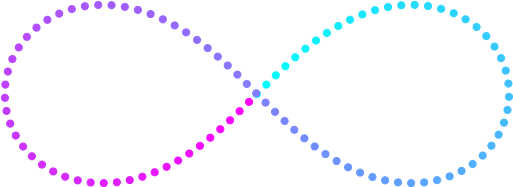}
\\ \vspace{.4cm} $X$
\end{minipage}
\caption{Left: The abstract manifold $\MMo$, a circle.
Middle: The immersion $\MM \subset \mathbb{R}^2$, known as the lemniscate of Bernoulli.
Right: The observation $X$.}
\label{Paper2:fig:lemniscate}
\end{figure}


\paragraph{Previous work.}
Among the works that involve immersed manifolds, let us cite \cite{wang2011spectral,gong2012robust,arias2017spectral}, which are set in the context where $\MM$ is a union of intersecting submanifolds. Hence $\MM$ is not a submanifold itself, but it is an immersed manifold, coming from an abstract manifold $\MMo$, made up of several connected components.
In these three works, the authors propose algorithms to classify the different components of $\MM$. In the context of the present paper, classifying the components of $\MM$ means finding the connected components of $\MMo$.
Each of these algorithms rely on the estimation of tangent spaces, so as to separate the set $\MM$ where it self-intersects. In other words, they estimate the tangent bundle of the manifold. This is a point of view that we also adopt.
We remark that, among these works, only \cite{arias2017spectral} provides mathematical proofs of consistency, for their Algorithms 2 and 3. We compare this method to ours at the end of this subsection.

Another related problem is the one of dimension estimation. In many manifold reconstruction algorithms that involve the estimation of tangent spaces, such as in \cite{wang2011spectral,boissonnat2014manifold,cheng2016tangent,aamari:hal-01521955,aamari2019nonasymptotic}, or in \cite[Algorithm 4]{arias2017spectral}, the dimension $d$ of the underlying manifold $\MM$ is given as an input of the algorithm. If $d$ is not known, a dimension estimator may be used, whether supposing that the input data exactly lies on $\MM$ \cite{singer2012vector,kim2016minimax}, or allowing the data to be corrupted by noise \cite{koltchinskii2000empirical,little2009estimation,mordohai2010dimensionality,gong2012robust}. Another strategy consists in designing tangent spaces estimators that does not require the dimension $d$, such as the empirical covariance matrix \cite[Algorithms 2 and 3]{arias2017spectral}. In the present paper, we generalize the definition of the empirical covariance matrix to any measure input, that we call \emph{local covariance matrices} (see Definition \ref{def:loccov}). We show that it is a consistent estimator of the tangent spaces (Proposition \ref{Paper2:prop:consistency}) and that is is robust to noise (see Equation \eqref{Paper2:eq:estimationloccovrobust}).

Our method is based on the stability of tangent space estimation via local covariance matrices.
Such a stability has already been studied in \cite{martinez2018shape}, and the stability of truncations of measures in \cite{pmlr-v97-memoli19a}.

\paragraph{Our contributions.}
In order to estimate the homology of a manifold from an immersion of it, we propose to estimate its tangent bundle, seen as a subset of another Euclidean space. 
As it turns out, in the process of estimating this tangent bundle, we will make errors, which will result in anomalous points.
This issue will be solved by using the DTM-filtrations, which require to use a measure theoretical framework \cite{Chazal_Geometricinference,anai2020dtm}.
Let us describe the method, in measure theoretical terms.

Let $\MMo$ be a compact $\mathcal{C}^2$-manifold of dimension $d$, and $\muo$ a Radon probability measure on $\MMo$ with full support.
Let $\imm \colon \MMo \rightarrow \R^n$ be a $\mathcal{C}^2$-immersion. 
We assume the following genericity condition: the immersion is such that self-intersection points correspond to different tangent spaces.
In other words, for every $x_0, y_0 \in \MMo$ such that $x_0 \neq y_0$ and $\imm(x_0)=\imm(y_0)$, the tangent spaces $d_{x_0} \imm( T_{x_0} \MMo)$ and $d_{y_0} \imm( T_{y_0} \MMo)$ of $\MMo$, seen in $\R^n$, are different.
As we will explain later, this condition ensures that the problem is well-posed (see Hypothesis \hyperref[hyp:1]{1}).
Now, define the image of the immersion $\MM = \imm(\MMo)$ and the pushforward measure $\mu = \imm_*\muo$. 
We consider the following problem: the input data is the measure $\mu$, or a close measure $\nu$. 
Our goal is to infer the singular homology of $\MMo$ (with coefficients in $\mathbb{Z}/2\mathbb{Z}$ for instance) from the data $\nu$. 
In practice, $\nu$ can be given as the empirical measure on a point cloud.
To answer this problem, we will build in this paper a persistence module such that the homology of $\MMo$ can be read on the corresponding persistence diagram.

To get back to $\MMo$, we proceed as follows: let $\matrixspace{\R^n}$ be the vector space of $n \times n$ matrices, and $\immcheck\colon \MMo \rightarrow \R^n \times \matrixspace{\R^n}$ the map 
\begin{align*}
\immcheck\colon x_0 \longmapsto \left( \imm(x_0), ~\frac{1}{d+2} p_{T_{u(x_0)} \MM} \right),
\end{align*}
where $p_{T_{u(x_0)} \MM}$ is the matrix of the orthogonal projection on the tangent space $T_{u(x_0)} \MM = d_{x_0} \imm( T_{x_0} \MMo) \subset \R^n$, written in the canonical basis of $\R^n$.
The term $\frac{1}{d+2}$ is a technical normalization factor that will be explained later (see Proposition \ref{Paper2:prop:consistency}).
Now, define the set $\MMcheck = \immcheck(\MMo)$. It is a submanifold of $\R^n \times \matrixspace{\R^n}$, $\mathcal{C}^1$-diffeomorphic to $\MMo$.
It is called the \emph{lift} of $\MMo$, or the \emph{lifted manifold}.
The space $\R^n \times \matrixspace{\R^n}$ is called the \emph{lift space}.
Figure \ref{Paper2:fig:lemniscate_PCA} provides a representation of the lifted manifold, when the input immersion is the lemniscate, as in Figure \ref{Paper2:fig:lemniscate}.

\begin{figure}[H]
\centering
\begin{minipage}{.49\linewidth}
\centering
\includegraphics[width=.55\linewidth]{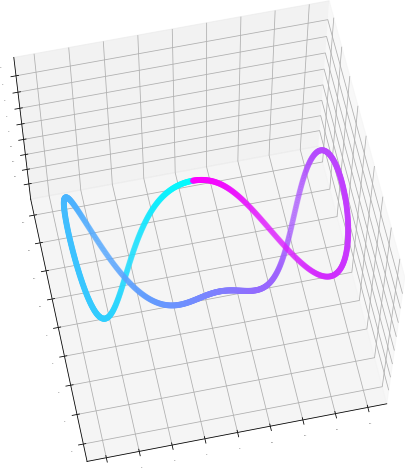}
\end{minipage}
\begin{minipage}{.49\linewidth}
\centering
\includegraphics[width=.75\linewidth]{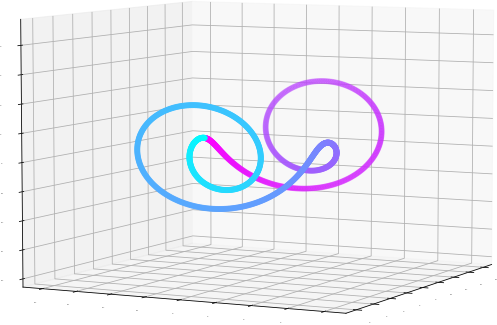}
\end{minipage}
\caption{Two views of the submanifold $\MMcheck \subset \R^2 \times \matrixspace{\R^2} \simeq \R^6$, projected in a 3-dimensional subspace via Principal Component Analysis (PCA). Observe that it does not self-intersect. The initial set $\MM$ is represented in Figure \ref{Paper2:fig:lemniscate}.}
\label{Paper2:fig:lemniscate_PCA}
\end{figure}

Suppose that one is able to estimate $\MMcheck$ from $\nu$. Then one could consider the persistent homology of a filtration based on $\MMcheck$---say the \v{C}ech filtration of $\MMcheck$ in the ambient space $\R^n \times \matrixspace{\R^n}$ for instance---and read the singular homology of $\MMo$ in the corresponding persistent barcode.
This is represented in Figure \ref{Paper2:fig:lemniscate_barcodes}.

\begin{figure}[H]
\begin{minipage}{.49\linewidth}
\centering
\includegraphics[width=.9\linewidth]{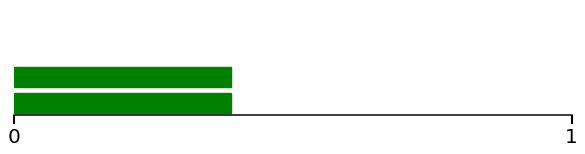}
\end{minipage}
\begin{minipage}{.49\linewidth}
\centering
\includegraphics[width=.9\linewidth]{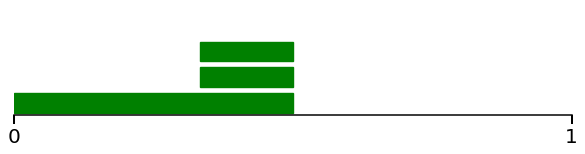}
\end{minipage}
\caption{Left: Persistence barcode of the 1-homology of the \v{C}ech filtration of $\MM $ in the ambient space $\R^2$. One reads the 1-homology of the lemniscate.
Right: Persistence barcode of the 1-homology of the \v{C}ech filtration of $\MMcheck$ in the lift space $\R^2 \times \matrixspace{\R^2}$. At the beginning of the barcode, one reads the 1-homology of a circle. Parameter $\gamma = 2$}.
\label{Paper2:fig:lemniscate_barcodes}
\end{figure}

Unfortunately, we won't be able to give a good estimation of $\MMcheck$. This is because the tangent spaces $T_{u(x_0)} \MM$, that we compute via local covariance matrices, won't be estimated correctly if $x$ is too close to a self-intersection of $\MM$.
In order to get around this issue, we adopt a measure theoretical point of view.
Instead of estimating the lifted submanifold $\MMcheck$, we propose to estimate the \emph{exact lifted measure} $\muchecko$, defined as the push-forward $\muchecko = \immcheck_* \muo$. 
It is a measure on the lift space $\R^n \times \matrixspace{\R^n}$ and has support $\MMcheck$. 

It is worth noting that $\MMcheck$ can be naturally seen as a submanifold of $\R^n \times \Grass{d}{\R^n}$, where $\Grass{d}{\R^n}$ denotes the Grassmannian of $d$-dimensional linear subspaces of $\R^n$. From this point of view, $\muchecko$ can be seen as a measure on $\R^n \times \Grass{d}{\R^n}$, i.e., a \emph{varifold}. This point of view has already been used in data analysis, such as in geometric inference \cite{buet2017varifold,buet2019weak} or in computational anatomy \cite{charon2013varifold}.
However, for computational reasons, we choose to work in the matrix space $\matrixspace{\R^n}$ instead of $\Grass{d}{\R^n}$.

Here is an alternative definition of $\muchecko$: 
for any test function $\phi\colon \R^n \times \matrixspace{\R^n} \rightarrow \R$,
\[\int \phi(x, A) \dd \muchecko(x, A) = \int_{\MMo} \phi\left(\imm(x_0),\frac{1}{d+2} p_{T_{u(x_0)} \MM} \right) \dd \muo(x_0). \]
Getting back to the observed measure $\nu$, we propose to estimate $\check{\mu}_0$ with the \emph{lifted measure} $\check{\nu}$, defined as follows: for any test function $\phi\colon \R^n \times \matrixspace{E} \rightarrow \R$,
\[\int \phi(x, A) \dd \check{\nu}(x, A) = \int_{\MM} \phi\bigg(x,\loccovnorm{\nu}{x}\bigg) \dd \nu(x), \]
where $\loccovnorm{\nu}{x}$ is \emph{normalized local covariance matrix} (see Definition \ref{def:loccov}). 
It depends on a parameter $r > 0$.
We prove that $\loccovnorm{\nu}{x}$ can be used to estimate the tangent spaces $\frac{1}{d+2} p_{T_{u(x_0)} \MM}$ of $\MM$.
However, this estimation is biased next to the self-intersection of $\MM$, as shown in Figure \ref{Paper2:fig:lemniscatecheckbias}. 
As a consequence, the support of $\nucheck$ is not close to $\MMcheck$ in Haudorff distance. 

\begin{figure}[H]
\begin{minipage}{.49\linewidth}
\centering
\includegraphics[width=.55\linewidth]{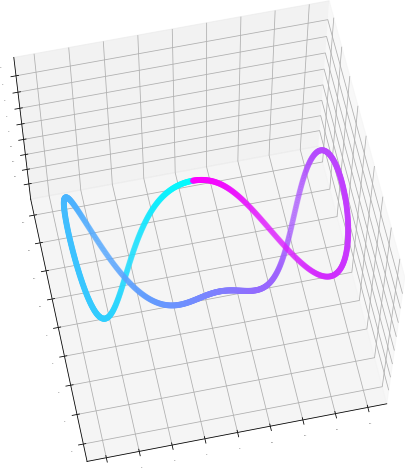}
\end{minipage}
\begin{minipage}{.49\linewidth}
\centering
\includegraphics[width=.55\linewidth]{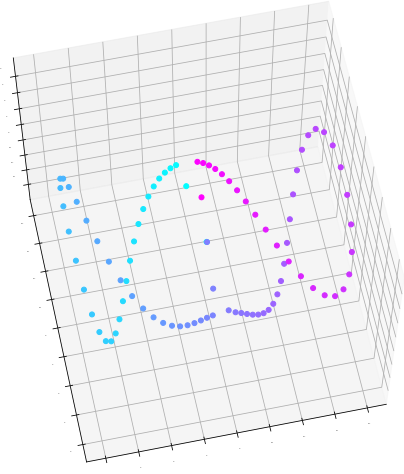}
\end{minipage}
\caption{Left: The set $\supp{\muchecko}=\MMcheck$, where $\mu$ is the uniform measure on $\MM$ (see Figure \ref{Paper2:fig:lemniscate}).
Right: The set $\supp{\nucheck}$, where $\nu$ is the empirical measure on $X$. Parameters $\gamma = 2$ and $r=0{.}1$.}
\label{Paper2:fig:lemniscatecheckbias}
\end{figure}

At this point, one could use an outliers-removal procedure, so as to recover $\MMcheck$. However, such a procedure depends critically on a choice of parameter, and is not reliable in practice.
Instead, and still from a measure theoretical point of view, we will prove that the measure $\nucheck$ is close to $\muchecko$ in \emph{Wasserstein distance} (see Theorem \ref{Paper2:th:estimation}).
This is true since only a few anomalous points are present
As a consequence, by using persistent homology for measures---such as the DTM-filtrations---the measure $\nucheck$ can be used to infer the homotopy type of $\MMcheck$, that is, of $\MMo$ (see Corollaries \ref{Paper2:cor:homotopytypeDTMcheck} and \ref{Paper2:cor:DTMfiltrcheck}). The barcodes of the DTM-filtration on $\nucheck$ are represented in Figure \ref{Paper2:fig:lemniscate_DTM}.

\begin{figure}[H]
\begin{minipage}{.49\linewidth}
\centering
\includegraphics[width=.95\linewidth]{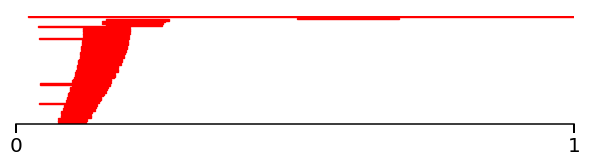}
\end{minipage}
\begin{minipage}{.49\linewidth}
\centering
\includegraphics[width=.95\linewidth]{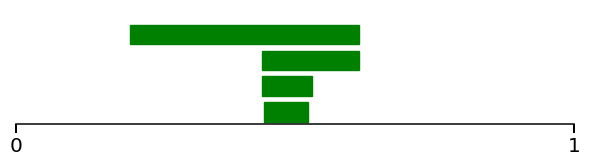}
\end{minipage}
\caption{Persistence barcodes of the 0-homology (left) and 1-homology (right) of the DTM-filtration of the lifted measure $\nucheck$. 
Observe that the homology of the circle is salient on these barcodes (one large red bar and one large green bar).
Parameters $\gamma = 2$, $r=0{.}1$ and $m=0{.}01$.}
\label{Paper2:fig:lemniscate_DTM}
\end{figure}

In order to quantify the quality of this approximation, we introduce a new geometric quantity: the \emph{normal reach} (see Definition \ref{Paper2:def:normalreach}).
It has been designed to play the role of the reach, when the subset considered is an immersed manifold.
We show that the normal reach gives a scale at which an immersed manifold can be seen as an embedded manifold (see Proposition \ref{Paper2:prop:normalreach}).

As a last remark, let us compare our method to \cite{arias2017spectral}.
In this paper, the input dataset is a point cloud $X \subset \R^n$, seen as a sample of the union of two intersecting submanifolds. Translated in our context, $\MMo$ is the disjoint union of two abstract manifolds, and $\MM$ is an immersion of it.
Their Algorithm 3 consists in estimating the tangent spaces $Q_x$ on top of each point $x \in X$, via a variation of the empirical covariance matrix. Then, the authors build a graph $G$, whose vertices are the input data points $x \in X$, and where an edge $[x,y]$ is added if the Euclidean positions are close enough ($\eucN{x-y}\leq \epsilon$) and if the tangent space estimations are close enough too ($\frobN{Q_x-Q_y}\leq \eta$). The output of the algorithm is then the connected components of $G$.
Unfortunately, due to the bad estimation of tangent spaces around self-intersections, the algorithm may treat the intersection points as a cluster of its own, hence returning more connecting components than wanted.
In order to circumvent this issue, their Algorithm 2 includes an outliers-removal step, so as to exclude points close to the self-intersection. Under a particular choice of parameters, it is shown that the algorithm returns exactly two clusters, accurately clustering the points away from the intersection.

In comparison, our method has been thought to estimate the singular homology of $\MMo$, not only its connected components. In this setting, the outliers-removal procedure is a crucial step. This is because removing too many points would cause the apparition of gaps in the lifted manifold $\MMcheck$, which would be complicated to fill. 
Instead of discarding outliers, our method incorporates a sort of hierarchical clustering, performed by the use of the DTM-filtrations. Indeed, in the DTM-filtration of the lifted measure $\nucheck$, the points are weighted according to their degree of anomalousness. This anomalousness is quantified via their local density in the lift space $\R^n \times \matrixspace{\R^n}$. The underlying idea is the following: since only a few points are close to the intersection, only a few points will have a bad tangent space estimation, hence their density will be small. A careful analysis will make this idea rigorous.
Another advantange of our method lies in the use of persistent homology: the output of our algorithm is a persistence barcode.
Hence we do not need to select precise connected components, or more generally, precise homological features.  
It is up to the user to read on this barcode the bars that seem to be relevant (in general, one chooses the longest bars).
This procedure is justified theoretically by Corollary \ref{Paper2:cor:DTMfiltrcheck}, which shows that the output barcode is stable.

\paragraph{Data availability.}
A Python notebook, containing numerical illustrations and codes used in this paper, can be found at \url{https://raphaeltinarrage.github.io/ImmersedManifolds}.

\paragraph{Outline.}
The rest of the paper is as follows. 
Sect. \ref{Paper2:sec:intro} gathers usual definitions related to Euclidean topology of compact sets, Riemannian geometry and persistent homology.
We also describe our model.
In Sect. \ref{Paper2:sec:reach} we introduce the normal reach, and derive certain probability bounds based on it.
In Sect. \ref{Paper2:sec:tangentspaceestimation}, we study the tangent space estimation of an immersed manifold via local covariance matrices.
We gather these results in Sect. \ref{Paper2:sec:topoinference} to obtain estimation guarantees for our method.

\paragraph{Notations and constants.}
We gather in Appendix \ref{app:notations} the notations that are used.
Moreover, throughout the paper, we will refer to constants that are collected in a table in Appendix \ref{app:constants}.
It is not necessary to read this table, since the constants will be introduced along the text.


\section{Preliminaries}
\label{Paper2:sec:intro}
\subsection{Euclidean and Riemannian geometry}
\label{subsec:riemannian}
In this subsection, we give some geometry results that will be useful in what follows. 
Here and in the rest of the paper, we will only consider compact manifolds and submanifolds without boundary, and measures that are Radon measures.
We refer the reader to \cite{federer1959curvature} for an exposition of the notion of reach, to \cite{do1992riemannian} for a presentation of Riemannian geometry, and to \cite{morgan2016geometric} for a gentle introduction to geometric measure theory.

\paragraph{Reach.}
Let $X$ be any subset of $\R^n$ and $y \in \R^n$ a point. The \emph{distance} from $y$ to $X$ is the quantity $$\dist{y}{X}  = \inf\{\eucN{x-y} \mid x \in X\}.$$
A \emph{projection} of $y$ on $X$ is a point $x \in X$ that minimizes the distance $\eucN{x-y}$.
The \emph{medial axis} of $X$ is the subset $\med{X} \subset \R^n$ which consists of points $y\in \R^n$ that admit at least two distinct projections on $X$:
\begin{align*}
\med{X} = \left\{ y \in \R^n \mid \exists x,x' \in X, ~x \neq x', ~\eucN{y-x}=\eucN{y-x'}=\dist{y}{X}  \right\}.
\end{align*}
The \emph{reach} of $X$ is 
\begin{equation*}
\reach{X} = \inf\left\{ \eucN{x-y} \mid x \in X, ~y \in \med{X} \right\}.
\end{equation*}
A useful property of sets with positive reach is the approximation by tangent spaces. 
For a general set $X$, we define its tangent cone at $x \in X$, denoted $\Tan{X}{x}$, as:
\begin{align*}
\{0\} \cup \left\{v \in \R^n \mid \forall \epsilon > 0, ~\exists y \in X \text{ s.t. } y \neq x, ~\eucN{y-x}<\epsilon, ~\eucNbig{ \frac{v}{\eucN{v}} - \frac{y-x}{\eucN{y-x}} } < \epsilon \right\}.
\end{align*}
Note that if $X$ is a submanifold, we recover the usual notion of tangent space.
The following characterization is fundamental in the study of sets with positive reach:
\begin{theorem}[{{\cite[Theorem 4.18(2)]{federer1959curvature}}}]
\label{background:th:federer}
A closed set $X\subset \R^n$ has positive reach $\tau$ if and only if for every $x,y \in X$, we have
\begin{align*}
\dist{y-x}{\Tan{X}{x}} \leq \frac{1}{2\tau} \eucN{y-x}^2.
\end{align*}
\end{theorem}
The reach is a quantity that controls both the local and global regularity of the set $X$.
When $X = \MM$ is a topological submanifold, having a positive reach implies that $\MM$ is of regularity $\CC^{1,1}$ \cite[Proposition 1.4]{lytchak2005almost}.
Conversely, a $\CC^{1,1}$-submanifold $\MM$ has a positive reach \cite[Theorem 4.19]{federer1959curvature}. 
Moreover, when $\MM$ is $\CC^2$, it can be shown that $\reach{\MM}$ is caused either by a bottleneck structure or by high curvature:
\begin{theorem}[{{\cite[Theorem 3.4]{aamari:hal-01521955}}}]
\label{background:th:aamari_reach}
A closed $\CC^2$-submanifold $\MM \subset \R^n$ with positive reach must satisfy at least one of the following two properties:
\begin{itemize}
\item \emph{Global case:} there exist $x,y \in \MM$ with $\eucN{x-y} = 2\reach{\MM}$ and $\frac{1}{2}(x+y) \in \med{\MM}$,
\item \emph{Local case:} there exists an arc-length parametrized geodesic $\gamma\colon I \rightarrow \MM$ with $\eucN{\ddot{\gamma}(0)} = \reach{\MM}^{-1}$.
\end{itemize}
\end{theorem}
In this paper, we will suppose that the manifold is of regularity $\CC^2$, so as to obtain uniform bounds on its second derivatives (see Hypothesis \hyperref[hyp:2]{2}). We do not study whether the results could be generalized to $\CC^{1,1}$ manifolds.

\paragraph{Riemannian structure on immersed manifolds.}
If $u\colon\MMo \rightarrow \MM \subset  \R^n$ is an immersion of a $\CC^2$-manifold, then $\MMo$ is naturally endowed with a Riemannian structure, by pulling back the inner product of $\R^n$. This makes $u$ an isometry. From now on, we will consider that $\MMo$ is given this Riemannian structure. 
We denote the (abstract) tangent space of $\MMo$ at $x_0$ as $T_{x_0} \MMo$, its image in $\R^n$ as $T_{u(x_0)} \MM = d_{x_0} \imm( T_{x_0} \MMo)$, and its orthogonal complement, the normal space, as $(T_{\imm(x_0)} \MM)^\bot$.
The geodesic distance between two points $x_0,y_0\in\MMo$ is denoted $\geoD{x_0}{y_0}{\MMo}$.
For any $x_0 \in \MMo$ and $r\geq 0$, we denote by $\openballM{x_0}{r}{\MM_0}$ (resp. $\closedballM{x_0}{r}{\MM_0}$) the open (resp. closed) geodesic ball of center $x_0$ and radius $r$ of $\MMo$.
Moreover, for any $v_0 \in T_{x_0}\MM_0$, we denote by $\openballM{v_0}{r}{T_{x_0}\MM_0}$ the open ball of center $v_0$ and radius $r$ of $T_{x_0}\MM_0$.


For every $x_0 \in \MMo$, one defines the \emph{second fundamental form} of $\MMo$ at $x_0$. 
It is a symmetric bilinear form 
\begin{align*}
\secondF_{x_0} \colon~ T_{x_0} \MMo \times T_{x_0} \MMo \longrightarrow (T_{u(x_0)} \MM)^\bot.
\end{align*}
Let $x_0 \in \MMo$, $v_0 \in T_{x_0} \MMo$ a unit vector, and consider an unit-speed geodesic $\gamma_0 \colon I \rightarrow \MMo$ such that $\gamma_0(0) = x_0$ and $\dot{\gamma_0}(0) = v_0$.
Let us denote by $\gamma$ the map $u \circ \gamma_0 \colon I \rightarrow \MM$.
The following relation can be found in \cite[Sect. 6]{niyogi2008finding} or \cite[Sect. 3]{boissonnat2019reach}:
\begin{equation}
\label{eq:supnormiscurvature}
\secondF_{x_0}(v_0, v_0) = \ddot{\gamma}(0).
\end{equation}
In particular, any bound on the operator norm $\opN{\secondF_{x_0}}$ of $\secondF_{x_0}$ implies a bound on $\eucN{\ddot{\gamma}(0)}$. 
From now, we suppose that the operator norms $\opN{\secondF_{x_0}}$ are bounded by a constant $\rho > 0$ (see Hypothesis \hyperref[hyp:1]{1}). For instance, if $\MMo$ is an embedded manifold, then $\rho$ can be chosen as its reach \cite[Proposition 6.1]{niyogi2008finding}. In general, if $\MMo$ is a compact $\CC^2$-manifold, such a global upper bound $\rho$ exists.
Let us list a few useful results.

\begin{lemma}
\label{Paper2:lem:Federergeod}
Let $x_0 \in \MM_0$ and $\gamma_0\colon I \rightarrow \MM_0$ an arc-length parametrized geodesic starting from $x_0$. Let $\gamma = \imm \circ \gamma_0$, $v = \dot \gamma(0)$ and $x = u(x_0)$. For all $t \in I$, we have
\begin{enumerate}
\item $\eucN{\gamma(t) - (x+tv)} \leq \frac{\rho}{2}t^2.$\label{lem:Federergeod:point1}
\end{enumerate}
Consequently, for every $y_0 \in \MM_0$, denoting $\delta = \geoD{x_0}{y_0}{\MM_0}$ and $y=u(y_0)$, we have
\begin{enumerate}
\itemsep.2cm
\setcounter{enumi}{1}
\item $\dist{y-x}{T_x \MM} \leq \frac{\rho}{2} \delta^2$,\label{lem:Federergeod:point2}
\item $(1-\frac{\rho}{2} \delta)\delta \leq \eucN{x-y}$.\label{lem:Federergeod:point3}
\end{enumerate}
Concerning the immersion $\imm\colon \MMo \rightarrow \MM$, we deduce that
\begin{enumerate}
\setcounter{enumi}{3}
\itemsep.05cm
\item the map $\imm$ is injective on the open geodesic ball $\openballG{x_0}{\frac{2}{\rho}}{\MMo}$, \label{lem:Federergeod:point4}
\item for every $y_0 \in \openballG{x_0}{\frac{1}{\rho}}{\MMo}$ such that $y_0 \neq x_0$, the vector $y-x$ is not orthogonal to $T_x \MM$ nor $T_y \MM$. \label{lem:Federergeod:point5}
\end{enumerate}
\end{lemma}
\noindent
The first point of this lemma can be found in \cite[Equation (5)]{niyogi2008finding}, and the other points follow directly. 
Note that stronger versions of these results can be found in \cite{boissonnat2019reach}.

We now state a technical lemma. It gives how much time it takes for a geodesic to exit a Euclidean ball (represented in Figure \ref{fig:exitball}).
It is a direct consequence of Lemma \ref{Paper2:lem:Federergeod} and its proof is deferred to Appendix \ref{sec:appendix_intro}. 
\begin{lemma}
\label{Paper2:lem:distancetocenter}
Let $x_0, y_0 \in \MM_0$ and $\gamma_0\colon I \rightarrow \MM_0$ an arc-length parametrized geodesic with $\gamma_0(0) = y_0$. Define $x = \imm(x_0)$, $y = \imm(y_0)$, $\gamma = u\circ \gamma_0$, $v = \dot \gamma(0)$ and $l = \eucN{y-x}$.
Suppose that $l < \frac{1}{\rho}$ and $\eucP{v}{y-x} = 0$.
\begin{enumerate}
\item The map $t \mapsto \eucN{\gamma(t)-x}$ is increasing on $[0, T_1]$ where $T_1 = \frac{\sqrt{2}}{\rho} \sqrt{2-\sqrt{3+\rho^2 l^2}}$.\label{lem:distancetocenter:point1}
\end{enumerate}
Let $r$ be such that $l \leq r < \frac{1}{2\rho}$ and define 
$$T_2 = \frac{\sqrt{2}}{\rho}\sqrt{1-\rho r - \sqrt{ 1 - 2\rho r + \rho^2 l^2}} ~~~~\text{and}~~~~ T_2' = \frac{\sqrt{2}}{\rho}\sqrt{1-\rho r + \sqrt{ 1 - 2\rho r + \rho^2 l^2}}.$$
\begin{enumerate}
\setcounter{enumi}{1}
\itemsep.2cm
\item If $t \in (T_2,T_2')$, then $\eucN{\gamma(t)-x}>r$.
Moreover, $T_2 \leq 2 \sqrt{r^2 - l^2}$.
\label{lem:distancetocenter:point2}
\item If $l=0$, then $T_2 = \frac{1}{\rho}(1-\sqrt{1-2\rho r})$ and $T_2' = \frac{1}{\rho}(1+\sqrt{1-2\rho r})$.
\label{lem:distancetocenter:point4}
\end{enumerate}
Moreover, let $s$ be such that $0 \leq s \leq  r$ and define 
$$a = \inf\{t \geq 0 \mid \eucN{\gamma(t)-x}\geq s\}~~~~~~\text{and}~~~~~~b = \inf\{t \geq 0 \mid \eucN{\gamma(t)-x}\geq r\}.$$
\begin{enumerate}
\setcounter{enumi}{3}
\itemsep.2cm
\item $b-a \leq \sqrt{6}\sqrt{r^2-s^2}$.
\label{lem:distancetocenter:point5}
\item If $l=0$, then $b - a \leq 2(r-s)$.
\label{lem:distancetocenter:point6} 
\end{enumerate}
\end{lemma}

\begin{figure}[H]
\begin{minipage}{.49\linewidth}
\centering
\includegraphics[width=.6\linewidth]{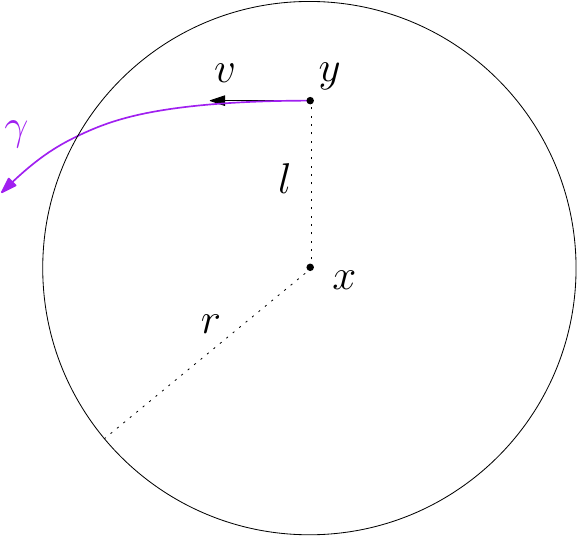}
\end{minipage}
\begin{minipage}{.49\linewidth}
\centering
\includegraphics[width=.6\linewidth]{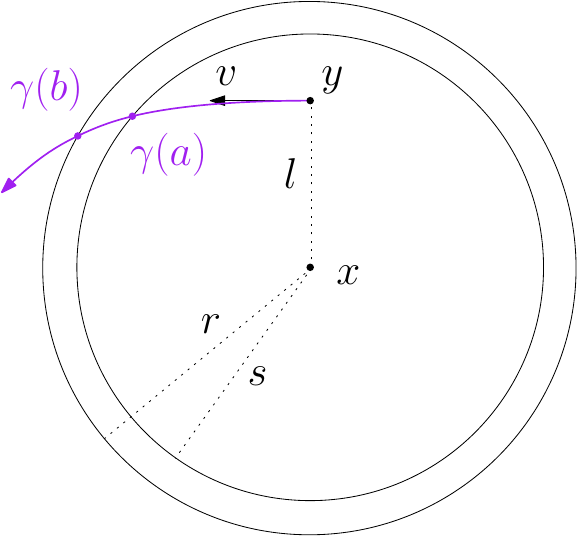}
\end{minipage}
\caption{Illustration of Lemma \ref{Paper2:lem:distancetocenter} Point \ref{lem:distancetocenter:point1} (left) and Point \ref{lem:distancetocenter:point5} (right).}
\label{fig:exitball}
\end{figure}

Last, the \emph{exponential map} of $\MMo$ at $x_0$ will be denoted 
\begin{align*}
\exp_{x_0}^{\MMo} \colon T_{x_0} \MMo \rightarrow \MMo.
\end{align*}
According to \cite[Corollary 4 Point 1]{alexander2006gauss}, the map $ \exp_{x_0}^{\MMo}$ is injective on the open ball $\openballM{0}{\frac{\pi}{\rho}}{T_{x_0}\MM_0}$ of $T_{x_0}\MM_0$, and is a diffeomorphism onto its image $\openballM{0}{\frac{\pi}{\rho}}{\MMo}$.
Moreover, for any $x_0 \in \MMo$ and $v_0 \in T_{x_0}\MMo$, the $d$-dimensional Jacobian of $\exp_{x_0}^{\MMo}$ at $v_0$ is defined as 
\begin{align*}
J_{v_0} = \sqrt{ \det \left(A^t \cdot A \right) },
\end{align*}
where $A = d_{v_0} \exp_{x_0}^{\MMo}$ is the differential of the exponential map, seen as a $d \times n$ matrix.
As shown by the following result, the Jacobian of the exponential map is linked with the bound $\rho$ on the operator norms $\opN{\secondF_{x_0}}$.

\begin{lemma}{(\cite[Proposition III.22]{Aamari_thesis})}
\label{Paper2:lem:regularityexp0}
Let $x_0\in\MMo$ and $v_0 \in T_{x_0}\MMo$ such that $\eucN{v_0} = r < \frac{\pi}{2\sqrt{2}\rho}$. The Jacobian $J_{v_0}$ of $\exp_{x_0}^{\MMo}$ at $v_0$ satisfies
\begin{align*}
\bigg(1-\frac{\left(r \rho \right)^2}{6}\bigg)^d \leq J_{v_0} \leq \bigg(1+(r\rho)^2\bigg)^d.
\end{align*}
\end{lemma} 


\paragraph{Coarea formula.}
For any measure $\tau$ on a probability space $\Omega$ and any measurable map $h\colon \Omega \rightarrow \Omega'$, the \emph{push-forward} of $\tau$ by $h$ is the measure $h_*\tau$ defined via $$h_*\tau(A) = \tau(h^{-1}(A))$$ for any measurable set $A \subset \Omega'$.
The \emph{transfer} property refers to the following fact: for any integrable map $\phi\colon \Omega'\rightarrow \R$, we have $$\int \phi\cdot \dd h_*\tau = \int \phi\circ h \cdot\dd \tau.$$
Now, suppose that $\MMo$ and $\NNo$ are Riemannian manifolds, of respective dimensions $d\geq d'$, and let $\HH^d_{\MMo}$ and $\HH^{d'}_{\NNo}$ denote their corresponding Hausdorff measures. 
The coarea formula allows to reformulate integrals on $\MMo$ as integrals on $\NNo$. 
\begin{theorem}{(\cite[Chapter 3]{morgan2016geometric})}
\label{th:areaformula}
Let $f \colon \MMo \rightarrow \NNo$ be a differentiable map. For any $x_0\in\MMo$, let $J_{x_0}$ denote its Jacobian.
Let $\phi\colon \MMo \rightarrow [0,+\infty)$ be a measurable function. We have:
$$\int_{\MMo} \phi(x_0) J_{x_0} \cdot\dd \HH^d_{\MMo}(x_0) = \int_{\NNo}\left(\int_{x_0\in f^{-1}(\{y_0\})} \phi(x_0)\cdot\dd\HH^{d-d'}_{\NNo}(x_0) \right)\dd \HH^{d'}_{\NNo}(y_0).$$
\end{theorem}
A useful consequence of this theorem is the following: suppose that $\MMo$ is endowed with a Radon probability measure $\muo$ that admits a density $h\colon \MMo \rightarrow [0,+\infty)$ against $\HH^d_{\MMo}$. Consider the push-forward measure $\nuo = f_*\muo$.
If the Jacobian $J_{x_0}$ of $f$ never vanishes, then the push-forward measure $\nuo$ admits a density $g\colon \NNo \rightarrow [0,+\infty)$ against $\HH^{d'}_{\NNo}$, where
\begin{equation}
\label{eq:density_coarea}
g(y_0)=\int_{x_0\in f^{-1}(\{y_0\})}  h(x_0) J_{x_0}^{-1}\cdot\dd\HH^{d-d'}_{\NNo}(x_0).
\end{equation}
In particular, when $d=d'$, we have $g(y_0)=\sum_{x_0\in f^{-1}(\{y_0\})}  h(x_0) J_{x_0}^{-1}$. 

\subsection{Persistent homology}
In this subsection, we write down the definitions of persistence modules, and their associated pseudo-distances, as presented in \cite{Chazal_Persistencemodules}. 
We refer the interested reader to \cite{edelsbrunner2010computational,boissonnat2018geometric} for a thorough description.
Let $T \subset \R$ be an interval, $E=\R^n$ a Euclidean space and $k$ a field.

\paragraph{Persistence modules.}
A {\em persistence module} over $T$ is a pair $(\V, \vbb)$ where $\V = (V^t)_{t\in T}$ is a family of $k$-vector spaces, and $\vbb = (v_s^t)_{s\leq t \in T}$ is a family of linear maps $v_s^t\colon V^s \leftarrow V^t$ such that:
\begin{itemize}
\itemsep.15cm
\item for every $t\in T$, $v_t^t\colon V^t \rightarrow V^t$ is the identity map,
\item for every $r, s,t\in T$ such that $r\leq s\leq t$, we have $v_s^t \circ v_r^s = v_r^t$.
\end{itemize}
When there is no risk of confusion, we may denote a persistence module by $\V$ instead of $(\V, \vbb)$.
Given $\epsilon \geq 0$, an {\em $\epsilon$-morphism} between two persistence modules $\V$ and $\W$ is a family of linear maps $\phi = (\phi_t\colon \V^t \rightarrow \W^{t+\epsilon})_{t \in T}$ such that the following diagram commutes for every $s \leq t \in T$:
\begin{center}
\begin{tikzcd}
V^s \arrow["\phi_s", d] \arrow[r, "v_s^t"] & \arrow["\phi_t", d] V^t \\
W^{s+\epsilon} \arrow[r, "w_{s+\epsilon}^{t+\epsilon}"] & W^{t+\epsilon}
\end{tikzcd}
\end{center}
If $\epsilon = 0$ and each $\phi_t$ is an isomorphism, the family $(\phi_t)_{t \in T}$ is an {\em isomorphism} of persistence modules.
An {\em $\epsilon$-interleaving} between two persistence modules $\V$ and $\W$ is a pair of $\epsilon$-morphisms $(\phi_t\colon V^t \rightarrow W^{t+\epsilon})_{t \in T}$ and $(\psi_t\colon W^t \rightarrow V^{t+\epsilon})_{t \in T}$ such that the following diagrams commute for every $t \in T$: 
\begin{center}
\begin{minipage}[t]{0.4\textwidth}
\centering
\begin{tikzcd}
V^t \arrow[dr, "\phi_t"] \arrow[rr, "v_t^{t+2\epsilon}"] & & V^{t+2\epsilon} \\
& W^{t+\epsilon} \arrow[ur, "\psi_{t+\epsilon}"] & 
\end{tikzcd}
\end{minipage}
\begin{minipage}[t]{0.4\textwidth}
\centering
\begin{tikzcd}
& V^{t+\epsilon} \arrow[dr, "\phi_{t+\epsilon}"] & \\
W^t \arrow[ur, "\psi_t"] \arrow[rr, "w_t^{t+2\epsilon}"] & & W^{t+2\epsilon} 
\end{tikzcd}
\end{minipage}
\end{center}
The \emph{interleaving pseudo-distance} between $\V$ and $\W$ is defined as 
$$\idist{\V}{\W} = \inf \{\epsilon \geq 0 \mid  \V \text{ and } \W \text{ are } \epsilon \text{-interleaved}\}.$$

\paragraph{Persistence barcodes.}
A persistence module $(\V, \vbb)$ is said to be {\em pointwise finite-dimensional} if for every $t \in T$, $V^t$ is finite-dimensional. If this property is satisfied, we can define a notion of persistence barcode \cite{botnan2020decomposition}. It comes from the algebraic decomposition of the persistence module into interval modules. Moreover, given two pointwise finite-dimensional persistence  modules $\V, \W$ with persistence barcodes $\barcode{\V}$ and $\barcode{\W}$, the so-called isometry theorem states that 
\begin{align*}
\bdist{\barcode{\V}}{\barcode{\W}} = \idist{\V}{\W},
\end{align*}
where $\idist{\cdot}{\cdot}$ denotes the interleaving distance between persistence modules, and $\bdist{\cdot}{\cdot}$ denotes the bottleneck distance between barcodes.

More generally, the persistence module $(\V, \vbb)$ is said to be {\em $q$-tame} if for every $s,t \in T$ such that $s < t$, the map $v_s^t$ has finite rank. The $q$-tameness of a persistence module ensures that we can still define a notion of persistence barcode, even though the module may not be decomposable into interval modules. Moreover, the isometry theorem still holds \cite{Chazal_Persistencemodules}.

\paragraph{Filtrations of sets and simplicial complexes.}
A family of subsets $\X=(X^t)_{t \in T}$ of $E$ is a {\em filtration} if it is non-decreasing for the inclusion, i.e., for any $s, t \in T$ such that $s \leq t$, we have $X^s \subset X^t$. 
Given $\epsilon\geq 0$, two filtrations $\X=(X^t)_{t \in T}$ and $\Y=(Y^t)_{t \in T}$ of $E$ are {\em $\epsilon$-interleaved} if, for every $t \in T$, we have $X^t \subset Y^{t+\epsilon}$ and $Y^t \subset X^{t+\epsilon}$. 
The interleaving pseudo-distance between $\X$ and $\Y$ is defined as the infimum of such $\epsilon$:
\[\idist{\X}{\Y} = \inf \{ \epsilon \geq 0 \mid  \X \text{  and  }  \Y  \text{  are  }\epsilon\text{-interleaved} \}.\]
Filtrations of simplicial complexes and their interleaving distance are similarly defined: 
given a simplicial complex $S$, a {\em filtration of $S$} is a non-decreasing family $\S = (S^t)_{t \in T}$ of subcomplexes of $S$. The interleaving pseudo-distance between two filtrations $(S_1^t)_{t \in T}$ and $(S_2^t)_{t \in T}$ of $S$ is the infimum of the $\epsilon \geq 0$ such that they are $\epsilon$-interleaved, i.e. for any $t \in T$, we have $S_1^{t} \subset S_2^{t+\epsilon}$ and  $S_2^{t} \subset S_1^{t+\epsilon}$.

\paragraph{Relation between filtrations and persistence  modules.}
Applying the singular homology functor to a set filtration gives rise to a persistence module whose linear maps between homology groups are induced by the inclusion maps between sets. As a consequence, if two filtrations are $\epsilon$-interleaved, then their associated persistence modules are also $\epsilon$-interleaved, the interleaving homomorphisms being induced by the interleaving inclusion maps. As a consequence of the isometry theorem, if the modules are $q$-tame, then the bottleneck distance between their persistence barcodes is upper bounded by $\epsilon$ \cite{Chazal_Persistencemodules}.
The same remarks hold when applying the simplicial homology functor to simplicial filtrations.

\subsection{Persistent homology for measures}
\label{subsec:pers_measures}
In this subsection we define the distance-to-measure (DTM), based on \cite{Chazal_Geometricinference}, and the DTM-filtrations, based on \cite{anai2020dtm}.
Let $T = \R^+$ and $E = \R^n$ endowed with the standard Euclidean norm.  

\paragraph{Wasserstein distances.}
Given two probability measures $\mu$ and $\nu$ over $E$, a transport plan between $\mu$ and $\nu$ is a probability measure $\pi$ over $E \times E$ whose marginals are $\mu$ and $\nu$. 
Let $p \geq 1$.
The \textit{$p$-Wasserstein distance between $\mu$ and $\nu$} is defined as 
\[\Wassp{\mu}{\nu} = \left(\inf_\pi \int_{E\times E} \|x-y\|^p d\pi(x,y) \right)^\frac{1}{p},\]
where the infimum is taken over all the transport plans $\pi$.
If $q$ is such that $p\leq q$, then an application of Jensen's inequality shows that $\Wassp{\mu}{\nu} \leq \Wassq{\mu}{\nu}$.

\paragraph{Distance-to-measure (DTM).} 
Let $\mu$ be a probability measure over $E$, and $m \in [0,1)$ a parameter.
The DTM associated to $\mu$ with parameter $m$ is the function $\DTM{\mu, m}\colon E\rightarrow \R$ defined as: 
\begin{align*}
\DTM{\mu, m}(x) = \sqrt{\frac{1}{m}\int_0^{m} \delta_{\mu,t}^2(x)\dd t} 
~~~~~~~\text{where}~~~~~~
\delta_{\mu, m}(x) = \inf \left\{r\geq0 \mid \mu\left(\closedball{x}{r}\right)>m\right\},
\end{align*}
and where $\closedball{x}{r}$ denotes the closed ball of center $x$ and radius $r$ of $E$.
When $m$ is fixed and there is no risk of confusion, we may write $\DTM{\mu}$ instead of $\DTM{\mu, m}$.
Among the important properties of the DTM, it has been shown that it is $1$-Lipschitz \cite[Corollary 3.7]{Chazal_Geometricinference}.
Moreover, it is stable in Wasserstein distance \cite[Theorem 3.5]{Chazal_Geometricinference}: for any probability measures $\mu$ and $\nu$, we have 
\begin{align}
\label{eq:stab_DTM}
\|\DTM{\mu, m} - \DTM{\nu, m}\|_\infty \leq m^{-\frac{1}{2}}\Wassdeux{\mu}{\nu}.
\end{align}
If $f\colon E \rightarrow \R$ is any function and $t \in \R$, we will denote the \emph{$t$-sublevel set} of $f$ as $f^t = f^{-1}((-\infty,t])$.
The following theorem shows that the sublevel sets $\DTM{\mu, m}^t$ of $\DTM{\mu, m}$ can be used to estimate the homotopy type of $\supp{\mu}$.
\begin{theorem}[{\cite[Corollary 4.11, case $\mu=1$]{Chazal_Geometricinference}}]
\label{th:DTM-homotopytype}
Consider two probability measures $\mu, \nu$ on $E$ and $m \in (0,1)$. Denote $K = \supp{\mu}$.
Suppose that $\reach{K} = \tau > 0$, and that $\mu$ satisfies the following hypothesis for $r < \left(\frac{m}{a}\right)^\frac{1}{d}$: $\forall x \in K, \mu(\openball{x}{r}) \geq a r^d$.
Suppose that $\Wassdeux{\mu}{\nu} \leq  m^\frac{1}{2} \big(\frac{\tau}{9} - (\frac{m}{a})^{\frac{1}{d}} \big)$.
Define $\epsilon = (\frac{m}{a})^\frac{1}{d} + m^{-\frac{1}{2}} \Wassdeux{\mu}{\nu}$ and choose $t \in [4 \epsilon, \tau - 3\epsilon]$.
Then $\DTM{\nu, m}^t$ and $K$ are homotopic equivalent.
\end{theorem}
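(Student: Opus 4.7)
The plan is to combine two stability estimates for the DTM with a reconstruction lemma for sublevel sets of semiconcave approximate distance functions. I will first show that $d_{\nu, m}$ is uniformly $\epsilon$-close to the distance function $d_K$, and then exploit the lower bound $\reach{K} = \tau$ to identify the homotopy type of its sublevel sets with that of $K$.

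First, I would establish the deterministic bound $\|d_{\mu, m} - d_K\|_\infty \leq (m/a)^{1/d}$. The inequality $d_{\mu, m} \geq d_K$ is immediate: for any $x \in E$, every ball $\closedball{x}{r}$ with $r < d_K(x)$ is $\mu$-null, so $\delta_{\mu, t}(x) \geq d_K(x)$ for all $t \in (0, m]$. For the reverse inequality, fix $x \in E$, let $\pi_K(x) \in K$ be a nearest point, and note that $\closedball{\pi_K(x)}{r} \subseteq \closedball{x}{d_K(x) + r}$. The mass hypothesis applied at $\pi_K(x)$ then gives $\mu\bigl(\closedball{x}{d_K(x) + r}\bigr) \geq a r^d$, hence $\delta_{\mu, t}(x) \leq d_K(x) + (t/a)^{1/d}$ for $t$ in the relevant range. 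Squaring, integrating over $t \in [0, m]$, and applying Minkowski's inequality to the $L^2$-norm of the integrand yields $d_{\mu, m}(x) \leq d_K(x) + (m/a)^{1/d}$.

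Combining this with Theorem \ref{th:DTM_stab} gives $\|d_{\nu, m} - d_K\|_\infty \leq (m/a)^{1/d} + m^{-1/2} w = \epsilon$. Since $d_{\nu, m}$ is $1$-Lipschitz by Proposition \ref{prop:DTM_lip} and $d_{\nu, m}^2$ is $1$-semiconcave, $d_{\nu, m}$ qualifies as a \emph{distance-like} function in the sense of \cite{Chazal_Geometricinference}. The final step is then to invoke the reconstruction lemma of that paper: any semiconcave, $1$-Lipschitz $f$ with $\|f - d_K\|_\infty \leq \epsilon$ and $\reach{K} = \tau > 7\epsilon$ has $f^t$ homotopy equivalent to $K$ for every $t \in [4\epsilon, \tau - 3\epsilon]$. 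Its proof proceeds through three ingredients: (i) use the two-sided interleaving $d_K - \epsilon \leq f \leq d_K + \epsilon$ to show that $f$ has no critical values (in the Clarke sense) inside $[4\epsilon, \tau - 3\epsilon]$, since such a critical point would translate into a point of $d_K^{t+\epsilon} \setminus d_K^{3\epsilon}$ that is critical for $d_K$, contradicting the reach hypothesis; (ii) apply the isotopy lemma for distance-like functions so that every sublevel set $f^t$ with $t$ in that interval has the same homotopy type as $f^{4\epsilon}$; (iii) compare $f^{4\epsilon}$ with the nested offset $d_K^{3\epsilon} \subseteq f^{4\epsilon}$, which itself deformation retracts onto $K$ because $3\epsilon < \tau$, and close the loop by the standard sandwich argument.

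The main obstacle is ingredient (i): the critical-point theory for distance-like functions developed in \cite{Chazal_Geometricinference} is what allows one to transfer the ``no critical values below $\tau$'' property from $d_K$ to $f$ up to a shift of order $\epsilon$, and it is what drives the final isotopy. Once this is in place, ingredients (ii) and (iii), as well as the DTM estimates above, reduce to integration, Minkowski's inequality, and bookkeeping on the interleaving constants.
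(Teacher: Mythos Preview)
The paper does not prove this statement at all: it is quoted verbatim as a result from \cite[Corollary~4.11]{Chazal_Geometricinference} and used as a black box (notably in the proof of Corollary~\ref{cor:homotopytypeDTMcheck}). So there is no ``paper's own proof'' to compare your proposal against.

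That said, your sketch is a faithful outline of the argument in the original reference. The two-step estimate $\|d_{\mu,m}-d_K\|_\infty \le (m/a)^{1/d}$ and $\|d_{\nu,m}-d_{\mu,m}\|_\infty \le m^{-1/2}w$ is exactly how the bound $\|d_{\nu,m}-d_K\|_\infty \le \epsilon$ is obtained there, and the reconstruction step you describe is the content of the isotopy/reconstruction theorem for distance-like functions in that same paper. One small point: the statement as printed here concludes with $d_{\mu,m}^t$, but your proof (correctly, and in line with how the result is applied in Corollary~\ref{cor:homotopytypeDTMcheck}) gives the homotopy equivalence for $d_{\nu,m}^t$; this appears to be a typo in the present paper's transcription of the cited corollary. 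Your ingredient (i) is slightly loose as stated---the transfer of critical points goes through the $\mu$-reach/critical function machinery rather than a direct contradiction---but the structure is right and nothing in your outline is wrong.
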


\paragraph{DTM-filtrations.}
These filtrations have been introduced in \cite{anai2020dtm} and are defined as follows: consider a probability measure $\mu$ on $E$ and a parameter $m \in [0,1)$. 
For every $t \in T$, consider the set 
\begin{equation}
\label{eq:def_DTM_filtration}
\DTMFt{\mu}{t} = \bigcup_{x \in \supp{\mu}} \closedball{x}{t - \DTM{\mu, m}(x)},
\end{equation}
where $\closedball{x}{r}$ denotes the closed ball of center $x$ and of radius $r$ of $E$ if $r\geq0$, or denotes the empty set if $r <0$.
The family $\DTMF{\mu}= \left(\DTMFt{\mu}{t}\right)_{t\geq 0}$ is a filtration of $E$.
It is called the \emph{DTM-filtration} with parameters $(\mu, m,1)$.
By applying the singular homology functor, we obtain a persistence module, denoted $\DTMFpers{\mu}$.
If $\supp{\mu}$ is bounded, then $\DTMFpers{\mu}$ is $q$-tame.
Moreover, it has been proven that the DTM-filtrations are stable with respect to the input measure:

\begin{theorem}[{\cite[Theorem 4.5]{anai2020dtm}}]
\label{th:DTM-filtr-stability}
Consider two measures $\mu, \nu$ on $E$ with supports $X$ and $Y$. Let $\mu', \nu'$ be two measures with compact supports $\Gamma$ and $\Omega$ such that $\Gamma \subset X$ and $\Omega \subset Y$.
Then the interleaving distance $d_i(\CechF{X,d_\mu},\CechF{Y,d_\nu})$ between the DTM-filtrations $\DTMF{\mu}$ and $\DTMF{\nu}$ is upper bounded by
\[ m^{-\frac{1}{2}}W_2(\mu,\mu') + m^{-\frac{1}{2}}W_2(\mu',\nu')  + m^{-\frac{1}{2}}W_2(\nu',\nu) + c(\mu',m) + c(\nu',m),\]
where for any measure $\tau$, we define the quantity $c(\tau,m) = \sup_{x\in \supp{\tau}} d_{\tau, m}(x)$.
\end{theorem} 
%
%
%
Under a regularity assumption on $\mu$, we can restate Theorem \ref{th:DTM-filtr-stability} without mentioning the intermediate measures $\mu'$ and $\nu'$.
The proof is given in Appendix \ref{sec:appendix_intro}.

\begin{corollary}
\label{Paper2:cor:DTM-filtrations}
Consider two probability measures $\mu, \nu$ on $E$, $m \in (0,1)$ and denote $w = \Wassdeux{\mu}{\nu}$.
Suppose that $w \leq \frac{1}{4}$, and that $\mu$ satisfies the following for $r < \left(\frac{m}{a}\right)^\frac{1}{d}$: $\forall x \in \supp{\mu}, \mu(\openball{x}{r}) \geq a r^d$.
Then
\begin{align*}
\idist{\DTMF{\mu}}{\DTMF{\nu}} 
\leq c_{\ref{Paper2:cor:DTM-filtrations:index}} \left(\frac{w}{m}\right)^\frac{1}{2} + c_{\ref{Paper2:cor:DTM-filtrations:index}}' m^\frac{1}{d},
\end{align*}
with $c_{\ref{Paper2:cor:DTM-filtrations:index}} = 8\mathrm{diam}(\supp{\mu}) + 5$ and $c_{\ref{Paper2:cor:DTM-filtrations:index}}' = 2a^{-\frac{1}{d}}$.
\end{corollary}

\subsection{Model and hypotheses}
\label{Paper2:subsec:model}
\paragraph{Model.}
We consider an abstract $\CC^2$-manifold $\MM_0$ of dimension $d \geq 1$, $E = \R^n$ the Euclidean space and a $\CC^2$-immersion $\imm\colon \MM_0 \rightarrow E$. 
We denote $\MM = \imm(\MM_0)$.
Moreover, for any $x_0 \in \MMo$, we write $x$ for $u(x_0)$, $T_{x_0} \MM_0$ for the (abstract) tangent space of $\MM_0$ at $x_0$, and $T_{x} \MM$ for $d_{x_0}u (T_{x_0} \MM_0)$, which is an affine subspace of $E$.
Let $\immcheck$ be the map
\begin{align*}
\immcheck\colon ~ \MM_0 &\longrightarrow E\times \matrixspace{E} \\
x_0 &\longmapsto \left(x, ~\frac{1}{d+2}p_{T_{x} \MM}\right),
\end{align*}
where $p_{T_{x} \MM}$ is the orthogonal projection matrix on $T_{x} \MM$, and $\matrixspace{E}$ the space of $n\times n$ matrices. Note that the map $\immcheck$ is a $\CC^1$-immersion since $\imm$ is $\CC^2$.
We define the \emph{lifted manifold} as $\check \MM = \check{u}(\MM_0)$. 
We also consider a probability measure $\muo$ on $\MMo$, and define $\mu = \imm_* \muo$ and $\muchecko = \immcheck_* \muo$.
These several sets and measures fit in the following commutative diagrams:
\begin{center}
\begin{minipage}{.49\linewidth}
\[
\begin{tikzcd}
\MM_0 \arrow[dr, "\imm"] \arrow[rr, "\check \imm"] & & \MMcheck \arrow[dl, "\mathrm{proj}"]  \\
& \MM &
\end{tikzcd}
\]
\end{minipage}
\begin{minipage}{.49\linewidth}
\[
\begin{tikzcd}
\muo \arrow[dr, "\imm_*"] \arrow[rr, "\check \imm_*"] & & \muchecko \arrow[dl, "\mathrm{proj_*}"]  \\
& \mu &
\end{tikzcd}
\]
\end{minipage}
\end{center}
\noindent
As explained in the introduction, the aim of this work is to estimate the homotopy type of $\MMo$, or its homology groups, from the measure $\mu$, or from a close measure $\nu$.
We detail our method in Subsect. \ref{Paper2:subsec:overview}, and show that, by using DTM-filtrations, the problem boils down to estimating the measure $\muchecko$ from $\nu$. 

Besides, we endow $\MMo$ with the Riemannian structure given by the immersion $u$.
For every $x_0 \in \MMo$, the second fundamental form of $\MMo$ at $x_0$ is denoted $\secondF_{x_0}$,
and the exponential map is denoted $\exp_{x_0}^{\MMo}$.
We shall also consider the map $\exp_{x}^{\MM}\colon T_x \MM \rightarrow \MM$, the \emph{exponential map seen in $\MM$}, defined as $\imm \circ \exp_{x_0}^{\MMo} \circ (d_{x_0} \imm)^{-1}$.

\paragraph{Notation conventions.}
In the rest of this paper, symbols with 0 as a subscript shall refer to quantities associated to $\MMo$. For instance, a point of $\MMo$ may be denoted $x_0$, and a curve on $\MMo$ may be denoted $\gamma_0$. 
Symbols with a caron accent shall refer to quantities associated to $\MMcheck$, such as a point $\check x$, or a curve $\check \gamma$. 
Symbols with no such subscript or accent shall refer to quantities associated to $\MM$, such as $x$ or $\gamma$. 
In order to simplify the notations, we consider the following convention: 
\noindent
\begin{center}
\fbox{\begin{minipage}{.95\linewidth}
Dropping the 0 subscript to a symbol shall correspond to applying the map $\imm$.
\vspace{.05cm}

Dropping the 0 subscript to a symbol and adding a caron accent shall correspond to applying the map $\immcheck$.
\end{minipage}}
\end{center}
\noindent
For instance, if $x_0$ is a point of $\MMo$, then $x$ represents $\imm(x_0)$, and $\check x$ represents $\immcheck(x_0)$.
Similarly, if $\gamma_0 \colon I \rightarrow \MMo$ is a map, then $\gamma$ represents $\imm \circ \gamma_0$, and $\check \gamma$ represents $\check \imm \circ \gamma_0$.
Note that it is possible to have $x = y$ but $T_x \MM \neq T_y \MM$. When writing $T_x \MM$, we will always refer to an implicit point $x_0 \in \MMo$.


\paragraph{Hypotheses.}
Throughout the paper, we shall refer to the four hypotheses listed below.
The first one is a transversity-like condition.

\encadrer{
\textbf{Hypothesis \hyperref[hyp:1]{1}. }\label{hyp:1}
For every $x_0, y_0 \in \MMo$ such that $x_0 \neq y_0 $ and $x = y$, we have $T_{x} \MM \neq T_{y} \MM$.
}
\noindent
This Hypothesis \hyperref[hyp:1]{1} ensures that the $\CC^1$-immersion $\check \imm$ is injective, hence that it is a $\CC^1$-diffeomorphism, since its domain $\MMo$ is compact. As a consequence, the lifted manifold $\check \MM$ is a submanifold of $E\times \matrixspace{E}$, with the same homotopy type than $\MMo$. This allows to recover the homology of $\MMo$ from $\MMcheck$.

\encadrer{
\textbf{Hypothesis \hyperref[hyp:2]{2}. }\label{hyp:2}
The operator norm of the second fundamental form of $\MMo$ at each point is bounded by $\rho>0$. 
}
\noindent
In Hypothesis \hyperref[hyp:2]{2}, we consider that $\MMo$ is endowed the Riemannian structure given by the immersion $\imm$.
According to Equation \eqref{eq:supnormiscurvature}, this hypothesis implies the following key property: if $\gamma_0\colon I \rightarrow \MMo$ is an arc-length parametrized geodesic of class $\CC^2$, then for all $t \in I$, we have $\eucN{\ddot \gamma(t)} \leq \rho$. In particular, we can use the Lemmas \ref{Paper2:lem:Federergeod} and \ref{Paper2:lem:distancetocenter}.

\encadrer{
\textbf{Hypothesis \hyperref[hyp:3]{3}. }\label{hyp:3}
The measure $\mu_0$ admits a density $f_0$ on $\MMo$. Moreover, $f_0$ is $L_0$-Lipschitz (with respect to the geodesic distance) and bounded by $f_{\mathrm{min}}, f_{\mathrm{max}} > 0$.  
}
\noindent
In Hypothesis \hyperref[hyp:3]{3}, we consider that $\MMo$ is endowed with the volume measure $\HH^d_{\MMo}$, that is, the measure obtained by pulling back the $d$-dimensional Hausdorff measure $\HH^d$ on $E$ via the immersion $\imm$. Note that this may not be the uniform measure on $\MM$ (the volume is not renormalized). By assumption, $\mu_0$ is a probability measure, hence the integral $\int f_0 \cdot\dd \HH^d_{\MMo}$ is equal to $1$.


In order to state the fourth hypothesis, we need the notion of \emph{normal reach}, that we will define in Subsect. \ref{Paper2:subsec:normalreach}.
Roughly speaking, the normal reach is a map $\normalreachmapo\colon \MMo \rightarrow [0,+\infty)$ that indicates how close the point $x_0$ is from a self-intersection.
We remind the reader that we use the sublevel set notation $\normalreachmapo^r = \normalreachmapo^{-1}([0,r])$.
\encadrer{
\textbf{Hypothesis \hyperref[hyp:4]{4}. }\label{hyp:4}
There exists $c_{\ref{Paper2:hyp:normalreach:index}} \geq 0$ and $r_{\ref{Paper2:hyp:normalreach:index}} >0$ such that, for every $r \in [0, r_{\ref{Paper2:hyp:normalreach:index}} )$, $\muo(\normalreachmapo^r)\leq c_{\ref{Paper2:hyp:normalreach:index}} r$.
}
\noindent
This hypothesis will only be used in the last part of this paper, when gathering our results about tangent space estimation and stability of measures.
Thanks to it, we will be able to subdivide $\MMo$ in two sets that involve a different analysis: the points with small normal reach (points in $\lambda_0^r$) and points with large normal reach (in $\MMo\setminus\lambda_0^r$).

The following table lists which hypotheses will be invoked in each subsection of the paper.


\begin{center}
\begin{tabular}{l|c|c|c|c|c|c|c|c}
Subsection  
&\ref{Paper2:subsec:normalreach} 
&\ref{subsec:probabilistic_bounds} 
&\ref{Paper2:subsec:quantif_normal_reach} 
&\ref{Paper2:subsec:consistency}
&\ref{Paper2:sec:appendix:tangentspaceestimation}
&\ref{Paper2:subsec:stability}
&\ref{subsec:approx_th} 
&\ref{Paper2:sec:topoinference} \\
\hline
Hyp. used 
& \hyperref[hyp:2]{2} 
& \hyperref[hyp:2]{2}, \hyperref[hyp:3]{3}
& \hyperref[hyp:1prime]{1'}, \hyperref[hyp:2]{2}, \hyperref[hyp:3]{3}
& \hyperref[hyp:2]{2}, \hyperref[hyp:3]{3} 
& \hyperref[hyp:2]{2}, \hyperref[hyp:3]{3} 
& \hyperref[hyp:2]{2}, \hyperref[hyp:3]{3} 
& \hyperref[hyp:1]{1}, \hyperref[hyp:2]{2}, \hyperref[hyp:3]{3}, \hyperref[hyp:4]{4}
& \hyperref[hyp:1]{1}, \hyperref[hyp:2]{2}, \hyperref[hyp:3]{3}, \hyperref[hyp:4]{4}
\end{tabular}
\end{center}
In Subsect. \ref{Paper2:sec:appendix:tangentspaceestimation} and \ref{Paper2:subsec:stability} we will introduce to a new set of hypotheses: \hyperref[hyp:5]{5}, \hyperref[hyp:6]{6}, and \hyperref[hyp:7]{7}. 
We will show that these hypotheses are consequences of \hyperref[hyp:2]{2} and \hyperref[hyp:3]{3}. However, referring to these new hypotheses will simplify the exposition, and will allow to state our results in a more general setting.

Concerning the naturality of the hypotheses, note that Hypothesis \hyperref[hyp:1]{1} is necessary to ensure that our problem is well-posed. To see this, consider the subset $\MM \subset \R^2$ consisting of two tangent circles. As depicted in Figure \ref{fig:hyp1}, $\MM$ may be the immersion of the abstract manifold $\MMo$ being the disjoint union of two circles, or of $\MMo'$ being a circle. 
In this case, the observation of $\MM$ cannot discriminate between $\MMo$ and $\MMo'$.
However, these immersions would not satisfy Hypothesis \hyperref[hyp:1]{1}.
Note that, in this example, the immersion $\MM_0'\rightarrow \MM$ is only $\CC^1$, but one can easily design a similar example of regularity $\CC^2$.

\begin{figure}[H]
\centering
\includegraphics[width=.95\linewidth]{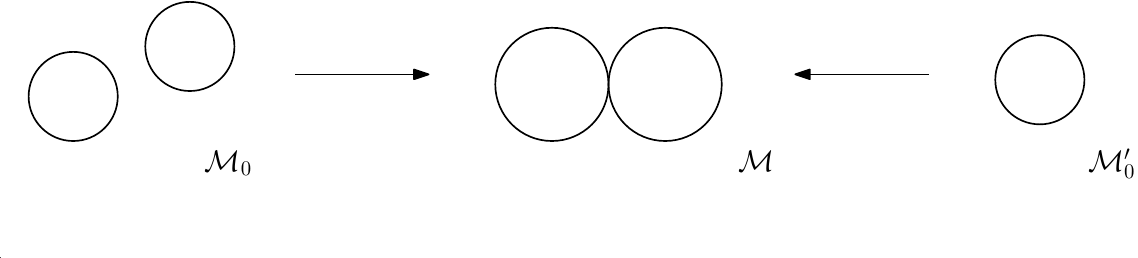}
\caption{According to Hypothesis \hyperref[hyp:1]{1}, $\MM$ cannot be two tangent circle.}
\label{fig:hyp1}
\end{figure}

In the literature, works often consider submanifolds, and Hypothesis \hyperref[hyp:2]{2} is usually stated as a lower bound on the reach $\tau$. 
Our hypothesis, stated as an upper bound on the norm $\rho$ of the second fundamental forms of the immersion, is weaker.
Indeed, under the assumption that the immersion is $\CC^2$, we have $\rho \leq \frac{1}{\tau}$.
The advantage of our formulation is that, in the case of an immersed manifold, the reach may be zero, hence cannot be used.
Note that Hypothesis \hyperref[hyp:2]{2} is equivalent to the following property: there exists a function $\alpha\colon [0,+\infty) \rightarrow [0,+\infty)$ such that $\lim_{r \rightarrow 0} \alpha = 1$, and such that for any $x_0 \in \MMo$, the image of the geodesic ball $\imm(\closedballG{x_0}{r}{\MMo})$ has reach lower bounded by $\alpha(r)\frac{1}{\rho}$. 
The fact that Hypothesis \hyperref[hyp:2]{2} implies this statement is a consequence of the proof of \cite[Proposition 6.1]{niyogi2008finding}, and the converse is a consequence of the proof of Proposition \ref{Paper2:prop:normalreach}.

The introduction of constants in Hypothesis \hyperref[hyp:3]{3} will allow derive explicit bounds for our method. Note that we do not suppose that the measure $\mu$ is given as an input, but only a close measure $\nu$ with respect to the Wasserstein distance. There is no hypothesis concerning the measure $\nu$.

Last, we think that Hypothesis \hyperref[hyp:4]{4} is a consequence of Hypotheses \hyperref[hyp:1]{1}, \hyperref[hyp:2]{2} and \hyperref[hyp:3]{3}, but we have not been able to prove it yet.
As a partial result, we prove that it is a consequence of Hypotheses \hyperref[hyp:1prime]{1'}, \hyperref[hyp:2]{2} and \hyperref[hyp:3]{3}, where Hypothesis \hyperref[hyp:1prime]{1'} is a strenghtening of Hypothesis \hyperref[hyp:1]{1} (see Proposition \ref{prop:quantif_normal_reach}).
We also show that Hypothesis \hyperref[hyp:4]{4} is a consequence of Hypotheses \hyperref[hyp:1]{1}, \hyperref[hyp:2]{2}, \hyperref[hyp:3]{3} and $\dim(\MMo)=1$ (see Remark \ref{rem:quantif_normalreach_dim1}).

\section{Reach of an immersed manifold}
\label{Paper2:sec:reach}
In this section, we introduce a new notion of reach, adapted to the immersed manifolds, and derive technical results that will be useful in the rest of the paper.
As an introduction, we consider an embedded manifold $u\colon \MMo \rightarrow \MM \subset \R^n$ with positive reach $\tau$.
Let $x_0, y_0$ be two points of $\MMo$. 
We wish to compare their geodesic distance $\geoD{x_0}{y_0}{\MM_0}$ and their Euclidean distance $\eucN{y-x}$.
A first inequality is true in general: 
\[\eucN{y-x} \leq \geoD{x_0}{y_0}{\MM_0}.\]
\noindent
Moreover, if they are close enough in geodesic distance---say $\geoD{x_0}{y_0}{\MM_0} \leq \tau$ for instance---then the inequality $\rho \leq \frac{1}{\tau}$ and Lemma \ref{Paper2:lem:Federergeod} Point \ref{lem:Federergeod:point3} yields
\[\geoD{x_0}{y_0}{\MM_0} \leq 2 \eucN{x-y}.\]
This section is devoted to obtaining such a converse inequality when the manifold $\MMo$ is only immersed, not embedded. In this case, the condition $\geoD{x_0}{y_0}{\MM_0} \leq \frac{1}{\tau}$ has to be turned into an upper bound on $\geoD{x_0}{y_0}{\MM_0}$ that depends on $x_0$ and $y_0$ (as we will obtain in Lemma \ref{Paper2:lem:comparisoneucgeod2}).

\subsection{Normal reach}
\label{Paper2:subsec:normalreach}
We consider an immersion $\imm \colon \MMo \rightarrow \MM \subset E$ which satisfies Hypothesis \hyperref[hyp:2]{2}.

\begin{definition}
\label{Paper2:def:normalreach}
For every $x_0 \in \MM_0$, let $\Lambda(x_0) = \left\{y_0 \in \MM_0 \mid  y_0 \neq x_0, ~x-y \bot T_y \MM \right\}$.
The \emph{normal reach} of $\MM_0$ at $x_0$ is defined as:
\[ \normalreacho{x_0} = \inf_{y_0 \in \Lambda(x_0)} \eucN{x-y}.\]
Observe that if $x_0, y_0$ are distinct points of $\MMo$ with $x=y$, then $x-y$ is orthogonal to any vector, hence $\lambda_0(x_0) = \eucN{x-y} = 0$.
Hence we can define the \emph{normal reach seen in $\MM$}, denoted $\lambda \colon \MM \rightarrow \mathbb{R}$, as 
\begin{align*}
\normalreach{x} = \left\{
    \begin{array}{ll}
        \normalreacho{\imm^{-1}(x)} & \text{if } x \text{ has only one preimage,} \\
        0 & \text{else.}
    \end{array}
\right.
\end{align*}
It satisfies the relation $\lambda_0 = \lambda \circ \imm$. 
\end{definition}

\begin{figure}[H]
\centering
\begin{minipage}{.49\linewidth}
\centering
\includegraphics[width=.95\linewidth]{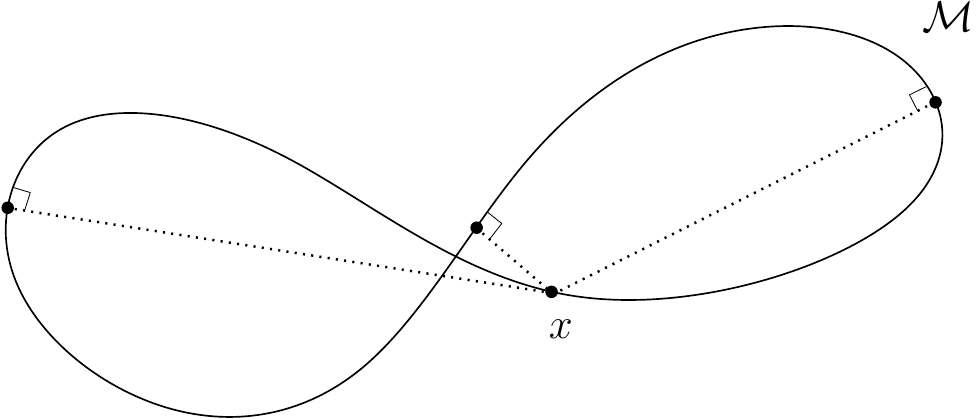}
\end{minipage}
\begin{minipage}{.49\linewidth}
\centering
\includegraphics[width=.95\linewidth]{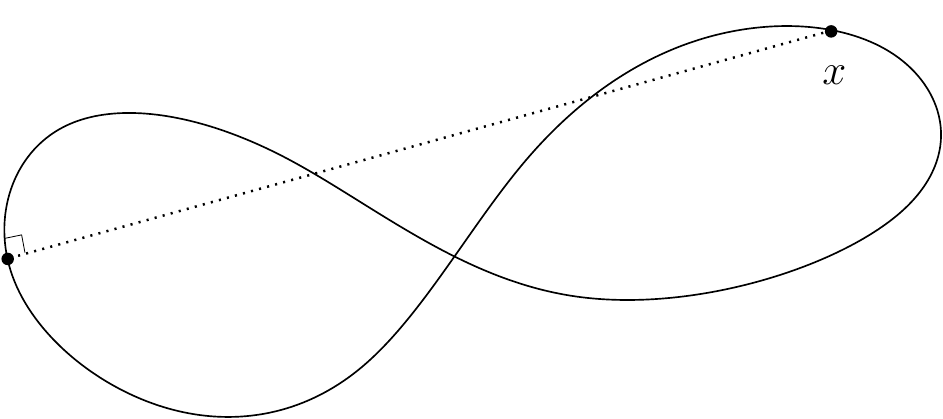}
\end{minipage}
\caption{The set $\Lambda(x_0)$ from Definition \ref{Paper2:def:normalreach}, for two different points $x_0$.}
\end{figure}

Note that $\Lambda(x_0)$ is closed, hence the infimum of Definition \ref{Paper2:def:normalreach} is attained.
Indeed, we can write $\Lambda(x_0) = L \setminus\{x_0\}$, with $L = \{y_0 \in \MM_0 \mid  x-y \bot T_y \MM \}$. The set $L$ is closed since it is the preimage of $\{0\}$ by the continuous map $y_0 \mapsto \eucN{p_{T_y \MM}(x-y)}$. Furthermore, $\{x_0\}$ is an isolated point of $\Lambda(x_0)$, since Lemma \ref{Paper2:lem:Federergeod} Point \ref{lem:Federergeod:point5} says that, for every $y_0$ in the geodesic ball $\openballM{x_0}{\frac{1}{\rho}}{\MM_0}$ such that $y_0 \neq x_0$, the vector $x-y$ is not orthogonal to $T_y \MM$, hence $y_0 \notin L$.

\begin{example}
\label{ex:normal_reach_lemniscate}
Suppose that $\MM$ is the lemniscate of Bernoulli, with diameter 2.
Figure \ref{Paper2:fig:normal_reach_lemniscate} represents the values of the normal reach $\lambda\colon \MM \rightarrow \R$.
Observe that $\lambda$ is not continuous.
\begin{figure}[H]
\centering
\includegraphics[width=.45\linewidth]{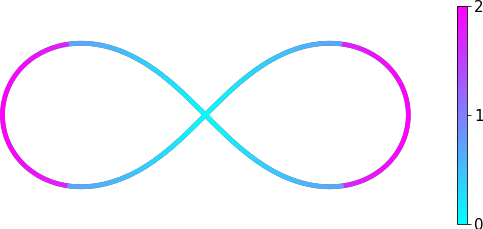}
\caption{Values of the normal reach on the lemniscate of Bernoulli.}
\label{Paper2:fig:normal_reach_lemniscate}
\end{figure}
\end{example}

\begin{remark}
\label{rem:semi-continuous}
The normal reach $\lambda_0$ is \emph{lower semi-continuous}, that is, for any sequence $(x_0^n)_{n \geq 0}$ of $\MMo$ converging to a point $x_0 \in \MMo$, we have $\liminf_{n\rightarrow \infty} \lambda_0(x_0^n) \geq \lambda_0(x_0)$.
Indeed, by definition, we can choose for every $n \geq 0$ a point $y_0^n \in \MMo$ such that $x_0^n \neq y_0^n$, $x^n-y^n\bot T_{y^n} \MM$ and $\lambda_0(x_0^n) = \eucN{x^n-y^n}$. Moreover, since $\MMo$ is compact, the sequence $(y_0^n)_{n \geq 0}$ admits an accumulation point $y_0$ such that $\liminf_{n\rightarrow \infty} \lambda_0(x_0^n) = \eucN{x-y}$. By continuity, we have $x-y \bot T_y \MM$. Moreover, since $\geoD{x_0^n}{y_0^n}{\MMo} \geq \frac{1}{\rho}$ for all $n \geq 0$ by Lemma \ref{Paper2:lem:Federergeod} Point \ref{lem:Federergeod:point5}, we deduce that $x_0 \neq y_0$. Consequently, we have $y_0 \in \Lambda(x_0)$, and $\lambda_0(x_0) \leq \eucN{x-y}$.
\end{remark}


Here is a key property of the normal reach:
\begin{lemma}
\label{Paper2:lem:connected}
Let $x_0 \in \MM_0$. Let $r \geq 0$ such that $r < \normalreach{x}$. 
Then $\imm^{-1}\left(\closedball{x}{r}\right)$ is connected.
\end{lemma}

\begin{figure}[H]
\centering
\includegraphics[width=.85\linewidth]{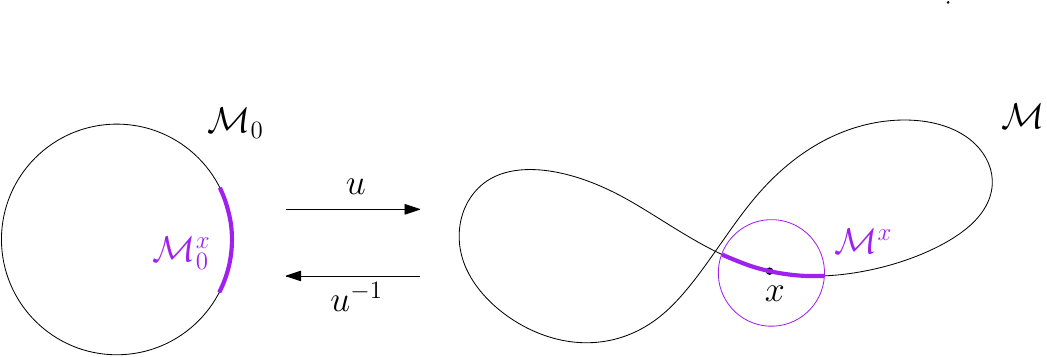}\\
\vspace{-0.5cm}
\includegraphics[width=.85\linewidth]{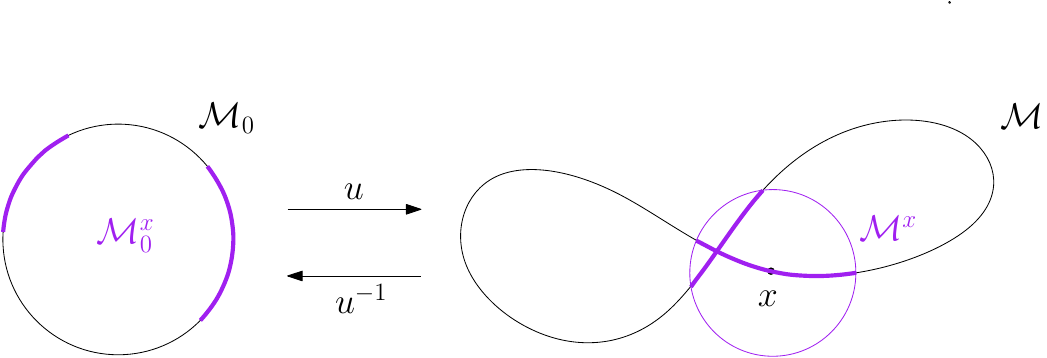}
\caption{Top: the set $\imm^{-1}\left(\closedball{x}{r}\right)$, with $r < \lambda(x)$, is connected.
Bottom: when $r \geq \lambda(x)$, it may not be connected.}
\end{figure}


\begin{proof}
Write $\MM^x =\MM \cap \closedball{x}{r}$ and $\MM^x_0 = \imm^{-1}(\MM^x)$.
By contradiction, suppose that $\MM^x_0$ is not connected.
Let $C \subset \MM^x_0$ be a connected component which does not contain $x_0$. Since $C$ is compact, we can consider a minimizer $y_0$ of $\{ \eucN{x-y} \mid  y_0 \in C \}$. 
It is clear that $y$ satisfies $x-y \bot T_y \MM$, otherwise it would not be a local minimizer.
Now, the properties $x-y \bot T_y \MM$ and $x_0 \neq y_0$ imply that $\eucN{x-y} \geq \lambda(x)$, which contradicts $r < \lambda(x)$.
\end{proof}

The following lemma is the counterpart of \cite[Proposition 6.3]{niyogi2008finding} for the normal reach. It allows to compare the geodesic  and Euclidean distances by only imposing a condition on the last one.
\begin{lemma}
\label{Paper2:lem:comparisoneucgeod2}
Let $x_0, y_0 \in \MM_0$. Denote $r = \eucN{x-y}$ and $\delta = \geoD{x_0}{y_0}{\MMo}$. 
Suppose that $r < \mini{\frac{1}{2 \rho}, \normalreach{x} }$.
Then 
\[\delta \leq c_{\ref{Paper2:lem:comparisoneucgeod2:index}}(\rho r) r \mathrm{~~~~~~~where~~~~~~~} c_{\ref{Paper2:lem:comparisoneucgeod2:index}}(t)=\frac{1}{t}\left(1 - \sqrt{1 - 2t}\right).\]
In other words, the following inclusion holds: $\imm^{-1}(\closedball{x}{r} ) \subset \closedballM{x_0}{c_{\ref{Paper2:lem:comparisoneucgeod2:index}}(\rho r) r}{\MM_0}$.
\end{lemma}
\noindent
Note that, for $t < \frac{1}{2}$, we have the inequalities $1 \leq c_{\ref{Paper2:lem:comparisoneucgeod2:index}}(t) \leq 1+2t < 2$.

\begin{proof}
Denote $\MM^x = \MM \cap \closedball{x}{r}$ and $\MM^x_0 = \imm^{-1}(\MM^x)$.
Let $C_0^x$ be the connected component of $x_0$ in $\MM^x_0$.
Let us show that $C_0^x$ is included in the closed geodesic ball $\closedballM{x_0}{c_{\ref{Paper2:lem:comparisoneucgeod2:index}}(\rho r) r}{\MM_0}$.

Let $\epsilon>0$ be such that $r < r + \epsilon < \frac{1}{2\rho}$.
For any tangent vector $v_0 \in T_{x_0} \MMo$ of unit norm, consider the unit-speed geodesic $\gamma_0 \colon I \rightarrow \MMo$ with $\gamma(0) = x_0$ and $\dot \gamma(0) = v_0$.
According to Lemma \ref{Paper2:lem:distancetocenter} Points \ref{lem:distancetocenter:point2} and \ref{lem:distancetocenter:point4}, for any $t$ in $\left(\frac{1}{\rho}(1-\sqrt{1-2\rho r}, \frac{1}{\rho}(1+\sqrt{1-2\rho r})\right)$, we have $\eucN{x-\gamma(t)}>r$. Consequently, $C_0^x$ and the geodesic sphere $\sphereG{x_0}{t}{\MM_0}$ are disjoint. 
We deduce that $C_0^x \subset \openballM{x_0}{t}{\MMo}$.
In particular, we have $C_0^x \subset\closedballM{x_0}{t^*}{\MM_0}$, where
$$
t^* = \frac{1}{\rho}(1-\sqrt{1-2\rho r} )= c_{\ref{Paper2:lem:comparisoneucgeod2:index}}(\rho r) r.
$$

Besides, $\MM_0^x$ is connected by Lemma \ref{Paper2:lem:connected}. We deduce that $\MM_0^x \subset \openballM{x_0}{c_{\ref{Paper2:lem:comparisoneucgeod2:index}}(\rho r) r}{\MMo}$.
In particular, $y_0$ must satisfy $\geoD{x_0}{y_0}{\MM_0} \leq c_{\ref{Paper2:lem:comparisoneucgeod2:index}}(\rho r) r$, and we deduce the result.
\end{proof}

Following the same idea, we can prove the following lemma.
It states that normal reach $\lambda_0(x_0)$ can be understood as the minimal distance $\eucN{x-y}$ for points $y_0$ far enough from $x_0$ in $\MMo$ 

\begin{lemma}
\label{lem:exists_projection}
Let $x_0,y_0 \in \MMo$ such that $\eucN{x-y}< \frac{1}{2\rho}$ and $\geoD{x_0}{y_0}{\MMo} \geq 4 \eucN{x-y}$. 
Then there exists a $z_0 \in \MMo$ such that $\geoD{y_0}{z_0}{\MMo} < 2\eucN{x-y}$, $\eucN{x-z} \leq \eucN{x-y}$ and $x-z \bot T_z\MM$.
Consequently, $\lambda_0(x_0) \leq \eucN{x-y}$.
\end{lemma}

\begin{proof}
Denote $r = \eucN{x-y}$ and $\MMo^x = u^{-1}(\closedball{x}{r})$.
Let $C_0^x$ denote the connected component of $x_0$ in $\MMo^x$.
As we have seen in the proof of Lemma \ref{Paper2:lem:comparisoneucgeod2}, $C_0^x$ is included in the closed geodesic ball $\closedballM{x_0}{c_{\ref{Paper2:lem:comparisoneucgeod2:index}}(\rho r) r}{\MM_0}$. 
Since $r < \frac{1}{2\rho}$, we have the inequality $c_{\ref{Paper2:lem:comparisoneucgeod2:index}}(\rho r) < 2$, and we deduce $C_0^x \subset \openballM{x_0}{2r}{\MMo}$.
Now, let $C_0^y$ denote the connected component of $y_0$ in $\MMo^x$.
Similarly, we have $C_0^y \subset \openballM{x_0}{2r}{\MMo}$.
Since $\geoD{x_0}{y_0}{\MMo} \geq 4r$, we deduce that $C_0^y$ and $C_0^x$ are disjoint. Now, let $z_0$ be a minimizer of $z_0 \mapsto \eucN{x-z}$ on $C_0^y$. We have $z_0 \neq x_0$. Moreover, $x-z \bot T_z \MM$, otherwise $z$ would not be a minimizer. Hence, by definition of the normal reach, $\lambda_0(x_0) \leq \eucN{x-z} \leq \eucN{x-y}$.
\end{proof}

The following proposition connects the normal reach to the usual notion of reach, in the case where $\MMo$ is embedded.

\begin{proposition}
\label{Paper2:prop:normalreachtoreach}
Suppose that $\imm\colon \MM_0 \rightarrow \MM \subset E$ is a $\CC^2$-embedding. Let $\tau>0$ be the reach of $\MM$. We have 
\[ \tau = \mini{\frac{1}{\rho_*},\frac{1}{2} \lambda_*}, \]
where $\rho_*$ is the supremum of the operator norm of the second fundamental forms of $\MMo$, and $\lambda_* = \inf_{x \in \MM}\normalreach{x}$ is the infimum of the normal reach.
\end{proposition}

\begin{proof}
We first prove that $\tau \geq \mini{ \frac{1}{\rho_*} , \frac{1}{2} \lambda_*}$.
According to Theorem \ref{background:th:aamari_reach}, two cases may occur: the reach is either caused by a bottleneck or by curvature.
In the first case, there exists $x, y \in \MM$ and $z \in \med{\MM}$ with $\eucN{x-y} = 2\tau$ and $\eucN{x-z} = \eucN{y-z} = \tau$. We deduce that $x-y \bot T_y \MM$. Hence by definition of $\normalreach{x}$, 
\[\normalreach{x} \leq \eucN{x-y} = 2 \eucN{x-z} \leq 2 \tau .\]
In the second case, there exists $x \in \MM$ and an arc-length parametrized geodesic $\gamma\colon I \rightarrow \MM$ such that $\gamma(0) = x$ and $\eucN{ \ddot{\gamma}(0) } = \frac{1}{\tau}$.
But $\eucN{\ddot \gamma(0)} \leq \rho_*$, hence $\frac{1}{\tau} \leq \rho_*$.
This disjunction shows that $\tau \geq \mini{\frac{1}{\rho}, \frac{1}{2} \lambda_\mathrm{min}}$.

We now prove that $\tau \leq \mini{\frac{1}{\rho_*}, \frac{1}{2} \lambda_*}$.
The inequality $\tau \leq \frac{1}{\rho_*}$ appears in \cite[Proposition 6.1]{niyogi2008finding}. 
To prove $\tau \leq \frac{1}{2}\lambda_*$, consider any $x_0 \in \MM_0$. Let $y_0 \in \Lambda(x_0)$ such that $\eucN{x - y}$ is minimal.
Using Theorem \ref{background:th:federer} and the property $x-y \bot T_y \MM$, we immediately have
\begin{align*}
\tau 
&\leq \frac{\eucN{x-y}^2}{2 \dist{y-x}{T_y \MM}}
= \frac{\eucN{x-y}}{2} 
= \frac{\lambda(x)}{2},
\end{align*}
and the result follows.
\end{proof}

\begin{remark}
\label{rem:generalization_charac_reach}
We can generalize Proposition \ref{Paper2:prop:normalreachtoreach} as follows: suppose that $\imm\colon \MM_0 \rightarrow \MM \subset E$ is a $\CC^2$-immersion (potentially an embedding) that satisfies Hypothesis \hyperref[hyp:1]{1}. Let $\tau\geq0$ be the reach of $\MM$. We have 
\[ \tau = \mini{\frac{1}{\rho_*}, \frac{1}{2} \lambda_*}. \]
In fact, if $\imm$ is not an embedding, we can show that $\tau = 0$ and $\lambda_* = 0$. 
Indeed, if $x \in \MM$ is a point that admits several preimages by $\imm$, we have seen that $\lambda(x) = 0$. On the other hand, by Hypothesis \hyperref[hyp:1]{1}, the tangent cone $\Tan{\MM}{x}$ is not an affine subspace but an union of several affine subspaces. In this case, we see that Theorem \ref{background:th:federer} cannot hold, hence that $\tau = 0$.

Note that if Hypothesis \hyperref[hyp:1]{1} is not satisfied, it is possible to have $\tau > 0$ but $ \lambda_* = 0$. This would be the case for any non-injective immersion $\imm\colon \MM_0 \rightarrow \MM$ such that its image $\MM$ is a $\CC^{2}$-submanifold.
%
\end{remark}
\medbreak

As shown by the previous remark, when $\MM$ is not a submanifold, global quantities such as the reach $\tau$ or the minimal normal reach $\lambda_*$ are zero.
However, as shown by the following proposition, the normal reach gives a scale at which $\MM$ still behaves well. 
Note that we shall not make use of this result in the rest of the paper. 
\begin{proposition}
Assume that $\MMo$ satisfies Hypothesis \hyperref[hyp:2]{2}.
Let $x \in \MM_0$ and $r < \mini{\frac{1}{4\rho}, \normalreach{x}}$. 
Then $\MM \cap \closedball{x}{r}$ is a set of reach at least $\frac{1-2\rho r}{\rho}$. 
\label{Paper2:prop:normalreach}
\end{proposition}

\begin{figure}[H]
\centering
\includegraphics[width=.55\linewidth]{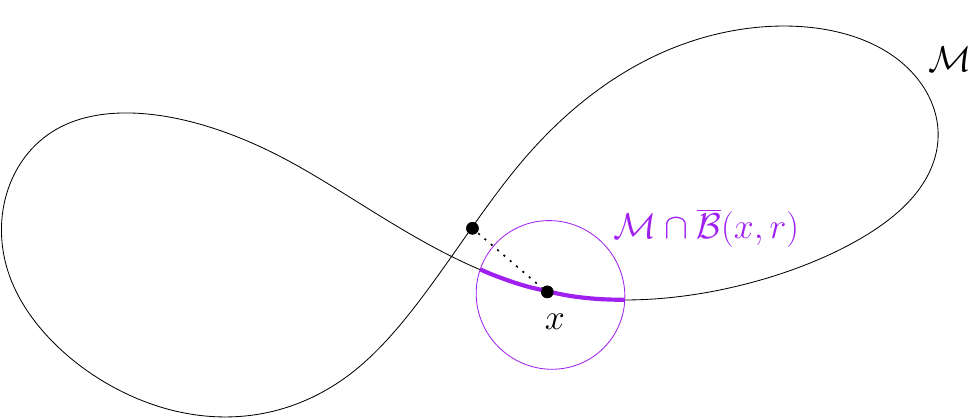}
\caption{The set $\MM \cap \closedball{x}{r}$ has positive reach.}
\end{figure}

\begin{proof}
Denote $\MM^x = \MM \cap \closedball{x}{r}$ and $\MM^x_0 = \imm^{-1}(\MM^x)$.
In order to give a bound on the reach of $\MM^x$, we will use the characterization of Theorem \ref{background:th:federer}. 
First, let us prove that for every $y_0, z_0 \in \MM^x_0$, 
\begin{align*}
\dist{z-y}{T_y \MM} \leq \frac{\rho}{2(1-2\rho r)} \eucN{z-y}^2.
\end{align*}
Let $y_0, z_0 \in \MM_0^x$, and $\delta = \geoD{y_0}{z_0}{\MMo}$. 
Lemma \ref{Paper2:lem:Federergeod} Point \ref{lem:Federergeod:point3} gives $\delta \leq \frac{1}{1 - \frac{\rho}{2} \delta} \eucN{y-z}$. 
Moreover, $\delta \leq \geoD{y_0}{x_0}{\MMo} + \geoD{x_0}{z_0}{\MMo} \leq 2 c_{\ref{Paper2:lem:comparisoneucgeod2:index}}(\rho r) r$. Hence,
\begin{align*}
\frac{1}{1 - \frac{\rho}{2} \delta}
&\leq \frac{1}{1 - c_{\ref{Paper2:lem:comparisoneucgeod2:index}}(\rho r) \rho r}
= \frac{1}{\sqrt{1-2\rho r}},
\end{align*}
and we deduce that
\begin{equation}
\label{Paper2:eq:proofreachmajoration}
\delta \leq \frac{1}{\sqrt{1-2\rho r}} \eucN{y-z}.
\end{equation}
Besides, Lemma \ref{Paper2:lem:Federergeod} Point \ref{lem:Federergeod:point2} gives $\dist{z-y}{T_y \MM} \leq \frac{\rho}{2} \delta^2$, and combining these two inequalities yields
$\dist{z-y}{T_y \MM} \leq \frac{\rho}{2(1-2\rho r)} \eucN{z-y}^2$.


Secondly, let us prove that 
\begin{equation}
\dist{z-y}{\Tan{\MM^x}{y}} \leq  \frac{\rho}{2(1-2\rho r)} \eucN{z-y}^2,
\label{Paper2:eq:proofreach}
\end{equation}
where $\Tan{\MM^x}{y}$ is the tangent cone at $y$ of the closed set $\MM^x$. 
According to Equation \eqref{Paper2:eq:proofreachmajoration}, it is enough to prove that $\Tan{\MM^x}{y} = T_y \MM$. We shall prove that $\MM$ does not self-intersect in $\closedball{x}{r}$.
According to Lemma \ref{Paper2:lem:comparisoneucgeod2}, we have $\MM_0^x \subset \openballM{x_0}{\frac{1}{\rho}}{\MMo}$. 
Using Lemma \ref{Paper2:lem:Federergeod} Point \ref{lem:Federergeod:point4}, we get that $\imm$ is injective on $\MM_0^x$, as wanted.
%
%
To conclude the proof, it follows from Theorem \ref{background:th:federer} and Equation \eqref{Paper2:eq:proofreach} that $\MM^x$ has reach at least $\frac{1-2\rho r}{\rho}$.
\end{proof}

\subsection{Probabilistic bounds under normal reach conditions}
\label{subsec:probabilistic_bounds}
We now consider $\MMo$ and $\muo$ which satisfy Hypotheses \hyperref[hyp:2]{2} and \hyperref[hyp:3]{3}.
The aim of this subsection is to provide a quantitative control of the measure $\mu = \imm_* \mu_0$, that is, bounds on the measure of balls and annuli (see Propositions \ref{Paper2:prop:probabilitybounds} and \ref{Paper2:prop:probabilityboundssqrt}).
We do so by pulling-back $\mu$ on the tangent spaces $T_x \MM$, where it is simpler to compute integrals (see Lemma \ref{Paper2:lem:densityg}).

Recall that the exponential map of $\MMo$ at a point $x_0 \in \MMo$ is denoted
\begin{align*}
\exp_{x_0}^{\MMo} \colon T_{x_0} \MMo \rightarrow \MMo.
\end{align*}
To ease the reading of this subsection, we introduce the \emph{exponential map seen in $\MM$}, denoted $\exp^{\MM}_{x} \colon T_{x} \MM \rightarrow \MM$.
It is defined as
\begin{align*}
\exp^{\MM}_{x} =  u \circ \exp_{x_0}^{\MMo} \circ (d_{x_0} \imm)^{-1}.
\end{align*}
Note that the map $\exp^{\MM}_{x}$ is well-defined, even if $x$ is a self-intersection point of $\MM$. Indeed, $\exp^{\MM}_{x}$ will always refer implicitely to a choice of point $x_0$ such that $x = \imm(x_0)$. This is consistent with the notation conventions of Subsect. \ref{Paper2:subsec:model}.
This map fits in the following commutative diagram:
\[
\begin{tikzcd}
\MM_0 \arrow[r, "\imm"] 
&\MM 
\\
T_{x_0} \MMo \arrow[r, "d_{x_0} \imm"] \arrow[u, "\exp^{\MMo}_{x_0}"]
& T_{x} \MM \arrow[u, dashed, "\exp^{\MM}_{x}" right]
\end{tikzcd}
\]
We also define the map $\overline \exp^{\MM}_{x}$ as the restriction of $\exp^{\MM}_{x}$ to the closed ball $\closedballM{0}{\frac{2}{\rho}}{T_{x}\MM}$ of $T_{x}\MM$.
It is injective by Lemma \ref{Paper2:lem:Federergeod} Point \ref{lem:Federergeod:point4}, and its image is $\imm\left(\closedballM{x_0}{\frac{2}{\rho}}{\MMo}\right)$.
Moreover, for any $r \leq \mini{\frac{1}{2 \rho}, \normalreach{x} }$, $\imm\left(\closedballM{x_0}{\frac{2}{\rho}}{\MMo}\right)$ contains $\MM\cap\closedball{x}{r}$ by Lemma \ref{Paper2:lem:comparisoneucgeod2}, hence we can consider its inverse
\begin{equation}
\label{eq:exp_map_seen_in_M}
(\overline \exp^{\MM}_{x})^{-1} \colon \MM\cap\closedball{x}{r} \longrightarrow T_x\MM.
\end{equation}
The next lemma gathers previous results.
We remind the reader that the $d$-dimensional Jacobian has been defined in Subsect. \ref{subsec:riemannian}.

\begin{lemma}
\label{Paper2:lem:regularityexp}
Let $x_0 \in \MMo$ and $r < \mini{\frac{1}{2\rho} , \lambda(x)}$.
Denote $\closedballB_0 = \left(\overline \exp^{\MM}_x\right)^{-1}\left(\closedball{x}{r}\right)$.
We have the inclusions
\begin{align*}
\closedballG{0}{r}{T_{x}\MM} 
~\subset~ \closedballB_0 
~\subset~ \closedballG{0}{c_{\ref{Paper2:lem:comparisoneucgeod2:index}}(\rho r)r }{T_{x}\MM}.
\end{align*}
Moreover, for all $v \in \closedballB_0$, the Jacobian $J_v$ of $\overline\exp^{\MM}_x$, 
is bounded by 
\begin{align*}
\bigg(1-\frac{(r\rho)^2}{6}\bigg)^d \leq J_v \leq \big(1+(r\rho)^2\big)^d,
\end{align*}
and these terms are bounded by $\Jmin = (\frac{23}{24})^d$ and $\Jmax = (\frac{5}{4})^d$.
\end{lemma}
\begin{proof}
The inclusions come from Lemma \ref{Paper2:lem:comparisoneucgeod2}.
The bounds on the Jacobian come from Lemma \ref{Paper2:lem:regularityexp0} and the fact that $c_{\ref{Paper2:lem:comparisoneucgeod2:index}}(\rho r)r \leq 2 r \leq \frac{1}{\rho} \leq \frac{\pi}{2\sqrt{2}\rho}$ when $r < \frac{1}{2\rho}$.
\end{proof}

We now study the measure $\mu$. By definition, it is the push-forward of $\muo$ by $\imm$.
By applying the coarea formula, and in particular Equation \eqref{eq:density_coarea}, we obtain that $\mu$ admits the following density against $\HH^d_\MM$, the $d$-dimensional Hausdorff measure restricted to $\MM$:
\begin{align*}
f(x) = \sum_{ x_0 \in \imm^{-1}(\{x\}) } f_0 ( x_0 ).
\end{align*}
Indeed, in this case, the Jacobian is always 1, since $\MMo$ has been given the pull-back Riemannian metric.
Note that if $x$ has only one preimage by $u$---i.e., if $\normalreach{x}>0$---then $f(x) = f_0 \circ \imm^{-1}(x)$.
In the rest of the paper, we will only use $f$ on points $x$ such that $\lambda(x)>0$.
This is motivated by the fact that in Sect. \ref{Paper2:sec:tangentspaceestimation} and \ref{Paper2:sec:topoinference}, we will assume Hypothesis \hyperref[hyp:4]{4}, which gives that the measure of the set $\{x_0 \in \MMo \mid \lambda_0(x_0)=0\}$ is zero.
Moreover, we have a Lipschitz-like property for the density $f$, valid as long as the points are chosen far enough from the self-intersection of $\MM$:

\begin{lemma}
\label{Paper2:rem:lipschitz}
For all $x_0, y_0 \in \MMo$ such that $\eucN{x-y} < \mini{\frac{1}{2 \rho} , \lambda(x)}$, we have
\[ |f(x) - f(y)| \leq 2L_0 \eucN{x-y}. \]
\end{lemma}

\begin{proof}
Recall that, by Hypothesis \hyperref[hyp:3]{3}, the density $f_0$ is $L_0$-Lipschitz with respect to the geodesic distance: for all $x_0, y_0 \in \MMo$, we have  $|f_0(x_0) - f_0(y_0)| \leq L_0 \cdot \geoD{x_0}{y_0}{\MMo}$.
To prove the lemma, we start with the case where $y$ has only one preimage by $\imm$, so that we can write $f(y) = f_0 \circ \imm^{-1}(y)$. 
Since $\eucN{x-y} < \lambda(x)$ by assumption, we have $0<\lambda(x)$, hence $x$ also has only one preimage.
Now we have
\begin{align*}
\left|f(x) - f(y)\right| &= \left|f_0 \circ \imm^{-1}(x) - f_0 \circ \imm^{-1}(y)\right|  \\
&\leq L_0 \cdot \geoD{\imm^{-1}(x)}{\imm^{-1}(y)}{\MMo}  \\
&\leq 2 L_0 \eucN{x-y},
\end{align*}
where we used Lemma $\ref{Paper2:lem:comparisoneucgeod2}$ on the last inequality.
Now we prove that $\eucN{x-y} < \mini{\frac{1}{2 \rho} , \lambda(x)}$ implies that $y$ has only one preimage. Let $r = \eucN{x-y}$, and suppose by contradiction that $y_0, y_0'$ are two distinct preimages. According to Lemma \ref{Paper2:lem:Federergeod} Point \ref{lem:Federergeod:point4}, $\geoD{y_0}{y_0'}{\MMo} \geq \frac{2}{\rho}$. But Lemma \ref{Paper2:lem:comparisoneucgeod2} says that $\imm^{-1}(\openball{x}{r}) \subset \openballM{x_0}{2 r}{\MM_0} \subset \openballM{x_0}{\frac{1}{\rho}}{\MM_0}$, which yields the contradiction $\geoD{y_0}{y_0'}{\MMo} \leq \geoD{y_0}{x_0}{\MMo} + \geoD{y_0'}{x_0}{\MMo} < \frac{1}{\rho}$.
\end{proof}

We now state the key lemma of this subsection, that allows to go from a measure on $\MMo$ to a measure on $\MM$. 

\begin{lemma}
\label{Paper2:lem:densityg}
Let $x_0 \in \MMo$ and $r < \mini{\frac{1}{2\rho} , \lambda(x)}$.
Consider $\loc{\mu}{x}$, the measure $\mu$ restricted to $\closedball{x}{r}$, define $\closedballB_0 = \left(\overline \exp^{\MM}_x\right)^{-1}\left(\closedball{x}{r}\right)$ and the push-forward
\begin{align*}
\loc{\nu}{x} = \left(\overline \exp_x^{\MM}\right)^{-1}_* \loc{\mu}{x},
\end{align*}
where $(\overline \exp^{\MM}_{x})^{-1}$ has been defined in Equation \eqref{eq:exp_map_seen_in_M}.
The measure $\loc{\nu}{x}$ admits the following density against the $d$-dimensional Hausdorff measure on $T_x \MM$:
\begin{align*}
g(v) &= f\big(\overline \exp_{x}^{\MM}(v)\big) \cdot J_{v} \cdot \indicatrice{\closedballB_0}(v).
\end{align*}
Moreover, for all $v \in \closedballB_0$, the map $g$ satisfies 
$|g(v) - g(0)| \leq c_{\ref{Paper2:lem:densityg:index}} r,$
where $c_{\ref{Paper2:lem:densityg:index}} = 4 L_0 \Jmax + \frac{d}{2}\rho \fmax$.
\end{lemma}

\begin{figure}[H]
\centering
\includegraphics[width=.95\linewidth]{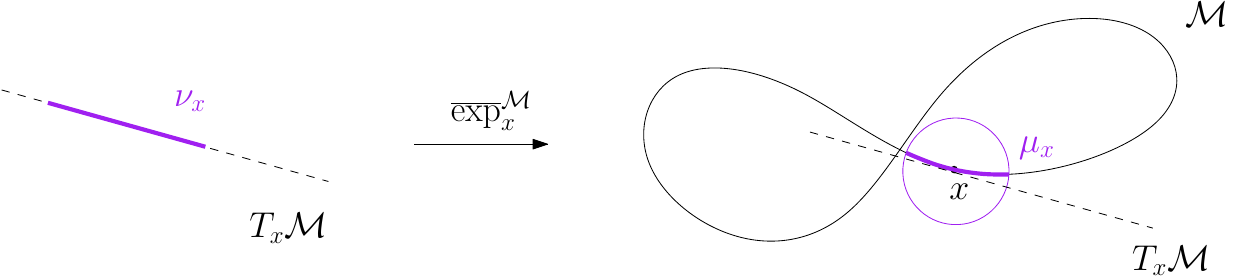}
\caption{Measures involved in Lemma \ref{Paper2:lem:densityg}.}
\end{figure}

\begin{proof}
The expression of $g$ comes from the coarea formula (Equation \eqref{eq:density_coarea}) applied to the map 
$(\overline \exp^{\MM}_{x})^{-1} \colon \MM\cap\closedball{x}{r} \rightarrow \closedballB_0$,
and the measure $\loc{\nu}{x} = \left(\overline \exp_x^{\MM}\right)^{-1}_* \loc{\mu}{x}$.
To prove the inequality, observe that we can decompose
\begin{align*}
g(v) - g(0)
&= f\big( \exp^{\MM}_x(v) \big) J_v - f\big( \exp^{\MM}_x(0) \big)J_{0}\\
&= \bigg[ f\big(\exp^{\MM}_x(v)\big)-f\big(\exp^{\MM}_x(0)\big) \bigg] J_{v} + \left(J_v-J_{0}\right)f\big(\exp^{\MM}_x(0)\big).
\end{align*}
On the one hand, using Lemma \ref{Paper2:rem:lipschitz}, we get
\begin{align*}
\left|f\left(\overline \exp^{\MM}_x(v)\right)-f\left(\overline \exp^{\MM}_x(0)\right)\right| 
&\leq 2L_0 \eucN{\overline \exp^{\MM}_x(v)-\overline \exp^{\MM}_x(0)} \\
&= 2L_0 \eucN{\imm \circ \exp^{\MMo}_{x_0}(v) - \imm \circ \exp^{\MMo}_{x_0}(0)} \\
&\leq 2L_0 \cdot \geoD{\overline \exp^{\MMo}_{x_0}(v) }{ x_0 }{\MMo}
= 2L_0 \eucN{v}.
\end{align*}
On the other hand, $J_0 = 1$ and $\big(1-\frac{(r\rho)^2}{6}\big)^d \leq J_{v} \leq \big(1+(r\rho)^2\big)^d$ yield $|J_v-J_{0}| \leq d (\rho r)^2 \leq \frac{d}{2}\rho r$.
Using the triangle inequality we see that
\begin{align*}
|g(v) - g(0)|
&\leq 2L_0 \eucN{v} \Jmax + \fmax \frac{d}{2}\rho r
\leq \left(4L_0 \Jmax + \fmax \frac{d}{2}\rho\right) r,
\end{align*}
as wanted.
\end{proof}

\begin{remark}
\label{Paper2:rem:densityg0}
In the same vein as Lemma \ref{Paper2:lem:densityg}, define $\overline \exp_{x_0}^{\MMo}$ to be the map $\exp_{x_0}^{\MMo}$ restricted to $\closedballM{0}{\frac{2}{\rho}}{T_{x_0} \MMo}$. 
For any $x_0 \in \MMo$, let $\mu_0^{x_0}$ be the measure $\mu_0$ restricted to $\closedballM{x_0}{\frac{2}{\rho}}{\MMo}$, and define the measure 
\begin{align*}
\nu_0 = (\overline \exp_{x_0}^{\MMo})^{-1} \mu_0^{x_0}.
\end{align*}
Using the area formula, one shows that $\nu_0$ admits the following density over the $d$-dimensional Hausdorff measure on $T_{x_0} \MMo$:
\begin{align*}
g_0(v) = f_0\left(\overline \exp_{x_0}^{\MMo}(v)\right) \cdot J_v \cdot \indicatrice{\closedballM{0}{\frac{2}{\rho}}{T_{x_0} \MMo}}(v).
\end{align*}
\end{remark}
\medbreak

Now we can use the density $g$ of Lemma \ref{Paper2:lem:densityg} to derive explicit bounds on $\mu$. We remind the reader that $\volball{d}$ denote the volume of the unit ball of $\R^d$.
\begin{proposition}
\label{Paper2:prop:probabilitybounds}
Let $x_0 \in \MM_0$, $r \leq \mini{\frac{1}{2\rho} , \normalreach{x}}$ and $s \in [0,r]$.
We have
\begin{enumerate}
\itemsep.15cm
\item $\mu\left(\closedball{x}{r}\right) \geq c_{\ref{Paper2:hyp:muA:index}} r^d$, \label{prop:probabilitybounds:point1}
\item $ \left| \frac{\mu(\closedball{x}{r})}{\volball{d}r^d} - f(x) \right| \leq c_{\ref{Paper2:prop:probabilitybounds:index}} r,$\label{prop:probabilitybounds:point2}
\item $\mu\left(\closedball{x}{r} \setminus \closedball{x}{s}\right) \leq c_{\ref{Paper2:hyp:muB:index}}  r^{d-1}(r-s).$\label{prop:probabilitybounds:point3}
\end{enumerate}
with $c_{\ref{Paper2:hyp:muA:index}} = \fmin \Jmin \volball{d}$, $c_{\ref{Paper2:prop:probabilitybounds:index}} = c_{\ref{Paper2:lem:densityg:index}} + \fmax \Jmax  d 2^{d} \rho$ and $c_{\ref{Paper2:hyp:muB:index}} = d 2^{d}\fmax \Jmax \volball{d}$.
\end{proposition}

\begin{figure}[H]
\centering
\begin{minipage}{.49\linewidth}
\centering
\includegraphics[width=.95\linewidth]{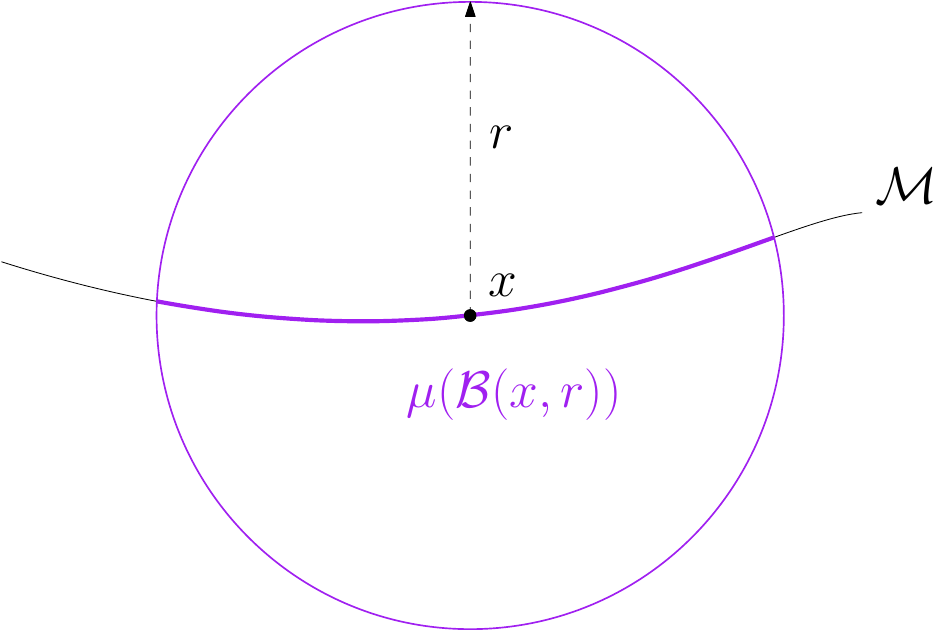}
\end{minipage}
\begin{minipage}{.49\linewidth}
\centering
\includegraphics[width=.95\linewidth]{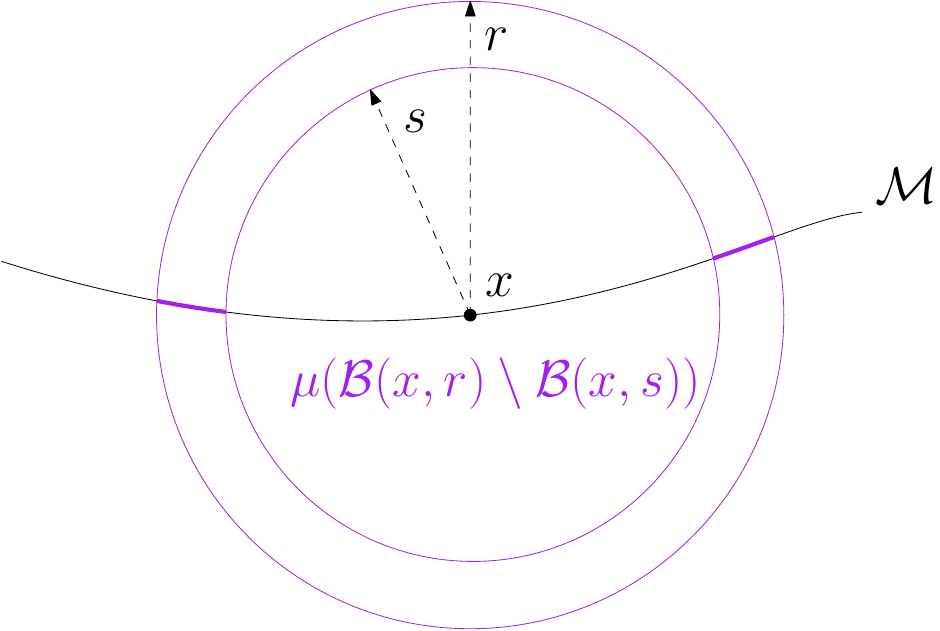}
\end{minipage}
\caption{Representation of Proposition \ref{Paper2:prop:probabilitybounds} Point \ref{prop:probabilitybounds:point1} (left) and Point \ref{prop:probabilitybounds:point3} (right).}
\end{figure}

\begin{proof}
Consider the map $(\overline \exp_x^{\MM})^{-1}$ defined in Equation \eqref{eq:exp_map_seen_in_M} and the measure $\loc{\nu}{x} = \left(\overline \exp_x^{\MM}\right)^{-1}_* \loc{\mu}{x}$ as defined in Lemma \ref{Paper2:lem:densityg}.
In the following, we write $T= T_x \MM$, and $\closedballB_0 = \left(\overline  \exp_{x}^{\MM}\right)^{-1}\left( \closedball{x}{r} \right)$.

\medbreak \noindent \emph{Point \ref{prop:probabilitybounds:point1}.} 
By definition of $\nu_x$, we have
$\mu\left(\closedball{x}{r}\right)
= \nu_x\big( \closedballB_0 \big)$.
Writing down the density $g$ of $\nu_x$ yields
\begin{align*}
\nu_x\big(\closedballB_0\big)
= \int_{ \closedballB_0 } g(v) \dd \HH^d(v).
\end{align*}
According to the expression of $g$ in Lemma \ref{Paper2:lem:densityg}, we have $g \geq \fmin \Jmin$. Therefore,
\begin{align*}
\int_{ \closedballB_0 } g(v) \dd \HH^d(v)
\geq \int_{ \closedballB_0 } \fmin \Jmin \dd \HH^d(v)
= \fmin \Jmin \HH^d\big( \closedballB_0 \big).
\end{align*}
Besides, since $\closedballB_0 \supset \closedballM{0}{r}{T} $, we have
\begin{align*}
\HH^d\big( \closedballB_0 \big) \geq \HH^d\left( \closedballM{0}{r}{T}\right) = \volball{d}r^d.
\end{align*}
We finally obtain $\nu_x\big(\closedballB_0\big) \geq \fmin \Jmin \volball{d}r^d$.


\medbreak \noindent \emph{Point \ref{prop:probabilitybounds:point2}.} 
Observe that $\int_{\closedballM{0}{r}{T}} f(x) \dd \HH^d(v) = f(x) \volball{d} r^d$. Hence
\begin{align}
\label{terms:probabilitybounds} \bigg|\mu(\closedball{x}{r}) - f(x) \volball{d} r^d\bigg|
&=\left | \int_{ \closedballB_0 } g(v) \dd \HH^d(v) 
- \int_{\closedballM{0}{r}{T}} f(x) \dd \HH^d(v) \right|\nonumber \\
&\leq \underbrace{\left|\int_{\closedballM{0}{r}{T}} (f(x)-g(v)) \dd \HH^d(v)\right|}_{\text{A}} 
+ \underbrace{\left|\int_{ \closedballB_0 \setminus \closedballM{0}{r}{T}} g(v) \dd \HH^d(v)\right |}_{\text{B}}. 
\end{align}
To bound Term \hyperref[terms:probabilitybounds]{A}, notice that $g(0) = f(\exp_x^{\MM}(0)) J_0 = f(x)$. Hence we can write:
\begin{align*}
\left|\int_{\closedballM{0}{r}{T}} (f(x)-g(v)) \dd \HH^d(v)\right| 
&\leq \int_{\closedballM{0}{r}{T}} \big|g(0)-g(v)\big| \dd \HH^d(v) .
\end{align*}
Now, Lemma \ref{Paper2:lem:densityg} gives $|g(v) - g(0)| \leq c_{\ref{Paper2:lem:densityg:index}} r$, and we eventually obtain the inequality $\left|\int_{\closedballM{0}{r}{T}} (f(x)-g(v)) \dd \HH^d(v)\right| \leq c_{\ref{Paper2:lem:densityg:index}} r \volball{d} r^d$.

On the other hand, we bound Term \hyperref[terms:probabilitybounds]{B} thanks to the inclusion $\closedballB_0 \subset \closedballM{0}{c_{\ref{Paper2:lem:comparisoneucgeod2:index}}(\rho r) r}{T}$. 
Denote $\mathcal{A} = \closedballM{0}{c_{\ref{Paper2:lem:comparisoneucgeod2:index}}(\rho r)r}{T} \setminus \closedballM{0}{r}{T}$.
We have $\closedballB_0 \setminus \closedballM{0}{r}{T}\subset \mathcal{A}$, hence
\begin{align*}
\int_{ \closedballB_0 \setminus \closedballM{0}{r}{T}} g(v) \dd \HH^d(v)
&\leq \int_{ \mathcal{A} } g(v) \dd \HH^d(v) 
\leq \fmax \Jmax  \HH^d(\mathcal{A}).
\end{align*}
Moreover, we have
\begin{align*}
\HH^d(\mathcal{A}) 
&= \HH^d\left(\closedballM{0}{c_{\ref{Paper2:lem:comparisoneucgeod2:index}}(\rho r)r}{T}\right) - \HH^d\left(\closedballM{0}{r}{T}\right)
= \volball{d} \left(c_{\ref{Paper2:lem:comparisoneucgeod2:index}}(\rho r)^d - 1\right)r^d.
\end{align*}
We can use $c_{\ref{Paper2:lem:comparisoneucgeod2:index}}(\rho r) \leq 1+2 \rho r \leq 2$ and the inequality $a^d-1 \leq d(a-1)a^{d-1}$, where $a\geq 1$, to get
\begin{align*}
\left(c_{\ref{Paper2:lem:comparisoneucgeod2:index}}(\rho r)^d - 1\right)
&\leq  d \cdot \left(c_{\ref{Paper2:lem:comparisoneucgeod2:index}}(\rho r) - 1\right) \cdot c_{\ref{Paper2:lem:comparisoneucgeod2:index}}(\rho r)^{d-1} 
\leq d \cdot 2 \rho r \cdot 2^{d-1}.
\end{align*}
We finally deduce the following bound on Term \hyperref[terms:probabilitybounds]{B}:
\begin{align*}
\int_{ \closedballB_0 \setminus \closedballM{0}{r}{T}} g(v) \dd \HH^d(v)
\leq \fmax \Jmax \volball{d}r^d d \cdot \rho r 2^{d}.
\end{align*}
Gathering Terms \hyperref[terms:probabilitybounds]{A} and \hyperref[terms:probabilitybounds]{B}, we obtain
\begin{align*}
\left|\mu(\closedball{x}{r}) - f(x) \volball{d} r^d\right|
\leq r\left(c_{\ref{Paper2:lem:densityg:index}} + \fmax \Jmax  d \rho 2^{d}\right) \volball{d} r^d.
\end{align*}

\medbreak \noindent \emph{Point \ref{prop:probabilitybounds:point3}.}
Let us write
\begin{align*}
\mu\left(\closedball{x}{r} \setminus \closedball{x}{s}\right)
&= \nu_x\big( \big(\overline \exp_x^{\MM}\big)^{-1} \left(\closedball{x}{r} \setminus \closedball{x}{s}\right) \big) \\
&=  \int_{\left(\overline \exp_x^{\MM}\right)^{-1} \left(\closedball{x}{r} \setminus \closedball{x}{s}\right)} g(v) \dd \HH^d(v).
\end{align*}
In spherical coordinates, this integral reads
\begin{equation}
\label{Paper2:eq:probabilitybounds_point3}
\int_{\left(\overline \exp_x^{\MM}\right)^{-1} \left(\closedball{x}{r} \setminus \closedball{x}{s}\right)} g(v) \dd \HH^d(v)
=  \int_{v \in \sphereG{0}{1}{T}} \int_{t=a(v)}^{b(v)} g(tv) t^{d-1} \dd t \dd v,
\end{equation}
where $a(v)$ and $b(v)$ are defined as follows: for every $v \in T_x \MM$ of unit norm, let $\gamma_0$ be an arc-length parametrized geodesic with $\gamma_0(0) = x$ and $\dot \gamma_0(0) = v$, and set $a(v)$ and $b(v)$ to be the first positive values such that $\eucN{\gamma(a(v)) - x} = s$ and $\eucN{\gamma(b(v)) - x} = r$.
\begin{figure}[H]
\centering
\includegraphics[width=.4\linewidth]{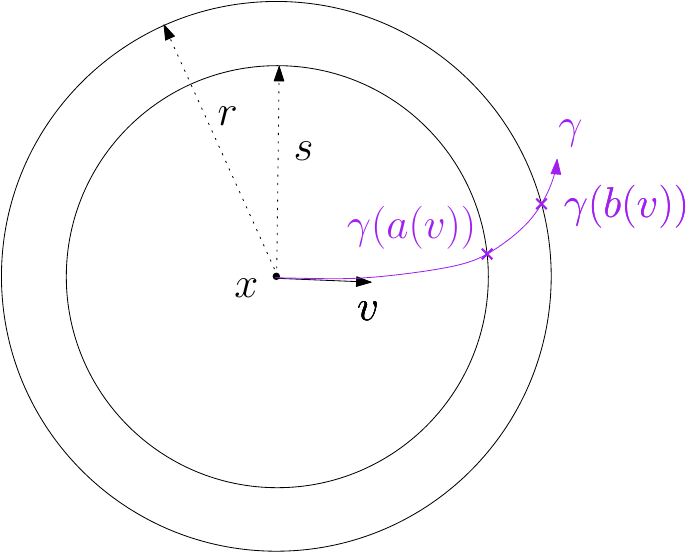}
\caption{Illustration of $a(v)$ and $b(v)$ in Equation \eqref{Paper2:eq:probabilitybounds_point3}.}
\end{figure}
\noindent
For any $v \in \sphereG{0}{1}{T}$, Lemma \ref{Paper2:lem:distancetocenter} Point \ref{lem:distancetocenter:point2} gives $b(v) \leq 2r$, hence
\begin{align*}
\int_{t=a(v)}^{b(v)} g(tv) t^{d-1} \dd t 
\leq \int_{t=a(v)}^{b(v)} \fmax \Jmax (2r)^{d-1} \dd t.
\end{align*}
Moreover, according to Lemma \ref{Paper2:lem:distancetocenter} Point \ref{lem:distancetocenter:point6}, we have $b(v) - a(v) \leq 2(r-s)$, hence
\begin{align*}
\int_{t=a(v)}^{b(v)} \fmax \Jmax (2r)^{d-1} \dd t
&= (b(v)-a(v)) \fmax \Jmax (2r)^{d-1} \dd t\\
&\leq 2(r-s) \fmax \Jmax (2r)^{d-1}.
\end{align*}
From these last two equations we deduce
\begin{align*}
\int_{v \in \sphere{0}{1}} \int_{t=a(v)}^{b(v)} g(tv) t^{d-1} \dd t \dd v
&\leq2 (r-s) \fmax \Jmax (2r)^{d-1} \int_{v \in \sphere{0}{1}}  \dd v \\
&= 2 (r-s) \fmax \Jmax (2r)^{d-1} d \volball{d}.
\end{align*}
Going back to Equation \eqref{Paper2:eq:probabilitybounds_point3}, we obtain
\begin{align*}
\mu\left(\closedball{x}{r} \setminus \closedball{x}{s}\right)
&= 2^{d} d \volball{d} \fmax \Jmax (r-s) r^{d-1},
\end{align*}
which concludes the proof.
\end{proof}

In Sect. \ref{Paper2:sec:tangentspaceestimation}, we will study the estimation of tangent spaces of $\MM$ thanks to the normal reach. By using the previous proposition, we will be able to give precise bounds around points $x\in \MM$ with large normal reach $\lambda(x)$. However, for points with small normal reach, we won't be able to use it. Therefore we need a version of Proposition \ref{Paper2:prop:probabilitybounds} without normal reach condition. This is the aim of the following result.
\begin{proposition}
\label{Paper2:prop:probabilityboundssqrt}
Let $x_0 \in \MM_0$, $r \leq \frac{1}{2\rho}$ and $s \in [0,r]$. We have
\begin{enumerate}
\itemsep.15cm
\item $\mu\left(\closedball{x}{r}\right) \geq c_{\ref{Paper2:hyp:muA:index}} r^d$  \label{prop:probabilityboundssqrt:point1}
\item $\mu\left(\closedball{x}{r} \setminus \closedball{x}{s}\right) \leq c_{\ref{Paper2:hyp:muBsqrt:index}} r^{d-\frac{1}{2}} (r-s)^\frac{1}{2}$ \label{prop:probabilityboundssqrt:point3}
\end{enumerate}
with $c_{\ref{Paper2:hyp:muA:index}} =\fmin \Jmin \volball{d}$ and $c_{\ref{Paper2:hyp:muBsqrt:index}}= \frac{\fmax \Jmax}{\fmin \Jmin }( \frac{\rho}{\sqrt{4 - \sqrt{13}}})^d d 2^{2d} \sqrt{3}$.
\end{proposition}
Note that Point \ref{prop:probabilityboundssqrt:point1} is similar to Proposition \ref{Paper2:prop:probabilitybounds} Point \ref{prop:probabilitybounds:point1}, and that Point \ref{prop:probabilityboundssqrt:point3} is a weaker form of Proposition \ref{Paper2:prop:probabilitybounds} Point \ref{prop:probabilitybounds:point3}. There is no equivalent of Proposition \ref{Paper2:prop:probabilitybounds} Point \ref{prop:probabilitybounds:point2} without normal reach condition.

\begin{proof}
\label{Paper2:appendix:lem:boundsqrt}
Let $\MM^x = \MM \cap \closedball{x}{r}$ and $\MM_0^x = \imm^{-1}(\MM^x)$. 
Lemma \ref{Paper2:lem:comparisoneucgeod2} does not apply: it is not true that $\MM_0^x \subset \closedballM{x_0}{ c_{\ref{Paper2:lem:comparisoneucgeod2:index}}(\rho r) r }{\MMo}$.
However, we can decompose $\MM_0^x$ in connected components $C_0^i, i \in I$. They are represented in Figure \ref{fig:appendix:connected_components}.
\begin{figure}[H]
\centering
\includegraphics[width=.8\linewidth]{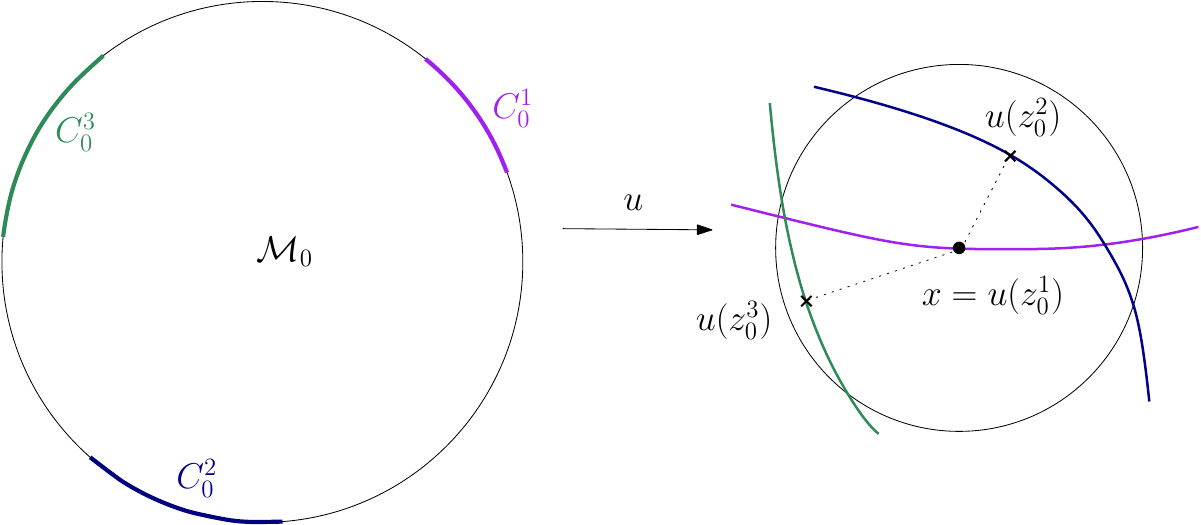}
\caption{The connected components $C_0^i$, $i\in I$.}
\label{fig:appendix:connected_components}
\end{figure}
\noindent
For every $i \in I$, let $z_0^i$ be a minimizer of $z_0 \mapsto \eucN{z-x}$ on $C_0^i$. We have $x-z^i \bot T_{z^i} \MM$. Following the same proof as Lemma \ref{Paper2:lem:comparisoneucgeod2}, one shows that $C_0^i$ is included in the geodesic ball $\closedballM{z_0^i}{\frac{1}{\rho}}{\MMo}$.
Hence we can consider $\mu_0^i$, the measure $\mu_0$ restricted to $C_0^i$, and define $\nu_0^i = (\overline \exp_{z_0}^{\MMo})^{-1}_* \mu_0^i$, as in Remark \ref{Paper2:rem:densityg0}. 
The measure $\nu_0^i$ admits $g_0^i$ as a density over the $d$-dimensional Hausdorff measure on $T_{z_0^i} \MMo$, where
\begin{align*}
g_0^i(v) = f_0\big(\overline \exp_{z_0}^{\MMo}(v)\big) \cdot J_v \cdot \indicatrice{ (\overline \exp_{z_0}^{\MMo})^{-1}(C_0^i) }(v).
\end{align*}

\medbreak \noindent \emph{Point \ref{prop:probabilityboundssqrt:point1}.}
By definition of $\mu$, can write
\begin{align*}
\mu(\closedball{x}{r})
= \mu_0(\imm^{-1}(\closedball{x}{r}))
&= \sum_{i \in I} \mu_0( C_0^i).
\end{align*}
Denote by $0 \in I$ the index of the connected component of $\MM_0^x$ which contains $x_0$. We have $C_0^{0} \supset \closedballM{x_0}{r}{\MMo}$. As in the proof of Proposition \ref{Paper2:prop:probabilitybounds} Point \ref{prop:probabilitybounds:point1}, we deduce that 
\begin{align*}
\mu_0( C_0^{0}) 
&\geq \int_{ (\overline \exp_{z_0}^{\MMo})^{-1}(C_0^0) } g_0^{0} \cdot\dd \HH^d \\ 
&\geq \fmin \Jmin \HH^d\big( (\overline \exp_{z_0}^{\MMo})^{-1}(C_0^0) \big)
= \fmin \Jmin \volball{d}r^d.
\end{align*}
Therefore, $\mu(\closedball{x}{r}) \geq \fmin \Jmin \volball{d}r^d$.

\medbreak \noindent \emph{Point \ref{prop:probabilityboundssqrt:point3}.}
For any $i \in I$, define $D_0^i = C_0^i \cap \imm^{-1}(\closedball{x}{r}\setminus \closedball{x}{s})$.
Let us show that
\begin{equation}
\label{eq:sqrtPoint3}
\mu_0(D_0^i) \leq \fmax \Jmax 2^{d-1} \sqrt{6}  d \volball{d} \cdot r^{d-1}  \sqrt{r^2-s^2}.
\end{equation}
As in Equation \eqref{Paper2:eq:probabilitybounds_point3}, we write this measure as
\begin{align*}
\int_{ (\overline \exp_{z_0^i}^{\MMo})^{-1}( D_0^i ) } g_0^i(y) \dd \HH^d(y)
&=  \int_{v \in \sphere{0}{1}} \int_{t=a(v)}^{b(v)} g_0^i(tv) t^{d-1} \dd t \dd v,
\end{align*}
where $a(v)$ and $b(v)$ are defined as in the proof of Proposition \ref{Paper2:prop:probabilitybounds} Point \ref{prop:probabilitybounds:point3}: for every $v \in \subset T_{z^i} \MM$ of unit norm, let $\gamma_0$ be an arc-length parametrized geodesic with $\gamma_0(0) = z_0^i$ and $\dot \gamma_0(0) = v$, and set $a(v)$ and $b(v)$ to be the first positive values such that $\eucN{\gamma(a(v)) - x} \geq s$ and $\eucN{\gamma(b(v)) - x} = r$.
For any $v \in \sphereG{0}{1}{T}$, Lemma \ref{Paper2:lem:distancetocenter} Point \ref{lem:distancetocenter:point2} gives $b(v) \leq 2r$, and Lemma \ref{Paper2:lem:distancetocenter} Point \ref{lem:distancetocenter:point5} gives $b(v) - a(v) \leq \sqrt{6}\sqrt{r^2-s^2}$. We deduce that
\begin{align*}
\int_{t=a(v)}^{b(v)} \fmax \Jmax (2r)^{d-1} \dd t
&\leq\sqrt{6} \sqrt{r^2-s^2} \fmax \Jmax (2r)^{d-1}.
\end{align*}
Therefore,
\begin{align*}
\int_{v \in \sphere{0}{1}} \int_{t=a(v)}^{b(v)} g_0^i(tv) t^{d-1} \dd t \dd v
&\leq \sqrt{6}\sqrt{r^2-s^2} \fmax \Jmax (2r)^{d-1} d \volball{d},
\end{align*}
which yields Equation \eqref{eq:sqrtPoint3}.

We now gather the connected components $D_0^i$.
Since $\imm^{-1}(\closedball{x}{r} \setminus \closedball{x}{s}) = \bigcup_{i\in I} D_0^i$, we have
\begin{align*}
\mu(\closedball{x}{r} \setminus \closedball{x}{s}) 
= \sum_{i \in I} \mu_0(D_i).
\end{align*}
Using Equation \eqref{eq:sqrtPoint3} we get
\begin{align*}
\mu(\closedball{x}{r} \setminus \closedball{x}{s}) 
\leq |I| \fmax \Jmax 2^{d-1} \sqrt{6}  d \volball{d} \cdot r^{d-1}  \sqrt{r^2-s^2},
\end{align*}
where $|I|$ is the cardinal of $I$.
Let us show that $|I| \leq \frac{1}{\fmin \Jmin  \volball{d}}( \frac{2\rho}{ \alpha })^d$, with $\alpha = \sqrt{4 - \sqrt{13}}$, which will conclude the proof.

Let $i,j \in I$ such that $i \neq j$. We first show that $\geoD{z^i_0}{z^j_0}{\MMo} \geq \frac{\alpha}{\rho}$.
Let $\gamma_0\colon [0, T] \rightarrow \MMo$ be a geodesic from $z^i_0$ to $z^j_0$.
Consider the map $\phi\colon t \mapsto \eucN{\gamma(t) - x}$.
Since $C_0^i$ and $C_0^j$ are disjoint connected components, there must be a $t^* < T$ such that $\eucN{\gamma(t^*) - x_0} > r$.
Moreover, according to Lemma \ref{Paper2:lem:distancetocenter} Point \ref{lem:distancetocenter:point1}, $\phi$ is increasing on $[0, T_1]$ where $T_1 = \frac{\sqrt{2}}{\rho} \sqrt{2-\sqrt{3+\rho^2 l^2}}$. 
Since $\phi(T) \leq r$, we deduce that $T$ is greater than $T_1$. 
Note that the assumption $r \leq \frac{1}{2\rho}$ yields $T_2 \geq \frac{\alpha}{\rho}$. Hence we obtain the bound $$\geoD{z^i_0}{z^j_0}{\MMo} = T \geq T_1 \geq \frac{\alpha}{\rho}.$$
This implies that the geodesic balls $\openballM{z_0^i}{ \frac{\alpha}{2 \rho} }{\MM_0}$, $i \in I$, are disjoint. Therefore, 
\begin{align*}
1 
&\geq \mu_0\bigg(\bigcup_i \openballM{z_0^i}{ \frac{\alpha}{2\rho} }{\MM_0} \bigg)
\geq |I|\fmin \Jmin \volball{d} \bigg( \frac{\alpha}{2 \rho} \bigg)^d,
\end{align*}
and we deduce that $|I| \leq \frac{1}{\fmin \Jmin \volball{d}}\left( \frac{2\rho}{\alpha}\right)^d$.
\end{proof}

\subsection{Sublevel sets of the normal reach}
\label{Paper2:subsec:quantif_normal_reach}

In this subsection, we assume the Hypotheses \hyperref[hyp:2]{2} and \hyperref[hyp:3]{3}, as well as  Hypothesis \hyperref[hyp:1prime]{1'}, stated in the next paragraph. This last hypothesis can be seen as a strengthening of Hypothesis \hyperref[hyp:1]{1}.
Our goal is to give an upper bound on $\mu_0(\lambda_0^t)$, the measure of the set of points $x_0 \in \MMo$ with normal reach not greater than $t$ (see Proposition \ref{prop:quantif_normal_reach}).
This proves a result announced in Subsect. \ref{Paper2:subsec:model}: Hypothesis \hyperref[hyp:4]{4} is a consequence of Hypotheses \hyperref[hyp:1prime]{1'}, \hyperref[hyp:2]{2} and \hyperref[hyp:3]{3}. We close this subsection with a remark concerning generalizations of this result. 
Since Hypothesis \hyperref[hyp:4]{4} trivially holds when the immersion is an embedding (with $r_{\ref{Paper2:hyp:normalreach:index}} = \min \lambda_0$ and $c_{\ref{Paper2:hyp:normalreach:index}} = 0$), we shall also suppose that $\imm$ is not an embedding.

First, we say that a finite collection $A$ of linear subspaces of $E$ is in \emph{general position} if  
$$
\mathrm{codim}\big(\bigcap_{V \in A} V\big) = \sum_{V \in A} \mathrm{codim}(V),
$$
where we define $\mathrm{codim}(V) = \dim(E) - \dim(V)$.
Now, we say that the immersion $\imm\colon \MMo \rightarrow \MM$ is \emph{self-transverse} (also called \emph{completely regular} in \cite{herbert1981multiple}) 
if for any point $x \in \MM$, the collection of tangent spaces $\{T_y\MM \mid y_0 \in \MMo, ~ x = y \}$ is in \emph{general position}.
Suppose that $\imm$ is self-transverse, and denote by $\NNo$ be the self-intersections of $\MMo$:
$$
\NNo = \{x_0 \in \MMo \mid \exists y_0 \in \MMo,~ x_0 \neq y_0,~ x = y\}.
$$
Its image is denoted $\NN = \imm(\NNo)$.
Equivalently, $\NNo$ is the set of points with zero normal reach, that is, $\NNo = \lambda_0^{-1}(\{0\})$. We also have $\NN=\lambda^{-1}(\{0\})$.
In general, $\NNo$ and $\NN$ are not submanifolds, but only (closed) immersed manifolds.
The subset $\NNo$ can be decomposed as a disjoint union 
$$
\NNo = \bigsqcup_{i \geq 2} \NNo^{(i)}
~~~~~~~~\text{where}~~~~~~~~
\NNo^{(i)} = \{x_0 \in \MMo \mid \lvert\imm^{-1}(\{x\})\rvert = i \},
$$
and where $\lvert\cdot\rvert$ denotes the cardinal. In other words, $\NNo^{(i)}$ is the set of points of $\MMo$ whose image is shared by exactly $i$ distinct points of $\MMo$. 
Each $\NNo^{(i)}$ is a submanifold of $\MMo$, not necessarily closed, of dimension $i \dim(\MMo) - (i-1) \dim(E)$ \cite[Lemma 2.3]{herbert1981multiple}.
Moreover, the tangent spaces of $\NN^{(i)} = \imm(\NNo^{(i)})$ can be described as:
\begin{equation}
\label{eq:tangent_N}
T_x \NN^{(i)} = \bigcap_{y_0 \in \imm^{-1}(x)}T_y \MM.
\end{equation}
In order to state the proofs of this subsection, we shall make the following assumption: the immersion $\imm$ only has \emph{double points}, that is, $\NNo$ is equal to $\NNo^{(2)}$. We shall refer to this assumption as
\encadrer{
\textbf{Hypothesis \hyperref[hyp:1prime]{1'}. }\label{hyp:1prime}
The immersion $\imm$ is self-transverse, and only has double points.
}
\noindent
In this case, $\NNo$ is a submanifold of $\MMo$, of dimension $2 \dim(\MMo) - \dim(E)$. 
Most of the examples we will consider later in the paper satisfy this assumption. They are curves in the plane (Examples \ref{Paper2:ex:lemniscate_overview}, \ref{ex:lemniscate_DTM} and \ref{ex:olympics}) or surfaces in the space (Examples \ref{ex:torus}, \ref{ex:Klein}).

We will also need a few quantities related to the immersion.
Let $\mathcal{D}_0$ be the set of critical points of the Euclidean distance on $\MMo$, that is,
\begin{equation}
\label{Paper2:eq:quantif_normal_reach_D}
\mathcal{D}_0 = \left\{ (x_0,y_0)\in \MMo\times\MMo \mid  x_0 \neq y_0, ~ x-y \bot T_y \MM ~\text{ and }~ x-y \bot T_x \MM\right\}.
\end{equation}
Also, let $\mathcal{C}_0$ be the set of double points of $\MMo$:
\begin{equation}
\label{Paper2:eq:quantif_normal_reach_C}
\mathcal{C}_0 = \left\{ (x_0,y_0)\in \MMo\times\MMo \mid x_0 \neq y_0 ~\text{ and }~ x=y \right\}.
\end{equation}
Note that the projection of $\mathcal{C}_0$ on the first coordinate is $\NNo$. Moreover, we have $\CC_0 \subset \mathcal{D}_0$, and these sets are compact. Since $\CC_0$ is an isolated subset of $\mathcal{D}_0$ by Lemma \ref{Paper2:lem:Federergeod} Point \ref{lem:Federergeod:point4}, we have that $\mathcal{D}_0 \setminus \CC_0 $ also is compact.
Consider the quantity 
\begin{equation}
\label{Paper2:eq:quantif_normal_reach_Delta}
\Delta = \inf \left\{\eucN{x-y} \mid  (x_0,y_0) \in \mathcal{D}_0 \setminus \mathcal{C}_0\right\}.
\end{equation}
The constant $\Delta$ can be understood as the minimal length of the nonzero bottlenecks of $\MM$.
From the compactness of $\mathcal{D}_0 \setminus \mathcal{C}_0$ we deduce that $\Delta>0$. 
Moreover, we define
\begin{equation}
\label{Paper2:eq:quantif_normal_reach_Delta0}
\Delta_0 = \inf \left\{\eucN{x-y} \mid  x_0 \in \NNo,~ y_0 \in \MMo, ~ x \neq y, ~x-y \bot T_y \MM\right\}.
\end{equation}
It is a measure of regularity around the self-intersections of $\MM$.
Using Lemma \ref{Paper2:lem:Federergeod} Point \ref{lem:Federergeod:point5}, one proves that this infimum is taken over a compact set, hence that $\Delta_0>0$.
Last, we will need a measure a similarity between linear subspaces.
If $U,V$ denote two linear subspaces of $E$, let their minimal angle be
\begin{align*}
\angle (U,V) = \inf\left \{ \arccos\left(\frac{\lvert\eucP{u}{v}\rvert}{\eucN{u}\eucN{v}}\right) \mid u \in U,~ v \in V, ~ u,v \in (U \cap V)^\bot \right\},
\end{align*}
where $\inf \emptyset = 0$ by convention. Note that $\angle (U,V) > 0$ when $U \neq V$. 
Now, define
\begin{align}
\Theta = \inf \left\{\angle(T_x\MM,T_y\MM) \mid  (x_0,y_0) \in \mathcal{C}_0\right\}.
\label{Paper2:eq:quantif_normal_reach_Theta}
\end{align}
According to the self-transversality hypothesis and the compactness of $\CC_0$, we have $\Theta > 0$. These constants are represented in Figure \ref{fig:quantities}.

\begin{figure}[H]
\centering
\includegraphics[width=.55\linewidth]{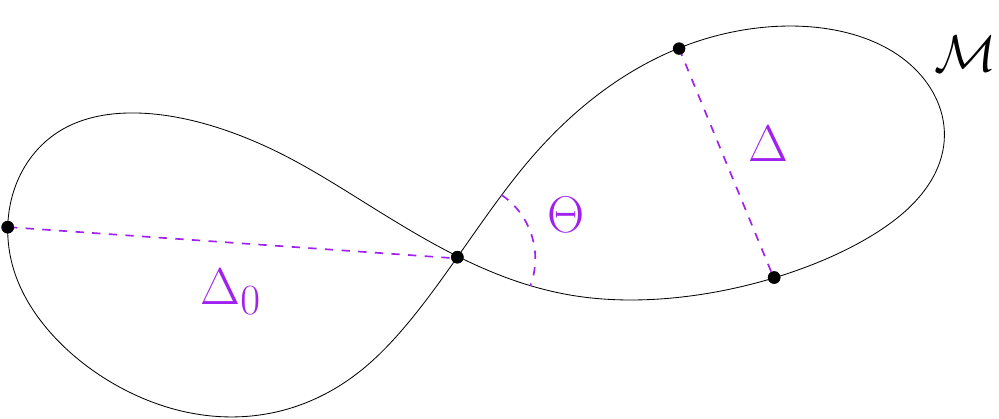}
\caption{The constants $\Delta$, $\Delta_0$ and $\Theta$ associated to $\MM$.}
\label{fig:quantities}
\end{figure}

In order to bound the measure $\mu_0\left(\lambda_0^t\right)$, we will prove that the  sublevel set $\lambda_0^t$ is included in a thickening of $\NNo$.
By bounding the measure of this thickening, we will obtain the main result 
(Proposition \ref{prop:quantif_normal_reach}).
We start with a lemma which describes the situation around self-intersection points of $\MMo$. 

\begin{lemma}
\label{lem:sublevelreach_theta}
Let $(x_0,y_0) \in \mathcal{C}_0$ (defined in Equation \eqref{Paper2:eq:quantif_normal_reach_C}). 
Let $\gamma_0 \colon I\rightarrow \MMo$ (resp. $\gamma_0'$) be an arc-length parametrized geodesic starting from $x_0$ (resp. from $y_0$), and denote $v = \dot \gamma(0)$  (resp. $v' = \dot \gamma'(0)$). 
Let $\theta = \arccos(\lvert\eucP{v}{v'}\rvert)$ be their angle.
Let $\delta, \delta' \geq 0$ such that $\delta' \leq \delta \leq \frac{\sin(\theta)}{2 \rho}$. Then we have
$$\eucN{\gamma(\delta)-\gamma'(\delta')} \geq \frac{\sin(\theta)}{2} \delta.$$
As a consequence, if $v$ is orthogonal to $T_x \MM \cap T_y \MM$, then the distance from $\gamma(\delta)$ to $\imm(\closedballM{y_0}{\delta}{\MMo})$ is lower bounded by $\frac{\sin(\Theta)}{2} \delta$.
\end{lemma}

\begin{proof}
Let us introduce $\overline x = x + \delta v $ and $\overline y = y + \delta'v'$, as represented in Figure \ref{fig:lemma_intersection}.
\begin{figure}[H]
\centering
\includegraphics[width=.85\linewidth]{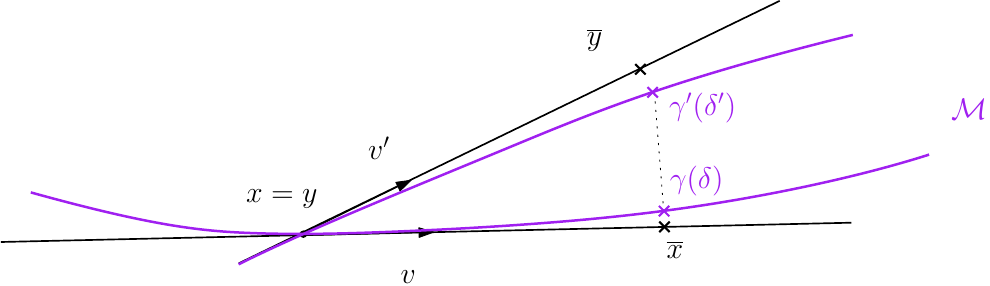}
\caption{Situation in Lemma \ref{lem:sublevelreach_theta}.}
\label{fig:lemma_intersection}
\end{figure}
The triangle inequality yields
\begin{align*}
\eucN{\gamma(\delta)-\gamma'(\delta')} 
&\geq \eucN{\overline x-\overline y} - \eucN{\gamma(\delta)-\overline x} - \eucN{\gamma'(\delta') - \overline y}. 
\end{align*}
According to Lemma \ref{Paper2:lem:Federergeod} Point \ref{lem:Federergeod:point1}, we have $\eucN{\gamma(\delta)-\overline x} \leq \frac{\rho}{2} \delta^2$ and $\eucN{\gamma'(\delta') - \overline y} \leq \frac{\rho}{2} {\delta'}^2 \leq \frac{\rho}{2} {\delta}^2$.
Moreover, $\eucN{\overline x-\overline y}$ is not lower than $\eucN{\overline x - z}$, where $z$ is the projection of $\overline x$ on the line spanned by $v'$.
Elementary trigonometry shows that $\eucN{\overline x - z} = \sin(\theta) \delta$.
Hence the previous equation yields
\begin{align*}
\eucN{\gamma(\delta)-\gamma'(\delta')} 
&\geq \sin(\theta) \delta - \frac{\rho}{2}\delta^2 - \frac{\rho}{2}\delta^2 
= \sin(\theta) \delta \left(1 - \frac{\rho}{\sin(\theta)}\delta\right),
\end{align*}
and we conclude with $\delta \leq \frac{\sin(\theta)}{2 \rho}$.
\end{proof}

The following lemma shows that, around $\NN$, the immersed manifold $\MM$ is a union of two transversally intersecting pieces.

\begin{lemma}
\label{lem:sublevelreach_mx}
For any $r < \mini{\frac{1}{2\rho}, \Delta_0}$ and $x \in \NN$, the set $\imm^{-1}(\closedball{x}{r})$ is made up of two connected components, and we have 
$$
\imm^{-1}(\closedball{x}{r}) \subset \bigcup_{y_0 \in\imm^{-1}( \{x\})} \closedballM{y_0}{2 r}{\MMo}.
$$
\end{lemma}

\begin{proof}
%
Consider $\MMo^x = \imm^{-1}(\closedball{x}{r})$ and $C_0^i$, $i \in I$, its connected components, as represented in Figure \ref{fig:lemma_twopieces}.
Let us denote $C_0^0$ the connected component that contains $x_0$.
\begin{figure}[H]
\centering
\includegraphics[width=.85\linewidth]{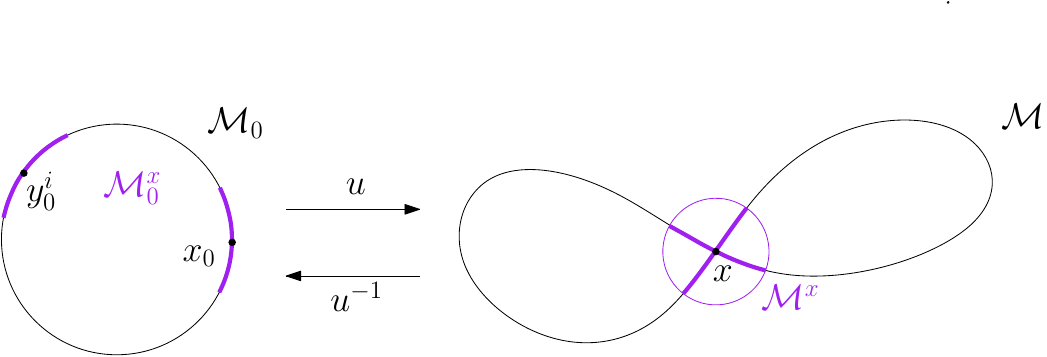}
\caption{Situation in Lemma \ref{lem:sublevelreach_mx}.}
\label{fig:lemma_twopieces}
\end{figure}
For any $i \in I \setminus \{0\}$, let $y_0^i$ be a minimizer of $y_0 \mapsto \eucN{y-x}$ on $C_0^i$. 
It satisfies $x-y^i\bot T_{y^i}\MM$.
Since $r$ has been chosen lower than $\Delta_0$ (defined in Equation \eqref{Paper2:eq:quantif_normal_reach_Delta0}), we must have $y^i=x$, that is to say, $y_0^i \in \imm^{-1}(\{x\})$.
Using that $\imm^{-1}(\{x\})$ consits of two elements, we deduce that $\MMo^x$ is made up of two connected components.

Now, as we have seen in the proof of Lemma \ref{Paper2:lem:comparisoneucgeod2}, these connected components satisfy $C_0^i \subset \closedballM{y_0^i}{c_{\ref{Paper2:lem:comparisoneucgeod2:index}}(\rho r)r}{\MMo}$.
The result follows from $c_{\ref{Paper2:lem:comparisoneucgeod2:index}}(\rho r)< 2$.
\end{proof}

We can now connect the normal reach to the distance to $\NNo$.
We first prove that, close to a self-intersection point, the normal reach is lower bounded by the geodesic distance to that point.

\begin{lemma}
\label{lem:sublevelset_theta}
Let $x_0 \in \MMo$ and denote by $\delta = \geoD{x_0}{\NNo}{\MMo}$ the geodesic distance from $x_0$ to $\NNo$. Suppose that $\delta < \mini{\frac{1}{4\rho},\frac{\sin(\Theta)}{2}, \frac{\Delta_0}{2}}$. 
Then $\lambda_0(x_0) \geq \frac{\sin(\Theta)}{2}\delta$.
\end{lemma}

\begin{proof}
Let $y_0$ be a projection of $x_0$ on $\NNo$, that is, a point that minimizes the geodesic distance $\geoD{x_0}{y_0}{\MMo}$ with $y_0 \in \NNo$.
Let $\gamma_0\colon I \rightarrow \MMo$ be a geodesic from $y_0$ to $x_0$, and denote $v_0 = \dot \gamma_0(0)$. Note that $v \bot T_{y}\NN$, otherwise $y_0$ would not be a minimizer.
Denote $r = \eucN{x-y}$ and $\MMo^y = \imm^{-1}(\closedball{y}{2r})$.
Let also $y_0'$ be the other point of $\MMo$ such that $y' = y$.

First, let us show that $\lambda_0(x_0) \leq r$.
According to Lemma \ref{lem:sublevelreach_mx}, $\MMo^y$ consists of two connected components, $C_0$ that contains $y_0$, and $C_0'$ that contains $y_0'$.
Now, consider a minimizer of $z_0 \mapsto \eucN{z-x}$ on $C_0'$.
This point satisfies $x-z \bot T_z \MM$ and $x_0 \neq z_0$. Thus, 
$$\lambda_0(x_0) \leq \eucN{x-z} \leq \eucN{x-y}\leq r,$$ 
as announced.
Moreover, using the inequality $\closedball{x}{r} \subset \closedball{y}{2r}$, we deduce that $z$ realizes the normal reach of $x_0$, that is, $\lambda_0(x_0) = \eucN{x-z}$.

To conclude, consider the geodesic $\gamma_0$ defined above.
We have seen that $v$ is orthogonal to the tangent space $T_{y}\NN$. Since $T_{y}\NN = T_x \MM\cap T_y \MM$ by Equation \eqref{eq:tangent_N}, we have $v \bot (T_x \MM\cap T_y \MM)$, hence we can apply the consequence of Lemma \ref{lem:sublevelreach_theta} to get 
$$\lambda_0(x_0) = \eucN{x-z} = \eucN{\gamma(\delta)-z} \geq \frac{\sin(\Theta)}{2} \delta,$$
as wanted. 
\end{proof}

%
%

The following lemma is a converse of Lemma \ref{lem:sublevelset_theta}: points with small normal reach are close to the self-intersection submanifold $\NNo$.

\begin{lemma}
\label{lem:sublevelset_lambda}
Let $x_0 \in \MMo$ such that $\lambda_0(x_0) \leq \mini{\frac{\sin(\Theta)}{8\rho}, \frac{\sin(\Theta)^2}{4}, \frac{\Delta_0\sin(\Theta)}{4},\Delta}$.
Then $\lambda_0(x_0) \geq  \frac{\sin(\Theta)}{2} \geoD{x_0}{\NNo}{\MMo}$.
\end{lemma}

\begin{proof}
Put $r = \lambda_0(x_0)$, and denote the sublevel set $\lambda_0^{r} = \lambda_0^{-1}([0,t])$. 
Let $C_0$ denote the connected component of $x_0$ in $\lambda_0^r$.
We have seen in Remark \ref{rem:semi-continuous} that the normal reach $\lambda_0$ is lower semi-continuous. Hence $C_0$ is closed, and $\lambda_0$ attains a minimum on it.
Let $y_0$ be a minimizer of $\lambda_0$ on $C_0$.
Let us prove that $\lambda_0(y_0) = 0$ by contradiction.
Suppose that $\lambda_0(y_0) > 0$, and let $z_0 \in \MMo$ be such that $z_0 \neq y_0$, $y-z \bot T_z \MM$ and $\lambda_0(y_0) = \eucN{y-z}$.
These points are represented in the following figure.
\begin{figure}[H]
\centering
\includegraphics[width=.45\linewidth]{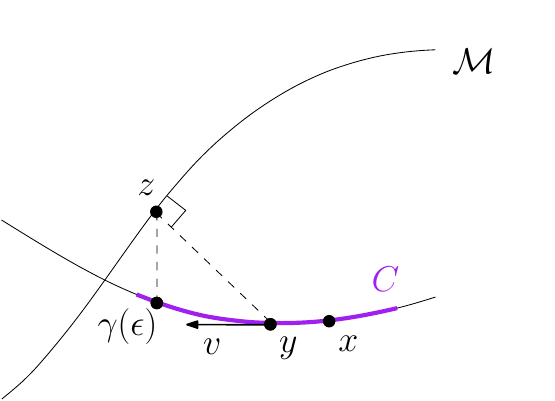}
\caption{Situation in Lemma \ref{lem:sublevelset_lambda}, supposing by contradiction that $\lambda_0(y_0)>0$.}
\end{figure}

Since $\eucN{y-z} = \lambda_0(y_0) \leq \lambda_0(x_0) < \Delta$, where $\Delta$ has been defined in Equation \eqref{Paper2:eq:quantif_normal_reach_Delta}, the vector $y-z$ is not orthogonal to $T_y \MM$. Let $v_0 \in T_{y_0} \MMo$ such that $\eucP{v}{z-y}>0$, and consider an arc-length parametrized geodesic $\gamma\colon I \rightarrow \MMo$ with $\gamma_0(0) = y_0$ and $\dot\gamma_0(0) = v_0$. 
We will show that $\lambda_0(\gamma_0(\epsilon))<\lambda_0(y_0)$ for $\epsilon > 0$ small enough, contradicting the minimality of $y_0$.

On the one hand, for $\epsilon > 0$ small enough, the bound $\eucP{v}{z-y}>0$ yields 
\begin{equation}
\eucN{z - \gamma(\epsilon)} < \eucN{z - y} = \lambda_0(y_0)
\label{eq:lemmasmallreach1}
\end{equation}

On the other hand, Lemma \ref{Paper2:lem:Federergeod} Point \ref{lem:Federergeod:point5} and $y-z \bot T_z \MM$ gives $\geoD{z_0}{y_0}{\MMo} \geq \frac{1}{\rho}$.
Together with the assumption $\lambda_0(x_0) < \frac{\sin(\Theta)}{8\rho} < \frac{1}{4\rho}$, we deduce that $\geoD{z_0}{y_0}{\MMo} > 4 \lambda_0(x_0)$.
By continuity, for $\epsilon>0$ small enough, we also have
\begin{equation}
\geoD{z_0}{\gamma_0(\epsilon)}{\MMo} > 4 \lambda_0(x_0).
\label{eq:lemmasmallreach2}
\end{equation}

Now, we deduce from $\lambda_0(x_0) \geq \lambda_0(y_0)$ and Equation \eqref{eq:lemmasmallreach1} and \eqref{eq:lemmasmallreach2} that $\geoD{z_0}{\gamma_0(\epsilon)}{\MMo} > 4 \eucN{z-\gamma(\epsilon)}$.
Therefore we can apply Lemma \ref{lem:exists_projection} on $z_0$ and $\gamma_0(\epsilon)$ to get 
$$\lambda_0(\gamma_0(\epsilon)) \leq \eucN{z-\gamma(\epsilon)} < \lambda_0(y_0),$$ 
which contradicts the minimality of $y_0$. We conclude that $\lambda_0(y_0) = 0$.

Next, let $\gamma_0\colon [0,T] \rightarrow \MMo$ be a path from $y_0$ to $x_0$ in $C_0$. 
Let us show that, for all $t \in [0,T]$,
\begin{equation}
\label{eq:sublevelset_path}
\lambda_0(\gamma_0(t)) 
\geq \frac{\sin(\Theta)}{2} \geoD{\gamma_0(t)}{\NNo}{\MMo}.
\end{equation}
According to Lemma \ref{lem:sublevelset_theta}, it is enough to show that $\geoD{\gamma_0(t)}{\NNo}{\MMo} < c$, where $c=\mini{\frac{1}{4\rho},\frac{\sin(\Theta)}{2}, \frac{\Delta_0}{2}}$.
By contradiction, suppose that $\geoD{\gamma_0(t)}{\NNo}{\MMo} \geq c$ for some $t$. Since $\geoD{\gamma_0(0)}{\NNo}{\MMo} = \geoD{y_0}{\NNo}{\MMo} = 0$, we can consider the first value $t \in [0,T]$ such that $\geoD{\gamma_0(t)}{\NNo}{\MMo} = c$. 
Lemma \ref{lem:sublevelset_theta} then gives $\lambda_0(\gamma_0(t)) 
\geq \frac{\sin(\Theta)}{2} c$. Besides, by definition of $C_0$, we have $\lambda_0(\gamma_0(t)) \leq \lambda_0(x_0)$. We deduce that
$$
\lambda_0(x_0) \geq \frac{\sin(\Theta)}{2} \cdot \mini{\frac{1}{4\rho},\frac{\sin(\Theta)}{2}, \frac{\Delta_0}{2}},
$$
which contradicts the assumptions of the lemma. 
We now obtain the result from Equation \eqref{eq:sublevelset_path} at $t = T$.
\end{proof}

We now prove the main result of this subsection.

\begin{proposition}
\label{prop:quantif_normal_reach}
Suppose that the immersion $\imm$ satisfies Hypotheses \hyperref[hyp:1prime]{1'}, \hyperref[hyp:2]{2} and \hyperref[hyp:3]{3}. Let $\alpha = \dim(E) - \dim(\MMo)$.
For every $r < r_{\ref{prop:quantif_normal_reach:index}}$, we have 
\begin{align*}
\mu_0(\lambda_0^r) \leq  c_{\ref{prop:quantif_normal_reach:index}} r^\alpha + \grando{r^{\alpha+1}},
\end{align*}
where $r_{\ref{prop:quantif_normal_reach:index}} = \mini{\frac{\sin(\Theta)}{8\rho}, \frac{\sin(\Theta)^2}{4}, \frac{\Delta_0\sin(\Theta)}{4},\Delta}$ and $c_{\ref{prop:quantif_normal_reach:index}} = \left(\frac{2}{\sin(\theta)}\right)^\alpha V_{\alpha}\fmax\HH^{d'}_{\MMo}(\NNo)$.
\end{proposition}

\begin{proof}
Let $d'$ be the dimension of $\NNo$, and $\alpha$ its codimension in $\MMo$. Since $\dim(\NNo) = 2 \dim(\MMo) - \dim(E)$ by Hypothesis \hyperref[hyp:1prime]{1'}, we have $\alpha = \dim(E) - \dim(\MMo)$.
Besides, according to Lemma \ref{lem:sublevelset_lambda}, the sublevel set $\lambda_0^r$ is included in the geodesic thickening $\NNo^{t}$ of $\NNo$, where $t =  \frac{2}{\sin(\Theta)} r$. 
According to Weyl's Tube Formula \cite[Theorem 9.23]{gray2012tubes}, the volume of $\NNo^{t}$ is 
\begin{align*}
\HH^d_{\MMo}(\NNo^{t}) &= V_{\alpha}\cdot \HH^{d'}_{\MMo}(\NNo) \cdot t^{\alpha} + \grando{t^{\alpha+1}}.
\end{align*}
where $\HH^{d'}_{\MMo}(\NNo)$ is the $d'$-dimensional volume of $\NNo$, and $V_\alpha$ the volume of the unit ball in $\R^\alpha$.
Using the density of $\muo$ given by Hypothesis \hyperref[hyp:2]{2}, we can write $\muo(\NNo^{t})\leq \fmax\HH^d_{\MMo}(\NNo^{t})$.
Hence 
$$
\mu_0(\NNo^t) \leq \fmax\HH^d_{\MMo}(\NNo^{t}) = \fmax\HH^d_{\MMo}\left(\NNo^{\frac{2}{\sin(\Theta)} r}\right),
$$
and the result follows.
\end{proof}

\begin{remark}
\label{rem:quantif_normalreach_generalize}
It seems reasonable to think that Proposition \ref{prop:quantif_normal_reach} is still valid when replacing Hypothesis \hyperref[hyp:1prime]{1'} with the weaker Hypothesis \hyperref[hyp:1]{1}, at least with $\alpha = 1$.
In this case, the quantities $\Delta$, $\Delta_0$ and $\Theta$ are still well-defined, and Lemmas \ref{lem:sublevelreach_mx} and \ref{lem:sublevelreach_theta} hold.
However, Lemmas \ref{lem:sublevelset_theta} and \ref{lem:sublevelset_lambda} may not be true anymore.
An illustration of this is given by the two following intersecting surfaces of $\R^4$: 
\begin{align*}
\MM^{(1)} = \{(a,b,0,0)\mid a,b \in \R\} ~~~~~~\text{and}~~~~~~ \MM^{(2)} = \{(a,0,c,a^2)\mid a,c \in \R\}.
\end{align*}
These submanifolds intersect at $\MM^{(1)} \cap \MM^{(2)} = \{0\}$, but their intersection is not transverse, since their tangent spaces does not span the fourth canonical basis vector of $\R^4$.
The distance from a point $x = (a,0,0,a^2)$ of $\MM^{(2)}$ to $\MM^{(1)}$ is $a^2$. Besides, for small values of $a$, the geodesic distance from $x$ to $0$ on $\MM^{(2)}$ is approximately $a$. But there is no constant $c$ such that $a^2 \geq c\cdot a$.
Hence Lemma \ref{lem:sublevelset_theta} does not hold anymore.
Nonetheless, the subset of point of $\MM^{(2)}$ at distance at most $r$ from $\MM^{(1)}$ is 
$$
\left\{(a,0,c,a^2) \in \MM^{(2)} \mid a^4 + c^2 \leq r^2 \right\}.
$$
This is approximately the rectangle $\left\{(a,0,c,a^2) \in \MM^{(2)} \mid \lvert a \rvert \leq \sqrt{r}, ~ \lvert c \rvert \leq r\right\}$, whose volume is $r^{\frac{3}{2}}$. We see here that Proposition \ref{prop:quantif_normal_reach} holds with $\alpha = \frac{3}{2}$.
\end{remark}

\begin{remark}
\label{rem:quantif_normalreach_dim1}
We can also see that Proposition \ref{prop:quantif_normal_reach} is true under the following conditions: Hypotheses \hyperref[hyp:1]{1}, \hyperref[hyp:2]{2}, \hyperref[hyp:3]{3} and $\dim(\MMo) = 1$. Indeed, in this case, the self-intersecting manifold $\NNo$ is a finite subset of $\MMo$. Knowing this fact, the proofs of Lemmas \ref{lem:sublevelreach_theta}, \ref{lem:sublevelreach_mx}, \ref{lem:sublevelset_theta}, \ref{lem:sublevelset_lambda} and Proposition \ref{prop:quantif_normal_reach} can be used without modification.
In this case, the result reads $\mu_0(\lambda_0^r) \leq  c_{\ref{prop:quantif_normal_reach:index}} r + \grando{r^{2}}$.
\end{remark}

\section{Tangent space estimation}
\label{Paper2:sec:tangentspaceestimation}

We now come back to our original setting: $\MMo$ is a manifold of dimension $d \geq 1$, immersed in $E = \R^n$ via $\imm\colon \MMo \rightarrow \MM$. Moreover, $\MMo$ is endowed with a measure $\muo$. The push-forward measure is denoted $\mu = \imm_* \muo$.
In this section, we show that one can estimate the tangent spaces of $\MM$, based on the measure $\mu$, or a close measure $\nu$, via the computation of local covariance matrices.

\subsection{Local covariance matrices and lifted measures}
\label{Paper2:subsec:defgammaN}

We remind the reader that the aim of this work is to estimate the homotopy type of $\MMo$, or its homology groups, from the measure $\nu$.
As explained in the introduction and in Subsect. \ref{Paper2:subsec:model}, our strategy consists in estimating the \emph{lifted manifold} 
$$\MMcheck = \left\{\left(x, ~\frac{1}{d+2}p_{T_x\MM} \right) \mid x_0 \in \MMo  \right\},$$
where $p_{T_{x} \MM}$ is the matrix of the orthogonal projection on the tangent space $T_{x} \MM$, seen as an element of $\matrixspace{E}$, the space of $n \times n$ matrices. 
The normalization term $\frac{1}{d+2}$ has been chosen in accordance with Proposition \ref{Paper2:prop:consistency}, stated in the next subsection, and makes our method independant of the dimension $d$.
Note that the set $\MMcheck$ can also be described as the image of $\MMo$ under the map
\begin{align*}
\label{Paper2:eq:muchecko_disintegration}
\immcheck\colon x_0 \longmapsto \left( x, ~\frac{1}{d+2} p_{T_{x} \MM} \right).
\end{align*}
Using Hypothesis \hyperref[hyp:1]{1}, we deduce that $\MMcheck$ is diffeomorphic to $\MMo$, hence that their homotopy types and homology groups coincide.
Adopting a measure theoretical point of view on the problem, we will actually estimate the \emph{exact lifted measure}, defined as the push-forward $\muchecko = \immcheck_* \muo$. We explain in Subsect. \ref{Paper2:subsec:overview} how one can infer the homotopy type and homology groups of $\MMo$ from $\muchecko$.
Note that this measure can also be defined as 
\begin{equation}
\label{Paper2:eq:muchecko_disintegration}
\muchecko = (\imm_* \mu_0)(x_0) \otimes \left\{\delta_{ \frac{1}{d+2} p_{T_{x} \MM} }\right\}
\end{equation}
by disintegration of measure.
Here is another alternative definition of $\muchecko$: 
for any smooth $\phi\colon E \times \matrixspace{E} \rightarrow \R$ with compact support,
\begin{equation}
\label{Paper2:eq:muchecko}
\int \phi(x, A) \dd \muchecko(x, A) = \int \phi\left(\imm(x_0),\frac{1}{d+2} p_{T_{x} \MM} \right) \dd \muo(x_0). 
\end{equation}
In order to approximate $\muchecko$, we have to propose an estimator of the tangent spaces. We consider the following construction. If $x$ in any vector of $E=\R^n$, seen as a row vector, the \emph{tensor product} is defined as the $n\times n$ matrix $\tensoro{x} = x^t \cdot x$.

\begin{definition}
\label{def:loccov}
Let $\nu$ be any probability measure on $E$.
Let $r>0$ and $x \in \supp{\nu}$. 
The \emph{local covariance matrix of $\nu$ around $x$ at scale $r$} is the following matrix:
\[ \loccov{\nu}{x} = \int_{\ball{x}{r}} \tensor{x-y} \frac{\dd \nu(y)}{\nu(\ball{x}{r})}.\]
We also define the \emph{normalized local covariance matrix} as $\loccovnorm{\nu}{x} = \frac{1}{r^2} \loccov{\nu}{x}$.
\end{definition}
Note that $\loccov{\nu}{x}$ and $\loccovnorm{\nu}{x}$ depend on $r$, which is not made explicit in the notation.
The normalization factor $\frac{1}{r^2}$ of the normalized local covariance matrix is justified by Proposition \ref{Paper2:prop:consistency}.
Moreover, we introduce the following notations: for every $r>0$ and $x \in \supp{\nu}$,
\begin{itemize}
\itemsep.15cm
\item $\loc{\nu}{x}$ is the restriction of $\nu$ to the ball $\ball{x}{r}$,
\item $\locprob{\nu}{x} = \frac{1}{\nu(\ball{x}{r})} \loc{\nu}{x}$ is the corresponding probability measure. 
\end{itemize} 
Thus the local covariance matrix can be written as $\loccov{\nu}{x} = \int \tensor{x-y} \dd \locprob{\nu}{x}(y)$.

We note that such notions have already been studied in the context of Topological Data Analysis. The collection of probability measures $\{ \locprob{\nu}{x} \}_{x \in \supp{\nu}}$ is called in \cite[Sect. 3.3]{pmlr-v97-memoli19a} the local truncation of $\nu$ at scale $r$. 
The map $x \mapsto \loccov{\nu}{x}$ is called in \cite[Sect. 2.2]{martinez2018shape} the multiscale covariance tensor field of $\nu$ associated to the truncation kernel.

We now propose an estimator of the lifted measure $\muchecko$, inspired by Equations \eqref{Paper2:eq:muchecko_disintegration} and \eqref{Paper2:eq:muchecko}.

\begin{definition}
\label{Paper2:def:lifted_measure}
For any measure $\nu$ on $E$, we denote by $\check{\nu}$ the measure on $E \times \matrixspace{E}$ defined by 
\begin{align*}
\check \nu = \nu(x) \otimes \left\{\delta_{\loccovnorm{\nu}{x}}\right\}.
\end{align*}
It is called the \emph{lifted measure} associated to $\nu$.
In other words, for every smooth $\phi\colon E \times \matrixspace{E} \rightarrow \R$ with compact support, we have
\begin{equation*}
\int \phi(x, A) \dd \check{\nu}(x, A) = \int \phi\bigg(x,\loccovnorm{\nu}{x}\bigg) \dd \nu(x).
\end{equation*}
In accordance with the local covariance matrices, the lifted measure $\nucheck$ depends on the parameter $r$ which is not made explicit in the notation.
\end{definition}

In order to compare these measures, we consider a Wasserstein-type distance on the space $E \times \matrixspace{E}$.
Fix $\gamma > 0$, and let $\gammaN{\cdot}$ be the Euclidean norm on $E \times \matrixspace{E}$ defined as
\begin{equation}
\label{Paper2:eq:gammaN}
\gammaN{(x,A)}^2 = \eucN{x}^2 + \gamma^2 \frobN{A}^2,
\end{equation}
where $\eucN{\cdot}$ represents the usual Euclidean norm on $E$ and $\frobN{\cdot}$ represents the Frobenius norm on $\matrixspace{E}$.
Let $p \geq 1$.
We denote by $\gammaWassersteinp(\cdot, \cdot)$ the $p$-Wasserstein distance with respect to this metric. 
By definition, if $\alpha, \beta$ are probability measures on $E\times \matrixspace{E}$, then $\gammaWassersteinp(\alpha, \beta)$ can be written as
\begin{equation}
\label{Paper2:eq:gammaWassersteinp}
\gammaWassersteinp(\alpha, \beta)
= \inf_{\pi} \bigg( \int_{(E \times \matrixspace{E})^2} 
\gammaN{(x,A)-(y,B)}^p \dd \pi\big((x,A),(y,B)\big) \bigg)^\frac{1}{p},
\end{equation}
where the infimum is taken over all measures $\pi$ on $(E \times \matrixspace{E})^2$ with marginals $\alpha$ and $\beta$.

The parameter $\gamma$ of the norm $\gammaN{\cdot}$ has been designed to balance the importance given to the Euclidean information ($E$-coordinate) and matrix information ($\matrixspace{E}$-coordinate) in $E\times\matrixspace{E}$. The more $\gamma$ is large, the more the matrix information will be relatively important. Since there is no canonical choice of $\gamma$, it will remain as a free parameter in the rest of the paper. In the experiments of the next section, we will choose the value $\gamma = 1$ or $2$, for it seemed relevant in practice.
For a discussion about how this parameter may influence the persistent homology of the lifted manifold $\MMcheck$, we refer the reader to \cite[Subsect. 4.4]{tinarrage2021computing}.

We subdivise the rest of this section in four subsections.
They respectively consists in showing that
\begin{itemize}
\itemsep.2cm
\item \textbf{Consistency:} if $\mu_0$ is a measure satisfying the Hypotheses \hyperref[hyp:2]{2} and \hyperref[hyp:3]{3}, then $\gammaWassersteinp(\muchecko, \mucheck)$ is small (Proposition \ref{Paper2:prop:consistencycheck}),
\item \textbf{Stability of the localized measures:} in addition, if $\nu$ is a measure on $E$ such that $\Wassp{\mu}{\nu}$ is small, then so is $\Wasssymbol{1}(\locprob{\mu}{y}, \locprob{\nu}{y})$ (Lemmas \ref{Paper2:lem:Wstabxx} and \ref{Paper2:lem:Wstabxxsqrt}),
\item \textbf{Stability of the lifted measures:} consequently, $\gammaWassersteinp(\mucheck, \check \nu)$ is also small (Proposition \ref{Paper2:prop:stability})
\item \textbf{Approximation:} under the previous hypotheses, $\gammaWassersteinp(\muchecko, \check \nu)$ is small (Theorem \ref{Paper2:th:estimation}).
\end{itemize}
These measures fit in a commutative diagram:
\begin{center}
\begin{minipage}{.49\linewidth}
\[
\begin{tikzcd}
\MM_0 \arrow[dr, "\imm"] \arrow[rr, "\check \imm"] & & E \times \matrixspace{E} \arrow[dl, "\mathrm{proj}"]  \\
& E &
\end{tikzcd}
\]
\end{minipage}
\begin{minipage}{.49\linewidth}
\[
\begin{tikzcd}[column sep=tiny]
\muo \arrow[drrrrr, "\imm_*"] \arrow[rrrrrrrrrr, "\check \imm_*"] &&&&&&&&& & \muchecko   & \mucheck & \nucheck\\
&&&&& \mu \arrow[urrrrr,swap, "g_*"] \arrow[urrrrrr, bend right=20,swap, "(f_{\mu})_*"]&&&& & & &\nu \arrow[u,swap, "(f_{\nu})_*",bend right=0]
\end{tikzcd}
\]
\end{minipage}
\end{center}
where the maps $g$, $f_\mu$ and $f_\nu\colon E \rightarrow E \times \matrixspace{E}$ are defined as
\begin{align*}
g\colon x \longmapsto \left(x, \frac{1}{d+2} p_{T_{x} \MM} \right),
~~~~~~~
f_\mu\colon x \longmapsto \bigg(x,\loccovnorm{\mu}{x}\bigg),
~~~~~~~
f_\nu\colon x \longmapsto \bigg(x,\loccovnorm{\nu}{x}\bigg).
\end{align*}
Note that the map $g$ is well-defined only on points $x \in \MM$ that are not self-intersection points, i.e., points $x$ such that $\lambda(x)>0$. Under Hypothesis \hyperref[hyp:4]{4}, $g$ is well-defined $\mu$-almost surely.
The maps $f_\mu$ and $f_\nu$ are defined respectively on $\supp{\mu}$ and $\supp{\nu}$.

\subsection{Consistency of the estimation}
\label{Paper2:subsec:consistency}
In this subsection, we assume that $\MMo$ and $\mu_0$ satisfy Hypotheses \hyperref[hyp:2]{2} and \hyperref[hyp:3]{3}.
We first show that the normalized covariance matrix approximates the tangent spaces of $\MM$, as long as the parameter $r$ is chosen smaller than the normal reach. A similar result appears in \cite[Lemma 13]{arias2017spectral} in the case where $\MM$ is a submanifold and $\mu$ is the uniform distribution on $\MM$.
Based on this result, we deduce that the lifted measure $\mucheck$ is close to the exact lifted measure $\muchecko$. The quality of this approximation depends on the measure of the set of points with small normal reach, i.e., points where the tangent spaces are not well-estimated.

\begin{proposition}
\label{Paper2:prop:consistency}
Let $x_0 \in \MM_0$ and $r < \mini{\normalreach{x}, \frac{1}{2\rho}}$. Denote by $p_{T_x \MM}$ the orthogonal projection matrix on the tangent space $T_x \MM$. We have 
\[ \frobN{ \loccovnorm{\mu}{x} - \frac{1}{d+2} p_{T_x \MM} } \leq c_{\ref{Paper2:lem:loccovconsistency:index}} r, \]
where $c_{\ref{Paper2:lem:loccovconsistency:index}} = 6 \rho + 4\frac{c_{\ref{Paper2:lem:densityg:index}}}{\fmin \Jmin} + \frac{\fmax }{\fmin \Jmin} 2^d d \rho + \frac{c_{\ref{Paper2:prop:probabilitybounds:index}}}{\fmin \Jmin}$.
\end{proposition}

\begin{proof}
According to \cite[Lemma 11]{arias2017spectral}, the matrix $r^2 \frac{1}{d+2} p_{T_x \MM}$ is equal to
\begin{align*}
\Sigma_* = \int_{\closedballM{0}{r}{T_x \MM}} \tensoro{y} \cdot\frac{\dd \HH^d(y)}{\volball{d}r^d}.
\end{align*}
Hence the proposition reduces to $\frobN{ \loccov{\mu}{x} - \Sigma_* } 
\leq c_{\ref{Paper2:lem:loccovconsistency:index}} r^3.$
%
%
%
%
%
%
Let us write $T= T_x \MM$, $\closedballB = \closedball{x}{r}$ and $\closedballB_0 = (\overline  \exp_{x}^{\MM})^{-1}( \closedballB )$, 
where $(\overline{\exp}_{x}^{\MM})^{-1}$ as been defined in Equation \eqref{eq:exp_map_seen_in_M}.
We consider the following intermediate matrices:
\begin{align*}
\Sigma_1 &= \int_{\closedballB} \tensor{\left(\overline \exp_x^\MM\right)^{-1}(x')} \dd \locprob{\mu}{x}(x'), \\
\Sigma_2 &= \int_{\closedballB_0} g(0) \tensoro{y} \cdot\frac{\dd \HH^d (y)}{|\loc{\mu}{x}|}, \\
\Sigma_3 &= \int_{\closedballG{0}{r}{T}} g(0) \tensoro{y} \cdot\frac{\dd \HH^d (y)}{|\loc{\mu}{x}|}.
\end{align*}
The triangle inequality now yields:
\begin{align}
\label{terms:consistency}
\frobN{\loccov{\mu}{x} - \Sigma_*}
&\leq \underbrace{\frobN{\loccov{\mu}{x} - \Sigma_1}}_{\text{A}} 
+ \underbrace{\frobN{\Sigma_1 - \Sigma_2}}_{\text{B}} 
+ \underbrace{\frobN{\Sigma_2 - \Sigma_3}}_{\text{C}} 
+ \underbrace{\frobN{\Sigma_3 - \Sigma_*}}_{\text{D}}.
\end{align}

\medbreak \noindent \emph{Term \hyperref[terms:consistency]{A}.} 
By definition of the local covariance matrix, we have
\begin{align*}
\loccov{\mu}{x} 
=\int_{\ball{x}{r}} \tensor{x-x'} \locprob{\mu}{x}(x').
\end{align*}
We use the upper bound
\begin{align*}
\frobN{ \Sigma_\mu(x) - \Sigma_1 }
&\leq \int_{\closedball{x}{r} } \frobN{ \tensor{x-x'} - \tensor{\left(\overline \exp_x^\MM\right)^{-1}(x')} } \dd \locprob{\mu}{x}(x') \\
&\leq \sup_{x' \in \MM\cap\closedball{x}{r}} \frobN{ \tensor{x-x'} - \tensor{\left(\overline \exp_x^\MM\right)^{-1}(x')} }.
\end{align*}
Let $x' \in \MM\cap\openball{x}{r}$.
According to Lemma \ref{Paper2:lem:comparisoneucgeod2}, we have $\eucN{\left(\overline \exp_x^\MM\right)^{-1}(x')} \leq 2 r$. Moreover, $\eucN{x-x'} \leq r$, and Lemma \ref{Paper2:lem:outer}, stated in the following subsection, gives
\begin{equation}
\label{Paper2:eq:consistency_covariance_proof2} 
\frobN{ \tensor{x-x'} - \tensor{\left(\overline \exp_x^\MM\right)^{-1}(x')} } \leq (r+2r) \eucN{(x'-x)-\left(\overline \exp_x^\MM\right)^{-1}(x')}.
\end{equation}
Now, let us justify that 
\begin{equation}
\label{Paper2:eq:consistency_covariance_proof}
\eucN{(x'-x)-\left(\overline \exp_x^\MM\right)^{-1}(x')} \leq \frac{\rho}{2} \geoD{x_0}{x'_0}{\MM_0}^2.
\end{equation}
If we write $x' = \gamma(\delta)$ with $\gamma$ a geodesic such that $\gamma(0) = x$ and $\delta = \geoD{x_0}{x_0'}{\MMo}$, then $\left(\overline \exp_x^\MM\right)^{-1}(x') = \delta \dot \gamma(0)$, and we get
\begin{align*}
\eucN{(x'-x)-\left(\overline \exp_x^\MM\right)^{-1}(x')}
&= \eucN{\gamma(\delta)-\left(x+ \delta \dot \gamma(0)\right)}
\leq \frac{\rho}{2} \delta^2,
\end{align*}
where we used Lemma \ref{Paper2:lem:Federergeod} Point \ref{lem:Federergeod:point1} for the last inequality. Hence Equation \eqref{Paper2:eq:consistency_covariance_proof} is true.
Combined with Lemma \ref{Paper2:lem:comparisoneucgeod2}, which gives $\geoD{x_0}{x_0'}{\MMo} \leq 2 \eucN{x - x'} \leq 2r$, we obtain
\begin{align*}
\frobN{ \tensor{x-x'} - \tensor{\left(\overline \exp_x^\MM\right)^{-1}(x')} } \leq  \frac{\rho}{2} (2 r)^2 = 2 \rho r^2.
\end{align*}
We now use Equation \eqref{Paper2:eq:consistency_covariance_proof2} to deduce $\frobN{ \Sigma_\mu(x) - \Sigma_1 } \leq (r+2r) 2 \rho r^2 = 6 \rho r^3$.

\medbreak \noindent \emph{Term \hyperref[terms:consistency]{B}.} 
By transfer, we can write $\Sigma_1$ as
\begin{align*}
\Sigma_1 
= \int_{\closedballB} \tensor{\left(\overline \exp_x^\MM\right)^{-1}(x')} \frac{\dd \HH^d (y)}{|\loc{\mu}{x}|}
&= \int_{ \closedballB_0 } g(y) \tensoro{y} \cdot\frac{\dd \HH^d (y)}{|\loc{\mu}{x}|}.
\end{align*}
We deduce the upper bound
\begin{align*}
\frobN{ \Sigma_1 - \Sigma_2 }
&\leq \int_{ \closedballB_0 } \big|g(0) - g(y)\big| \eucN{ \tensoro{y} } \frac{\dd \HH^d(y) }{|\loc{\mu}{x}|}.
\end{align*}
According to Lemma \ref{Paper2:lem:outer}, $\eucN{ \tensoro{y} } = \eucN{y}^2 \leq (2r)^2$, and Lemma \ref{Paper2:lem:densityg} gives $|g(y) - g(0)| \leq c_{\ref{Paper2:lem:densityg:index}} r$. 
Therefore,
\begin{align*}
\frobN{ \Sigma_1 - \Sigma_2 }
&\leq 4r^2 \cdot c_{\ref{Paper2:lem:densityg:index}} r \cdot \frac{ \HH^d\big(\closedballB_0\big)} {|\loc{\mu}{x}|} .
\end{align*}
To conclude, note that $|\loc{\mu}{x}| \geq \fmin \Jmin \HH^d\big( \closedballB_0\big)$ by Proposition \ref{Paper2:prop:probabilitybounds} Point \ref{prop:probabilitybounds:point1}, hence we obtain $
\frobN{ \Sigma_1 - \Sigma_2 }
\leq 4\frac{c_{\ref{Paper2:lem:densityg:index}}}{\fmin \Jmin} r^3$.

\medbreak \noindent \emph{Term \hyperref[terms:consistency]{C}.} 
As for the previous terms, we use the upper bound
\begin{align*}
\frobN{ \Sigma_2 - \Sigma_3 } 
&\leq \int_{ \closedballM{0}{r}{T} \setminus \closedballB_0} \frobN{ g(0) \cdot \tensoro{y} } \frac{\dd \HH^d(y)}{|\loc{\mu}{x}|}.
\end{align*}
On the one hand, $\frobN{ g(0) \cdot \tensoro{y} } \leq g(0) \cdot r^2 \leq \fmax r^2$, and we get
\begin{align*}
\frobN{ \Sigma_2 - \Sigma_3 } 
&\leq \fmax r^2 \frac{\HH^d \left( \closedballM{0}{r}{T} \setminus \closedballB_0 \right) }{{|\loc{\mu}{x}|}}.
\end{align*}
On the other hand, since $\closedballB_0 \subset \closedballG{x}{c_{\ref{Paper2:lem:comparisoneucgeod2:index}}(\rho r)r  }{T}$, we have
\begin{align*}
\HH^d \left(\closedballB_0 \setminus \closedballM{0}{r}{T} \right)
&= (c_{\ref{Paper2:lem:comparisoneucgeod2:index}}(\rho r)r)^d \volball{d} - r^d \volball{d}.
\end{align*}
The inequality $a^d - 1 \leq d (a-1) a^{d-1}$, where $a \geq 1$, gives 
\begin{align*}
\left(c_{\ref{Paper2:lem:comparisoneucgeod2:index}}(\rho r)r\right)^d \volball{d} - r^d \volball{d}
\leq \volball{d}r^d \cdot d(c_{\ref{Paper2:lem:comparisoneucgeod2:index}}(\rho r)-1) 2^{d-1}.
\end{align*}
Combined with the inequalities $c_{\ref{Paper2:lem:comparisoneucgeod2:index}}(\rho r) \leq 1 +2\rho r$ and $|\loc{\mu}{x}| \geq \fmin \Jmin \volball{d} r^d$, we get
\begin{align*}
\frobN{ \Sigma_2 - \Sigma_3 } 
&\leq \frac{\fmax }{\fmin \Jmin} 2^d d \rho r^3. 
\end{align*}

\medbreak \noindent \emph{Term \hyperref[terms:consistency]{D}.} 
Let us write $\Sigma^*$ as
\begin{align*}
\Sigma_* = \int_{\closedballM{0}{r}{T_x \MM}} \tensoro{y} \cdot\frac{|\loc{\mu}{x}|}{\volball{d}r^d} \cdot\frac{\dd \HH^d(y)}{|\loc{\mu}{x}|}.
\end{align*}
Hence we have
\begin{align*}
\frobN{\Sigma_3 - \Sigma_*} 
& \leq \int_{ \closedballG{0}{r}{T} } \left| \frac{|\loc{\mu}{x}|}{\volball{d}r^d} -  f(x) \right| \frobN{ \tensoro{y} }  \frac{\dd \HH^d(y)}{|\loc{\mu}{x}|}.
\end{align*}
According to Proposition \ref{Paper2:prop:probabilitybounds} Point \ref{prop:probabilitybounds:point2}, $\left| \frac{|\loc{\mu}{x}|}{\volball{d}r^d} -  f(x) \right| \leq c_{\ref{Paper2:prop:probabilitybounds:index}} r$. 
Moreover, $\frobN{ \tensoro{y} } \leq r^2$ and $\int_{\closedballG{0}{r}{T}} \frac{\dd \HH^d(y)}{|\loc{\mu}{x}|} \leq \frac{1}{\fmin \Jmin}$.
Therefore, $\frobN{ \Sigma_3 - \Sigma_* } \leq \frac{c_{\ref{Paper2:prop:probabilitybounds:index}}}{\fmin \Jmin} r^3. $
We deduce the result by summing Terms \hyperref[terms:consistency]{A}, \hyperref[terms:consistency]{B}, \hyperref[terms:consistency]{C} and \hyperref[terms:consistency]{D}.
\end{proof}

We now deduce a result concerning the lifted measures $\mucheck$ and $\muchecko$ (defined in Subsect. \ref{Paper2:subsec:defgammaN}).
We remind the reader that the notation $\lambda^r$ refers to the sublevel set $\normalreachmap^{-1}([0,r])$. Hence the quantity $\mu(\lambda^r)$ is the measure of the set of points $x \in \MM$ such that $\lambda(x)\leq t$.

\begin{proposition}
\label{Paper2:prop:consistencycheck}
Let $r < \frac{1}{2 \rho}$. Then
\begin{align*}
\gammaWassersteinp(\check{\mu}, \check{\mu}_0) \leq \gamma\left(2 \mu(\lambda^r)^\frac{1}{p} + c_{\ref{Paper2:lem:loccovconsistency:index}} r\right).
\end{align*}
\end{proposition}

\begin{proof}
Define the map $\phi\colon \MM_0 \rightarrow \left(E \times \matrixspace{E}\right) \times \left(E \times \matrixspace{E}\right)$ as
\begin{align*}
\phi \colon x_0 \mapsto \left(\bigg(x,\loccovnorm{\mu}{x}\bigg), \left(x,\frac{1}{d+2} p_{T_x \MM}\right)\right),
\end{align*}
and consider the measure $\pi = \phi_* \mu_0$. It is a transport plan between $\mucheck$ and $\muchecko$.
By definition of the Wasserstein distance, 
\begin{align*}
\gammaWassersteinp^p(\check{\mu}, \check{\mu}_0) \leq \int \gammaN{\left(x,T\right)-\left(x',T'\right)}^p \dd \pi\left(\left(x,T\right),\left(x',T'\right)\right), 
\end{align*}
hence we can use this transport plan and write
\begin{align*}
\gammaWassersteinp^p(\check{\mu}, \check{\mu}_0) 
&\leq \int \gammaN{ \bigg(x, \frac{1}{r^2}\loccov{\mu}{x}\bigg) - \left(x, \frac{1}{d+2} p_{T_x \MM}\right) }^p \dd \mu(x) \\
&= \gamma^p \int \frobN{\frac{1}{r^2}\loccov{\mu}{x} - \frac{1}{d+2} p_{T_x \MM}}^p \dd \mu(x).
\end{align*}
We split this last integral into the sets $A = \normalreachmap^r$ and $B = E \setminus \normalreachmap^r$. 

On $A$, we use the upper bound $\frobN{\frac{1}{r^2}\loccov{\mu}{x} - \frac{1}{d+2} p_{T_x \MM}} \leq \frobN{\frac{1}{r^2}\loccov{\mu}{x}}+\frobN{\frac{1}{d+2} p_{T_x \MM}} \leq 1+1$ to obtain 
\[\int_A \frobN{\frac{1}{r^2}\loccov{\mu}{x} - \frac{1}{d+2} p_{T_x \MM}}^p \dd \mu(x) \leq 2^p \mu(A).\]
On $B$, we use Proposition \ref{Paper2:prop:consistency} to get 
\begin{align*}
\int_B \frobN{\frac{1}{r^2}\loccov{\mu}{x} - \frac{1}{d+2} p_{T_x \MM}}^p \dd \mu(x) 
&\leq (c_{\ref{Paper2:lem:loccovconsistency:index}}r)^p.
\end{align*}
Combining these two inequalities yields $\gammaWassersteinp^p(\check{\mu}, \check{\mu}_0) \leq \gamma^p( 2^p \mu(A) + (c_{\ref{Paper2:lem:loccovconsistency:index}} r)^p )$. 
Using the inequality $(a+b)^\frac{1}{p} \leq a^\frac{1}{p} + b^\frac{1}{p}$, where $a,b \geq 0$, we deduce
\begin{align*}
\gammaWassersteinp(\check{\mu}, \check{\mu}_0) 
&\leq \gamma \left(2 \mu(A)^\frac{1}{p} + c_{\ref{Paper2:lem:loccovconsistency:index}}r\right),
\end{align*}
which is the result.
\end{proof}

\subsection{Stability of localization of measures}
\label{Paper2:sec:appendix:tangentspaceestimation}
This technical subsection is dedicated to proving stability results for localization of measures. 
Throughout the subsection, we consider two measures $\mu$ and $\nu$ on $E$.
We show that, under some hypotheses on $\mu$, an upper bound on the Wasserstein distance $\Wasssymbol{p}(\mu,\nu)$ gives an upper bound on the the Wassertsein distance $\Wasssymbol{1}(\locprob{\mu}{y}, \locprob{\nu}{y})$ between their localized measures (see Lemmas \ref{Paper2:lem:Wstabxx} and \ref{Paper2:lem:Wstabxxsqrt}). We close this subsection with a comment about the sharpness of our bounds.

The results of this subsection only rely on the following hypotheses about $\mu$:
\encadrer{
\textbf{Hypothesis \hyperref[hyp:5]{5}. }\label{hyp:5}
$\exists c_{\ref{Paper2:hyp:muA:index}}>0, \forall x \in \supp{\mu}$, $\forall t \in [0, \frac{1}{2\rho})$, 
\[\mu(\ball{x}{t}) \geq c_{\ref{Paper2:hyp:muA:index}} t^d.\] 
}
\encadrer{
\textbf{Hypothesis \hyperref[hyp:6]{6}.}\label{hyp:6}
$\exists c_{\ref{Paper2:hyp:muB:index}}>0, \forall x \in \supp{\mu}$, $\exists \lambda(x) \geq 0$, $\forall s,t \in \left[0, \mini{\lambda(x),\frac{1}{2\rho}}\right)$ s.t. $s\leq t$, 
\[\mu(\ball{x}{t}\setminus \ball{x}{s}) \leq c_{\ref{Paper2:hyp:muB:index}} t^{d-1} (t-s).\]
}
\encadrer{
\textbf{Hypothesis \hyperref[hyp:7]{7}.}\label{hyp:7}
$\exists c_{\ref{Paper2:hyp:muBsqrt:index}}>0, \forall x \in \supp{\mu}$, $\forall s,t \in [0, \frac{1}{2\rho})$ s.t. $s\leq t$, 
\[\mu(\ball{x}{t}\setminus \ball{x}{s}) \leq c_{\ref{Paper2:hyp:muBsqrt:index}} t^{d-\frac{1}{2}} (t-s)^\frac{1}{2}.\]
}
\noindent
Note that these Hypotheses \hyperref[hyp:5]{5}, \hyperref[hyp:6]{6} and \hyperref[hyp:7]{7} are consequences of the initial Hypotheses \hyperref[hyp:2]{2} and \hyperref[hyp:3]{3}. Indeed, as stated in Propositions \ref{Paper2:prop:probabilitybounds} and \ref{Paper2:prop:probabilityboundssqrt}, these new hypotheses hold with $\normalreach{x}$ being the normal reach of $\MM$ at $x$, and with the constants $c_{\ref{Paper2:hyp:muA:index}} = \fmin \Jmin \volball{d}$,
~$c_{\ref{Paper2:hyp:muB:index}} = d 2^{d}\fmax \Jmax \volball{d}$ 
~and 
\begin{align*}
c_{\ref{Paper2:hyp:muBsqrt:index}} = \frac{\fmax \Jmax}{\fmin \Jmin }\left( \frac{\rho}{\sqrt{4 - \sqrt{13}}}\right)^d d 2^{2d} \sqrt{3}.
\end{align*}
In order to state the results of this subsection in a more general setting, we will only invoke the Hypotheses \hyperref[hyp:5]{5}, \hyperref[hyp:6]{6} and \hyperref[hyp:7]{7}.
We first state a lemma that will be useful in what follows.

\begin{lemma}
\label{Paper2:lem:outer}
For every $x,y \in E$, we have $\frobN{\outerP{x} - \outerP{y}} \leq (\eucN{x}+\eucN{y})\eucN{x-y}$. 
\end{lemma}
\begin{proof}
We apply the triangle inequality to $x \transp{x} - y \transp{y} = (x-y) \transp{x} + y\transp{(x-y)}$: 
\begin{align*}
\frobN{x \transp{x} - y \transp{y}} 
\leq \frobN{(x-y) \transp{x}} + \frobN{y \transp{(x-y)}} 
&\leq \eucN{x-y}\eucN{x} + \eucN{y}\eucN{x-y} \\
&= (\eucN{x}+\eucN{y})\eucN{x-y},
\end{align*} 
which gives the bound.
\end{proof}

Next, let us compare a measure and its submeasures. 
If $\mu$ is a measure of positive mass (potentially with $|\mu| \neq 1$), we remind the reader that the notation $\overline{\mu}$ refers to the corresponding probability measure $\frac{1}{\mu(E)} \mu$. Moreover, a \emph{submeasure} of $\mu$ is a measure $\mu'$ such that for all measurable set $A \subset E$, we have $\mu'(A) \leq \mu(A)$.

\begin{lemma}
\label{Paper2:lem:transportsubmeasure}
Let $\mu$ be any measure of positive mass, and let $\mu'$ be a submeasure of $\mu$ with $|\mu'|>0$.
Suppose that $\supp{\mu}$ is included in a ball $\closedball{x}{r}$.
Then
\begin{align*}
\Wasssymbol{p}\left(\overline{\mu}, \overline{\mu'}\right) \leq 2\left(1-\frac{|\mu'|}{|\mu|}\right)^\frac{1}{p} r.
\end{align*}
In particular, if $\mu$ is a probability measure, then
$\Wasssymbol{p}\left(\mu, \overline{\mu'}\right) \leq 2(1-|\mu'|)^\frac{1}{p} r$.
\end{lemma}

\begin{proof}
We start with the second inequality.
Consider the intermediate probability measure $\omega = \mu' + (1-|\mu'|) \delta_x$, where $\delta_x$ is the Dirac mass (represented in Figure \ref{fig:appendix:intermediatemeasure}).
We shall use the triangle inequality $\Wasssymbol{p}(\mu, \overline{\mu'}) \leq \Wasssymbol{p}(\mu, \omega) + \Wasssymbol{p}(\omega, \overline{\mu'})$.
We can write
\begin{itemize}
\itemsep.15cm
\item $\mu = \mu' + (\mu - \mu')$,
\item $\omega = \mu' + (1-|\mu'|) \delta_x$,
\item $\overline{\mu'} = \mu' + (\overline{\mu'}-\mu')$.
\end{itemize}
\begin{figure}[H]
\centering
\includegraphics[width=.8\linewidth]{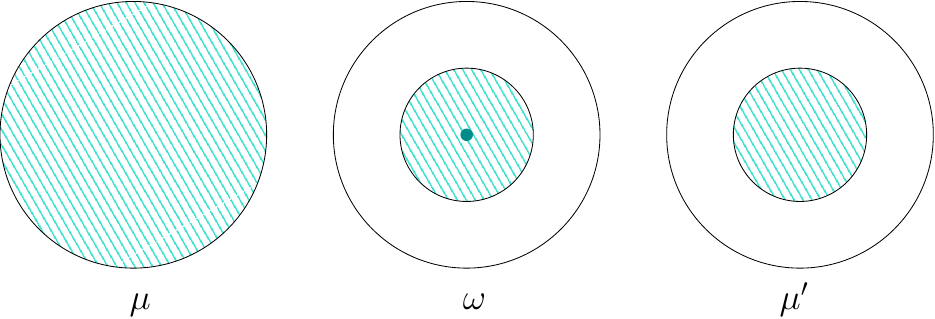}
\caption{The measures involved in the proof of Lemma \ref{Paper2:lem:transportsubmeasure}. 
A hatched area represents the support of the measure, and a point represents a Dirac mass.}
\label{fig:appendix:intermediatemeasure}
\end{figure}
Observe that $\mu$ and $\omega$ admits $\mu'$ as a common submeasure of mass $|\mu'|$. Therefore we can build a transport plan between $\mu$ and $\omega$ where only a mass $1-|\mu'|$ of $\mu$ is moved to $x$. In other words,
\begin{align*}
\Wasssymbol{p}(\mu, \omega) \leq (1-|\mu'|)^\frac{1}{p} r.
\end{align*}
Similarly, one shows that $\Wasssymbol{p}\left(\omega, \overline{\mu'}\right) \leq (1-|\mu'|)^\frac{1}{p} r$.

Now let us prove the first inequality. Since $\mu'$ is a submeasure of $\mu$ of mass $|\mu'|$, then $\frac{1}{|\mu|} \mu'$ is a submeasure of $\overline{\mu} = \frac{1}{|\mu|}\mu$ of mass $\frac{1}{|\mu|} |\mu'|$. We then apply the previous inequality.
\end{proof}

We now compare the localized measures of $\mu$, (defined in Subsect. \ref{Paper2:subsec:defgammaN}). 

\begin{lemma}
\label{Paper2:lem:Wstabxy}
Let $x \in \supp{\mu}$.
Suppose that $x$ satisfies Hypotheses \hyperref[hyp:5]{5} and \hyperref[hyp:6]{6} with $r < \mini{\lambda(x), \frac{1}{2\rho}}$.
Let $y \in E$ such that $\eucN{x-y} < \frac{r}{4}$. 
Then $|\loc{\mu}{x}|> 0$, $|\loc{\mu}{y}| > 0$ and 
\begin{align*}
\Wasssymbol{1}\left(\locprob{\mu}{x}, \locprob{\mu}{y}\right) \leq c_{\mathrm{\ref{Paper2:lem:Wstabxy:index}}} \eucN{x-y},
\end{align*}
with $c_{\mathrm{\ref{Paper2:lem:Wstabxy:index}}} = 2\left(1 + 4\frac{5^{d-1}}{3^d}\right)\frac{c_{\ref{Paper2:hyp:muB:index}}}{c_{\ref{Paper2:hyp:muA:index}}}$.
\end{lemma}

\begin{proof}
It is clear that $|\loc{\mu}{y}| > 0$ since $\mu(\ball{y}{r}) \geq \mu(\ball{x}{r-\eucN{x-y}})$ and $x \in \supp{\mu}$.
Let us show the inequality $\Wasssymbol{1}(\locprob{\mu}{x}, \locprob{\mu}{y}) \leq c_{\mathrm{\ref{Paper2:lem:Wstabxy:index}}} \eucN{x-y}$ by studying the measure $\mu$ on the intersection $\ball{x}{r} \cap \ball{y}{r}$.
Let $\loc{\mu}{x,y}$ be the restriction of $\mu$ to $\ball{x}{r} \cap \ball{y}{r}$, and $\locprob{\mu}{x,y}$ the corresponding probability measure.
The triangle inequality gives:
\begin{align}
\label{terms:Wstabxy}
\Wasssymbol{1}(\locprob{\mu}{x}, \locprob{\mu}{y}) \leq 
\underbrace{ \Wasssymbol{1}(\locprob{\mu}{x}, \locprob{\mu}{x,y} ) }_{\text{A}}
+ \underbrace{ \Wasssymbol{1}(\locprob{\mu}{x,y}, \locprob{\mu}{y}) }_{\text{B}}.
\end{align}

\medbreak\noindent\emph{Term \hyperref[terms:Wstabxy]{A}.}
Let us show that $\Wasssymbol{1}(\locprob{\mu}{x}, \locprob{\mu}{x,y}) \leq 2\frac{c_{\ref{Paper2:hyp:muB:index}}}{c_{\ref{Paper2:hyp:muA:index}}}\eucN{x-y}$.
Note that $\locprob{\mu}{x,y}$ is a submeasure of $\locprob{\mu}{x}$. According to Lemma \ref{Paper2:lem:transportsubmeasure}, we have
\begin{align*}
\Wasssymbol{1}(\locprob{\mu}{x}, \locprob{\mu}{x,y}) 
&\leq 2 \left(1-\frac{|\loc{\mu}{x,y}|}{|\loc{\mu}{x}|}\right) r = 2 \frac{|\loc{\mu}{x}|-|\loc{\mu}{x,y}|}{|\loc{\mu}{x}|} r.
\end{align*}
We know from Hypothesis \hyperref[hyp:5]{5} that $|\loc{\mu}{x}| \geq c_{\ref{Paper2:hyp:muA:index}} r^d$. 
On the other hand, 
\begin{align*}
|\loc{\mu}{x}|-|\loc{\mu}{x,y}| 
&= \mu(\ball{x}{r}) - \mu(\ball{x}{r} \cap \ball{y}{r})\\
&\leq \mu(\ball{x}{r}) -  \mu( \ball{x}{r - \eucN{x-y}}),
\end{align*}
hence we can apply Hypothesis \hyperref[hyp:6]{6} to get $|\loc{\mu}{x}|-|\loc{\mu}{x,y}| \leq c_{\ref{Paper2:hyp:muB:index}} r^{d-1} \eucN{x-y}$.
We finally obtain 
\begin{align*}
\Wasssymbol{1}(\locprob{\mu}{x}, \locprob{\mu}{x,y}) \leq  2\frac{c_{\ref{Paper2:hyp:muB:index}} r^{d-1} \eucN{x-y}}{c_{\ref{Paper2:hyp:muA:index}} r^d}r = 2\frac{c_{\ref{Paper2:hyp:muB:index}}}{c_{\ref{Paper2:hyp:muA:index}}}\eucN{x-y}.
\end{align*}

\medbreak\noindent\emph{Term \hyperref[terms:Wstabxy]{B}.}
Similarly, Lemma \ref{Paper2:lem:transportsubmeasure} yields
\begin{align*}
\Wasssymbol{1}(\locprob{\mu}{y}, \locprob{\mu}{x,y}) 
&\leq 2 \frac{|\loc{\mu}{y}|-|\loc{\mu}{x,y}|}{|\loc{\mu}{y}|} r.
\end{align*}
Let us show that we still have $|\loc{\mu}{y}| \geq a' r^d$  and $|\loc{\mu}{y}|-|\loc{\mu}{x,y}| \leq b' r^{d-1} \eucN{x-y}$ with the constants $a' = (\frac{3}{4})^d c_{\ref{Paper2:hyp:muA:index}}$ and $b' = 2(\frac{5}{4})^{d-1} c_{\ref{Paper2:hyp:muB:index}}$.
The first inequality comes from Hypothesis \hyperref[hyp:5]{5}:
\begin{align*}
\mu(\ball{y}{r})
\geq \mu(\ball{x}{r - \eucN{x-y}})
\geq c_{\ref{Paper2:hyp:muA:index}}(r - \eucN{x-y})^d
\end{align*}
and $\eucN{x-y} \leq \frac{r}{4}$.
The second inequality comes from Hypothesis \hyperref[hyp:6]{6}:
\begin{align*}
\mu(\ball{y}{r}) - \mu(\ball{x}{r} \cap \ball{y}{r})
&\leq \mu(\ball{x}{r + \eucN{x-y}}) - \mu(\ball{x}{r - \eucN{x-y}}) \\
&\leq c_{\ref{Paper2:hyp:muB:index}}(r+\eucN{x-y})^{d-1}2\eucN{x-y}
\end{align*}
and $\eucN{x-y} \leq \frac{r}{4}$.
To conclude,
\begin{align*}
\Wasssymbol{1}(\locprob{\mu}{y}, \locprob{\mu}{x,y}) 
&\leq 2 \frac{2(\frac{5}{4})^{d-1} r^{d-1} c_{\ref{Paper2:hyp:muA:index}} \eucN{x-y}}{2 (\frac{3}{4})^d c_{\ref{Paper2:hyp:muB:index}} r^d} r = 8 \frac{5^{d-1}}{3^d} \frac{c_{\ref{Paper2:hyp:muB:index}}}{c_{\ref{Paper2:hyp:muA:index}}}\eucN{x-y}. 
\end{align*}
We obtain the result by summing Terms \hyperref[terms:Wstabxy]{A} and \hyperref[terms:Wstabxy]{B}.
\end{proof}

The following lemma is the counterpart of Lemma \ref{Paper2:lem:Wstabxy} when replacing Hypothesis \hyperref[hyp:6]{6} with the weaker Hypothesis \hyperref[hyp:7]{7}.

\begin{lemma}
\label{Paper2:lem:Wstabxysqrt}
Let $x \in \supp{\mu}$.
Suppose that $x$ satisfies Hypotheses \hyperref[hyp:5]{5} and \hyperref[hyp:7]{7} at $x$ with $r<\frac{1}{2\rho}$.
Let $y \in E$ such that $\eucN{x-y} < \frac{r}{4}$. 
Then $|\loc{\mu}{x}|, |\loc{\mu}{y}| > 0$, and 
\begin{align*}
\Wasssymbol{1}(\locprob{\mu}{x}, \locprob{\mu}{y}) \leq c_{\mathrm{\ref{Paper2:lem:Wstabxysqrt:index}}} r^\frac{1}{2} \eucN{x-y}^{\frac{1}{2}},
\end{align*}
with $c_{\mathrm{\ref{Paper2:lem:Wstabxysqrt:index}}} = \left(2 + \frac{2^\frac{5}{2} 5^{d-\frac{1}{2}} }{3^d} \right)\frac{c_{\ref{Paper2:hyp:muBsqrt:index}}}{c_{\ref{Paper2:hyp:muA:index}}}$.
\end{lemma}

\begin{proof}
The proof is similar to Lemma \ref{Paper2:lem:Wstabxy} with slight modifications. 
We still consider
\begin{align}
\label{terms:Wstabxysqrt}
\Wasssymbol{1}(\locprob{\mu}{x}, \locprob{\mu}{y}) \leq 
\underbrace{ \Wasssymbol{1}(\locprob{\mu}{x}, \locprob{\mu}{x,y} ) }_{\text{A}}
+ \underbrace{ \Wasssymbol{1}(\locprob{\mu}{x,y}, \locprob{\mu}{y}) }_{\text{B}}.
\end{align}

\medbreak\noindent\emph{Term \hyperref[terms:Wstabxysqrt]{A}.} 
We have $\Wasssymbol{1}(\locprob{\mu}{x}, \locprob{\mu}{x,y}) 
\leq 2 \frac{|\loc{\mu}{x}|-|\loc{\mu}{x,y}|}{|\loc{\mu}{x}|} r$.
Hypothesis \hyperref[hyp:5]{5} still gives $|\loc{\mu}{x}| \geq c_{\ref{Paper2:hyp:muA:index}} r^d$. 
But Hypothesis \hyperref[hyp:7]{7} now yields
\begin{align*}
|\loc{\mu}{x}|-|\loc{\mu}{x,y}| 
&\leq \mu(\ball{x}{r}) -  \mu( \ball{x}{r - \eucN{x-y}}) \\
&\leq c_{\ref{Paper2:hyp:muBsqrt:index}} r^{d-\frac{1}{2}}\eucN{x-y}^\frac{1}{2}.
\end{align*}
We eventually obtain $\Wasssymbol{1}(\locprob{\mu}{x}, \locprob{\mu}{x,y}) \leq  2\frac{c_{\ref{Paper2:hyp:muBsqrt:index}}}{c_{\ref{Paper2:hyp:muA:index}}}r^\frac{1}{2} \eucN{x-y}^\frac{1}{2}$.

\medbreak\noindent\emph{Term \hyperref[terms:Wstabxysqrt]{B}.} 
In order to bound $\Wasssymbol{1}(\locprob{\mu}{y}, \locprob{\mu}{x,y}) 
\leq 2 \frac{|\loc{\mu}{y}|-|\loc{\mu}{x,y}|}{|\loc{\mu}{y}|} r$,
Hypothesis \hyperref[hyp:5]{5} still gives $|\loc{\mu}{x}| \geq (\frac{3}{4})^d c_{\ref{Paper2:hyp:muA:index}} r^d$, and Hypothesis \hyperref[hyp:7]{7} yields
\begin{align*}
|\loc{\mu}{y}|-|\loc{\mu}{x,y}|
&\leq \mu(\ball{x}{r + \eucN{x-y}}) - \mu(\ball{x}{r - \eucN{x-y}}) \\
&\leq c_{\ref{Paper2:hyp:muBsqrt:index}}(r+\eucN{x-y})^{d-\frac{1}{2}}(2\eucN{x-y})^\frac{1}{2},
\end{align*}
which is not greater than $c_{\ref{Paper2:hyp:muBsqrt:index}}(\frac{5}{4} r)^{d-\frac{1}{2}}(2\eucN{x-y})^\frac{1}{2}$.
We finally get
$$\Wasssymbol{1}(\locprob{\mu}{y}, \locprob{\mu}{x,y}) 
\leq 2 \frac{c_{\ref{Paper2:hyp:muBsqrt:index}}(\frac{5}{4} r)^{d-\frac{1}{2}}(2\eucN{x-y})^\frac{1}{2}}{(\frac{3}{4})^d c_{\ref{Paper2:hyp:muA:index}} r^d}r \leq \frac{2^\frac{5}{2} 5^{d-\frac{1}{2}} c_{\ref{Paper2:hyp:muBsqrt:index}}}{3^d c_{\ref{Paper2:hyp:muA:index}}} r^\frac{1}{2} \eucN{x-y}^\frac{1}{2},$$
and we obtain the result by adding Terms  \hyperref[terms:Wstabxysqrt]{A} and  \hyperref[terms:Wstabxysqrt]{B}.
\end{proof}

We can now compare the localized measures of two probability measures.

\begin{lemma}
\label{Paper2:lem:Wstabxx}
Let $w = \Wasssymbol{p}(\mu,  \nu)$.
Let $y \in E$. 
Suppose that there exists $x \in \supp{\mu}$ such that $\eucN{x-y} \leq \alpha$ with $\alpha = (\frac{w}{r^{d-1}})^\frac{1}{2}$, and that $\mu$ satisfies Hypotheses \hyperref[hyp:5]{5} and \hyperref[hyp:6]{6} at $x$ with $r < \mini{\lambda(x),\frac{1}{2\rho}}$.
Assume that $w\leq \mini{c_{\ref{Paper2:hyp:muA:index}} , 1}(\frac{r}{4})^{d+1}$.
Then
\begin{align*}
\Wasssymbol{1}(\locprob{\mu}{y}, \locprob{\nu}{y}) 
&\leq c_{\ref{Paper2:lem:Wstabxx:index}} \alpha,
\end{align*}
with $c_{\ref{Paper2:lem:Wstabxx:index}} = 
\frac{2^{d-1}}{c_{\ref{Paper2:hyp:muA:index}}}
+ 2\frac{12 \cdot 5^{d-1}c_{\ref{Paper2:hyp:muB:index}}+1}{3^d c_{\ref{Paper2:hyp:muA:index}}}
+ 2^{d+3} \frac{(\frac{3}{2})^{d-1}c_{\ref{Paper2:hyp:muB:index}}+1}{c_{\ref{Paper2:hyp:muA:index}}}$.
\end{lemma}

\begin{proof}
Let $\pi$ be an optimal transport for $\Wasssymbol{p}(\mu, \nu)$.
Define $\loc{\pi}{y}$ to be the restriction of the measure $\pi$ to the set $\ball{y}{r} \times \ball{y}{r} \subset E \times E$. Its marginals ${p_1}_* \loc{\pi}{y}$ and ${p_2}_* \loc{\pi}{y}$ are submeasures of $\loc{\mu}{y}$ and $\loc{\nu}{y}$.
We shall use the triangle inequality:
\begin{align}
\label{terms:Wstabxx}
\Wasssymbol{1}(\locprob{\mu}{y}, \locprob{\nu}{y}) 
\leq 
\underbrace{ \Wasssymbol{1}(\overline{\mu_y}, \overline{{p_1}_* \loc{\pi}{y}}) }_{\text{A}}
+ \underbrace{ \Wasssymbol{1}(\overline{{p_1}_* \loc{\pi}{y}}, \overline{{p_2}_* \loc{\pi}{y}}) }_{\text{B}}
+ \underbrace{ \Wasssymbol{1}(\overline{{p_2}_* \loc{\pi}{y}}, \overline{\nu_y}) }_{\text{C}}
\end{align}
Before examinating each of these terms, note that we have 
\begin{equation}
\label{Paper2:eq:lemxx1}
|\loc{\pi}{y}| = |{p_1}_* \loc{\pi}{y}| = |{p_2}_* \loc{\pi}{y}| \geq \mu(\ball{y}{r-\alpha} ) - \frac{w}{\alpha}
\end{equation}
\begin{equation}
\label{Paper2:eq:lemxx2}
|\loc{\nu}{y}| \leq \mu(\ball{y}{r + \alpha} ) +  \frac{w}{\alpha}\end{equation}
\begin{equation}
\label{Paper2:eq:lemxx3}
|\loc{\nu}{y}| \geq \mu(\ball{y}{r - \alpha} ) - \frac{w}{\alpha}
\end{equation}
The first equation can be proven as follows:
\begin{align*}
\mu(\ball{y}{r-\alpha}) 
&= \pi(\ball{y}{r-\alpha} \times E) \\
&= \pi(\ball{y}{r-\alpha} \times \ball{y}{r}) + \pi(\ball{y}{r-\alpha} \times \ball{y}{r}^c) 
\end{align*}
On the one hand, $\pi(\ball{y}{r-\alpha} \times \ball{y}{r}) \leq \pi(\ball{y}{r} \times \ball{y}{r}) \leq |\loc{\pi}{y}|$.
On the other hand, Markov inequality yields 
\begin{align*}
\pi(\ball{y}{r-\alpha} \times \ball{y}{r}^c) 
\leq \pi(\{(z,z'), \eucN{z-z'} \geq \alpha\})
\leq \frac{1}{\alpha} \int \eucN{z-z'} \dd \pi(z,z'),
\end{align*}
and Jensen inequality gives
\begin{align*}
\frac{1}{\alpha} \int \eucN{z-z'} \dd \pi(z,z') 
&\leq \frac{1}{\alpha} \bigg( \int \eucN{z-z'}^p \dd \pi(z,z')\bigg)^\frac{1}{p} = \frac{w}{\alpha}.
\end{align*}
We deduce that $\mu(\ball{y}{r-\alpha})\leq |\loc{\pi}{y}| + \frac{w}{\alpha}$, which gives Equation \eqref{Paper2:eq:lemxx1}.
Equations \eqref{Paper2:eq:lemxx2} and \eqref{Paper2:eq:lemxx3} can be proven similarly. 
In addition to these preliminaries, note that the assumption $w\leq \mini{c_{\ref{Paper2:hyp:muA:index}} , 1}(\frac{r}{4})^{d+1}$ yields
\begin{equation}
\label{Paper2:eq:lemxx4}
\alpha \leq \frac{r}{4}
\end{equation}
\begin{equation}
\label{Paper2:eq:lemxx5}
\frac{w}{\alpha} \leq \frac{c_{\ref{Paper2:hyp:muA:index}}}{2} \left(\frac{r}{2}\right)^d
\end{equation}
We now study the Terms \hyperref[terms:Wstabxx]{B}, \hyperref[terms:Wstabxx]{A} and \hyperref[terms:Wstabxx]{C}. 
\medbreak \noindent \emph{Term \hyperref[terms:Wstabxx]{B}.} 
Since $\locprob{\pi}{y} = \frac{\loc{\pi}{y}}{|\loc{\pi}{y}|}$ is a transport plan between $\overline{{p_1}_* \loc{\pi}{y}}$ and $\overline{{p_2}_* \loc{\pi}{y}}$, we have
\begin{align*}
\Wasssymbol{1}(\overline{{p_1}_* \pi_y}, \overline{{p_2}_* \pi_y})
\leq \int \eucN{z-z'} \frac{\dd \loc{\pi}{y}(z,z')}{|\pi_y|}
\leq \frac{1}{|\pi_y|} \int \eucN{z-z'} \dd \pi(z,z'). 
\end{align*}
Moreover, Jensen inequality yields $\int \eucN{z-z'} \dd \pi(z,z') \leq w$.
Hence 
\begin{align*}
\Wasssymbol{1}(\overline{{p_1}_* \pi_y}, \overline{{p_2}_* \pi_y}) \leq \frac{w}{|\pi_y|}.
\end{align*}
Let us prove that $|\pi_y| \geq \frac{c_{\ref{Paper2:hyp:muA:index}}}{2}(\frac{r}{2})^d$. 
According to Equation \eqref{Paper2:eq:lemxx1}, $|\loc{\pi}{y}|\geq \mu(\ball{y}{r-\alpha} ) - \frac{w}{\alpha}$.
Now, note that $\mu(\ball{y}{r-\alpha} ) \geq  \frac{c_{\ref{Paper2:hyp:muA:index}}}{2^d} r^d$. Indeed, using Hypothesis \hyperref[hyp:5]{5},
\begin{align*}
\mu(\ball{y}{r-\alpha} )
\geq \mu(\ball{x}{r-\alpha-\eucN{x-y}} )
\geq c_{\ref{Paper2:hyp:muA:index}}(r-\alpha-\eucN{x-y})^d,
\end{align*}
and we conclude with $\eucN{x-y} \leq \alpha \leq \frac{r}{4}$.
Now, using Equation \eqref{Paper2:eq:lemxx5}, we get
\begin{align*}
|\loc{\pi}{y}| 
&\geq \mu(\ball{y}{r-\alpha} ) - \frac{w}{\alpha} \\
&\geq c_{\ref{Paper2:hyp:muA:index}}\left(\frac{r}{2}\right)^d - \frac{c_{\ref{Paper2:hyp:muA:index}}}{2}\left(\frac{r}{2}\right)^d
\geq \frac{c_{\ref{Paper2:hyp:muA:index}}}{2}\left(\frac{r}{2}\right)^d .
\end{align*}
Finally, since $\alpha = \left(\frac{w}{r^{d-1}}\right)^\frac{1}{2}$ and $\alpha \leq \frac{r}{4}$, we obtain
\begin{align*}
\Wasssymbol{1}(\overline{{p_1}_* \pi_y}, \overline{{p_2}_* \pi_y}) 
\leq \frac{w}{|\loc{\pi}{y}|}
\leq \frac{w}{\frac{c_{\ref{Paper2:hyp:muA:index}}}{2}(\frac{r}{2})^d}
= \frac{2^{d+1}}{c_{\ref{Paper2:hyp:muA:index}}} \alpha^2 \frac{1}{r} 
\leq \frac{2^{d-1}}{c_{\ref{Paper2:hyp:muA:index}}}\alpha.
\end{align*}

\medbreak \noindent \emph{Term \hyperref[terms:Wstabxx]{A}.} 
According to Lemma \ref{Paper2:lem:transportsubmeasure}, we have
\begin{align}
\label{eq:wstabxx:1}
\Wasssymbol{1}(\overline{\mu_y}, \overline{{p_1}_* \pi_y}) 
\leq 2 \frac{|\mu_y|-|{p_1}_* \pi_y|}{|\mu_y|} r.
\end{align}
We can use Equation \eqref{Paper2:eq:lemxx1} to get
\begin{align*}
|\mu_y|-|{p_1}_* \pi_y| 
&\leq \mu(\ball{y}{r}) - \mu(\ball{y}{r-\alpha}) + \frac{w}{\alpha} \\
&\leq \mu(\ball{x}{r+\eucN{x-y}}) - \mu(\ball{x}{r-\alpha-\eucN{x-y}}) + \frac{w}{\alpha}.
\end{align*}
Moreover, by Hypothesis \hyperref[hyp:6]{6}, we have
\[\mu(\ball{x}{r+\eucN{x-y}}) - \mu(\ball{x}{r-\alpha-\eucN{x-y}}) \leq c_{\ref{Paper2:hyp:muB:index}} (r+\eucN{x-y})^{d-1} (2\eucN{x-y}+\alpha),\]
which is not greater than $c_{\ref{Paper2:hyp:muB:index}} (\frac{5}{4} r)^{d-1} 3 \alpha$ since $\eucN{x-y} \leq \alpha \leq \frac{r}{4}$.
Besides, $\frac{w}{\alpha} = r^{d-1} \alpha$, and we obtain
\begin{align*}
|\mu_y|-|{p_1}_* \pi_y| 
&\leq \left(3 \left(\frac{5}{4}\right)^{d-1} c_{\ref{Paper2:hyp:muB:index}} + 1\right) r^{d-1} \alpha.
\end{align*}
Finally, thanks to Hypothesis \hyperref[hyp:5]{5}, we write
\begin{align*}
|\mu_y| 
= \mu(\ball{y}{r}) 
&\geq \mu(\ball{x}{r-\eucN{x-y}}) \\
&\geq c_{\ref{Paper2:hyp:muA:index}} (r-\eucN{x-y})^d 
\geq c_{\ref{Paper2:hyp:muA:index}}\left(\frac{3}{4}\right)^d r^d
\end{align*}
and we obtain
\begin{align*}
\frac{|\mu_y|-|{p_1}_* \pi_y|}{|\mu_y|} 
&\leq \frac{((3 (\frac{5}{4})^{d-1} c_{\ref{Paper2:hyp:muB:index}}  + 1) r^{d-1} }{c_{\ref{Paper2:hyp:muA:index}}(\frac{3}{4})^d r^d}  \alpha
= \frac{1}{r} \cdot \frac{12 \cdot 5^{d-1}c_{\ref{Paper2:hyp:muB:index}}+1}{3^d c_{\ref{Paper2:hyp:muA:index}}} \alpha.
\end{align*}
Combined with Equation \eqref{eq:wstabxx:1}, we deduce
\begin{align*}
\Wasssymbol{1}(\overline{\mu_y}, \overline{{p_1}_* \pi_y}) 
\leq 2\frac{12 \cdot 5^{d-1}c_{\ref{Paper2:hyp:muB:index}}+1}{3^d c_{\ref{Paper2:hyp:muA:index}}} \alpha.
\end{align*}

\medbreak \noindent \emph{Term \hyperref[terms:Wstabxx]{C}.} 
It is similar to Term \hyperref[terms:Wstabxx]{A}.
First, one shows that 
\begin{align}
\label{eq:wstabxx:2}
\Wasssymbol{1}(\overline{\nu_y}, \overline{{p_2}_* \pi_y}) \leq 2 \frac{|\nu_y|-|{p_2}_* \pi_y|}{|\nu_y|} r.
\end{align}
Using Equations \eqref{Paper2:eq:lemxx1} and \eqref{Paper2:eq:lemxx2} we get
\begin{align*}
|\nu_y|-|{p_2}_* \pi_y| 
&\leq \mu(\ball{y}{r+\alpha})+\frac{w}{\alpha} - \mu(\ball{y}{r-\alpha}) + \frac{w}{\alpha} \\
&\leq \mu(\ball{x}{r+\eucN{x-y}+\alpha}) - \mu(\ball{x}{r-\alpha-\eucN{x-y}}) + 2\frac{w}{\alpha}.
\end{align*}
By Hypothesis \hyperref[hyp:6]{6}, we have
\begin{align*}
&\mu(\ball{x}{r+\eucN{x-y}+\alpha}) - \mu(\ball{x}{r-\alpha-\eucN{x-y}}) \\
&\leq c_{\ref{Paper2:hyp:muB:index}} (r+\eucN{x-y}+\alpha)^{d-1} (2\eucN{x-y}+2\alpha)
\end{align*}
which is not greater than $c_{\ref{Paper2:hyp:muB:index}} (\frac{3}{2} r)^{d-1} 4\alpha$ since $\eucN{x-y} \leq \alpha \leq \frac{r}{4}$.
Moreover, $\frac{w}{\alpha} = r^{d-1} \alpha$, and we obtain
\begin{align*}
|\nu_y|-|{p_2}_* \pi_y|  
\leq ( 4 (\frac{3}{2})^{d-1} c_{\ref{Paper2:hyp:muB:index}}  + 2 ) r^{d-1} \alpha.
\end{align*}
We have seen that
\begin{align*}
|\nu_y| &\geq \mu(\ball{y}{r-\alpha})-\frac{w}{\alpha} 
\geq \frac{c_{\ref{Paper2:hyp:muA:index}}}{2}\left(\frac{r}{2}\right)^d.
\end{align*}
Hence
\begin{align*}
\frac{|\nu_y|-|{p_2}_* \pi_y|}{|\nu_y|}
\leq \frac{(4 (\frac{3}{2})^{d-1} c_{\ref{Paper2:hyp:muB:index}}  + 2)r^{d-1} }{ \frac{c_{\ref{Paper2:hyp:muA:index}}}{2} (\frac{r}{2})^d }\alpha
= \frac{1}{r} \cdot 2^{d+2} \frac{(\frac{3}{2})^{d-1}b+1}{c_{\ref{Paper2:hyp:muA:index}}} \alpha,
\end{align*}
and we deduce from Equation \eqref{eq:wstabxx:2} that
\begin{align*}
\Wasssymbol{1}(\overline{\mu_y}, \overline{{p_1}_* \pi_y}) 
&\leq 2^{d+3} \frac{(\frac{3}{2})^{d-1}c_{\ref{Paper2:hyp:muB:index}}+1}{c_{\ref{Paper2:hyp:muA:index}}} \alpha.
\end{align*}
To conclude, summing up the Terms \hyperref[terms:Wstabxx]{A}, \hyperref[terms:Wstabxx]{B} and \hyperref[terms:Wstabxx]{C} gives
$\Wasssymbol{1}(\locprob{\mu}{y}, \locprob{\nu}{y}) 
\leq c_{\ref{Paper2:lem:Wstabxx:index}} \alpha$
with 
\begin{align*}
c_{\ref{Paper2:lem:Wstabxx:index}} = 
\frac{2^{d-1}}{c_{\ref{Paper2:hyp:muA:index}}}
+ 2\frac{12 \cdot 5^{d-1}c_{\ref{Paper2:hyp:muB:index}}+1}{3^d c_{\ref{Paper2:hyp:muA:index}}}
+ 2^{d+3} \frac{(\frac{3}{2})^{d-1}c_{\ref{Paper2:hyp:muB:index}}+1}{c_{\ref{Paper2:hyp:muA:index}}},
\end{align*}
as wanted.
\end{proof}

As before, we prove a version of Lemma \ref{Paper2:lem:Wstabxx} where Hypothesis \hyperref[hyp:6]{6} is replaced by the weaker Hypothesis \hyperref[hyp:7]{7}.

\begin{lemma}
\label{Paper2:lem:Wstabxxsqrt}
Let $w = \Wasssymbol{p}(\mu,  \nu)$.
Let $y \in E$. 
Suppose that there exists $x \in \supp{\mu}$ such that $\eucN{x-y} \leq \alpha$ with $\alpha = (\frac{w}{r^{d-1}})^\frac{1}{2}$, and that $\mu$ satisfies Hypotheses \hyperref[hyp:5]{5} and \hyperref[hyp:7]{7} at $x$ with $r < \frac{1}{2\rho}$.
Assume that $w\leq \mini{c_{\ref{Paper2:hyp:muA:index}}, 1}(\frac{r}{4})^{d+1}$.
Then
\begin{align*}
\Wasssymbol{1}(\locprob{\mu}{y}, \locprob{\nu}{y}) 
&\leq c_{\ref{Paper2:lem:Wstabxxsqrt:index}} r^\frac{1}{2} \alpha^\frac{1}{2},
\end{align*}
with $c_{\ref{Paper2:lem:Wstabxxsqrt:index}} = 
\frac{2^{d-2}}{c_{\ref{Paper2:hyp:muA:index}}}
+ \frac{4 \cdot 3^\frac{1}{2} 5^{d-\frac{1}{2}} c_{\ref{Paper2:hyp:muBsqrt:index}}+4^{d-\frac{1}{2}} }{3^d c_{\ref{Paper2:hyp:muA:index}}}
+2\cdot 4^d\frac{2 c_{\ref{Paper2:hyp:muBsqrt:index}} (\frac{3}{2})^{d-\frac{1}{2}}+1 }{3^d c_{\ref{Paper2:hyp:muA:index}}}$.
\end{lemma}

\begin{proof}
The proof is similar as Lemma \ref{Paper2:lem:Wstabxx}. Let us highlight the modifications.
Since $\alpha \leq \frac{r}{4}$ and $\frac{w}{\alpha} = r^{d-1}\alpha$, we have the inequalities
\begin{equation*}
\alpha^\frac{1}{2} \leq \frac{1}{2}r^\frac{1}{2}
~~~~~~~~~\text{and}~~~~~~~~~
\frac{w}{\alpha} \leq \frac{1}{2}r^{d-\frac{1}{2}}\alpha^\frac{1}{2}.
\end{equation*}
We still write the triangle inequality:
\begin{align}
\label{terms:Wstabxxsqrt}
\Wasssymbol{1}(\locprob{\mu}{y}, \locprob{\nu}{y}) 
\leq 
\underbrace{ \Wasssymbol{1}(\overline{\mu_y}, \overline{{p_1}_* \loc{\pi}{y}}) }_{\text{A}}
+ \underbrace{ \Wasssymbol{1}(\overline{{p_1}_* \loc{\pi}{y}}, \overline{{p_2}_* \loc{\pi}{y}}) }_{\text{B}}
+ \underbrace{ \Wasssymbol{1}(\overline{{p_2}_* \loc{\pi}{y}}, \overline{\nu_y}) }_{\text{C}}
\end{align}
where $\pi$ is an optimal transport plan for $\Wasssymbol{p}(\mu, \nu)$.

\medbreak \noindent \emph{Term \hyperref[terms:Wstabxxsqrt]{B}.} 
The argument to obtain $\Wasssymbol{1}(\overline{{p_1}_* \pi_y}, \overline{{p_2}_* \pi_y}) \leq \frac{2^{d-1}}{c_{\ref{Paper2:hyp:muA:index}}}\alpha
$ is unchanged, and we use $\alpha^\frac{1}{2} \leq \frac{1}{2}r^\frac{1}{2}$ to get
\begin{align*}
\Wasssymbol{1}(\overline{{p_1}_* \pi_y}, \overline{{p_2}_* \pi_y}) 
\leq \frac{2^{d-2}}{c_{\ref{Paper2:hyp:muA:index}}}\alpha^\frac{1}{2} r^\frac{1}{2}.
\end{align*}

\medbreak \noindent \emph{Term \hyperref[terms:Wstabxxsqrt]{A}.} 
Using Hypothesis \hyperref[hyp:7]{7}, we have
\begin{align*}
&\mu(\ball{x}{r+\eucN{x-y}}) - \mu(\ball{x}{r-\alpha-\eucN{x-y}})\\ 
&\leq c_{\ref{Paper2:hyp:muBsqrt:index}} (r+\eucN{x-y})^{d-\frac{1}{2}} (2\eucN{x-y}+\alpha))^\frac{1}{2}
 \\
&\leq c_{\ref{Paper2:hyp:muBsqrt:index}} \left(\frac{5}{4} r\right)^{d-\frac{1}{2}} 3^\frac{1}{2} \alpha^\frac{1}{2}.
\end{align*}
And since $\frac{w}{\alpha} \leq \frac{1}{2}r^{d-\frac{1}{2}}\alpha^\frac{1}{2}$, we get
\begin{align*}
|\mu_y|-|{p_1}_* \pi_y| 
&\leq \mu(\ball{x}{r+\eucN{x-y}}) - \mu(\ball{x}{r-\alpha-\eucN{x-y}}) + \frac{w}{\alpha} \\
&\leq \bigg(c_{\ref{Paper2:hyp:muBsqrt:index}} \left(\frac{5}{4} \right)^{d-\frac{1}{2}} 3^\frac{1}{2}  + \frac{1}{2}\bigg) r^{d-\frac{1}{2}} \alpha^\frac{1}{2}. 
\end{align*}
Finally, we use
\begin{align*}
|\mu_y| 
= \mu(\ball{y}{r}) 
&\geq \mu(\ball{x}{r-\eucN{x-y}}) \\
&\geq c_{\ref{Paper2:hyp:muA:index}} (r-\eucN{x-y})^d 
\geq c_{\ref{Paper2:hyp:muA:index}}\left(\frac{3}{4}\right)^d r^d
\end{align*}
to obtain
\begin{align*}
\frac{|\mu_y|-|{p_1}_* \pi_y|}{|\mu_y|} 
&\leq \frac{ ((c_{\ref{Paper2:hyp:muBsqrt:index}} (\frac{5}{4} )^{d-\frac{1}{2}} 3^\frac{1}{2}  + \frac{1}{2}) r^{d-\frac{1}{2}} }{c_{\ref{Paper2:hyp:muA:index}}(\frac{3}{4})^d r^d}  \alpha^\frac{1}{2}
= \frac{1}{r^\frac{1}{2}} \cdot \frac{2 \cdot 3^\frac{1}{2} 5^{d-\frac{1}{2}} c_{\ref{Paper2:hyp:muBsqrt:index}}+4^{d-\frac{1}{2}} }{3^d c_{\ref{Paper2:hyp:muA:index}}} \alpha^\frac{1}{2}
\end{align*}
and we deduce
\begin{align*}
\Wasssymbol{1}(\overline{\mu_y}, \overline{{p_1}_* \pi_y}) 
&\leq 2 \frac{|\mu_y|-|{p_1}_* \pi_y|}{|\mu_y|} r
\leq \frac{4 \cdot 3^\frac{1}{2} 5^{d-\frac{1}{2}} c_{\ref{Paper2:hyp:muBsqrt:index}}+4^{d-\frac{1}{2}} }{3^d c_{\ref{Paper2:hyp:muA:index}}} r^\frac{1}{2} \alpha^\frac{1}{2}.
\end{align*}

\medbreak \noindent \emph{Term \hyperref[terms:Wstabxxsqrt]{C}.} 
We use Hypothesis \hyperref[hyp:7]{7} to get
\begin{align*}
&\mu(\ball{x}{r+\eucN{x-y}+\alpha}) - \mu(\ball{x}{r-\alpha-\eucN{x-y}}) \\
&\leq c_{\ref{Paper2:hyp:muBsqrt:index}} (r+\eucN{x-y}+\alpha)^{d-\frac{1}{2}} (2\eucN{x-y}+2\alpha)^\frac{1}{2}\\
&\leq 2 c_{\ref{Paper2:hyp:muBsqrt:index}} \left(\frac{3}{2}r\right)^{d-\frac{1}{2}} \alpha^\frac{1}{2}.
\end{align*}
And since $\frac{w}{\alpha} \leq \frac{1}{2}r^{d-\frac{1}{2}}\alpha^\frac{1}{2}$, we get
\begin{align*}
|\nu_y|-|{p_2}_* \pi_y| 
&\leq \mu(\ball{x}{r+\eucN{x-y}+\alpha}) - \mu(\ball{x}{r-\alpha-\eucN{x-y}}) + 2\frac{w}{\alpha}\\
&\leq \bigg(2 c_{\ref{Paper2:hyp:muBsqrt:index}} \left(\frac{3}{2}\right)^{d-\frac{1}{2}}+1\bigg) r^{d-\frac{1}{2}} \alpha^\frac{1}{2}. 
\end{align*}
Finally, we use
\begin{align*}
|\mu_y| 
= \mu(\ball{y}{r}) 
&\geq \mu(\ball{x}{r-\eucN{x-y}}) \\
&\geq c_{\ref{Paper2:hyp:muA:index}} (r-\eucN{x-y})^d 
\geq c_{\ref{Paper2:hyp:muA:index}}\left(\frac{3}{4}\right)^d r^d
\end{align*}
to obtain
\begin{align*}
\frac{|\mu_y|-|{p_1}_* \pi_y|}{|\mu_y|} 
&\leq \frac{ (2 c_{\ref{Paper2:hyp:muBsqrt:index}} (\frac{3}{2})^{d-\frac{1}{2}}+1) r^{d-\frac{1}{2}} }{c_{\ref{Paper2:hyp:muA:index}}(\frac{3}{4})^d r^d}  \alpha^\frac{1}{2}
= \frac{1}{r^\frac{1}{2}} \cdot 4^d\frac{2 c_{\ref{Paper2:hyp:muBsqrt:index}} (\frac{3}{2})^{d-\frac{1}{2}}+1 }{3^d c_{\ref{Paper2:hyp:muA:index}}} \alpha^\frac{1}{2}
\end{align*}
and we deduce
\begin{align*}
\Wasssymbol{1}(\overline{\mu_y}, \overline{{p_1}_* \pi_y}) 
&\leq 2 \frac{|\mu_y|-|{p_1}_* \pi_y|}{|\mu_y|} r
\leq 2\cdot 4^d\frac{2 c_{\ref{Paper2:hyp:muBsqrt:index}} (\frac{3}{2})^{d-\frac{1}{2}}+1 }{3^d c_{\ref{Paper2:hyp:muA:index}}} r^\frac{1}{2} \alpha^\frac{1}{2}. 
\end{align*}
We finally obtain the result by summing Terms \hyperref[terms:Wstabxxsqrt]{A}, \hyperref[terms:Wstabxxsqrt]{B} and \hyperref[terms:Wstabxxsqrt]{C}.
\end{proof}

\begin{remark}
\label{Paper2:rem:inequality_tight}
Let us comment the inequality of Lemma \ref{Paper2:lem:Wstabxx} with $p=1$, valid for all $r$ such that $\Wassun{\mu}{\nu} \leq \mini{a , 1}(\frac{r}{4})^{d+1}$:
\begin{equation}
\label{Paper2:eq:estimationWlocrobust0}
\Wasssymbol{1}(\locprob{\mu}{y}, \locprob{\nu}{y}) 
\leq c_{\ref{Paper2:lem:Wstabxx:index}} \left(\frac{\Wassun{\mu}{\nu}}{r^{d-1}}\right)^\frac{1}{2}.
\end{equation}
If $r$ is assumed to be constant, the behavior of $\Wasssymbol{1}(\locprob{\mu}{y}, \locprob{\nu}{y})$, when $\Wassun{\mu}{\nu}$ goes to 0, is 
$\Wasssymbol{1}(\locprob{\mu}{y}, \locprob{\nu}{y}) 
\lesssim \Wassun{\mu}{\nu}^\frac{1}{2}$.
On the other hand, if $r$ is supposed to follow the worst case, i.e., $r$ is of order $\Wassun{\mu}{\nu}^\frac{1}{d+1}$, then $\Wasssymbol{1}(\locprob{\mu}{y}, \locprob{\nu}{y})$ is of order 
$\Wasssymbol{1}(\locprob{\mu}{y}, \locprob{\nu}{y}) 
\lesssim \Wassun{\mu}{\nu}^\frac{1}{d+1}$.

A similar stability result already appears in \cite[Theorem 4.3]{pmlr-v97-memoli19a}, where $\mu$ and $\nu$ are two probability measures on a bounded set $X$, and $\mu$ satisfy the following condition: $\forall x \in X, \forall s,r \leq 0$ s.t. $s \leq r$, we have
$\frac{\mu( \closedball{x}{r} )}{ \mu( \closedball{x}{s} ) } \leq (\frac{r}{s})^d$.
The theorem states that, denoting $D = \mathrm{diam(X)}$, for all $x \in X$,
\begin{align*}
\Wassun{\locprob{\mu}{x}}{\locprob{\nu}{x}}
\leq (1+2r)
\left[\frac{\Wassun{\mu}{\nu}^\frac{1}{2}}{\mini{1, (\frac{r}{D})^d}}
+ \left(1 + \frac{\Wassun{\mu}{\nu}^\frac{1}{2}}{r}\right)^d -1  \right].
\end{align*} 
When $r\leq D$ and $\Wassun{\mu}{\nu}$ goes to zero, we obtain that $\Wassun{\locprob{\mu}{x}}{\locprob{\nu}{x}}$ is of order 
\begin{align*}
\Wassun{\locprob{\mu}{x}}{\locprob{\nu}{x}} \leq (1+2r)D^d \left(\frac{\Wassun{\mu}{\nu}}{r^{2d}}\right)^\frac{1}{2}.
\end{align*}
The exponent on $r$ is greater here than in Equation \eqref{Paper2:eq:estimationWlocrobust0}.

Let us show that the order we obtained, $(\frac{\Wassun{\mu}{\nu}}{r^{d-1}})^\frac{1}{2}$, is optimal.
More precisely, let us show that, for every $d\geq 1$, $r > 0$ and $\epsilon>0$ fixed, there exists measures $\mu$ and $\nu$ on $\R^d$ that satisfies the assumptions of Lemma \ref{Paper2:lem:Wstabxx}, but such that
\begin{align*}
\Wasssymbol{1}(\locprob{\mu}{y}, \locprob{\nu}{y}) 
&\geq c_d \left(\frac{\Wassun{\mu}{\nu}}{r^{d-1}}\right)^\frac{1}{2} - \epsilon
\end{align*}
with $c_d = \frac{1}{d+1}\left(\frac{2 d }{\volball{d}}\right)^\frac{1}{2}$.
We consider the following example.
Let $\mu = \HH^{d}_{[0,1]^d}$ be the Lebesgue measure on the hypercube $[0,1]^d$.
Denote $y = \left(\frac{1}{2}, \dots, \frac{1}{2}\right)$ its center, $B = \openball{y}{r}$ the open ball, and $A$ the annulus defined as
\begin{align*}
A = \openball{y}{r+\epsilon} \setminus \openball{y}{r}
\end{align*}
where $0 < \epsilon < r < \frac{1}{4}$.
In the following, $r$ stays fixed, and $\epsilon$ will go to zero. 
Consider the probability measure 
\begin{align*}
\nu = \HH^{d}_{[0,1]^d \setminus A} + \frac{\volball{d}(r+\epsilon)^d-\volball{d}r^d}{\volsphere{d-1}r^{d-1}}\HH^{d-1}_{\sphere{y}{r}}.
\end{align*}
Let $\locprob{\mu}{y}$ and $\locprob{\nu}{y}$ be the localized probability measures associated to $\mu$ and $\nu$ with parameter $r$.
We shall show that 
\begin{center}
$\Wasssymbol{1}(\mu, \nu)$ is of order $r^{d-1}\epsilon^2$ ~~~~~and~~~~~ $\Wasssymbol{1}(\locprob{\mu}{y}, \locprob{\nu}{y})$ is of order $\epsilon$
\end{center}
when $\epsilon\rightarrow 0$.
These measures are depicted in Figure \ref{fig:appendix:measuresremark}.
\begin{figure}[H]
\centering
\includegraphics[width=1\linewidth]{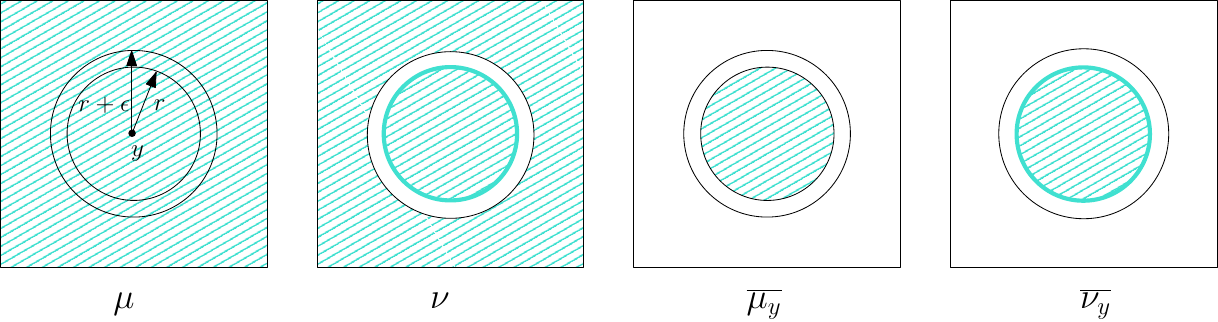}
\caption{The measures involved in the example. 
A hatched area represents the $d$-dimensional Hausdorff measure $\HH^d$, and a bold circle represents the $(d-1)$-dimensional Hausdorff measure $\HH^{d-1}$.
}
\label{fig:appendix:measuresremark}
\end{figure}

\medbreak\noindent
\emph{Step 1: Study of $\Wasssymbol{1}(\mu, \nu)$.} 
An optimal transport plan between $\mu$ and $\nu$ is given by transporting the submeasure $\HH^{d}_A$ of $\mu$ onto the submeasure $\frac{\volball{d}(r+\epsilon)^d-\volball{d}r^d}{\volsphere{d-1}r^{d-1}}\HH^{d-1}_{\sphere{y}{r}}$ of $\nu$ via the map
$x \mapsto \frac{r}{\eucN{x}}x$. 
Consequently, the Wasserstein distance is
\begin{align*}
\Wasssymbol{1}(\mu, \nu)
&= \int_A \eucN{x-\frac{r}{\eucN{x}} x } \frac{\volball{d}(r+\epsilon)^d-\volball{d}r^d}{\volsphere{d-1}r^{d-1}} \dd \HH^{d}(x).
\end{align*}
A change of coordinates shows that 
\begin{align*}
\int_A \eucN{x-\frac{r}{\eucN{x}}x } \dd \HH^{d}(x) 
&= \int_{\sphere{0}{1}} \int_r^{r+\epsilon} (t-r) t^{d-1} \dd \HH^{1}(t)  \dd \HH^{d-1}(v).
\end{align*}
Let us split the integral as $$\int_r^{r+\epsilon} (t-r) t^{d-1} \dd \HH^{1}(t) = \int_r^{r+\epsilon} t^d  \dd \HH^{1}(t) -  \int_r^{r+\epsilon} rt^{d-1} \dd \HH^{1}(t).$$
On the one hand, we have
\begin{align*}
\int_r^{r+\epsilon} t^d \dd \HH^{1}(t)
&= \frac{1}{d+1}\left(\left( r+\epsilon\right)^{d+1} - r^{d+1} \right) \\
&= r^{d} \epsilon + \frac{d}{2} r^{d-1} \epsilon^2 + \petito{\epsilon^2},
\end{align*}
where the Little-O notation refers to $\epsilon \to 0$.
On the other hand,
\begin{align*}
\int_r^{r+\epsilon} rt^{d-1} \dd \HH^{1}(t)
&= r\left( r^{d-1} \epsilon + \frac{d-1}{2} r^{d-2} \epsilon^2 + \petito{ \epsilon }^2 \right) \\
&= r^{d} \epsilon + \frac{d-1}{2} r^{d-1} \epsilon^2 +  \petito{\epsilon^2}.
\end{align*}
We deduce that $\int_r^{r+\epsilon} (t-r) t^{d-1} \dd \HH^{1}(t) = \frac{1}{2}r^{d-1} \epsilon^2 + \petito{\epsilon^2}$, and 
\begin{align*}
\int_A \eucN{x-\frac{r}{\eucN{x}}x } \dd \HH^{d}(x) 
&= \frac{\volsphere{d-1}}{2} r^{d-1} \epsilon^2 + \petito{ \epsilon^2}.
\end{align*}
In other words, $\Wasssymbol{1}(\mu, \nu) = \frac{d\volball{d}}{2}r^{d-1}\epsilon^2 + \petito{\epsilon^2}$.

\medbreak\noindent
\emph{Step 2: Study of $\Wasssymbol{1}(\locprob{\mu}{y}, \locprob{\nu}{y})$.}
Consider the measures
\begin{align*}
\locprob{\mu}{x} &= \frac{1}{\volball{d}r^d} \HH^{d}_B =\left(\frac{1}{\volball{d}(r+\epsilon)^d} +\frac{\volball{d}(r+\epsilon)^d - \volball{d}r^d}{\volball{d}(r+\epsilon)^d \volball{d}r^d }\right)\HH^{d}_B,\\
\locprob{\nu}{x} &= \frac{1}{\volball{d}(r+\epsilon)^d}\left( \HH^{d}_B + \frac{\volball{d}(r+\epsilon)^d-\volball{d}r^d}{\volsphere{d-1}r^{d-1}} \HH^{d-1}_{\sphere{y}{r}}\right).
\end{align*}
Let us compute the Wasserstein distance $\Wasssymbol{1}(\locprob{\mu}{y}, \locprob{\nu}{y})$.
As before, an optimal transport plan is given by transporting the submeasure $\frac{\volball{d}(r+\epsilon)^d - \volball{d}r^d}{\volball{d}(r+\epsilon)^d \volball{d}r^d }\HH^{d}_B$ of $\locprob{\mu}{x}$ onto the submeasure $\frac{\volball{d}(r+\epsilon)^d - \volball{d}r^d}{\volball{d}(r+\epsilon)^d \volsphere{d-1}r^{d-1} }\HH^{d-1}_{\sphere{y}{r}}$ of $\locprob{\nu}{x}$.
We have:
\begin{align*}
\Wasssymbol{1}(\locprob{\mu}{y}, \locprob{\nu}{y})
&= \int_B \eucN{x-\frac{r}{\eucN{x}} x } 
\frac{\volball{d}(r+\epsilon)^d - \volball{d}r^d}{\volball{d}(r+\epsilon)^d \volball{d}r^d } \dd \HH^{d}(x) \\
\end{align*}
A change of coordinates yields
\begin{align*}
\int_B \eucN{x-\frac{r}{\eucN{x}} x } \dd \HH^{d}(x)= \frac{\volsphere{d-1}}{d(d+1)}r^{d+1}.
\end{align*}
Besides, we have 
\begin{align*}
\frac{\volball{d}(r+\epsilon)^d - \volball{d}r^d}{\volball{d}(r+\epsilon)^d \volball{d}r^d } 
&=\frac{d \volball{d}r^{d-1}\epsilon + \grando{\epsilon^2}}{\volball{d}(r+\epsilon)^d \volball{d}r^d }
= \frac{d}{\volball{d}}\frac{\epsilon}{r^{d+1}} + \grando{\epsilon^2}.
\end{align*}
We deduce that
\begin{align*}
\Wasssymbol{1}(\locprob{\mu}{y}, \locprob{\nu}{y})
=\frac{\volsphere{d-1}}{d(d+1)}\frac{d}{\volball{d}} \epsilon + \grando{\epsilon^2}
=\frac{d}{d+1} \epsilon + \grando{\epsilon^2}.
\end{align*}

\medbreak\noindent
\emph{Step 3: Comparison of the distances.}
Using $\Wasssymbol{1}(\mu, \nu)
= \frac{d\volball{d}}{2}r^{d-1}\epsilon^2 + \petito{\epsilon^2}$ and $\Wasssymbol{1}(\locprob{\mu}{y}, \locprob{\nu}{y})
=\frac{d}{d+1} \epsilon + \grando{\epsilon^2}$,
we get
\begin{align*}
\frac{\Wasssymbol{1}(\locprob{\mu}{y}, \locprob{\nu}{y})^2}{\Wasssymbol{1}(\mu, \nu)}
&= c_d \frac{1}{r^{d-1}} + \grando{\epsilon}
~~~~~~~\text{ where }~~~~~~~c_d = \frac{\left( \frac{d}{d+1}  \right)^2}{\frac{d\volball{d}}{2}} 
= \frac{2 d }{(d+1)^2\volball{d}}.
\end{align*}
Together with $\Wasssymbol{1}(\mu, \nu)^\frac{1}{2} = \grando{\epsilon}$, we deduce the result:
\begin{align*}
\Wasssymbol{1}(\locprob{\mu}{y}, \locprob{\nu}{y})
&=c^\frac{1}{2}\left(\frac{\Wasssymbol{1}(\mu, \nu)}{r^{d-1}}\right)^\frac{1}{2} + \grando{\epsilon^2}. 
\end{align*}
\end{remark}

\subsection{Stability of the estimation}
\label{Paper2:subsec:stability}
In this subsection we study the stability of the normalized local covariance matrix operator $\mu \mapsto \loccovnorm{\mu}{\cdot}$ (see Definition \ref{def:loccov}) with respect to the $\Wasssymbol{p}$ metric on measures. 

As an introduction to the problem, let $\mu$ and $\nu$ be two probability measures, $x \in \supp{\mu} \cap \supp{\nu}$, and consider the Frobenius distance $\frobN{ \loccovnorm{\mu}{x} - \loccovnorm{\nu}{x} }$ between the normalized local covariance matrices. 
One shows that this distance is related to the 1-Wasserstein distance between the localized probability measures $\locprob{\mu}{x}$ and $\locprob{\nu}{x}$ via the following inequality (see Equation \eqref{Paper2:eq:lemmastabilityfrobnorm} in the proof of Lemma \ref{Paper2:lem:WchecktointW}):
\begin{equation*}
\frobN{ \loccovnorm{\mu}{x} - \loccovnorm{\nu}{x} }
\leq \frac{2}{r} \Wasssymbol{1}(\locprob{\mu}{x}, \locprob{\nu}{x}).
\end{equation*}
Without any assumption on the measures, it is not true that $\Wassun{\locprob{\mu}{x}}{\locprob{\nu}{x}}$ goes to 0 as $\Wasssymbol{1}(\mu, \nu)$ does (see Remark \ref{rem:notconvergence}).
However, if we assume that $\mu$ satisfies the Hypotheses \hyperref[hyp:5]{5} and \hyperref[hyp:6]{6}, that $x$ satisfies $\lambda(x) > 0$ and that $r$ is chosen such that $$ 4 \left(\frac{\Wassun{\mu}{\nu}}{\mini{c_{\ref{Paper2:hyp:muA:index}} , 1}}\right)^{\frac{1}{d+1}} \leq  r < \mini{\normalreach{x},\frac{1}{2\rho}},$$ then we have seen in Lemma \ref{Paper2:lem:Wstabxx} that
\begin{equation}
\label{Paper2:eq:estimationWlocrobust}
\Wassun{\locprob{\mu}{x}}{\locprob{\nu}{x}}
\leq c_{\ref{Paper2:lem:Wstabxx:index}} \bigg(\frac{\Wassun{\mu}{\nu}}{r^{d-1}}\bigg)^\frac{1}{2}.
\end{equation}
As a consequence of this inequality, estimating local covariance matrices is robust in Wasserstein distance:
\begin{equation}
\label{Paper2:eq:estimationloccovrobust}
\frobN{ \loccovnorm{\mu}{x} - \loccovnorm{\nu}{x} }
\leq 2c_{\ref{Paper2:lem:Wstabxx:index}} \bigg(\frac{\Wasssymbol{1}(\mu, \nu)}{r^{d+1}}\bigg)^\frac{1}{2}.
\end{equation}
We point out that another result of this kind bounds the distance $\frobN{ \loccovnorm{\mu}{x} - \loccovnorm{\nu}{x} }$ with the $\infty$-Wasserstein distance $\Wasssymbol{\infty}(\mu, \nu)$  \cite[Theorem 3]{martinez2018shape}. Namely, if $\mu$ and $\nu$ are fully supported probability measures with densities upper bounded by $l>0$ and supports included in $X \subset \mathbb{R}^d$, denoting $D = \mathrm{diam(X)}$, we have 
\begin{align*}
\frobN{ \loccovnorm{\mu}{x} - \loccovnorm{\nu}{x} } \leq l A \Wasssymbol{\infty}(\mu, \nu),
\end{align*}
where $A = \frac{d}{d+2} \frac{(r+D)^{d+1}}{Dr^d} + \frac{(2r+D)(r+D)^d}{r^d} + \frac{2d}{d+2} \frac{(r+D)^{d+2}}{Dr^d}$.

\begin{remark}
\label{rem:notconvergence}
In general, for $x \in \supp{\mu} \cap \supp{\nu}$, it is not true that $\frobN{ \loccovnorm{\mu}{x} - \loccovnorm{\nu}{x} }$ goes to zero as $\Wasssymbol{1}(\mu, \nu)$ goes to zero. 
To see this, one can consider $\epsilon > 0$, and the measures on $\R$ defined as
\begin{center}
$\mu = \frac{1}{2}(\delta_0 + \delta_1)$ ~~~~~~~~and~~~~~~~~ $\nu = \frac{1}{2}(\delta_0 + \delta_{1+\epsilon})$,
\end{center}
where $\delta_x$ denotes the Dirac mass on $x$.
Choose the scale parameter $r = 1$.
Restricting the measures $\mu$ and $\nu$ to the ball $\ball{0}{1}$ of $\R$ gives $\locprob{\mu}{0} = \frac{1}{2}(\delta_0 + \delta_1)$ and $\locprob{\nu}{0} =\delta_0$.
According to Definition \ref{def:loccov}, we deduce that the local covariance matrices are
\begin{center}
$\loccov{\mu}{0} = \frac{1}{2} 1^{\otimes 2}$ ~~~~~~~~and~~~~~~~~ $\loccov{\nu}{0} = 0$.~~~~~~~~~~~~
\end{center}
Hence $\frobN{ \loccovnorm{\mu}{x} - \loccovnorm{\nu}{x} } = \frac{1}{2}$, independently of $\epsilon$.
But, on the other hand, we have $\Wasssymbol{1}(\mu, \nu) = \frac{1}{2}\epsilon$, which goes to zero.

Similarly, $\gammaWassersteinp
(\check{\mu}, \check{\nu})$ does not have to go to zero when $\Wasssymbol{1}(\mu, \nu)$ does. Indeed, a similar computation shows that the local covariance matrices at $1$ are
\begin{center}
$\loccov{\mu}{1} = \frac{1}{2} 1^{\otimes 2}$ ~~~~~~~~and~~~~~~~~ $\loccov{\nu}{1} = 0$,~~~~~~~~~~~~
\end{center}
and we deduce that the lifted measures are
\begin{center}
$\check{\mu} = \frac{1}{2}\left(\delta_{\left(0,\frac{1}{2} 1^{\otimes 2}\right)} + \delta_{\left(1,\frac{1}{2} 1^{\otimes 2}\right)}\right)$ ~~~~~~~and~~~~~~~ $\check{\nu} = \frac{1}{2}\left(\delta_{(0,0)} + \delta_{(1+\epsilon,0)}\right)$.
~~~~~~~~~~~~~
\end{center}
Using the optimal transport plan $\pi$ between $\check{\mu}$ and $\check{\nu}$ that sends $\delta_{\left(0,\frac{1}{2} 1^{\otimes 2}\right)}$ to $\delta_{(0,0)}$
and $\delta_{\left(1,\frac{1}{2} 1^{\otimes 2}\right)}$ to $\delta_{(1+\epsilon,0)}$, we get 
\begin{align*}
\gammaWassersteinp^p(\check{\mu}, \check{\nu}) 
&= \frac{1}{2}\gammaN{\left(0,\frac{1}{2} 1^{\otimes 2}\right)-\bigg(0,0\bigg)}^p  + \frac{1}{2}\gammaN{\left(1,\frac{1}{2} 1^{\otimes 2}\right)-\bigg(1+\epsilon,0\bigg)}^p \\
&= \frac{1}{2}\bigg( \left(\frac{\gamma}{2}\right)^p + \left(\epsilon^2 + \gamma^2\frac{1}{4}\right)^\frac{p}{2} \bigg)
\geq \left(\frac{\gamma}{2}\right)^p.
\end{align*}
Again, we see that $\gammaWassersteinp^p(\check{\mu}, \check{\nu})$ is lower bounded, independently of $\epsilon$.
Hence $\gammaWassersteinp(\check{\mu}, \check{\nu})$ does not go to zero as $\Wasssymbol{1}(\mu, \nu)$ does.
However, under regularity assumptions on $\mu$, the following proposition states that it is the case.
\end{remark} 
\medbreak

\begin{proposition}
\label{Paper2:prop:stability}
Let $\mu$ and $\nu$ be two probability measures on $E$.
Suppose that $\mu$ satisfies Hypotheses \hyperref[hyp:5]{5}, \hyperref[hyp:6]{6} and \hyperref[hyp:7]{7}.
Define $w = \Wasssymbol{p}(\mu, \nu)$.
Suppose that $r \leq \mini{\frac{1}{2\rho} , 1}$ and $w\leq \mini{c_{\ref{Paper2:hyp:muA:index}}, 1} (\frac{r}{4})^{d+1}$. 
Then
\begin{align*}
\gammaWassersteinp(\check{\mu}, \check{\nu})
&\leq 2 w 
+ \gamma c_{\ref{Paper2:prop:stability:index}} \left(\frac{w}{r^{d+1}}\right)^\frac{1}{2} 
+ \gamma c_{\ref{Paper2:prop:stability:index}} ' \mu(\lambda^r)^\frac{1}{p} \left(\frac{w}{r^{d+1}} \right)^\frac{1}{4},
\end{align*}
with $c_{\ref{Paper2:prop:stability:index}} = 4(1 + c_{\ref{Paper2:lem:intlocwasserstein:index}})$
and $c_{\ref{Paper2:prop:stability:index}} ' =4c_{\ref{Paper2:lem:Wstabxxsqrt:index}}$.
\end{proposition}

\begin{proof}
According to Lemma \ref{Paper2:lem:WchecktointW} stated below, we have 
\begin{align*}
\gammaWassersteinp(\check{\mu}, \check{\nu})
&\leq 2^\frac{p-1}{p} \left(1+\frac{2 \gamma}{r}\right) w + 2^\frac{p-1}{p} \frac{2\gamma}{r} \bigg(\int \Wasssymbol{1}^p(\locprob{\mu}{x}, \locprob{\nu}{y}) \dd {\pi}(x,y) \bigg)^\frac{1}{p}. 
\end{align*}
Let $\alpha = \left(\frac{w}{r^{d-1}}\right)^\frac{1}{2}$. Lemma \ref{Paper2:lem:intlocwasserstein}, also stated below, gives
\begin{align*}
\bigg( \int \Wasssymbol{1}(\locprob{\mu}{x}, \locprob{\nu}{y}) \dd {\pi}(x,y) \bigg)^\frac{1}{p}
&\leq 2^\frac{p-1}{p} \bigg(
c_{\ref{Paper2:lem:Wstabxxsqrt:index}}r^\frac{1}{2} \mu(\lambda^r)^\frac{1}{p} \alpha^\frac{1}{2} + c_{\ref{Paper2:lem:intlocwasserstein:index}}\alpha
\bigg)
\end{align*}
Combining these inequalities yields
\begin{align*}
\gammaWassersteinp(\check{\mu}, \check{\nu})
&\leq 2^\frac{p-1}{p} w + 2^\frac{p-1}{p} \frac{2 \gamma}{r}\left(w+2^\frac{p-1}{p} c_{\ref{Paper2:lem:intlocwasserstein:index}} \alpha\right) + \left(2^\frac{p-1}{p}\right)^2\frac{2 \gamma}{r} c_{\ref{Paper2:lem:Wstabxxsqrt:index}} r^\frac{1}{2} \mu(\lambda^r)^\frac{1}{p} \alpha^\frac{1}{2} \\
&\leq 2 w + 2 \cdot 2 \gamma \left(\frac{w}{r}+2 c_{\ref{Paper2:lem:intlocwasserstein:index}}\frac{\alpha}{r} \right) + 2^2 \cdot 2\gamma c_{\ref{Paper2:lem:Wstabxxsqrt:index}} \mu(\lambda^r)^\frac{1}{p} \left(\frac{\alpha}{r}\right)^\frac{1}{2},
\end{align*}
where we used $2^\frac{p-1}{p} \leq 2$.
Beside, since $r \leq 1$, we have $w\leq 1$, and 
$$w = \left(\frac{w}{r^{d-1}}\right)^\frac{1}{2} r^\frac{d-1}{2}w^\frac{1}{2} \leq \left(\frac{w}{r^{d-1}}\right)^\frac{1}{2} = \alpha.$$
Consequently, we have
\begin{align*}
\gammaWassersteinp(\check{\mu}, \check{\nu})
&\leq 2^\frac{p-1}{p} w + 2^\frac{p-1}{p} 2 \gamma \left(1+2^\frac{p-1}{p} c_{\ref{Paper2:lem:intlocwasserstein:index}} \right) \frac{\alpha}{r} + \left(2^\frac{p-1}{p}\right)^2 2\gamma c_{\ref{Paper2:lem:Wstabxxsqrt:index}} \mu(\lambda^r)^\frac{1}{p} \left(\frac{\alpha}{r}\right)^\frac{1}{2}.
\end{align*}
We obtain the result by replacing $\frac{\alpha}{r}$ with $\left(\frac{w}{r^{d+1}}\right)^\frac{1}{2}$.
\end{proof}

Let us interpret the inequality given by Proposition \ref{Paper2:prop:stability}:
\begin{align*}
\gammaWassersteinp(\check{\mu}, \check{\nu})
&\leq 2 \Wasssymbol{p}(\mu, \nu) 
+ \gamma c_{\ref{Paper2:prop:stability:index}} \left(\frac{\Wasssymbol{p}(\mu, \nu)}{r^{d+1}}\right)^\frac{1}{2} 
+ \gamma c_{\ref{Paper2:prop:stability:index}} ' \mu(\lambda^r)^\frac{1}{p} \left(\frac{\Wasssymbol{p}(\mu, \nu)}{r^{d+1}} \right)^\frac{1}{4},
\end{align*}
The lifted measures $\check{\mu}$ and $\check{\nu}$ are defined on the lift space $E\times\matrixspace{E}$. 
Hence the Wasserstein distance $\gammaWassersteinp(\check{\mu}, \check{\nu})$ may witness a difference with respect to the Euclidean coordinate ($E$-coordinate) or the matrix coordinate ($\matrixspace{E}$-coordinate).
We can interpret this inequality as follows:
\begin{itemize}
\itemsep.01cm
\item the first term $2\Wasssymbol{p}(\mu, \nu)$ is to be seen as the initial Euclidean error between the measures $\mu$ and $\nu$,
\item the second term $\gamma c_{\ref{Paper2:prop:stability:index}} \left(\frac{\Wasssymbol{p}(\mu, \nu)}{r^{d+1}}\right)^\frac{1}{2}$ corresponds to the local errors $\Wasssymbol{1}(\locprob{\mu}{x}, \locprob{\nu}{y})$ in $\matrixspace{E}$ when comparing the normalized covariance matrices of points away from the self-intersections of $\MM$,
\item the third term $\gamma c_{\ref{Paper2:prop:stability:index}} ' \mu(\lambda^r)^\frac{1}{p} \left(\frac{\Wasssymbol{p}(\mu, \nu)}{r^{d+1}} \right)^\frac{1}{4}$ stands for the error in $\matrixspace{E}$ on points close to the self-intersections of $\MM$. The quantity of such points is measured via $\mu(\lambda^r)$, the measure of the $r$-sublevel set of the normal reach.
\end{itemize}
As a consequence of this proposition, the map $\mu \mapsto \check{\mu}$, seen as a map between spaces of measures endowed with the Wassertein metric, is continuous on the set of measures $\mu$ which satisfy Hypotheses \hyperref[hyp:5]{5}, \hyperref[hyp:6]{6} and \hyperref[hyp:7]{7} with $\frac{1}{2\rho} \geq r$.

We now give the lemmas used in the proof of this Proposition \ref{Paper2:prop:stability}.

\begin{lemma}
\label{Paper2:lem:WchecktointW}
Let $\pi$ be an optimal transport plan for $\Wasssymbol{p}(\mu, \nu)$. Then
\begin{align*}
\gammaWassersteinp(\check{\mu}, \check{\nu})
&\leq 2^\frac{p-1}{p} \left(1+\frac{2 \gamma}{r}\right) \Wasssymbol{p}(\mu,\nu) + 2^\frac{p-1}{p} \frac{2\gamma}{r} \bigg(\int \Wasssymbol{1}^p(\locprob{\mu}{x}, \locprob{\nu}{y}) \dd {\pi}(x,y) \bigg)^\frac{1}{p}. 
\end{align*}
\end{lemma}

\begin{proof}
We first prove the following fact: for every $x \in \supp{\mu}$ and $y \in \supp{\nu}$,
\begin{equation}
\label{Paper2:eq:lemmastabilityfrobnorm}
\frobN{\loccov{\mu}{x} - \loccov{\nu}{y}} \leq 2r\left(\eucN{x-y} + \Wasssymbol{1}(\locprob{\mu}{x}, \locprob{\nu}{y}) \right).
\end{equation}
Let $\rho$ be any transport plan between $\locprob{\mu}{x}$ and $\locprob{\nu}{y}$.
We have
\begin{align}
\label{Paper2:eq:lemmastabilityfrobnorm2}
\loccov{\mu}{x} - \loccov{\nu}{y} &=  \int \tensor{x-y} \dd \locprob{\mu}{x}(x') - \int \tensor{y-y'} \dd \locprob{\mu}{y}(y') \nonumber\\
&= \int \big(\tensor{x-x'}-\tensor{y-y'}\big) \dd \rho(x',y').
\end{align}
For any $x' \in \closedball{x}{r}$ and $y' \in \closedball{y}{r}$, we can use Lemma \ref{Paper2:lem:outer} to get
\begin{align*}
\frobN{\tensor{x-x'}-\tensor{y-y'}} 
\leq (r+r)(\eucN{x-y} + \eucN{x'-y'}).
\end{align*}
Therefore, Equation \eqref{Paper2:eq:lemmastabilityfrobnorm2} yields
\begin{align*}
\frobN{\loccov{\mu}{x} - \loccov{\nu}{y}} 
&\leq \int 2r (\eucN{x-y} + \eucN{x'-y'}) \dd \rho(x',y') \\
&\leq 2r \left(\eucN{x-y} + \Wasssymbol{1}(\locprob{\mu}{x}, \locprob{\nu}{y})\right).
\end{align*}

Now, a transport plan $\pi$ for $\Wasssymbol{p}(\mu, \nu)$ begin given, we build a transport plan $\check{\pi}$ for $(\check{\mu}, \check{\nu})$ as follows: for every $\phi\colon (E \times \matrixspace{E})^2 \rightarrow \R$ with compact support, let $\check{\pi}$ satisfy
\[\int \phi(x, A, y, B) \dd \check{\pi}(x, A, y, B) = \int \phi\left(x, \loccovnorm{\mu}{x}, y, \loccovnorm{\nu}{y}\right) \dd \pi(x,y). \]
We have the upper bound
\begin{align}
\label{Paper2:eq:lemmastabilityfrobnorm3}
\gammaWassersteinp^p(\check{\mu}, \check{\nu}) 
&\leq \int \gammaN{ (x,A) - (y,B) }^p \dd \check{\pi}(x,A,y,B)\nonumber \\
&=  \int \left(\eucN{ x - y }^2 + \gamma^2 \frobN{\loccovnorm{\mu}{x} - \loccovnorm{\nu}{y}}^2 \right)^\frac{p}{2} \dd {\pi}(x,y) \nonumber\\ 
&\leq \int \left(\eucN{ x - y } + \gamma \frobN{\loccovnorm{\mu}{x} - \loccovnorm{\nu}{y}} \right)^p \dd {\pi}(x,y)
\end{align}
Besides, Equation \eqref{Paper2:eq:lemmastabilityfrobnorm} gives 
\begin{align*}
\frobN{\loccovnorm{\mu}{x} - \loccovnorm{\nu}{y}} \leq \frac{1}{r^2} \frobN{\loccov{\mu}{x} - \loccov{\nu}{y}} \leq \frac{2}{r} \left(\eucN{x-y} + \Wasssymbol{1}(\locprob{\mu}{x}, \locprob{\nu}{y})\right).
\end{align*} 
We can use the inequality $(a+b)^p \leq 2^{p-1}(a^p+b^p)$, where $a,b \geq 0$, to deduce
\begin{align*}
&\left(\eucN{ x - y } + \gamma \frobN{\loccovnorm{\mu}{x} - \loccovnorm{\nu}{y}} \right)^p
\leq \left(\eucN{ x - y } + \gamma \frac{2}{r} \left(\eucN{x-y} + \Wasssymbol{1}(\locprob{\mu}{x}, \locprob{\nu}{y})\right) \right)^p \\
&~~~~~~~~~~~~~~~~~~~~~~~~~~~~~~~\leq 2^{p-1} \left(\left(1+\frac{2 \gamma}{r} \right)\eucN{ x - y }\right)^p + 2^{p-1}\left(\frac{2 \gamma}{r} \Wasssymbol{1}(\locprob{\mu}{x}, \locprob{\nu}{y}\right)^p.
\end{align*}
By inserting this inequality in Equation \eqref{Paper2:eq:lemmastabilityfrobnorm3} we obtain
\begin{align*}
\gammaWassersteinp^p(\check{\mu}, \check{\nu})
&\leq  2^{p-1} \int \left(\left(1+\frac{2 \gamma}{r} \right)\eucN{ x - y }\right)^p + \left(\frac{2 \gamma}{r} \Wasssymbol{1}(\locprob{\mu}{x}, \locprob{\nu}{y}) \right)^p \dd {\pi}(x,y) \\
&= 2^{p-1} \left(1+\frac{2 \gamma}{r}\right)^p \Wasssymbol{p}^p(\mu,\nu) + 2^{p-1} \left(\frac{2 \gamma}{r}\right)^p \int \Wasssymbol{1}^p(\locprob{\mu}{x}, \locprob{\nu}{y}) \dd {\pi}(x,y),
\end{align*}
which yields the result.
\end{proof}

\begin{lemma}
Let $w = \Wasssymbol{p}(\mu, \nu)$ and define $\alpha = (\frac{w}{r^{d-1}})^\frac{1}{2}$.
Suppose that $r \leq \frac{1}{2\rho}$ and $w\leq \mini{c_{\ref{Paper2:hyp:muA:index}}, 1} (\frac{r}{4})^{d+1}$.
Let $\pi$ be an optimal transport plan for $\Wasssymbol{p}(\mu, \nu)$. 
Then
\begin{align*}
&\left( \int \Wasssymbol{1}^p(\locprob{\mu}{x}, \locprob{\nu}{y}) \dd {\pi}(x,y) \right)^\frac{1}{p} \\
&~~~~~~~~~~~~~\leq 2^\frac{p-1}{p} \left(
c_{\ref{Paper2:lem:Wstabxxsqrt:index}} r^\frac{1}{2} \mu(\lambda^r)^\frac{1}{p} \alpha^\frac{1}{2}
+ \left(2 r^d + c_{\ref{Paper2:lem:Wstabxysqrt:index}} r^\frac{d+1}{2} + c_{\ref{Paper2:lem:Wstabxx:index}}\right)\alpha + (1+c_{\ref{Paper2:lem:Wstabxy:index}}) w
\right).
\end{align*}
If we suppose that $r\leq 1$, then
\begin{align*}
\left( \int \Wasssymbol{1}^p(\locprob{\mu}{x}, \locprob{\nu}{y}) \dd {\pi}(x,y) \right)^\frac{1}{p}
&\leq 2^\frac{p-1}{p} \left(
c_{\ref{Paper2:lem:Wstabxxsqrt:index}}r^\frac{1}{2} \mu(\lambda^r)^\frac{1}{p} \alpha^\frac{1}{2} + c_{\ref{Paper2:lem:intlocwasserstein:index}}\alpha
\right)
\end{align*}
with $c_{\ref{Paper2:lem:intlocwasserstein:index}} = 3 +c_{\ref{Paper2:lem:Wstabxy:index}} + c_{\ref{Paper2:lem:Wstabxysqrt:index}} + c_{\ref{Paper2:lem:Wstabxx:index}} $.
\label{Paper2:lem:intlocwasserstein}
\end{lemma}

\begin{proof}
We denote $w = \Wasssymbol{p}(\mu, \nu)$ and $\alpha = (\frac{w}{r^{d-1}})^\frac{1}{2}$.
Let us subdivide the integral as follows:
\begin{align}
\label{terms:intlocwasserstein}
\int\Wasssymbol{1}^p(\overline{\mu_x}, \overline{\nu_y}) d \pi(x,y)
= \int_{A}
+ \int_{B}
+ \int_{C}
\Wasssymbol{1}^p(\overline{\mu_x}, \overline{\nu_y}) \dd \pi(x,y)
\end{align} 
where 
\begin{itemize}
\itemsep.05cm
\item $A = \{(x,y) \mid \eucN{x-y} \geq \alpha\}$,
\item $B = \{(x,y) \mid \eucN{x-y} < \alpha ~~\mathrm{and}~~ \lambda(x) > r\}$,
\item $C = \{(x,y) \mid \eucN{x-y} < \alpha ~~\mathrm{and}~~ \lambda(x) \leq r\}$.
\end{itemize}

\medbreak \noindent \emph{Term \hyperref[terms:intlocwasserstein]{$A$}.} 
We use the following simple upper bound:
\begin{align*}
\Wasssymbol{1}(\overline{\mu_x}, \overline{\nu_y}) 
&\leq \Wasssymbol{1}(\overline{\mu_x}, \delta_x) + \Wasssymbol{1}(\delta_x, \delta_y) 
+ \Wasssymbol{1}(\delta_y, \overline{\nu_y}) \\
&\leq r + \eucN{x-y} + r
\end{align*} 
to obtain $\Wasssymbol{1}^p(\overline{\mu_x}, \overline{\nu_y}) \leq 2^{p-1}\big( (2r)^p+\eucN{x-y}^p \big)$ and 
\begin{align*}
\int_{A} \Wasssymbol{1}^p(\overline{\mu_x}, \overline{\nu_y}) \dd \pi(x,y)
&\leq \int_{A} 2^{p-1}\big( (2r)^p+\eucN{x-y}^p \big) \dd \pi(x,y) \\
&\leq  2^{p-1}(2r)^p \pi(A) + \int 2^{p-1} \eucN{x-y}^p \dd \pi(x,y) \\
&= 2^{p-1} (2r)^p \pi(A) + 2^{p-1} w^p.
\end{align*} 
Besides, Markov inequality yields 
$$\pi(A) = \pi(\{(x,y) \mid \eucN{x-y}>\alpha) = \pi\left(\left\{(x,y) \mid \eucN{x-y}^p>\alpha^p)\right\}\right) \leq \left(\frac{w}{a}\right)^p.$$
Therefore,
\begin{align*}
\int_{A} \Wasssymbol{1}^p(\overline{\mu_x}, \overline{\nu_y}) \dd \pi(x,y)
&\leq   2^{p-1} (2r)^p \left(\frac{w}{\alpha}\right)^p +  2^{p-1}w^p \\
&= 2^{p-1} (2 r^d \alpha)^p + 2^{p-1} w^p,
\end{align*} 
where we used $ r\frac{w}{\alpha} = r^d \alpha$ on the last line.

\smallbreak \noindent \emph{Term \hyperref[terms:intlocwasserstein]{$B$}.} 
On the event $B$, we write
\[ \Wasssymbol{1}(\overline{\mu_x}, \overline{\nu_y}) 
\leq \Wasssymbol{1}(\overline{\mu_x}, \overline{\mu_y}) +\Wasssymbol{1}(\overline{\mu_y}, \overline{\nu_y}). \]
Since $\lambda(x)>r$, Lemma \ref{Paper2:lem:Wstabxy} and Lemma \ref{Paper2:lem:Wstabxx} give $\Wasssymbol{1}(\locprob{\mu}{x}, \locprob{\mu}{y}) \leq c_{\ref{Paper2:lem:Wstabxy:index}} \eucN{x-y}$ and $\Wasssymbol{1}(\overline{\mu_y}, \overline{\nu_y}) \leq c_{\ref{Paper2:lem:Wstabxx:index}} \alpha$.
We deduce that
\begin{align*}
\int_{B} \Wasssymbol{1}^p(\overline{\mu_x}, \overline{\nu_y}) \dd \pi(x,y)
&\leq2^{p-1} \int_{B} (c_{\ref{Paper2:lem:Wstabxy:index}} \eucN{x-y})^p + (c_{\ref{Paper2:lem:Wstabxx:index}} \alpha)^p \dd \pi(x,y) \\
&\leq 2^{p-1} (c_{\ref{Paper2:lem:Wstabxy:index}} w)^p + 2^{p-1} (c_{\ref{Paper2:lem:Wstabxx:index}} \alpha)^p. 
\end{align*}

\smallbreak \noindent \emph{Term \hyperref[terms:intlocwasserstein]{C}.} 
We proceed as for Term \hyperref[terms:intlocwasserstein]{B}, but using Lemmas \ref{Paper2:lem:Wstabxysqrt} and \ref{Paper2:lem:Wstabxxsqrt} instead of Lemmas \ref{Paper2:lem:Wstabxy} and \ref{Paper2:lem:Wstabxx}. This yields
\begin{align*}
\Wasssymbol{1}(\overline{\mu_x}, \overline{\nu_y}) 
&\leq \Wasssymbol{1}(\overline{\mu_x}, \overline{\mu_y}) + \Wasssymbol{1}(\overline{\mu_y}, \overline{\nu_y}) \\
&\leq c_{\ref{Paper2:lem:Wstabxysqrt:index}} r^\frac{1}{2} \eucN{x-y}^\frac{1}{2} + c_{\ref{Paper2:lem:Wstabxxsqrt:index}} r^\frac{1}{2} \alpha^\frac{1}{2},
\end{align*}
and we deduce that
\begin{align}
\int_{C} \Wasssymbol{1}^p(\overline{\mu_x}, \overline{\nu_y}) \dd \pi(x,y)
&\leq \int_{C} 2^{p-1} \left(c_{\ref{Paper2:lem:Wstabxysqrt:index}} r^\frac{1}{2}\eucN{x-y}^\frac{1}{2} \right)^p\dd \pi(x,y) \nonumber \\
&~~~~+ 2^{p-1} \pi(C) \left( c_{\ref{Paper2:lem:Wstabxxsqrt:index}}r^\frac{1}{2} \alpha^\frac{1}{2}\right)^p. 
\label{Paper2:eq:intergral_w1p}
\end{align}
On the one hand, we have $\int_{C} \eucN{x-y}^\frac{p}{2}\dd \pi(x,y)\leq \int_{E\times E} \eucN{x-y}^\frac{p}{2}\dd \pi(x,y)$, and by Jensen's inequality, 
\begin{align*}
\int_{E\times E} \eucN{x-y}^\frac{p}{2}\dd \pi(x,y) \leq (w^p)^\frac{1}{2}.
\end{align*} 
On the other hand, by definition of $C$, we have $\pi(C) \leq \mu(\lambda^r)$. 
Hence Equation \eqref{Paper2:eq:intergral_w1p} yields
\[\int_{C} \Wasssymbol{1}(\overline{\mu_x}, \overline{\nu_y}) \dd \pi(x,y)
\leq 2^{p-1} \left(c_{\ref{Paper2:lem:Wstabxysqrt:index}} r^\frac{1}{2}w^\frac{1}{2}\right)^p + 2^{p-1} \mu(\lambda^r) \left(c_{\ref{Paper2:lem:Wstabxxsqrt:index}} r^\frac{1}{2} \alpha^\frac{1}{2}\right)^p. \]
To conclude the proof, we write
\begin{align*}
\int \Wasssymbol{1}(\locprob{\mu}{x}, \locprob{\nu}{y}) \dd {\pi}(x,y) 
&=  \int_{A} + \int_{B} + \int_{C} \Wasssymbol{1}(\overline{\mu_x}, \overline{\nu_y}) \dd \pi(x,y)\\
&\leq 2^{p-1} (2 r^d \alpha)^p + 2^{p-1} w^p + 2^{p-1} (c_{\ref{Paper2:lem:Wstabxy:index}} w)^p + 2^{p-1} (c_{\ref{Paper2:lem:Wstabxx:index}} \alpha)^p \\
&~~~~~+ 2^{p-1} \left(c_{\ref{Paper2:lem:Wstabxysqrt:index}} r^\frac{1}{2}w^\frac{1}{2}\right)^p + 2^{p-1} \mu(\lambda^r) \left(c_{\ref{Paper2:lem:Wstabxxsqrt:index}} r^\frac{1}{2} \alpha^\frac{1}{2}\right)^p.
\end{align*}
We use the inequality $(a+b)^\frac{1}{p} \leq a^\frac{1}{p} + b^\frac{1}{p}$, where $a,b\geq 0$, to get
\begin{align*}
&\bigg( \int \Wasssymbol{1}(\locprob{\mu}{x}, \locprob{\nu}{y}) \dd {\pi}(x,y) \bigg)^\frac{1}{p} \\
&\leq 2^\frac{p-1}{p} \bigg( 2 r^d \alpha + w +c_{\ref{Paper2:lem:Wstabxy:index}} w + c_{\ref{Paper2:lem:Wstabxx:index}} \alpha +c_{\ref{Paper2:lem:Wstabxysqrt:index}} r^\frac{1}{2}w^\frac{1}{2} + \mu(\lambda^r)^\frac{1}{p} c_{\ref{Paper2:lem:Wstabxxsqrt:index}} r^\frac{1}{2} \alpha^\frac{1}{2} \bigg) \\
&\leq 2^\frac{p-1}{p} \bigg(
c_{\ref{Paper2:lem:Wstabxxsqrt:index}} r^\frac{1}{2} \mu(\lambda^r)^\frac{1}{p} \alpha^\frac{1}{2}
+ \left(2 r^d + c_{\ref{Paper2:lem:Wstabxysqrt:index}} r^\frac{d+1}{2} + c_{\ref{Paper2:lem:Wstabxx:index}}\right)\alpha
+ (1+c_{\ref{Paper2:lem:Wstabxy:index}}) w
\bigg),
\end{align*}
where we used $c_{\ref{Paper2:lem:Wstabxysqrt:index}} r^\frac{1}{2} w^\frac{1}{2}= c_{\ref{Paper2:lem:Wstabxysqrt:index}} r^\frac{d+1}{2} \alpha$ on the the last line.
This proves the first result.

If we suppose $r\leq 1$, we can use the inequalities 
$r^d \leq r^\frac{d+1}{2} \leq 1$ and $w = \alpha r^\frac{d-1}{2}w^\frac{1}{2} \leq \alpha$
to obtain the simplified expression
\begin{align*}
\bigg( \int \Wasssymbol{1}(\locprob{\mu}{x}, \locprob{\nu}{y}) \dd {\pi}(x,y) \bigg)^\frac{1}{p}
&\leq 2^\frac{p-1}{p} \bigg(
c_{\ref{Paper2:lem:Wstabxxsqrt:index}}r^\frac{1}{2} \mu(\lambda^r)^\frac{1}{p} \alpha^\frac{1}{2} + \left(3 +c_{\ref{Paper2:lem:Wstabxy:index}} + c_{\ref{Paper2:lem:Wstabxysqrt:index}} + c_{\ref{Paper2:lem:Wstabxx:index}}\right )\alpha
\bigg),
\end{align*}
as wanted.
\end{proof}

\subsection{An approximation theorem}
\label{subsec:approx_th}
We are now able to state that the lifted measure $\nucheck$ is close to the exact lifted measure $\muchecko$, that is, $\nucheck$ is a consistent estimator of $\muchecko$, in Wasserstein distance.

\begin{theorem}
\label{Paper2:th:estimation}
Assume that $\MMo$ and $\muo$ satisfy Hypotheses \hyperref[hyp:1]{1}, \hyperref[hyp:2]{2}, \hyperref[hyp:3]{3}.
Let $\nu$ be any probability measure. Denote $w=\Wasssymbol{p}(\mu, \nu)$. Suppose that $r \leq \mini{\frac{1}{2\rho}, 1}$ and $w\leq \mini{c_{\ref{Paper2:hyp:muA:index}}, 1}(\frac{r}{4})^{d+1}$.
Then
\begin{align*}
\gammaWassersteinp(\check{\nu}, \check{\mu}_0) 
&\leq \gamma c_{\ref{Paper2:th:estimation:index}}  \mu(\lambda^r)^\frac{1}{p} 
+ \gamma c_{\ref{Paper2:lem:loccovconsistency:index}} r
+ \gamma c_{\ref{Paper2:prop:stability:index}} \left(\frac{w}{r^{d+1}}\right)^\frac{1}{2}
+ 2w
\end{align*}
where $c_{\ref{Paper2:th:estimation:index}} = 2 + \frac{1}{2}c_{\ref{Paper2:prop:stability:index}} '$.
\end{theorem}

\begin{proof}
By using the triangle inequality $\gammaWassersteinp(\check{\nu}, \check{\mu}_0) \leq\gammaWassersteinp(\check{\nu}, \check{\mu})+\gammaWassersteinp(\check{\mu}, \check{\mu}_0)$, we see that the result is a direct consequence of Propositions \ref{Paper2:prop:consistencycheck} and \ref{Paper2:prop:stability}.
\end{proof}

\begin{remark}
\label{rem:optimal_choice_r}
The quality of the bound given by Theorem \ref{Paper2:th:estimation} is balanced by the contributions of Propositions \ref{Paper2:prop:consistencycheck} and \ref{Paper2:prop:stability}. According to the first one, the quantity $\gammaWassersteinp(\check{\mu}, \check{\mu}_0)$ is minimized when $r$ is as small as possible, and according to the second one, the distance $\gammaWassersteinp(\check{\nu}, \check{\mu})$ is minimized when $r$ is chosen large.
Roughly speaking, to optimize the bound given by the theorem, we have to pick a $r$ given by equating the terms $r$ and $\left(\frac{w}{r^{d+1}}\right)^\frac{1}{2}$, that is, $r = w^\frac{1}{d+3}$.
We will make this choice in the following corollary. 
\end{remark}

\begin{remark}
\label{Paper2:rem:estimation}
In the case where $\MMo$ is embedded, we have seen in Proposition \ref{Paper2:prop:normalreachtoreach} that the normal reach $\lambda$ is bounded below by $\reach{\MM}>0$.
In particular, $\mu(\lambda^r)$ is zero for $r$ small enough.
In this case, Theorem \ref{Paper2:th:estimation} reads
\begin{align*}
\gammaWassersteinp(\check{\nu}, \check{\mu}_0) 
&\leq \gamma c_{\ref{Paper2:lem:loccovconsistency:index}} r
+ \gamma c_{\ref{Paper2:prop:stability:index}} \left(\frac{w}{r^{d+1}}\right)^\frac{1}{2}
+ 2w
\end{align*}
We deduce an approximation result: if $(\nu_i)_{i\geq 0}$ is a sequence of probability measures such that $w_i=\Wasssymbol{p}(\mu, \nu_i)$ goes to zero, and if we choose a sequence of radii $(r_i)_{i\geq 0}$ such that 
$(r_i)_{i\geq 0}$ and $\left(w_i/r_i^{d+1}\right)_{i\geq 0}$ 
go to zero, then $\gammaWassersteinp(\check{\nu}_i, \check{\mu}_0)$ goes to zero too.

More generally, $\gammaWassersteinp(\check{\nu}_i, \check{\mu}_0) $ goes to zero if we only assume that $\MMo$ satisfies Hypothesis \hyperref[hyp:4]{4}. This is stated in the following corollary, which is a weaker version of the theorem, that we shall use in the following section to simplify the results.
\end{remark}
\medbreak

\begin{corollary}
\label{Paper2:cor:approximation}
Let $r>0$.
Assume that $\MMo$ and $\mu_0$ satisfy Hypotheses \hyperref[hyp:1]{1}, \hyperref[hyp:2]{2} and \hyperref[hyp:4]{4}.
Let $\nu$ be any probability measure. Denote $w=\Wasssymbol{p}(\mu, \nu)$. Suppose that $r < \mini{\frac{1}{2\rho},r_{\ref{Paper2:hyp:normalreach:index}}, 1}$ and $w\leq \mini{c_{\ref{Paper2:hyp:muA:index}}, 1} (\frac{r}{4})^{d+3}$.
Then
\begin{align*}
\gammaWassersteinp(\check{\nu}, \check{\mu}_0) 
&\leq\big(1 + \gamma c_{\ref{Paper2:cor:approximation:index}}\big) r^\frac{1}{p}
\end{align*}
with $c_{\ref{Paper2:cor:approximation:index}} = c_{\ref{Paper2:th:estimation:index}}(c_{\ref{Paper2:hyp:normalreach:index}})^\frac{1}{p} + c_{\ref{Paper2:prop:stability:index}}  + c_{\ref{Paper2:lem:loccovconsistency:index}}$.
\end{corollary}

\begin{proof}
According to Theorem \ref{Paper2:th:estimation}, we have 
\begin{align*}
\gammaWassersteinp(\check{\nu}, \check{\mu}_0) 
&\leq \gamma c_{\ref{Paper2:th:estimation:index}}  \mu(\lambda^r)^\frac{1}{p} 
+ \gamma c_{\ref{Paper2:lem:loccovconsistency:index}} r
+ \gamma c_{\ref{Paper2:prop:stability:index}} \left(\frac{w}{r^{d+1}}\right)^\frac{1}{2}
+ 2w.
\end{align*}
Note that the assumption $w\leq (\frac{r}{4})^{d+3}$ yields $\left(\frac{w}{r^{d+1}}\right)^\frac{1}{2} \leq r$. Besides, $r \leq 1$ yields $w \leq \left(\frac{r}{4}\right)^{d+3} \leq \frac{r}{2}$.
Finally, Hypothesis \hyperref[hyp:4]{4} gives $\mu(\lambda^r) \leq c_{\ref{Paper2:hyp:normalreach:index}} r$, and we deduce the result:
\begin{align*}
\gammaWassersteinp(\check{\nu}, \check{\mu}_0) 
&\leq \gamma c_{\ref{Paper2:th:estimation:index}}  (c_{\ref{Paper2:hyp:normalreach:index}}r)^\frac{1}{p} 
+ \gamma c_{\ref{Paper2:lem:loccovconsistency:index}} r
+ \gamma c_{\ref{Paper2:prop:stability:index}} r
+ r \\
&\leq \left( \gamma c_{\ref{Paper2:th:estimation:index}}  (c_{\ref{Paper2:hyp:normalreach:index}})^\frac{1}{p} 
+ \gamma c_{\ref{Paper2:lem:loccovconsistency:index}} 
+ \gamma c_{\ref{Paper2:prop:stability:index}} 
+ 1 \right) r^\frac{1}{p} 
\end{align*}
where we used to the weak upper bound $r\leq r^\frac{1}{p}$ on the last line.
\end{proof}

\section{Topological inference with the lifted measure}
\label{Paper2:sec:topoinference}
Based on the results of the last section, we show how the lifted measure $\nucheck$ can be used to infer the homotopy type of $\MMcheck$ or its homology groups.

\subsection{Overview of the method}
\label{Paper2:subsec:overview}
Let us recall the results obtained so far.
Assume that the immersion $\imm\colon \MMo \rightarrow \MM$ and the measure $\muo$ satisfy the Hypotheses \hyperref[hyp:1]{1}, \hyperref[hyp:2]{2} and \hyperref[hyp:3]{3}.
Our goal is to estimate the exact lifted measure $\muchecko$ on $E \times \matrixspace{E}$, since its support is the submanifold $\MMcheck$, which is diffeomorphic to $\MMo$.
To do so, we suppose that we are observing a measure $\nu$ on $E$.
No assumptions are made on $\nu$. 
Our results only depends on the Wasserstein distance $w = \Wasssymbol{p}(\mu, \nu)$, where $\mu = \imm_* \muo$.
Recall that the measure $\muchecko$ is defined as (see Equation \eqref{Paper2:eq:muchecko}):
\begin{align*}
\muchecko = (\imm_* \mu_0)(x_0) \otimes \left\{\delta_{ \frac{1}{d+2} p_{T_{x} \MM} }\right\}.
\end{align*}
To approximate $\muchecko$, pick a parameter $r>0$ and consider the lifted measure $\nucheck$ built on $\nu$ (see Definition \ref{Paper2:def:lifted_measure}):
\begin{align*}
\check \nu = \nu(x) \otimes \left\{\delta_{\loccovnorm{\nu}{x}}\right\}.
\end{align*}
Choose $\gamma > 0$.
Endow the space $E \times \matrixspace{E}$ with the norm $\gammaN{\cdot}$ (see Equation \eqref{Paper2:eq:gammaN}), and consider the Wasserstein distance $\gammaWassersteinp(\cdot, \cdot)$ between measures on $E \times \matrixspace{E}$ (see Equation \eqref{Paper2:eq:gammaWassersteinp}).
We quantify the quality of the approximation by the Wasserstein distance $\gammaWassersteinp(\muchecko, \nucheck)$.
According to Theorem \ref{Paper2:th:estimation}, we have
\begin{align*}
\gammaWassersteinp(\check{\nu}, \check{\mu}_0) 
&\leq \gamma c_{\ref{Paper2:th:estimation:index}}  \mu(\lambda^r)^\frac{1}{p} 
+ \gamma c_{\ref{Paper2:lem:loccovconsistency:index}} r
+ \gamma c_{\ref{Paper2:prop:stability:index}} \left(\frac{w}{r^{d+1}} \right)^\frac{1}{2}
+ 2w
\end{align*}
as long as the parameter $r$ satisfies
\begin{align*}
4\left(\frac{w}{\mini{c_{\ref{Paper2:hyp:muA:index}}, 1}}\right)^\frac{1}{d+1} 
~\leq~ r 
~\leq~ \mini{\frac{1}{2\rho} , 1}. 
\end{align*}
Under Hypothesis \hyperref[hyp:4]{4}, Corollary \ref{Paper2:cor:approximation} gives a weaker form of this result. 
We have
\begin{align*}
\gammaWassersteinp(\check{\nu}, \check{\mu}_0) 
&\leq \big(1 + \gamma c_{\ref{Paper2:cor:approximation:index}}\big) r^\frac{1}{p}
\end{align*}
as long as the parameter $r$ satisfies
\begin{align*}
4\left(\frac{w}{\mini{c_{\ref{Paper2:hyp:muA:index}}, 1}}\right)^\frac{1}{d+3} 
~\leq~ r 
~\leq~ \mini{\frac{1}{2\rho} , r_{\ref{Paper2:hyp:normalreach:index}} , 1}. 
\end{align*}
In the following subsections, we show how these results lead to consistent estimations of $\MMo$ and its homology.
Namely, we can estimate the homotopy type of $\MMcheck$, and hence of $\MMo$, by considering the sublevel sets of the DTM $\DTM{\nucheck,m,\gamma}$ (see Corollary \ref{Paper2:cor:homotopytypeDTMcheck}).
The notation $\DTM{\nucheck,m,\gamma}$ corresponds to the DTM, as defined in Subsect. \ref{subsec:pers_measures}, with measure $\nucheck$, parameter $m$, and seen in the ambient space $\left(E\times\matrixspace{E}, \gammaN{\cdot}\right)$.
Besides, we can estimate the persistent homology of the DTM-filtration $W_\gamma[\muchecko]$ with the filtration $W_\gamma[\nucheck]$ (see Corollary \ref{Paper2:cor:DTMfiltrcheck}).
Here, $W_\gamma[\cdot]$ corresponds to the DTM-filtration in the ambient space $\left(E\times\matrixspace{E}, \gammaN{\cdot}\right)$.

\begin{example}
\label{Paper2:ex:lemniscate_overview}
Let $\MM$ be the lemniscate of Bernoulli of diameter 2.
It is the immersion of a circle $\MMo$.
We observe a 100-sample $X$ of $\MM$ (Figure \ref{Paper2:fig:lemniscate_sample}).
Experimentally, we computed the Hausdorff distance $\Hdist{\MM}{X} \approx 0{.}026$.
Let $\mu$ be the Hausdorff measure on $\MM$ and $\nu$ the empirical measure on $X$.
We choose the parameter $p=2$.
Their Wasserstein distance is approximately $\Wasssymbol{2}(\mu,\nu) \approx 0{.}015$.

\begin{figure}[H]
\centering
\begin{minipage}{.49\linewidth}
\centering
\includegraphics[width=.7\linewidth]{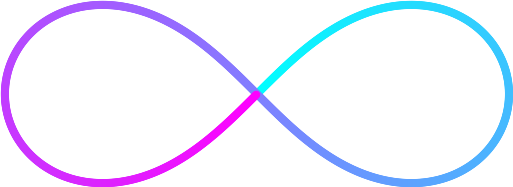}
\end{minipage}
\begin{minipage}{.49\linewidth}
\centering
\includegraphics[width=.7\linewidth]{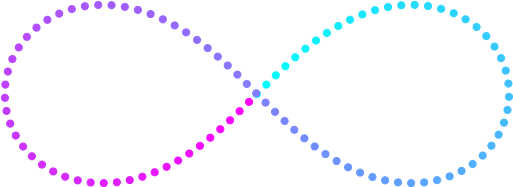}
\end{minipage}
\caption{Left: The lemniscate $\MM$.
Right: The set $X$, a 100-sample of $\MM$.}
\label{Paper2:fig:lemniscate_sample}
\end{figure}
\noindent
For each point $x$ of $X$, we compute the normalized local covariance matrix $\loccovnorm{\nu}{x}$ with parameter $r = 0{.}5$ and $0{.}1$.
This matrix is used as an estimator of the tangent space $T_x \MM$.
In order to observe the quality of this estimation, we represent on Figure \ref{Paper2:fig:lemniscate_parameter_r} (first row) the principal axes of $\loccovnorm{\nu}{x}$ for some $x$. 
On the second row are represented the distances $\frobN{ \loccovnorm{\nu}{x} - \frac{1}{d+2} p_{T_x \MM} }$.
One sees that $r = 0{.}1$ yields a better approximation.
However, the estimation is still biased next to the self-intersection points of $\MM$.

\begin{figure}[H]
\centering
\begin{minipage}{.49\linewidth}
\centering
\includegraphics[width=.75\linewidth]{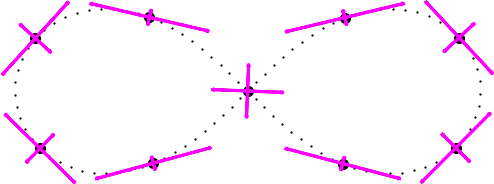}
\\~\\
~~~~~~~~~~~~
\includegraphics[width=.85\linewidth]{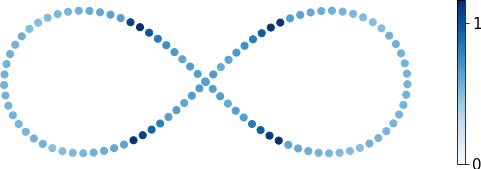}
$r=0{.}5$
\end{minipage}
\begin{minipage}{.49\linewidth}
\centering
\includegraphics[width=.75\linewidth]{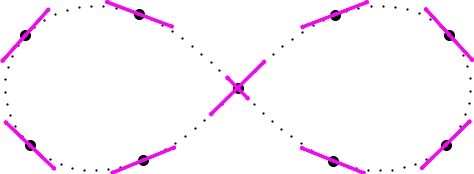}
\\~\\
~~~~~~~~~~~~
\includegraphics[width=.85\linewidth]{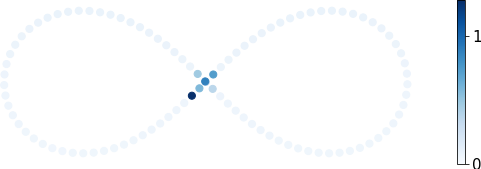}
$r=0{.}1$
\end{minipage}
\caption{First row: The eigenvectors of $\loccovnorm{\nu}{x}$ for some $x \in X$, weighted with their corresponding eigeinvalue.
Second row: color representation of the distances $\frobN{ \loccovnorm{\nu}{x} - \frac{1}{d+2} p_{T_x \MM} }$.}
\label{Paper2:fig:lemniscate_parameter_r}
\end{figure}
\noindent
Now we choose the parameter $\gamma = 2$. 
For $r = 0{.}5$ and $0{.}1$, we consider the lifted measures built on $\nu$, respectively denoted $\nucheck^{0{.}5}$ and $\nucheck^{0{.}1}$.
They are measure on the lift space $\R^2 \times \matrixspace{\R^2}$, which is endowed with the norm $\gammaN{\cdot}$.
We computed the Wasserstein distances:
\begin{center}
 $\Wdistgamma{2}{\muchecko}{\nucheck^{0{.}5}} \approx 0{.}674$ ~~~~~~and~~~~~~ 
 $\Wdistgamma{2}{\muchecko}{\nucheck^{0{.}1}} \approx 0{.}200$.
\end{center}
In comparison, even with a small parameter $r$, the Hausdorff distance between their support is still large: 
\begin{center}
$\Hdist{\MMcheck}{\supp{\nucheck^{0{.}5}}} \approx 1{.}142$ 
~~~~~~and~~~~~~ 
$\Hdist{\MMcheck}{\supp{\nucheck^{0{.}1}}} \approx 1{.}273$.
\end{center}
These sets are represented in Figure \ref{Paper2:fig:lemniscate_sample_check}.
Observe that, at the center of the graphs, the measures $\nucheck^{0{.}5}$ and $\nucheck^{0{.}1}$ deviate from the set $\MMcheck$. 

\begin{figure}[H]
\centering
\begin{minipage}{.32\linewidth}
\centering
\includegraphics[width=.9\linewidth]{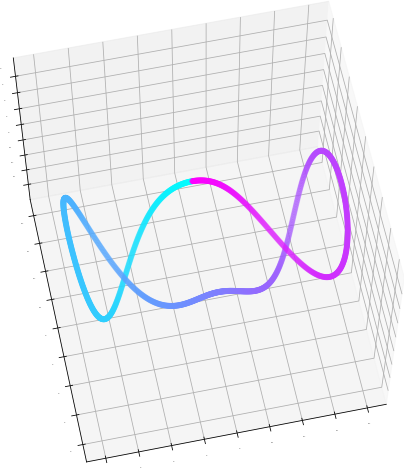}
\end{minipage}
\begin{minipage}{.32\linewidth}
\centering
\includegraphics[width=.9\linewidth]{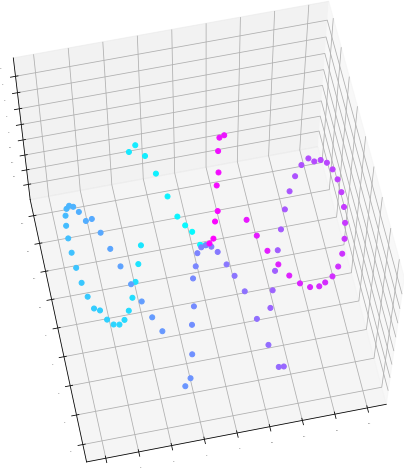}
\end{minipage}
\begin{minipage}{.32\linewidth}
\centering
\includegraphics[width=.9\linewidth]{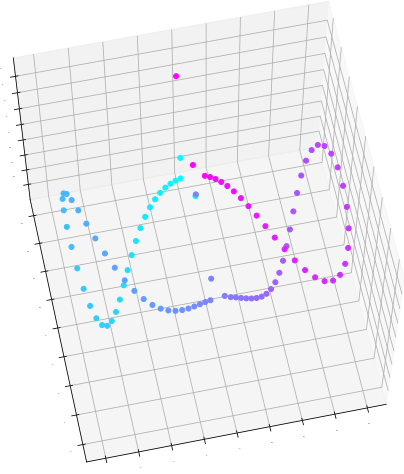}
\end{minipage}
\caption{Left: The lifted lemniscate $\MMcheck$, projected in a 3-dimensional subspace via PCA.
Center: The set $\supp{\nucheck^{0{.}5}}$ projected in the same 3-dimensional subspace.
Right: Same for $\supp{\nucheck^{0{.}1}}$.}
\label{Paper2:fig:lemniscate_sample_check}
\end{figure}
\end{example}

\begin{example}
\label{ex:torus}
Let $u\colon \MMo \rightarrow \MM$ be the figure-8 immersion of the torus in $\R^3$, represented in Figure \ref{Paper2:fig:torus_triangulation}.
It can be parametrized by rotating a lemniscate around an axis, while forming a full twist.
The self-intersection points of this immersion corresponds to the inner circle formed by the center of the lemniscate. 
These are the points $x$ of $\MM$ such that their normal reach $\lambda(x)$ is zero.
\begin{figure}[H]
\centering
\begin{minipage}{.49\linewidth}
\centering
\includegraphics[width=.62\linewidth]{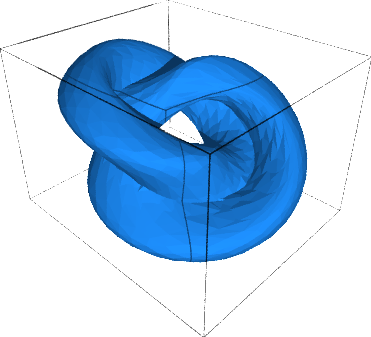}
\end{minipage}
\begin{minipage}{.49\linewidth}
\centering
\includegraphics[width=.62\linewidth]{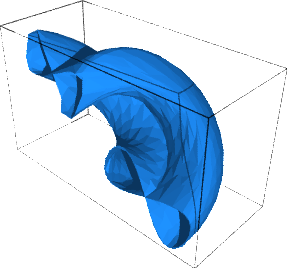}
\end{minipage}
\caption{Left: The immersion $\MM$ of the torus.
Right: A section of $\MM$. One sees the inner lemniscate.}
\label{Paper2:fig:torus_triangulation}
\end{figure}
\noindent
Let $\MMcheck$ be the lift of $\MMo$. It is a submanifold of $\R^3 \times \matrixspace{\R^3} \simeq \R^{12}$.
One cannot embed $\MMcheck$ in $\R^3$ by performing a PCA.
However, we can try to visualize $\MMcheck$ by considering a small section of it. Figure \ref{Paper2:fig:torus_section} represents a subset of $\MMcheck$, projected in a 3-dimensional subspace via PCA. One sees that it does not self-intersect. 
\begin{figure}[H]
\centering
\begin{minipage}{.49\linewidth}
\centering
\includegraphics[width=.6\linewidth]{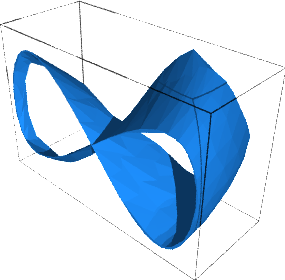}
\end{minipage}
\begin{minipage}{.49\linewidth}
\centering
\includegraphics[width=.65\linewidth]{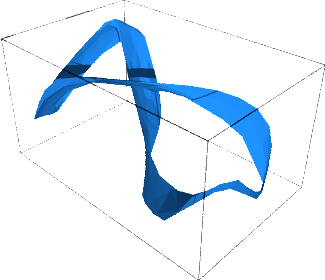}
\end{minipage}
\caption{
Left: A section of $\MM$.
Right: The corresponding section of $\MMcheck$, projected in a 3-dimensional subspace via PCA. Observe that it does not self-intersect.}
\label{Paper2:fig:torus_section}
\end{figure}
\noindent
In order to fit in the context of our study, let $\mu$ be the Hausdorff measure on $\MM$.
We observe a 9000-sample $X$ of $\MM$, and consider its empirical measure $\nu$. The set $X$ is depicted in Figure \ref{Paper2:fig:torus_sample}.
Choose the parameter $p=1$.
We compute the Wasserstein distance $\Wasssymbol{1}(\mu, \nu) \approx 0{.}070$ and the Hausdorff distance $\Hdist{\MM}{X} = 0{.}083$. 
Let $r = 0{.}09$.
In order to observe the estimation of tangent spaces by local covariance matrices $\loccovnorm{\nu}{x}$ with parameter $r$, we represent on Figure \ref{Paper2:fig:torus_sample} the points $x$ such that the distance $\frobN{ \loccovnorm{\nu}{x} - \frac{1}{d+2} p_{T_x \MM} }$ is greater than 1.
Observe that the estimation is biased next to the self-intersection circle of $\MM$.

Last, let us choose the parameter $\gamma = 2$, and consider the lifted measure $\nucheck$. 
We have $\Wasssymbol{1}(\muchecko, \nucheck) \approx 0{.}986$.
In comparison, the Hausdorff distance between their support is large: $\Hdist{\MMcheck}{\supp{\nucheck}} \approx 2{.}188$.

\begin{figure}[H]
\centering
\begin{minipage}{.49\linewidth}
\centering
\includegraphics[width=.75\linewidth]{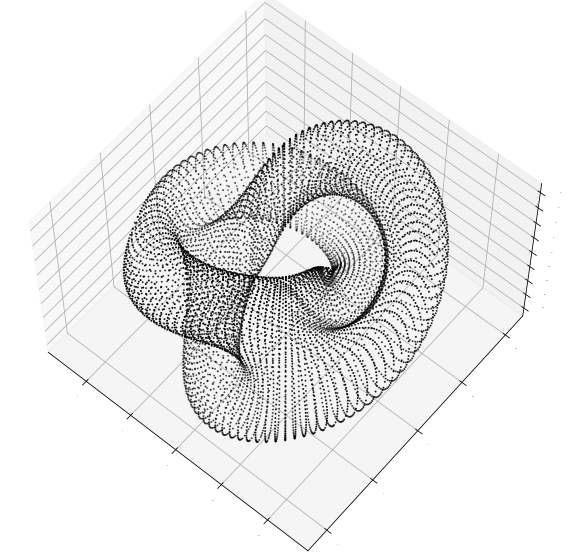}
\end{minipage}
\begin{minipage}{.49\linewidth}
\centering
\includegraphics[width=.75\linewidth]{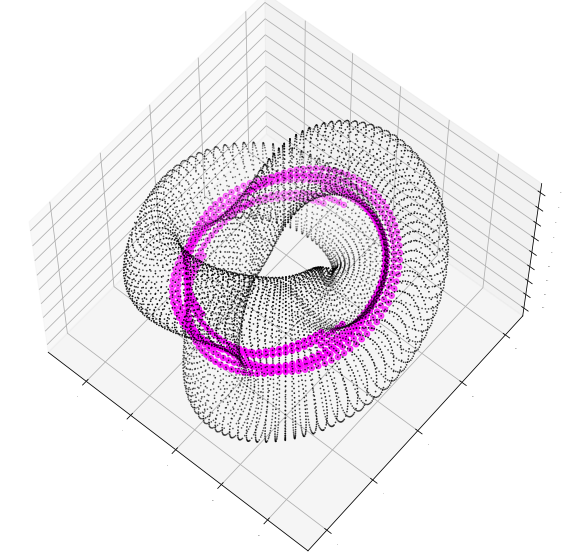}
\end{minipage}
\caption{Left: The set $X$, a sample of $\MM$.
Right: The set $X$, where $x\in X$ is colored in magenta if $\frobN{ \loccovnorm{\nu}{x} - \frac{1}{d+2} p_{T_x \MM} } \geq 1$.}
\label{Paper2:fig:torus_sample}
\end{figure}
\end{example}

\subsection{Homotopy type estimation with the DTM}
\label{Paper2:subsec:homotopy_type_estimation}
In this subsection, we use the DTM, as defined in Subsect. \ref{subsec:pers_measures}, to infer the homotopy type of $\MMcheck$ from the lifted measure $\nucheck$.
We shall use the DTM on $\nucheck$, which lives in the space $E \times \matrixspace{E}$ endowed with the norm $\gammaN{\cdot}$.
It is denoted $d_{\nucheck,m,\gamma}$.

In order to apply Theorem \ref{th:DTM-homotopytype} in our setting, we have to consider geometric quantities associated to the submanifold $\MMcheck$. 
Note that the map $\check \imm$ itself satisfies the Hypotheses \hyperref[hyp:2]{2} and \hyperref[hyp:3]{3}, since the immersion $\imm$ does.
Hence we can consider the following quantities: for every $\gamma > 0$, we denote by 
\begin{itemize}
\item $\reachgamma{\MMcheck}$ the reach of $\MMcheck$ (for the norm $\gammaN{\cdot}$),
\item $\check{\rho}_\gamma$, $\check{L}_{0,\gamma}$, $\check{f}_{\text{min},\gamma}$ and $\check{f}_{\text{max},\gamma}$ the constants given by Hypotheses \hyperref[hyp:2]{2} and \hyperref[hyp:3]{3} applied to $\MMcheck$,
\item $\check c_{\ref{subsec:index}, \gamma} = \check{f}_{\text{min},\gamma} \Jmin \volball{d}$ the constant given by Proposition \ref{Paper2:prop:probabilityboundssqrt} Point \ref{prop:probabilityboundssqrt:point1} applied to $\muchecko$.
\end{itemize}
According to Subsect. \ref{subsec:riemannian}, a sufficient condition for $\MMcheck$ to satisfy $\reachgamma{\MMcheck}>0$ is that it is a $\mathcal{C}^2$-submanifold. 
This would be the case if $\MMo$ and $u$ were $\mathcal{C}^3$. 
Also, we point out that the constant $\check{\rho}_\gamma$ cannot be deduced from $\rho$: the first one can be arbitrary large or small compared to the second one, even with $\gamma$ being fixed. This remark holds for the other constants.


These constants being given, we propose a way to tune the parameters $r$, $\gamma$, $m$ and $t$ in such a way that the $t$-sublevel set $d_{\nucheck,m,\gamma}^t$ of the DTM captures the homotopy type of $\MMcheck$, or equivalently, of $\MMo$.

\begin{corollary}
\label{Paper2:cor:homotopytypeDTMcheck}
Assume that $\MMo$ and $\mu_0$ satisfy Hypotheses \hyperref[hyp:1]{1}, \hyperref[hyp:2]{2}, \hyperref[hyp:3]{3} and \hyperref[hyp:4]{4}.
Let $\nu$ be any probability measure on $E$. Denote $w=\Wasssymbol{2}(\mu, \nu)$. Choose $r>0$, $\gamma>0$ and $m \in (0,1)$ such that
\begin{itemize}
\item $4\left(\frac{w}{\mini{c_{\ref{Paper2:hyp:muA:index}}, 1}}\right)^\frac{1}{d+3} \leq r \leq \mini{\frac{1}{2\rho}, r_{\ref{Paper2:hyp:normalreach:index}} , 1}$
\item $m \leq \frac{\check c_{\ref{subsec:index}, \gamma}}{\left(2\check{\rho}_\gamma\right)^d}$ and
\item $(1 + \gamma c_{\ref{Paper2:cor:approximation:index}}) r^\frac{1}{2} 
 \leq  m^\frac{1}{2} \left(\frac{\reachgamma{\MMcheck}}{9} - \left(\frac{m}{\check c_{\ref{subsec:index}, \gamma}}\right)^{\frac{1}{d}} \right)$.
\end{itemize}
Define $\epsilon$ and choose $t$ as follows:
\begin{align*}
\epsilon = \left(\frac{m}{\check c_{\ref{subsec:index}, \gamma}}\right)^\frac{1}{d} + (1 + \gamma c_{\ref{Paper2:cor:approximation:index}}) \left(\frac{r}{m}\right)^\frac{1}{2}
~~~~~~~~
\text{and}
~~~~~~~~
t \in \left[4 \epsilon, \reachgamma{\MMcheck} - 3\epsilon\right].
\end{align*}
Then the sublevel set of the DTM $\DTM{\nucheck,m,\gamma}^t$ is homotopy equivalent to $\MMo$.
\end{corollary}

\begin{proof}
In order to fit in the context of Theorem \ref{th:DTM-homotopytype}, we have to consider the usual Euclidean norm $\eucN{\cdot}$ on $E \times \matrixspace{E}$. 
It corresponds to the norm $\gammaN{\cdot}$ with $\gamma = 1$. 
For a general parameter $\gamma>0$, consider the dilatation map $i_\gamma\colon E \times \matrixspace{E} \rightarrow E \times \matrixspace{E}$ defined as 
\begin{align*}
i_\gamma\colon (x, A) \mapsto (x, \gamma A).
\end{align*}
A computation shows that, for every probability measures $\alpha, \beta$ on $E \times \matrixspace{E}$, we have
\begin{align*}
\gammaWassersteindeux(\alpha, \beta)
= \Wasssymbol{2}\big((i_\gamma)_* \alpha, (i_\gamma)_* \beta \big),
\end{align*}
where $\Wasssymbol{2}$ denotes the 2-Wasserstein distance on $E \times \matrixspace{E}$ endowed with the usual Euclidean norm $\eucN{\cdot}$.
Corollary \ref{Paper2:cor:approximation} then reads 
\begin{align*}
\Wasssymbol{2}\big( (i_\gamma)_* \muchecko, (i_\gamma)_* \nucheck \big) \leq (1 + \gamma c_{\ref{Paper2:cor:approximation:index}}) r^\frac{1}{2},
\end{align*}
where $(i_\gamma)_* \muchecko$ and $(i_\gamma)_* \nucheck$ are the push-forwards of $\muchecko$ and $\nucheck$ by the map $i_\gamma$.
Besides, consider the set 
\begin{align*}
\MMcheck_\gamma 
= i_\gamma(\MMcheck)
= \{ (x, \gamma A) \mid (x,A) \in \MMcheck \}.
\end{align*}
It is clear that
\begin{align*}
\reachgamma{\MMcheck} = \reach{\MMcheck_\gamma}, 
\end{align*}
where we recall that $\reachgamma{\MMcheck}$ is the reach of $\MMcheck$ with respect to the norm $\gammaN{\cdot}$, and $\reach{\MMcheck_\gamma}$ is the reach of $\MMcheck_\gamma$ with respect to the usual norm $\eucN{\cdot}$ on $E \times \matrixspace{E}$.
Finally, consider the DTM $\DTM{(i_\gamma)_*\nucheck, m}$ with respect to the usual Euclidean norm.
Observe that, for every $t \geq 0$, the sublevel sets of the DTM $\DTM{(i_\gamma)_*\nucheck,m}$ and $\DTM{\nucheck,m,\gamma}$ are linked via
\begin{align*}
\DTM{\nucheck,m}^t
= i_\gamma\left( \DTM{\nucheck,m,\gamma}^t \right).
\end{align*}
In particular, they share the same homotopy type.
Now we obtain the result as a consequence of Theorem \ref{th:DTM-homotopytype} applied to the measures $(i_\gamma)_* \muchecko$ and $(i_\gamma)_* \nucheck$.
Let us verify that the assumptions of the theorem are satisfied.
Our assumption about $m$ ensures that 
\[\left(\frac{m}{\check c_{\ref{subsec:index}, \gamma}}\right)^\frac{1}{d} \leq \frac{1}{2\check{\rho}_\gamma},\] 
hence by Proposition \ref{Paper2:prop:probabilityboundssqrt} Point \ref{prop:probabilityboundssqrt:point1} we get $\muchecko(\openball{x}{r}) \geq \check c_{\ref{subsec:index}, \gamma} r^d$ for all $x \in \supp{\muchecko}$ and $r < \left(\frac{m}{\check c_{\ref{subsec:index}, \gamma}}\right)^\frac{1}{d}$. 
Moreover, the assumption about $(1 + \gamma c_{\ref{Paper2:cor:approximation:index}})r^{\frac{1}{2}}$ ensures that 
\[\Wasssymbol{2}\big( (i_\gamma)_* \muchecko, (i_\gamma)_* \nucheck \big) 
 \leq  m^\frac{1}{2} \left(\frac{\reachgamma{\MMcheck}}{9} - \left(\frac{m}{\check c_{\ref{subsec:index}, \gamma}}\right)^{\frac{1}{d}} \right)\]
is satisfied, since $\Wasssymbol{2}\big( (i_\gamma)_* \muchecko, (i_\gamma)_* \nucheck \big) \leq (1 + \gamma c_{\ref{Paper2:cor:approximation:index}}) r^\frac{1}{2}$ by Corollary \ref{Paper2:cor:approximation}.
\end{proof}

\begin{example}
\label{ex:lemniscate_DTM}
Let $\MM$ be the lemniscate of Bernoulli, as in Example \ref{Paper2:ex:lemniscate_overview}.
Suppose that $\mu$ is the uniform distribution on $\MM$, and $\nu$ is the empirical measure on a 500-sample of $\MM$.
We choose the parameters $\gamma=2$, $r=0{.}03$ and $m=0{.}01$.
Let $\nucheck$ be the lifted measure associated to $\nu$.
Figure \ref{Paper2:fig:DTM_lemniscate} represents set the $\supp{\nucheck}$, and the values of the DTM $\DTM{\nucheck,m,\gamma}$ on it.
Observe that the anomalous points, i.e., points for which the local covariance matrix is not well estimated, have large DTM values.

\begin{figure}[H]
\centering
\begin{minipage}{.49\linewidth}
\centering
\includegraphics[width=.6\linewidth]{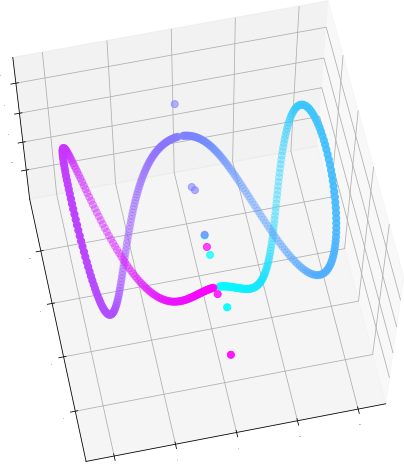}
\end{minipage}
\begin{minipage}{.49\linewidth}
\centering
\includegraphics[width=.8\linewidth]{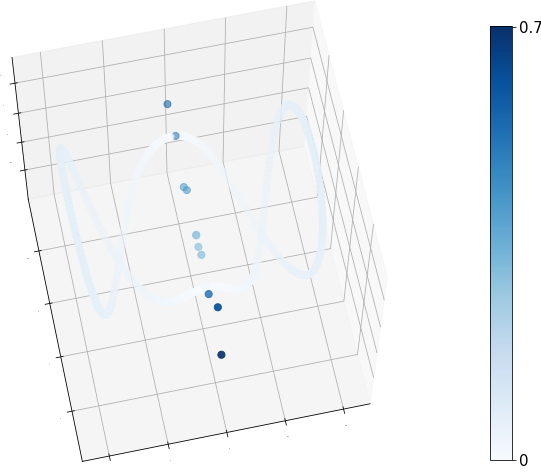}
\end{minipage}
\caption{Left: The set $\supp{\nucheck} \subset \mathbb{R}^2 \times\matrixspace{\R^2}$, projected in a 3-dimensional subspace via PCA.
Right: The set $\supp{\nucheck}$ with colors indicating the value of the DTM $\DTM{\nucheck, m, \gamma}$.}
\label{Paper2:fig:DTM_lemniscate}
\end{figure}
\end{example}

\subsection{Persistent homology with DTM-filtrations}
\label{Paper2:subsec:DTM_filtrations}
In this subsection, we aim to estimate the DTM-filtration of $\muchecko$, as defined in Subsect. \ref{subsec:pers_measures}, from $\nu$.
We shall use the DTM-filtration on $\nucheck$, denoted $W_\gamma[\nucheck]$, with respect to the ambient norm $\gammaN{\cdot}$ on $E \times \matrixspace{E}$. 
We use the notations $\check{\rho}_\gamma$ and $\check c_{\ref{subsec:index}, \gamma}$ of the previous subsection.

\begin{corollary}
\label{Paper2:cor:DTMfiltrcheck}
Assume that $\MMo$ and $\mu_0$ satisfy Hypotheses \hyperref[hyp:1]{1}, \hyperref[hyp:2]{2}, \hyperref[hyp:3]{3} and \hyperref[hyp:4]{4}.
Let $\nu$ be any probability measure. Denote $\Wasssymbol{2}(\mu, \nu) = w$. 
Choose $r>0$, $\gamma>0$ and $m \in (0,1)$ such that 
\begin{itemize}
\item $4\left(\frac{w}{\mini{c_{\ref{Paper2:hyp:muA:index}}, 1}}\right)^\frac{1}{d+3} \leq r \leq \mini{\frac{1}{2\rho} , r_{\ref{Paper2:hyp:normalreach:index}} , 1}$,
\item $m \leq \frac{\check c_{\ref{subsec:index}, \gamma}}{\left(2\check{\rho}_\gamma\right)^d}$,
\item $\big(1 + \gamma c_{\ref{Paper2:cor:approximation:index}}\big) r^\frac{1}{2} \leq \frac{1}{4}$.
\end{itemize}
Then we have a bound on the interleaving distance between the DTM-filtrations:
\begin{align*}
\idist{W_\gamma[\muchecko]}{W_\gamma[\nucheck]}
\leq \check c_{\ref{Paper2:cor:DTM-filtrations:index}, \gamma} (1 + \gamma c_{\ref{Paper2:cor:approximation:index}})^\frac{1}{2} m^{-\frac{1}{2}}r^\frac{1}{4} + \check c_{\ref{Paper2:cor:DTM-filtrations:index}, \gamma}' m^\frac{1}{d},
\end{align*}
where $\check c_{\ref{Paper2:cor:DTM-filtrations:index}, \gamma} = 8\mathrm{diam}(\MM) + 8\gamma + 5$
and $\check c_{\ref{Paper2:cor:DTM-filtrations:index}, \gamma}' = 2\left(\check c_{\ref{subsec:index}, \gamma}\right)^{-\frac{1}{d}}$.
\end{corollary}

\begin{proof}
As in the proof of Corollary \ref{Paper2:cor:homotopytypeDTMcheck}, let $i_\gamma$ be the map $i_\gamma\colon (x, A) \mapsto (x, \gamma A)$.
Let $W[\cdot]$ denotes the DTM-filtration on $\nucheck$ with respect to the usual Euclidean norm. That is, the filtration $W[\cdot]$ corresponds to $W_\gamma[\cdot]$ with $\gamma=1$.
A computation shows that the filtration $W[(i_\gamma)_*\nucheck]$ and $W_\gamma[\nucheck]$ are linked via
\begin{align*}
W[(i_\gamma)_*\nucheck]
= i_\gamma\left( W_\gamma[\nucheck] \right).
\end{align*}
Now let $\check w = \Wasssymbol{2}((i^\gamma)_*\muchecko, (i^\gamma)_*\nucheck)$.
We have $\check w = \Wasssymbol{2,\gamma}(\muchecko, \nucheck)$, hence Corollary \ref{Paper2:cor:approximation} gives
\begin{equation}
\label{Paper2:eq:DTMfiltrcheck_1}
\check w  \leq \big(1 + \gamma c_{\ref{Paper2:cor:approximation:index}}\big) r^\frac{1}{2}.
\end{equation}
Moreover, we can apply Corollary \ref{Paper2:cor:DTM-filtrations} to $\mu=(i^\gamma)_*\muchecko$ and $\nu=(i^\gamma)_*\nucheck$ to get
\begin{equation}
\label{Paper2:eq:DTMfiltrcheck_2}
\idist{W[(i^\gamma)_*\muchecko]}{W[(i^\gamma)_*\nucheck]}
\leq \check c_{\ref{Paper2:cor:DTM-filtrations:index}, \gamma} \left(\frac{\check w}{m}\right)^\frac{1}{2} + \check c_{\ref{Paper2:cor:DTM-filtrations:index}, \gamma}' m^\frac{1}{d},
\end{equation}
where $\check c_{\ref{Paper2:cor:DTM-filtrations:index}, \gamma} = \left(8\mathrm{diam}(\MMcheck) + 5\right)$
and $\check c_{\ref{Paper2:cor:DTM-filtrations:index}, \gamma}' = 2\left(\check c_{\ref{subsec:index}, \gamma}\right)^{-\frac{1}{d}}$.
Note that
\begin{align*}
\mathrm{diam}(\MMcheck) \leq \left(\mathrm{diam}(\MM)^2 + \gamma^2 \left(2\frac{1}{2}\right)^2 \right)^\frac{1}{2} \leq \mathrm{diam}(\MM) + \gamma
\end{align*}
since the matrices $\frac{1}{d+2}p_{T_x \MM}$ have norm $\frobN{\frac{1}{d+2}p_{T_x \MM}} = \frac{\sqrt{d}}{d+2} \leq \frac{1}{2}$.
Our assumption $m \leq \frac{\check c_{\ref{subsec:index}, \gamma}}{\left(2\check{\rho}_\gamma\right)^d}$ ensures that the condition $\muchecko(\openball{x}{r}) \geq \check c_{\ref{subsec:index}, \gamma} r^d$ of the theorem is satisfied.
Similarly, the assumption $\big(1 + \gamma c_{\ref{Paper2:cor:approximation:index}}\big) r^\frac{1}{2} \leq \frac{1}{4}$ yields $\check w \leq \frac{1}{4}$.

Combining Equations \eqref{Paper2:eq:DTMfiltrcheck_1} and \eqref{Paper2:eq:DTMfiltrcheck_2} we get
\begin{align*}
\idist{W[(i^\gamma)_*\muchecko]}{W[(i^\gamma)_*\nucheck]}
 \leq \check c_{\ref{Paper2:cor:DTM-filtrations:index}, \gamma} \big(1 + \gamma c_{\ref{Paper2:cor:approximation:index}}\big)^\frac{1}{2} m^{-\frac{1}{2}}r^\frac{1}{4} + \check c_{\ref{Paper2:cor:DTM-filtrations:index}, \gamma}' m^\frac{1}{d}.
\end{align*}
Now, by using the definition of an interleaving of filtrations, one proves that 
\[\idist{W_\gamma[\muchecko]}{W_\gamma[\nucheck]} = \idist{W[(i^\gamma)_*\muchecko]}{W[(i^\gamma)_*\nucheck]},\] 
and we obtain the result.
\end{proof}

\begin{example}
\label{ex:olympics}
Say that $\mu$ is the uniform measure on the union of five intersecting circles of radius 1.
We observe $\nu$, the empirical measure on the point cloud $X$ drawn in Figure \ref{Paper2:fig:DTMF2}. It consists of 300 points per circle, and 100 anomalous points.
Let $p=1$.
Experimentally, we have $\Wdist{1}{\mu}{\nu} \approx 0{.}044$.

\begin{figure}[H]
\centering
\begin{minipage}{.49\linewidth}
\centering
\includegraphics[width=.75\linewidth]{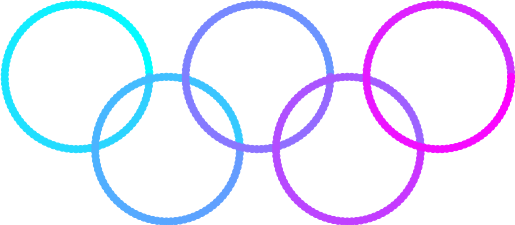}
\end{minipage}
\begin{minipage}{.49\linewidth}
\centering
\includegraphics[width=.8\linewidth]{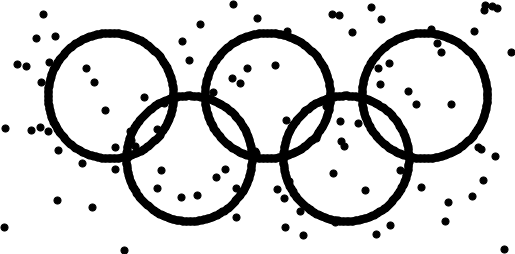}
\end{minipage}
\caption{Left: the set $\MM = \supp{\mu}$.
Right: The set $X = \supp{\nu}$.}
\label{Paper2:fig:DTMF2}
\end{figure}
\noindent
Let $\gamma = 1$. 
Observe that the barcodes of the DTM-filtration of the exact lifted measure $W_\gamma[\muchecko]$, represented in Figure \ref{Paper2:fig:DTMF4}, reveal the homology of the disjoint union of five circles---which is the set $\MMo$.
Only bars of length larger than $0{.}1$ are displayed.
We consider the construction of the lifted $\nucheck$ with parameter $r = 0{.}03$, and the DTM-filtration with $m = 0{.}01$.
The barcodes of the DTM-filtration $W_\gamma[\nucheck]$ are close to the barcodes of $W_\gamma[\muchecko]$.
To compare, we also plot the persistence barcodes of the usual \v{C}ech filtration on $\supp{\nucheck}$. Observe that the five connected components do not appear clearly anymore.

\begin{figure}[H]
\centering
\begin{minipage}{.15\linewidth}
\centering
$W_\gamma[\muchecko]$
\vspace{.9cm}
\\~\\
$W_\gamma[\nucheck]$
\vspace{.9cm}
\\~\\
\v{C}ech filtration
\vspace{.1cm}
\end{minipage}
\begin{minipage}{.40\linewidth}
\centering
\includegraphics[width=.99\linewidth]{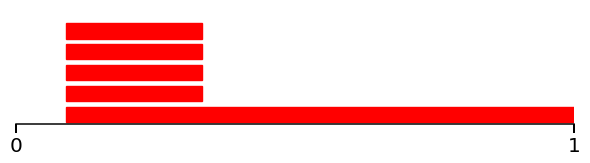}
\\~\\
\includegraphics[width=.99\linewidth]{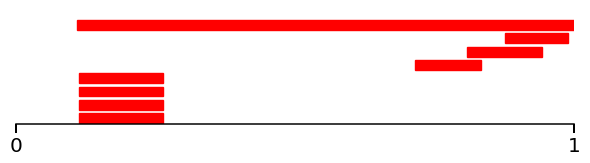}
\\~\\
\includegraphics[width=.99\linewidth]{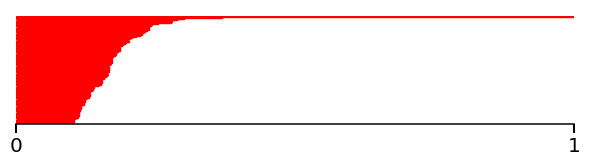}
\end{minipage}
\begin{minipage}{.40\linewidth}
\centering
\includegraphics[width=.99\linewidth]{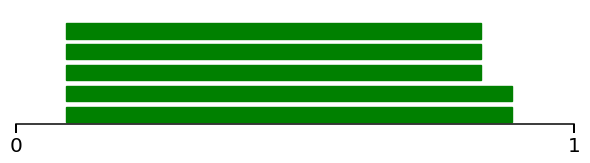}
\\~\\
\includegraphics[width=.99\linewidth]{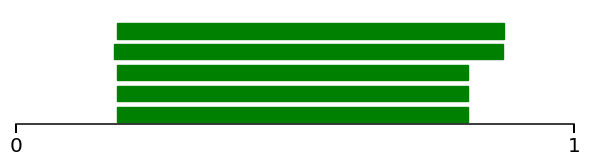}
\\~\\
\includegraphics[width=.99\linewidth]{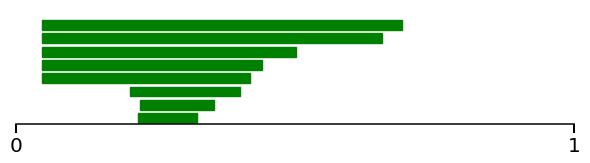}
\end{minipage}
\caption{First row: Persistence barcode of the 0- and 1-homology of the DTM-filtration on $\muchecko$.
Second row: Same for $\nucheck$.
Third row: Persistence barcodes of the usual \v{C}ech filtration on $\supp{\nucheck}$.}
\label{Paper2:fig:DTMF4}
\end{figure}
\noindent
At this point, we can propose a clustering procedure based on $W_\gamma[\nucheck]$. 
First, select a $t \in [0, +\infty)$. Then, extract the connected components of the set $W^t_\gamma[\nucheck]$ of the DTM-filtration.
We show in Figure \ref{Paper2:fig:DTMF5} the components we obtain for several values of $t$. We see that there exists a value for which the five circles a well clustered ($t=0.2$). Besides, observe that small values of $t$ (resp. large) may lead to more connected components than wanted (resp. less).

\begin{figure}[H]
\centering
\begin{minipage}{.32\linewidth}
\centering
\includegraphics[width=.95\linewidth]{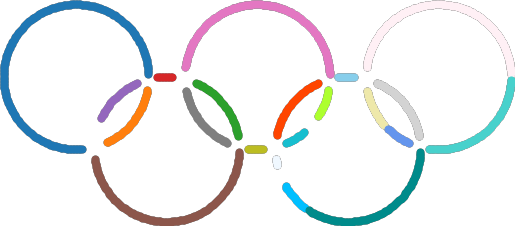}
\\
$t=0.13$
\end{minipage}
\begin{minipage}{.32\linewidth}
\centering
\includegraphics[width=.95\linewidth]{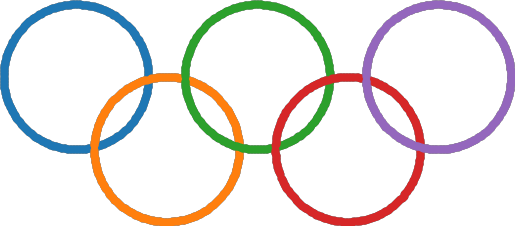}
\\
$t=0.2$
\end{minipage}
\begin{minipage}{.32\linewidth}
\centering
\includegraphics[width=.95\linewidth]{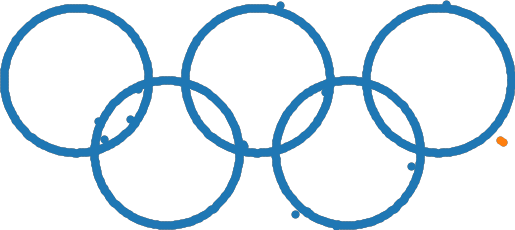}
\\
$t=0.4$
\end{minipage}
\caption{Components obtained by the clustering procedure, where each color correspond to a cluster. The clusterings consist respectively in $21$, $5$ and $2$ clusters.}
\label{Paper2:fig:DTMF5}
\end{figure}
\end{example}
\noindent
From an algorithmic viewpoint, this clustering can be obtained by computing the connected components of the nerve of the set $W^t_\gamma[\nucheck]$ or, equivalently, the connected components of its underlying graph $G$. As we see from the definition of the DTM-filtration (Equation \eqref{eq:def_DTM_filtration}), the vertices of $G$ are the points $\check x \in \supp{\nucheck}$ with DTM value $\DTM{\nucheck, m, \gamma}(\check x)$ not greater than $t$, and where an edge $[\check x,\check y]$ is added if $$\eucN{x-y} + \DTM{\nucheck, m, \gamma}(\check x) + \DTM{\nucheck, m, \gamma}(\check y)\leq t.$$

\begin{example}
\label{ex:Klein}
Consider the immersion of the Klein bottle in $\R^3$ represented in Figure \ref{fig:exampledim2_1}.
Note that the self-intersection of this immersion forms a circle.
We consider a $22'092$-sample $X$ of it.

\begin{figure}[H]
\centering
\includegraphics[width=.4\linewidth]{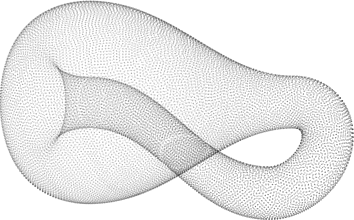}
\caption{Sample of the Klein bottle immersed in $\R^3$.}
\label{fig:exampledim2_1}
\end{figure}
\noindent
Let $\nu$ be the empirical measure on this point cloud. We build the lifted measure $\nucheck$ with parameters $r=0.08$ and $\gamma=3$, and we consider the DTM-filtration $W_\gamma[\nucheck]$ with parameter $m = 0.0001$.
The barcodes of this filtration are depicted in Figure \ref{fig:exampledim2_2}, with coefficients in two finite field: $\Z/2\Z$ and $\Z/3\Z$. We also plot the barcodes of the usual \v{C}ech filtration of $X$ in $\R^3$. Only bars of length larger than 0.4 are displayed.

\begin{figure}[H]
\centering
\begin{minipage}{.15\linewidth}
\centering
$W[(i^\gamma)_*\nucheck]$
\vspace{1.2cm}
\\~\\
\v{C}ech filtration
\\
~
\end{minipage}
\begin{minipage}{.40\linewidth}
\centering
\includegraphics[width=.99\linewidth]{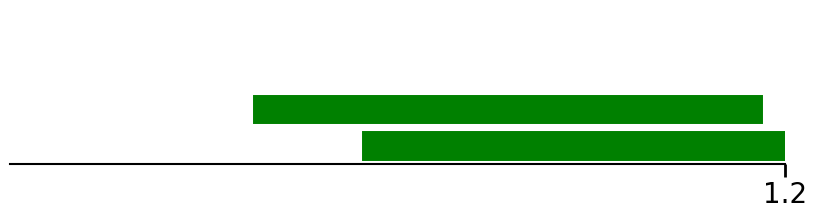}
\\~\\
\includegraphics[width=.99\linewidth]{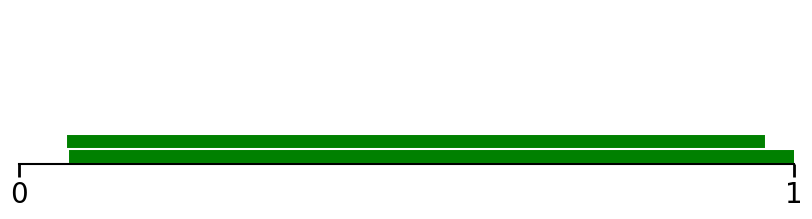}
\\
over $\Z/2\Z$
\end{minipage}
\begin{minipage}{.40\linewidth}
\centering
\includegraphics[width=.99\linewidth]{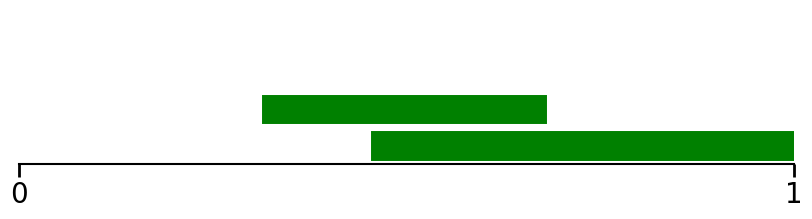}
\\~\\
\includegraphics[width=.99\linewidth]{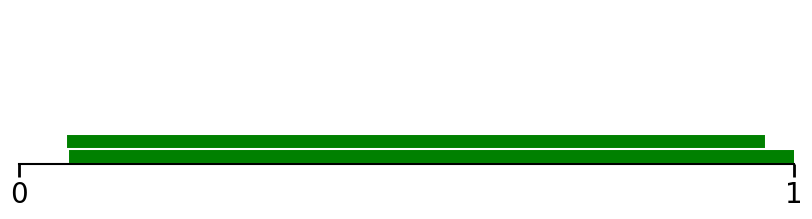}
\\
over $\Z/3\Z$
\end{minipage}
\caption{First row: Persistence barcode of the 1-homology of the DTM-filtration on $\nucheck$.
Second row: Persistence barcodes of the usual \v{C}ech filtration on $X$.}
\label{fig:exampledim2_2}
\end{figure}
\noindent
We see that the barcodes of $W_\gamma[\nucheck]$ over $\Z/2\Z$ and $\Z/3\Z$ differ. This is a consequence of the homology of the Klein bottle itself, which depends on the field of coefficients. Over $\Z/2\Z$, its first homology group is $(\Z/2\Z)^2$, while over $\Z/3\Z$ it is $\Z/3\Z$.
These homology groups can be read on the right part of the barcodes.
In comparison, the barcodes of the usual \v{C}ech filtration are the same.
\end{example}

\begin{example}
\label{ex:cube}
As a last example, we consider two datasets that do not satisfy the hypotheses we studied. Hence the present paper does not provide theoretical guarantees, although our method gives interesting results.
The first point cloud, denoted $X_1$, is sampled on the unit cube of $\R^3$. It is made up of $6\times 2000$ points. It can be seen as the immersion of six squares. Note that this immersion does not satisfy the model considered in this paper since the squares are manifolds with boundaries.
The second point cloud, $X_2$, is sampled on the union of three spheres and a circle. It is made up of $4 \times 2000$ points. This subset can be seen as the immersion of the disjoint union of three spheres and a circle. Again, this does not fit in our model, since these manifolds have different dimensions.
These point clouds are represented in Figure \ref{fig:cube1}.

\begin{figure}[H]
\centering
\begin{minipage}{.49\linewidth}
\centering
\includegraphics[width=.7\linewidth]{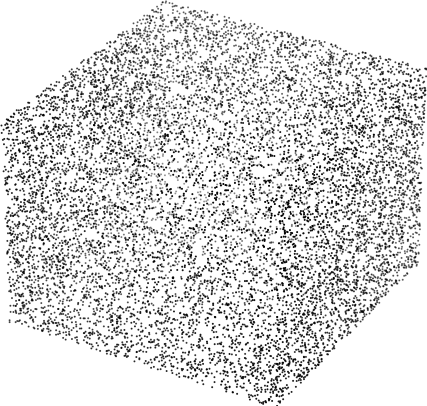}
\end{minipage}
\begin{minipage}{.49\linewidth}
\centering
\includegraphics[width=.8\linewidth]{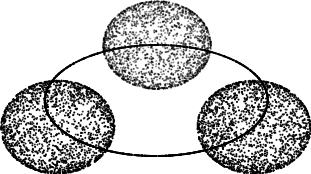}
\end{minipage}
\caption{Left: $X_1$ is a $12'000$-sample of the cube. 
Right: $X_2$ is a $8'000$-sample of the immersion of three spheres and a circle.}
\label{fig:cube1}
\end{figure}
\noindent
We represent on Figure \ref{fig:cube2} the $0$-persistence diagrams of the DTM-filtrations of their lifted measures. We choose the parameters $r = 0.05$, $\gamma = 2$, $m =0.01$ for $X_1$, and $r = 0.2$, $\gamma = 2$, $m =0.01$ for $X_2$.
Observe that the first barcode contains six long bars, corresponding to the six faces of the cube. Similarly, the second barcode contains four long bars, corresponding to the three spheres and the circle.

\begin{figure}[H]
\centering
\begin{minipage}{.49\linewidth}
\centering
\includegraphics[width=.99\linewidth]{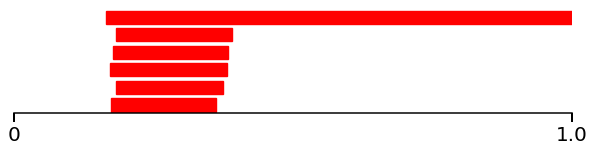}
\end{minipage}
\begin{minipage}{.49\linewidth}
\centering
\includegraphics[width=.99\linewidth]{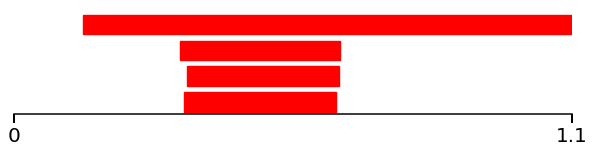}
\end{minipage}
\caption{Left: Persistence barcode of the 0-homology of the DTM-filtration of the lifted measure built from $X_1$.
Right: Same for $X_2$.}
\label{fig:cube2}
\end{figure}

\noindent
In Figure \ref{fig:cube3}, we apply the clustering procedure described in Example \ref{ex:olympics}.
For $t = 0.35$, $X_1$ is clustered into 83 connected components. We see that there are six main connected components, represented by the faces, and a few outliers.
Similarly, we chose $t=0.6$ for $X_2$, and obtained 20 connected components, four of them representing the four underlying objects.

\begin{figure}[H]
\centering
\begin{minipage}{.49\linewidth}
\centering
\includegraphics[width=.7\linewidth]{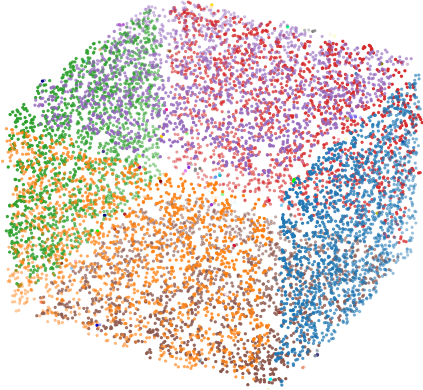}
\end{minipage}
\begin{minipage}{.49\linewidth}
\centering
\includegraphics[width=.8\linewidth]{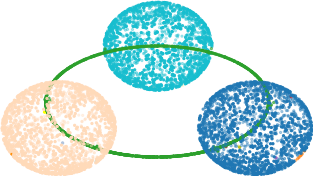}
\end{minipage}
\caption{Left: the clustering procedure applied to $X_1$ at $t = 0.35$.
Right: Same for $X_2$ at $t = 0.6$.}
\label{fig:cube3}
\end{figure}
\end{example}

\section{Conclusion}
In this paper we described a method to estimate the tangent bundle of a manifold $\MMo$ immersed in a Euclidean space, based on a sample of its image. This estimation is stable in Wasserstein distance.
Using the DTM, we are able to estimate the homotopy type of $\MMo$.
Moreover, via the DTM-filtrations, we can define a filtration of the space $\R^n\times \matrixspace{\R^n}$ whose persistence module contains information about the homology of $\MMo$.

The robust estimation of tangent bundles of manifolds opens the way to the estimation of other topological invariants than homology groups---such as characteristic classes---a problem that will be addressed in further works.

Also, as we pointed out in Subsect. \ref{Paper2:subsec:homotopy_type_estimation}, it would be interesting to understand the geometric quantities associated to the lifted manifold $\MMcheck$ (such as $\check{\rho}_\gamma$, $\check{L}_{0,\gamma}$, $\check{f}_{\text{min},\gamma}$ and $\check{f}_{\text{max},\gamma}$) as a function of those associated with the initial manifold $\MMo$ ($\rho$, $L_0$, $\fmin$ and $\fmax$).

\paragraph{Acknowledgements.}
I would like to thank Frédéric Chazal, Marc Glisse and Théo Lacombe for fruitful discussions and corrections. I also thank the anonymous reviewers for their precious corrections and suggestions.

\appendix
\section{Notations}
\label{app:notations}
We adopt the following notations:
\begin{itemize}
\itemsep.1cm
\item $n, d > 0$ are integers.
\item If $x, y \in \R$, $\mini{x,y}$ is the minimum of $x$ and $y$.
\item $I$ is the interval $[0, +\infty)$ or $[0,T]$ for $T \geq 0$.
\item $E = \R^n$ is the Euclidean space, $\matrixspace{E}$ the vector space of $n \times n$ matrices, $\Grass{d}{E}$ the Grassmannian of $d$-planes in $E$.
\item $A$ is a subset of $E$, $\med{A}$ denotes its medial axis, $\reach{A}$ its reach. For every $x\in E$, $\dist{x}{A}$ is the distance from $x$ to $A$.
\item For $x,y \in E$, $x \bot y$ denotes the orthogonality of $x$ and $y$
\item If $x,y\in E$, $x \otimes y = x \transp{y} \in \matrixspace{E}$ is the outer product, and $\outerP{x} = x \otimes x$.
\item $\eucN{\cdot}$ is the Euclidean norm on $E$ and $\eucP{\cdot}{\cdot}$ the corresponding inner product, $\frobN{\cdot}$ the Frobenius norm on $\matrixspace{E}$, $\gammaN{\cdot}$ the $\gamma$-norm on $E \times \matrixspace{E}$ (defined in Subsect. \ref{Paper2:subsec:defgammaN}).
\item $\Wasssymbol{p}$ is the $p$-Wasserstein distance between measures on $E$, $\Wasssymbol{p,\gamma}$ is the $(p,\gamma)$-Wasserstein distance between measures on $E \times \matrixspace{E}$ (defined in Subsect. \ref{Paper2:subsec:defgammaN}).
\item $\HH^d$ is the $d$-dimensional Hausdorff measure on $E$ or on a subspace $T \subset E$ (not renormalized).
\item If $\mu$ is a measure of positive finite mass, $|\mu|$ denotes its mass, $\overline{\mu} = \frac{1}{|\mu|} \mu$ is the associated probability measure, $\mucheck$ denotes the associated lifted measure (introduced in Subsect. \ref{Paper2:subsec:defgammaN}).
\item $1_A$ is the indicator function of a measurable set $A$.
\item If $T$ is a subspace of $E$, $p_T$ denotes the orthogonal projection matrix on $T$.
\item $\openball{x}{r}$ and $\closedball{x}{r}$ denote the open and closed balls of $E$, $\sphere{x}{r}$ the sphere. $\volball{d}$ and $\volsphere{d-1}$ denote $\HH^d(\openball{0}{1})$ and $\HH^{d-1}(\sphere{0}{1})$ (note that $\volsphere{d-1} = d\volball{d}$).
\item $\MMo$ is a Riemannian manifold, and $\openballM{x}{r}{\MMo}$ and $\closedballM{x}{r}{\MMo}$ denote the open and closed geodesics balls.
For $x_0, y_0 \in \MMo$, $\geoD{x_0}{y_0}{\MMo}$ denotes the geodesic distance.
\item If $T$ is a subspace of $E$, $\openballM{x}{r}{T}$ and $\closedballM{x}{r}{T}$ denote the open and closed balls of $T$ for the Euclidean distance.
\item If $f$ is a map with values in $\R$ and $t \in \R$, $f^t$ denotes the sublevel set $f^t = f^{-1}\left((-\infty, t]\right)$.
\end{itemize}

\section{Table of constants}
\label{app:constants}
In the following table, each constant is preceded by the result where it appeared first.
If a constant is defined from the others, it is indicated here.
The indices are arbitrary and only reflect the order of apparition of each result.

\renewcommand{\arraystretch}{1.5}
\begin{center}
\begin{tabular}{p{1cm}p{4.5cm}p{8cm}}
\hline
Index & Result & Constant \\
\hline
\begin{minipage}{5in} \begin{enumerate}\setcounter{enumi}{0} \item 
\label{Paper2:hyp:regularityimm} \label{Paper2:cor:DTM-filtrations:index}
\end{enumerate} \end{minipage}
&Corollary \ref{Paper2:cor:DTM-filtrations}
&$a$, ~~~$c_{\ref{Paper2:cor:DTM-filtrations:index}} = 8\mathrm{diam}(\supp{\mu}) + 5$, ~~~$c_{\ref{Paper2:cor:DTM-filtrations:index}}' = 2a^{-\frac{1}{d}}$
\\

\begin{minipage}{5in} \begin{enumerate}\setcounter{enumi}{1} \item 
\label{Paper2:hyp:curvature} 
\end{enumerate} \end{minipage}
&Hypothesis \hyperref[hyp:2]{2}
&$\rho$
\\

\begin{minipage}{5in} \begin{enumerate}\setcounter{enumi}{2} \item 
\label{Paper2:hyp:mu:index}\label{Paper2:hyp:mu} 
\end{enumerate} \end{minipage}
&Hypothesis \hyperref[hyp:3]{3}
&$L_0$, ~~~$f_{\mathrm{min}}$, ~~~$f_{\mathrm{max}}$
\\

\begin{minipage}{5in} \begin{enumerate}\setcounter{enumi}{3} \item 
\label{Paper2:hyp:normalreach:index}\label{Paper2:hyp:normalreach}
\end{enumerate} \end{minipage}
&Hypothesis \hyperref[hyp:4]{4}
&$c_{\ref{Paper2:hyp:normalreach:index}}$, ~~~$r_{\ref{Paper2:hyp:normalreach:index}}$
\\

\begin{minipage}{5in} \begin{enumerate}\setcounter{enumi}{4} \item 
\label{Paper2:lem:comparisoneucgeod2:index}
\end{enumerate} \end{minipage}
&Lemma \ref{Paper2:lem:comparisoneucgeod2}
&$c_{\ref{Paper2:lem:comparisoneucgeod2:index}}\colon t \mapsto \frac{1}{t}\left(1 - \sqrt{1 - 2t}\right)$
\\

\begin{minipage}{5in} \begin{enumerate}\setcounter{enumi}{5} \item 

\end{enumerate} \end{minipage}
&Lemma \ref{Paper2:lem:regularityexp}
&$\Jmin = (\frac{23}{24})^d$, ~~ $\Jmax = (\frac{5}{4})^d$
\\


\begin{minipage}{5in} \begin{enumerate}\setcounter{enumi}{6} \item 
\label{Paper2:hyp:muB} \label{Paper2:lem:densityg:index}
\end{enumerate} \end{minipage}
&Lemma \ref{Paper2:lem:densityg}
&$c_{\ref{Paper2:lem:densityg:index}} = 4 L_0 \Jmax + \frac{d}{2}\rho \fmax$
\\

\begin{minipage}{5in} \begin{enumerate}\setcounter{enumi}{7} \item 
\label{Paper2:hyp:muBsqrt} \label{Paper2:prop:probabilitybounds:index}
\end{enumerate} \end{minipage}
&Proposition \ref{Paper2:prop:probabilitybounds}
&$c_{\ref{Paper2:prop:probabilitybounds:index}} = c_{\ref{Paper2:lem:densityg:index}} + \fmax \Jmax  d 2^{d} \rho$
\\

\begin{minipage}{5in} \begin{enumerate}\setcounter{enumi}{8} \item 
\label{Paper2:hyp:muA:index}
\end{enumerate} \end{minipage}
&Proposition \ref{Paper2:prop:probabilitybounds}, Hypothesis \hyperref[hyp:5]{5}
&$c_{\ref{Paper2:hyp:muA:index}} = \fmin \Jmin \volball{d}$
\\

\begin{minipage}{5in} \begin{enumerate}\setcounter{enumi}{9} \item 
\label{Paper2:hyp:muB:index}
\end{enumerate} \end{minipage}
&Proposition \ref{Paper2:prop:probabilitybounds}, Hypothesis \hyperref[hyp:6]{6}
&$c_{\ref{Paper2:hyp:muB:index}} = d 2^{d}\fmax \Jmax \volball{d}$
\\

\begin{minipage}{5in} \begin{enumerate}\setcounter{enumi}{10} \item 
\label{Paper2:hyp:muBsqrt:index}
\end{enumerate} \end{minipage}
&Proposition \ref{Paper2:prop:probabilityboundssqrt}, Hypothesis \hyperref[hyp:7]{7}
&$c_{\ref{Paper2:hyp:muBsqrt:index}}= \frac{\fmax \Jmax}{\fmin \Jmin }\left( \frac{\rho}{\sqrt{4 - \sqrt{13}}}\right)^d d 2^{2d} \sqrt{3}$
\\

\begin{minipage}{5in} \begin{enumerate}\setcounter{enumi}{11} \item 
\end{enumerate} \end{minipage}
&Subsect. \ref{Paper2:subsec:quantif_normal_reach}
&$\Delta$, ~~~$\Delta_0$, ~~~$\Theta$
\\


\begin{minipage}{5in} \begin{enumerate}\setcounter{enumi}{12} \item 
\label{prop:quantif_normal_reach:index} 
\end{enumerate} \end{minipage}
&Proposition \ref{prop:quantif_normal_reach}
&$r_{\ref{prop:quantif_normal_reach:index}} = \mini{\frac{\sin(\Theta)}{8\rho}, \frac{\sin(\Theta)^2}{4}, \frac{\Delta_0\sin(\Theta)}{4},\Delta}$ \\ &&$c_{\ref{prop:quantif_normal_reach:index}} =   \left(\frac{2}{\sin(\theta)}\right)^\alpha V_{\alpha}\fmax\HH^{d'}_{\MMo}(\NNo)$
\\
 
\begin{minipage}{5in} \begin{enumerate}\setcounter{enumi}{13} \item 
\label{Paper2:lem:loccovconsistency:index}
\end{enumerate} \end{minipage}
&Proposition \ref{Paper2:prop:consistency}
&$c_{\ref{Paper2:lem:loccovconsistency:index}} = 6 \rho + \frac{1}{\fmin \Jmin}\left( 4c_{\ref{Paper2:lem:densityg:index}} + \fmax 2^d d \rho + c_{\ref{Paper2:prop:probabilitybounds:index}}\right)$,
\\

\begin{minipage}{5in} \begin{enumerate}\setcounter{enumi}{14} \item 
\label{Paper2:lem:Wstabxy:index}
\end{enumerate} \end{minipage}
&Lemma \ref{Paper2:lem:Wstabxy}
&$c_{\mathrm{\ref{Paper2:lem:Wstabxy:index}}} = 2\left(1 + 4\frac{5^{d-1}}{3^d}\right)\frac{c_{\ref{Paper2:hyp:muB:index}}}{c_{\ref{Paper2:hyp:muA:index}}}$
\\

\begin{minipage}{5in} \begin{enumerate}\setcounter{enumi}{15} \item 
\label{Paper2:lem:Wstabxysqrt:index}
\end{enumerate} \end{minipage}
&Lemma \ref{Paper2:lem:Wstabxysqrt}
&$c_{\mathrm{\ref{Paper2:lem:Wstabxysqrt:index}}} = \left(2 + \frac{2^\frac{5}{2} 5^{d-\frac{1}{2}} }{3^d} \right)\frac{c_{\ref{Paper2:hyp:muBsqrt:index}}}{c_{\ref{Paper2:hyp:muA:index}}}$
\\

\begin{minipage}{5in} \begin{enumerate}\setcounter{enumi}{16} \item 
\label{Paper2:lem:Wstabxx:index}
\end{enumerate} \end{minipage}
&Lemma \ref{Paper2:lem:Wstabxx}
&$c_{\ref{Paper2:lem:Wstabxx:index}} = 
\frac{2^{d-1}}{c_{\ref{Paper2:hyp:muA:index}}}
+ 2\frac{12 \cdot 5^{d-1}c_{\ref{Paper2:hyp:muB:index}}+1}{3^d c_{\ref{Paper2:hyp:muA:index}}}
+ 2^{d+3} \frac{(\frac{3}{2})^{d-1}c_{\ref{Paper2:hyp:muB:index}}+1}{c_{\ref{Paper2:hyp:muA:index}}}$
\\

\begin{minipage}{5in} \begin{enumerate}\setcounter{enumi}{17} \item 
\label{Paper2:lem:Wstabxxsqrt:index}
\end{enumerate} \end{minipage}
&Lemma \ref{Paper2:lem:Wstabxxsqrt}
&$c_{\ref{Paper2:lem:Wstabxxsqrt:index}} = 
\frac{2^{d-2}}{c_{\ref{Paper2:hyp:muA:index}}}
+ \frac{4 \cdot 3^\frac{1}{2} 5^{d-\frac{1}{2}} c_{\ref{Paper2:hyp:muBsqrt:index}}+4^{d-\frac{1}{2}} }{3^d c_{\ref{Paper2:hyp:muA:index}}}
+2\cdot 4^d\frac{2 c_{\ref{Paper2:hyp:muBsqrt:index}} (\frac{3}{2})^{d-\frac{1}{2}}+1 }{3^d c_{\ref{Paper2:hyp:muA:index}}}$
\\

\begin{minipage}{5in} \begin{enumerate}\setcounter{enumi}{18} \item 
\label{Paper2:prop:stability:index}
\end{enumerate} \end{minipage}
&Proposition \ref{Paper2:prop:stability}
&$c_{\ref{Paper2:prop:stability:index}} = 4(1 + c_{\ref{Paper2:lem:intlocwasserstein:index}})$, ~~ $c_{\ref{Paper2:prop:stability:index}} ' =4c_{\ref{Paper2:lem:Wstabxxsqrt:index}}$
\\

\begin{minipage}{5in} \begin{enumerate}\setcounter{enumi}{19} \item 
\label{Paper2:lem:intlocwasserstein:index}
\end{enumerate} \end{minipage}
&Lemma \ref{Paper2:lem:intlocwasserstein}
&$c_{\ref{Paper2:lem:intlocwasserstein:index}} = 3 +c_{\ref{Paper2:lem:Wstabxy:index}} + c_{\ref{Paper2:lem:Wstabxysqrt:index}} + c_{\ref{Paper2:lem:Wstabxx:index}}$
\\


\begin{minipage}{5in} \begin{enumerate}\setcounter{enumi}{20} \item 
\label{Paper2:th:estimation:index}
\end{enumerate} \end{minipage}
&Theorem \ref{Paper2:th:estimation}
&$c_{\ref{Paper2:th:estimation:index}} = 2 + \frac{1}{2}c_{\ref{Paper2:prop:stability:index}}'
= 2(1 + c_{\ref{Paper2:lem:Wstabxxsqrt:index}})$
\\

\begin{minipage}{5in} \begin{enumerate}\setcounter{enumi}{21} \item 
\label{Paper2:cor:approximation:index}
\end{enumerate} \end{minipage}
&Corollary \ref{Paper2:cor:approximation}
&$c_{\ref{Paper2:cor:approximation:index}} = c_{\ref{Paper2:th:estimation:index}}(c_{\ref{Paper2:hyp:normalreach:index}})^\frac{1}{p} + c_{\ref{Paper2:prop:stability:index}}  + c_{\ref{Paper2:lem:loccovconsistency:index}}$
\\

\begin{minipage}{5in} \begin{enumerate}\setcounter{enumi}{22} \item 
\label{subsec:index}
\end{enumerate} \end{minipage}
&Subsect. \ref{Paper2:subsec:homotopy_type_estimation}
&$\check{\rho}_\gamma$, ~~~$\check{f}_{\text{min},\gamma}$, ~~~$\check c_{\ref{subsec:index}, \gamma} = \check{f}_{\text{min},\gamma} \Jmin \volball{d}$
\\

\begin{minipage}{5in} \begin{enumerate}\setcounter{enumi}{23} \item 
\label{Paper2:cor:DTMfiltrcheck:index}
\end{enumerate} \end{minipage}
&Corollary \ref{Paper2:cor:DTMfiltrcheck}
&$\check c_{\ref{Paper2:cor:DTM-filtrations:index}, \gamma} = 8\mathrm{diam}(\MM) + 8\gamma + 5$, ~~~$\check c_{\ref{Paper2:cor:DTM-filtrations:index}, \gamma}' = 2\left(\check c_{\ref{subsec:index}, \gamma}\right)^{-\frac{1}{d}}$
\end{tabular}
\end{center}

\section{Supplementary material for Sect. \ref{Paper2:sec:intro}}
\label{sec:appendix_intro}

\begin{proof}[Proof of Lemma \ref{Paper2:lem:distancetocenter}]
The proof is based on the following observations.
We can use the triangle inequality, then the Pythagorean Theorem with $\eucP{v}{y-x} = 0$ and Lemma \ref{Paper2:lem:Federergeod} Point \ref{lem:Federergeod:point1} to get
\begin{align*}
\eucN{\gamma(t) - x} 
&\leq \eucN{(y+tv) - x} + \eucN{\gamma(t) - (y+tv)} \\
&\leq \sqrt{ \eucN{tv}^2+\eucN{y-x}^2} + \frac{\rho}{2}t^2\\
&= \sqrt{t^2+l^2} + \frac{\rho}{2}t^2.
\end{align*}
For any $r \leq \frac{1}{\rho}$, consider the equation
\begin{equation}
\label{eq:plus}
\sqrt{t^2+l^2} + \frac{\rho}{2} t^2 = r.
\end{equation}
By squaring this equality, we get $\left(\frac{\rho}{2}\right)^2t^4 - (1+\rho r)t^2+(r^2-t^2)=0$.
By considering the polynomial $T \mapsto\left(\frac{\rho}{2}\right)^2 T^2 - (1+\rho r) T +(r^2-t^2)$, whose discriminant is $1 + 2\rho r + (\rho t)^2>0$, we see that the solutions of Equation \eqref{eq:plus} are
$$T_1 = \frac{\sqrt{2}}{\rho}\sqrt{1+\rho r - \sqrt{ 1 + 2\rho r + \rho^2 l^2}} ~~~~\text{and}~~~~ T_1' = \frac{\sqrt{2}}{\rho}\sqrt{1+\rho r + \sqrt{ 1 + 2\rho r + \rho^2 l^2}}.$$
Following the same ideas, one obtains
\begin{align*}
\eucN{\gamma(t) - x} 
&\geq\sqrt{t^2+l^2} - \frac{\rho}{2}t^2.
\end{align*}
Moreover, the equation 
\begin{equation}
\label{eq:moins}
\sqrt{t^2+l^2} - \frac{\rho}{2} t^2 = r
\end{equation}
admits the following roots:
$$T_2 = \frac{\sqrt{2}}{\rho}\sqrt{1-\rho r - \sqrt{ 1 - 2\rho r + \rho^2 l^2}} ~~~~\text{and}~~~~ T_2' = \frac{\sqrt{2}}{\rho}\sqrt{1-\rho r + \sqrt{ 1 - 2\rho r + \rho^2 l^2}}.$$
We now prove the five points successively.

\medbreak \noindent \emph{Point \ref{lem:distancetocenter:point1}.}
Observe that $\dot \phi(t) = 2\eucP{\dot \gamma(t)}{\gamma(t) - x}$, and that
\[\ddot \phi(t) = 2\eucP{\dot \gamma(t)}{\dot \gamma(t)} + 2\eucP{\ddot \gamma(t)}{\gamma(t) - x}.\]
By Cauchy-Schwarz inequality, $\eucP{\ddot \gamma(t)}{\gamma(t) - x} \geq -\eucN{\ddot \gamma(t)}\eucN{\gamma(t) - x}$.
Note that $\eucP{\dot \gamma(t)}{\dot \gamma(t)} = 1$ since $\gamma$ is parametrized by arc-length, and that $\eucN{\ddot \gamma(t)} \leq \rho$ by Equation \eqref{eq:supnormiscurvature}. 
Hence we get 
\begin{equation}
\ddot \phi(t) \geq 2(1-\rho\eucN{\gamma(t) - x}).
\label{Paper2:eq:ddotphi}
\end{equation}
Consider Equation \eqref{eq:plus} with $r = \frac{1}{\rho}$. We see that $\eucN{\gamma(t) - x} \leq \frac{1}{\rho}$ when $t$ is lower than
$$
T_1 = \frac{\sqrt{2}}{\rho} \sqrt{2-\sqrt{3+\rho^2 l^2}}.
$$
In this case, $\ddot \phi(t) \geq 0$ according to Equation \eqref{Paper2:eq:ddotphi}.
Since $\dot \phi(0) = 0$, we deduce that $\phi$ is increasing on $[0,T_1]$.

\medbreak \noindent \emph{Point \ref{lem:distancetocenter:point2}.}
As we have seen with Equation \eqref{eq:moins}, we have $\eucN{\gamma(t)-x} > r$ when $t \in (T_2, T_2')$.
In order to give an upper bound on $T_2$, we use the inequality $\sqrt{b} - \sqrt{a} = \frac{1}{\sqrt{a}+\sqrt{b}}(b-a) \leq \frac{1}{\sqrt{b}}(b-a)$, where $a<b$, to get 
\begin{align*}
1-\rho r - \sqrt{ 1 - 2\rho r + \rho^2 l^2 } 
&\leq \frac{1}{1-\rho r} \rho^2(r^2 - l^2) 
\end{align*}
and we conclude that $T_2 \leq \frac{\sqrt{2}}{\sqrt{1-\rho r}} \sqrt{r^2 - l^2}$. Since $r \leq \frac{1}{2\rho}$, we obtain $T_2 \leq 2 \sqrt{r^2 - l^2}$.

\medbreak \noindent \emph{Point \ref{lem:distancetocenter:point4}.}
When $l = 0$, algebraic manipulations show that $T_2 = \frac{1}{\rho}(1-\sqrt{1-2\rho r})$ and $T_2' = \frac{1}{\rho}(1+\sqrt{1-2\rho r})$.

\medbreak \noindent \emph{Point \ref{lem:distancetocenter:point5}.}
Consider the map $\phi\colon t \mapsto \eucN{\gamma(t) - x}^2$. 
By definition of $b$, for all $t \in (0,b)$, we have $\eucN{\gamma(t) - x} \leq r$. Hence Equation \eqref{Paper2:eq:ddotphi} gives $\ddot \phi(t) \geq 2(1-\rho r)$.
It follows that $\dot \phi(t) \geq 2(1-\rho r) t$, and that
\begin{align*}
\phi(b)-\phi(a)
= \int_{a}^{b} \dot \phi(t) \dd t
&\geq \int_{a}^{b} 2(1-\rho r) t \dd t\\
&= (1-\rho r)(b^2-a^2).
\end{align*}
Note that $r^2 = \phi(b)$. Besides,  $s^2 = \phi(a)$ or $s^2 < \phi(a)$, depending on whether $s \geq l$ or $s < l$. In both cases, we have $r^2 - s^2 \geq \phi(b)-\phi(a)$, and we deduce that 
\begin{equation*}
r^2 - s^2 \geq (1-\rho r)(b^2-a^2).
\end{equation*}
Writing $r^2-s^2 = (r+s)(r-s)$ and $b^2-a^2 = \big(b+a\big)\big(b-a\big)$ leads to 
\begin{equation}
\label{eq:upperboundab}
b-a \leq \frac{r+s}{b+a}\frac{1}{1-\rho r}(r-s).
\end{equation}

Now, let us give a lower bound on $b$. According to Equation \eqref{eq:plus}, $b$ is lower bounded by $T_1 = \frac{\sqrt{2}}{\rho}\sqrt{1+\rho r - \sqrt{ 1 + 2\rho r + \rho^2 l^2}}$.
Using the inequality $\sqrt{b} - \sqrt{a} = \frac{1}{\sqrt{b}+\sqrt{a}}(b-a) \geq \frac{1}{2\sqrt{b}}(b-a)$, where $a<b$, we get
\begin{align*}
1+\rho r - \sqrt{ 1 + 2\rho r + \rho^2 l^2 }
\geq \frac{1}{2(1+\rho r)} \rho^2(r^2 - l^2),
\end{align*}
and we conclude that $b \geq (1+\rho r)^{-\frac{1}{2}}\sqrt{r^2 - s^2}$.
Injecting $b+a \geq b \geq (1+\rho r)^{-\frac{1}{2}}\sqrt{r^2 - s^2}$ in Equation \eqref{eq:upperboundab} yields
\begin{equation*}
b(v) - a(v) \leq \frac{(1+\rho r)^{\frac{1}{2}}}{1-\rho r} \sqrt{r^2-s^2}.
\end{equation*}
Under the hypothesis $r \leq \frac{1}{2\rho}$, we get $b-a \leq \sqrt{6}\sqrt{r^2-s^2}$.
 
\medbreak \noindent \emph{Point \ref{lem:distancetocenter:point6}.}
When $l=0$, we have $b(v) + a(v) \geq r + s$. Hence Equation \eqref{eq:upperboundab} yields $b(v)-a(v) \leq \frac{1}{1-\rho r}(r-s).$
Using $r \leq \frac{1}{2\rho}$, we obtain $b(v)-a(v) \leq 2(r-s).$
\end{proof}

\begin{proof}[of Corollary \ref{Paper2:cor:DTM-filtrations}]
We shall first study an intermediate quantity.
Let $\mu$ be a probability measure on $E=\R^n$, $m\in(0,1)$, and $\DTM{\mu,m}$ the corresponding DTM.
Consider the quantity $c(\mu, m)$ is defined as
\[c(\mu, m) = \sup_{x\in \supp{\mu}} \DTM{\mu, m}(x).\]
Suppose that $\mu$ satisfies the following for $r < \left(\frac{m}{a}\right)^\frac{1}{d}$: $\forall x \in \supp{\mu}, \mu(\openball{x}{r}) \geq a r^d$.
Let us show that $c(\mu, m) \leq C m^\frac{1}{d}$
with $C = a^{-\frac{1}{d}}$.
By definition, 
\begin{align*}
\delta_{\mu, t}(x) = \inf \left\{r\geq0 \mid\mu\left(\closedball{x}{r}\right)>t\right\} ~~~~\text{ and }~~~~ \DTM{\mu,m}^2(x) = \frac{1}{m}\int_0^{m} \delta_{\mu,t}^2(x)\dd t.
\end{align*}
Using the assumption $\mu(\openball{x}{r}) \geq a r^d$ for all $x \in\supp{\mu}$, we get $\delta_{\mu, t}(x) \leq (\frac{t}{a})^\frac{1}{d}$, and a simple computation yields 
\begin{align*}
\DTM{\mu,m}^2(x) 
&\leq \frac{d}{d+2} \left(\frac{t}{a}\right)^{\frac{2}{d}} 
\leq  \left(\frac{t}{a}\right)^{\frac{2}{d}},
\end{align*}
which yields the result.

We can now prove the corollary.
Let $\pi$ be an optimal transport plan for $w=\Wassdeux{\mu}{\nu}$.
Denote $\alpha = w^\frac{1}{2}$ and $D = \mathrm{diam}(\supp{\mu})$.
Define $\pi'$ to be $\pi$ restricted to the set $\{x,y \in E \mid \eucN{x-y}<\alpha\}$. We denote its marginals $\mu'$ and $\nu'$.
By Markov inequality, $1-|\pi'| \leq \frac{w^2}{\alpha^2} = w$, where we recall that $|\pi'|$ denotes the total mass of $\pi'$.
Consider the probability measures $\overline{\mu'}=\frac{1}{|\mu'|}\mu'$ and $\overline{\nu'}=\frac{1}{|\nu'|}\nu'$.
Let us show that we have
\begin{equation}
\label{Paper2:eq:proof_DTMfiltr_without_intermediate}
\Wassdeux{\mu}{\overline{\mu'}} = 2D \alpha,
~~~~~~~~~
\Wassdeux{\overline{\mu'}}{\overline{\nu'}} \leq \alpha
~~~~~~~~~\text{and}~~~~~~~~~
\Wassdeux{\nu}{\overline{\nu'}} \leq 2(1+D) \alpha.
\end{equation}
The first inequality is an application of Lemma \ref{Paper2:lem:transportsubmeasure}: 
\begin{align*}
\Wassdeux{\mu}{\overline{\mu'}} \leq  2(1-|\mu'|)^\frac{1}{2} D =  2(1-|\pi'|)^\frac{1}{2} D \leq 2 w^\frac{1}{2} D.
\end{align*}
To obtain the second inequality, we write
\begin{align*}
\Wasssymbol{2}^2(\overline{\mu'}, \overline{\nu'})
= \int \eucN{x-y}^2 \dd  \overline{ \pi'}(x,y)
&= \int \eucN{x-y}  \frac{\dd \pi'(x,y)}{|\pi'|} \\
&\leq \frac{1}{|\pi'|}\int \eucN{x-y}\dd \pi(x,y).
\end{align*}
Hence Jensen inequality leads to $\Wassdeux{\overline{\mu'}}{\overline{\nu'}}
\leq \frac{w}{|\pi'|^\frac{1}{2}}$. 
Since $1-|\pi'| \leq w$, we have $\frac{w}{|\pi'|^\frac{1}{2}} \leq \frac{w}{1-w}$, 
and the assumption $w \leq \frac{1}{4}$ yields $\frac{w}{1-w} \leq \alpha$. This proves the second point.
Finally, we obtain the third inequality by applying the triangle inequality:
\begin{align*}
\Wassdeux{\nu}{\overline{\nu'}} \leq \Wassdeux{\nu}{\mu} + \Wassdeux{\mu}{ \overline{\mu'}} + \Wassdeux{\overline{\mu'}}{\overline{\nu'}}.
\end{align*}

Next, let us deduce that
\begin{align}
&c(\overline{\mu'},m) \leq c(\mu) + m^{-\frac{1}{2}} 2D \alpha \nonumber \\
\text{and}~~~~~~~&c(\overline{\nu'},m) \leq c(\mu,m) + \left(m^{-\frac{1}{2}}+m^{-\frac{1}{2}}2D+1\right)\alpha.
\label{Paper2:eq:proof_DTMfiltr_without_intermediate2}
\end{align}
The first inequality follows from the stability of the DTM (see Equation \eqref{eq:stab_DTM}):
\begin{align*}
c(\overline{\mu'},m) 
= \sup_{x \in \supp{\overline{\mu'}}} \DTM{\overline{\mu'}}(x)
\leq \sup_{x \in \supp{\overline{\mu'}}} \DTM{\mu}(x) + m^{-\frac{1}{2}}\Wassdeux{\overline{\mu'}}{\mu},
\end{align*}
and we conclude with $\Wassdeux{\mu}{\overline{\mu'}} = 2D \alpha$.
In order to prove the second inequality, we also use Equation \eqref{eq:stab_DTM}:
\begin{align*}
c(\overline{\nu'},m) 
&= \sup_{x \in \supp{\overline{\nu'}}} \DTM{\overline{\nu'}}(x) 
\leq \sup_{x \in \supp{\overline{\nu'}}} \DTM{\overline{\mu'}}(x) + m^{-\frac{1}{2}} \Wassdeux{\overline{\mu'}}{\overline{\nu'}}.
\end{align*}
Since $\pi'$ has support included in $\{x,y \in E \mid \eucN{x-y}<\alpha\}$, we use the fact that the DTM is 1-Lipschitz to obtain
\begin{align*}
\sup_{x \in \supp{\overline{\nu'}}} \DTM{\overline{\mu'}}(x)
\leq \sup_{x \in \supp{\overline{\mu'}}} \DTM{\overline{\mu'}}(x) + \alpha
= c(\mu',m) + \alpha
\end{align*}
and we deduce 
\begin{align*}
c(\overline{\nu'},m) 
&\leq c(\mu',m) + \alpha + m^{-\frac{1}{2}} \Wassdeux{\overline{\mu'}}{ \overline{\nu'}}\\
&\leq c(\mu,m) + (m^{-\frac{1}{2}}+m^{-\frac{1}{2}}2D+1)\alpha. 
\end{align*}

We can now conclude with Theorem \ref{th:DTM-filtr-stability}.
In our context, it reads
\begin{align*}
d_i(\DTMF{\mu},\DTMF{\nu})
&\leq  m^{-\frac{1}{2}}\Wassdeux{\mu}{\overline{\mu'}} + m^{-\frac{1}{2}} \Wassdeux{\overline{\mu'}}{\overline{\nu'}} + m^{-1} \Wassdeux{\nu}{\overline{\nu'}} + c(\overline{\mu'},m) + c(\overline{\nu'},m)  \\
&\leq \big(m^{-\frac{1}{2}}(4D + 1)+4(D+1)\big) \alpha + 2 c(\mu,m),
\end{align*}
where we used Equations \eqref{Paper2:eq:proof_DTMfiltr_without_intermediate} and \eqref{Paper2:eq:proof_DTMfiltr_without_intermediate2} on the last line.
Since $m \leq 1$, we can simplify this expression into
\begin{align*}
\idist{\DTMF{\mu}}{\DTMF{\nu}}
&\leq m^{-\frac{1}{2}}(8D+5) \alpha + 2 c(\mu,m).
\end{align*}
We conclude the proof by using the inequality $c(\mu,m) \leq a^{-\frac{1}{d}} m^\frac{1}{d}$ shown at the beginning of the proof.
\end{proof}

\bibliographystyle{unsrt}      
\bibliography{biblio_DCG}   

\begin{thebibliography}{10}

\bibitem{Hatcher_Algebraic}
Allen Hatcher.
\newblock {\em Algebraic Topology}.
\newblock Cambridge University Press, 2002.

\bibitem{niyogi2008finding}
Partha Niyogi, Stephen Smale, and Shmuel Weinberger.
\newblock Finding the homology of submanifolds with high confidence from random
  samples.
\newblock {\em Discrete \& Computational Geometry}, 39(1-3):419--441, 2008.

\bibitem{chazal2008smooth}
Fr{\'e}d{\'e}ric Chazal and Andr{\'e} Lieutier.
\newblock Smooth manifold reconstruction from noisy and non-uniform
  approximation with guarantees.
\newblock {\em Computational Geometry}, 40(2):156--170, 2008.

\bibitem{kim2019homotopy}
Jisu Kim, Jaehyeok Shin, Fr{\'e}d{\'e}ric Chazal, Alessandro Rinaldo, and Larry
  Wasserman.
\newblock Homotopy reconstruction via the cech complex and the vietoris-rips
  complex.
\newblock {\em arXiv preprint arXiv:1903.06955}, 2019.

\bibitem{chazal2009sampling}
Fr{\'e}d{\'e}ric Chazal, David Cohen-Steiner, and Andr{\'e} Lieutier.
\newblock A sampling theory for compact sets in {E}uclidean space.
\newblock {\em Discrete \& Computational Geometry}, 41(3):461--479, 2009.

\bibitem{attali2013vietoris}
Dominique Attali, Andr{\'e} Lieutier, and David Salinas.
\newblock Vietoris--rips complexes also provide topologically correct
  reconstructions of sampled shapes.
\newblock {\em Computational Geometry}, 46(4):448--465, 2013.

\bibitem{kalisnik2020finding}
Sara Kalisnik and Davorin Lesnik.
\newblock Finding the homology of manifolds using ellipsoids.
\newblock {\em arXiv preprint arXiv:2006.09194}, 2020.

\bibitem{edelsbrunner1993union}
Herbert Edelsbrunner.
\newblock The union of balls and its dual shape.
\newblock In {\em Proceedings of the ninth annual symposium on Computational
  geometry}, pages 218--231, 1993.

\bibitem{edelsbrunner1994three}
Herbert Edelsbrunner and Ernst~P M{\"u}cke.
\newblock Three-dimensional alpha shapes.
\newblock {\em ACM Transactions on Graphics (TOG)}, 13(1):43--72, 1994.

\bibitem{de2004topological}
Vin De~Silva and Gunnar~E Carlsson.
\newblock Topological estimation using witness complexes.
\newblock In {\em PBG}, pages 157--166, 2004.

\bibitem{attali2007weak}
Dominique Attali, Herbert Edelsbrunner, and Yuriy Mileyko.
\newblock Weak witnesses for delaunay triangulations of submanifolds.
\newblock In {\em Proceedings of the 2007 ACM symposium on Solid and physical
  modeling}, pages 143--150, 2007.

\bibitem{boissonnat2014manifold}
Jean-Daniel Boissonnat and Arijit Ghosh.
\newblock Manifold reconstruction using tangential {D}elaunay complexes.
\newblock {\em Discrete \& Computational Geometry}, 51(1):221--267, 2014.

\bibitem{chazal2007stability}
Fr{\'e}d{\'e}ric Chazal and Andr{\'e} Lieutier.
\newblock Stability and computation of topological invariants of solids in
  $\mathbb{R}^n$.
\newblock {\em Discrete \& Computational Geometry}, 37(4):601--617, 2007.

\bibitem{Chazal_Towards}
Fr{\'e}d{\'e}ric Chazal and Steve~Yann Oudot.
\newblock Towards persistence-based reconstruction in {E}uclidean spaces.
\newblock In {\em Proceedings of the twenty-fourth annual symposium on
  Computational geometry}, SCG '08, pages 232--241, New York, NY, USA, 2008.
  ACM.

\bibitem{fasy2018reconstruction}
Brittany~Terese Fasy, Rafal Komendarczyk, Sushovan Majhi, and Carola Wenk.
\newblock On the reconstruction of geodesic subspaces of $\mathbb{R}^n$.
\newblock {\em arXiv preprint arXiv:1810.10144}, 2018.

\bibitem{edelsbrunner2010computational}
Herbert Edelsbrunner and John Harer.
\newblock {\em Computational topology: an introduction}.
\newblock American Mathematical Soc., 2010.

\bibitem{boissonnat2018geometric}
Jean-Daniel Boissonnat, Fr{\'e}d{\'e}ric Chazal, and Mariette Yvinec.
\newblock {\em Geometric and topological inference}, volume~57.
\newblock Cambridge University Press, 2018.

\bibitem{Chazal_Persistencemodules}
Fr{\'e}d{\'e}ric Chazal, Vin de~Silva, Marc Glisse, and Steve Oudot.
\newblock {\em The Structure and Stability of Persistence Modules}.
\newblock SpringerBriefs in Mathematics, 2016.

\bibitem{bubenik2010statistical}
Peter Bubenik, Gunnar Carlsson, Peter~T Kim, and Zhi-Ming Luo.
\newblock Statistical topology via morse theory persistence and nonparametric
  estimation.
\newblock {\em Algebraic methods in statistics and probability II}, 516:75--92,
  2010.

\bibitem{fasy2014confidence}
Brittany~Terese Fasy, Fabrizio Lecci, Alessandro Rinaldo, Larry Wasserman,
  Sivaraman Balakrishnan, and Aarti Singh.
\newblock Confidence sets for persistence diagrams.
\newblock {\em The Annals of Statistics}, 42(6):2301--2339, 2014.

\bibitem{turner2014frechet}
Katharine Turner, Yuriy Mileyko, Sayan Mukherjee, and John Harer.
\newblock Fr{\'e}chet means for distributions of persistence diagrams.
\newblock {\em Discrete \& Computational Geometry}, 52(1):44--70, 2014.

\bibitem{Chazal_Geometricinference}
F.~Chazal, D.~Cohen-Steiner, and Q.~M\'erigot.
\newblock Geometric inference for probability measures.
\newblock {\em Journal on Found. of Comp. Mathematics}, 11(6):733--751, 2011.

\bibitem{pwz-gikde-15}
J.~Phillips, B.~Wang, and Y~Zheng.
\newblock Geometric inference on kernel density estimates.
\newblock In {\em Proc.\ 31st Annu.\ Sympos.\ Comput.\ Geom (SoCG 2015)}, pages
  857--871, 2015.

\bibitem{guibas2013witnessed}
Leonidas Guibas, Dmitriy Morozov, and Quentin M{\'e}rigot.
\newblock Witnessed k-distance.
\newblock {\em Discrete \& Computational Geometry}, 49(1):22--45, 2013.

\bibitem{Buchet_Efficient}
Micka{\"e}l Buchet, Fr{\'e}d{\'e}ric Chazal, Steve~Y Oudot, and Donald~R
  Sheehy.
\newblock Efficient and robust persistent homology for measures.
\newblock {\em Computational Geometry}, 58:70--96, 2016.

\bibitem{anai2020dtm}
Hirokazu Anai, Fr{\'e}d{\'e}ric Chazal, Marc Glisse, Yuichi Ike, Hiroya
  Inakoshi, Rapha{\"e}l Tinarrage, and Yuhei Umeda.
\newblock {D}{T}{M}-based filtrations.
\newblock In {\em Topological Data Analysis}, pages 33--66. Springer, 2020.

\bibitem{wang2011spectral}
Yong Wang, Yuan Jiang, Yi~Wu, and Zhi-Hua Zhou.
\newblock Spectral clustering on multiple manifolds.
\newblock {\em IEEE Transactions on Neural Networks}, 22(7):1149--1161, 2011.

\bibitem{gong2012robust}
Dian Gong, Xuemei Zhao, and G{\'e}rard Medioni.
\newblock Robust multiple manifolds structure learning.
\newblock {\em arXiv preprint arXiv:1206.4624}, 2012.

\bibitem{arias2017spectral}
Ery Arias-Castro, Gilad Lerman, and Teng Zhang.
\newblock Spectral clustering based on local {P}{C}{A}.
\newblock {\em The Journal of Machine Learning Research}, 18(1):253--309, 2017.

\bibitem{cheng2016tangent}
Siu-Wing Cheng and Man-Kwun Chiu.
\newblock Tangent estimation from point samples.
\newblock {\em Discrete \& Computational Geometry}, 56(3):505--557, 2016.

\bibitem{aamari:hal-01521955}
Eddie Aamari, Jisu Kim, Fr{\'e}d{\'e}ric Chazal, Bertrand Michel, Alessandro
  Rinaldo, and Larry Wasserman.
\newblock {Estimating the Reach of a Manifold}.
\newblock {\em {Electronic journal of statistics}}, 2019.

\bibitem{aamari2019nonasymptotic}
Eddie Aamari and Cl{\'e}ment Levrard.
\newblock Nonasymptotic rates for manifold, tangent space and curvature
  estimation.
\newblock {\em The Annals of Statistics}, 47(1):177--204, 2019.

\bibitem{singer2012vector}
Amit Singer and H-T Wu.
\newblock Vector diffusion maps and the connection laplacian.
\newblock {\em Communications on pure and applied mathematics},
  65(8):1067--1144, 2012.

\bibitem{kim2016minimax}
Jisu Kim, Alessandro Rinaldo, and Larry Wasserman.
\newblock Minimax rates for estimating the dimension of a manifold.
\newblock {\em arXiv preprint arXiv:1605.01011}, 2016.

\bibitem{koltchinskii2000empirical}
Vladimir~I Koltchinskii.
\newblock Empirical geometry of multivariate data: a deconvolution approach.
\newblock {\em Annals of statistics}, pages 591--629, 2000.

\bibitem{little2009estimation}
Anna~V Little, Jason Lee, Yoon-Mo Jung, and Mauro Maggioni.
\newblock Estimation of intrinsic dimensionality of samples from noisy
  low-dimensional manifolds in high dimensions with multiscale {S}{V}{D}.
\newblock In {\em 2009 IEEE/SP 15th Workshop on Statistical Signal Processing},
  pages 85--88. IEEE, 2009.

\bibitem{mordohai2010dimensionality}
Philippos Mordohai and G{\'e}rard Medioni.
\newblock Dimensionality estimation, manifold learning and function
  approximation using tensor voting.
\newblock {\em Journal of Machine Learning Research}, 11(1), 2010.

\bibitem{martinez2018shape}
Diego H~D{\'\i}az Mart{\'\i}nez, Facundo M{\'e}moli, and Washington Mio.
\newblock The shape of data and probability measures.
\newblock {\em Applied and Computational Harmonic Analysis}, 2018.

\bibitem{pmlr-v97-memoli19a}
Facundo Memoli, Zane Smith, and Zhengchao Wan.
\newblock The {W}asserstein transform.
\newblock In Kamalika Chaudhuri and Ruslan Salakhutdinov, editors, {\em
  Proceedings of the 36th International Conference on Machine Learning},
  volume~97 of {\em Proceedings of Machine Learning Research}, pages
  4496--4504, Long Beach, California, USA, 09--15 Jun 2019. PMLR.

\bibitem{buet2017varifold}
Blanche Buet, Gian~Paolo Leonardi, and Simon Masnou.
\newblock A varifold approach to surface approximation.
\newblock {\em Archive for Rational Mechanics and Analysis}, 226(2):639--694,
  2017.

\bibitem{buet2019weak}
Blanche Buet, Gian~Paolo Leonardi, and Simon Masnou.
\newblock Weak and approximate curvatures of a measure: a varifold perspective.
\newblock {\em arXiv preprint arXiv:1904.05930}, 2019.

\bibitem{charon2013varifold}
Nicolas Charon and Alain Trouv{\'e}.
\newblock The varifold representation of nonoriented shapes for diffeomorphic
  registration.
\newblock {\em SIAM Journal on Imaging Sciences}, 6(4):2547--2580, 2013.

\bibitem{federer1959curvature}
Herbert Federer.
\newblock Curvature measures.
\newblock {\em Transactions of the American Mathematical Society},
  93(3):418--491, 1959.

\bibitem{do1992riemannian}
M.P. do~Carmo.
\newblock {\em Riemannian Geometry}.
\newblock Mathematics (Boston, Mass.). Birkh{\"a}user, 1992.

\bibitem{morgan2016geometric}
Frank Morgan.
\newblock {\em Geometric measure theory: a beginner's guide}.
\newblock Academic press, 2016.

\bibitem{lytchak2005almost}
Alexander Lytchak.
\newblock Almost convex subsets.
\newblock {\em Geometriae Dedicata}, 115(1):201--218, 2005.

\bibitem{boissonnat2019reach}
Jean-Daniel Boissonnat, Andr{\'e} Lieutier, and Mathijs Wintraecken.
\newblock The reach, metric distortion, geodesic convexity and the variation of
  tangent spaces.
\newblock {\em Journal of Applied and Computational Topology}, 3(1-2):29--58,
  2019.

\bibitem{alexander2006gauss}
Stephanie~B Alexander and Richard~L Bishop.
\newblock Gauss equation and injectivity radii for subspaces in spaces of
  curvature bounded above.
\newblock {\em Geometriae Dedicata}, 117(1):65--84, 2006.

\bibitem{Aamari_thesis}
Eddie Aamari.
\newblock {\em Vitesses de convergence en inférence géométrique}.
\newblock PhD thesis, Université Paris-Saclay, 2018.

\bibitem{botnan2020decomposition}
Magnus Botnan and William Crawley-Boevey.
\newblock Decomposition of persistence modules.
\newblock {\em Proceedings of the American Mathematical Society},
  148(11):4581--4596, 2020.

\bibitem{herbert1981multiple}
Ralph~J Herbert.
\newblock {\em Multiple points of immersed manifolds}, volume 250.
\newblock American Mathematical Soc., 1981.

\bibitem{gray2012tubes}
Alfred Gray.
\newblock {\em Tubes}, volume 221.
\newblock Birkh{\"a}user, 2012.

\bibitem{tinarrage2021computing}
Rapha{\"e}l Tinarrage.
\newblock Computing persistent {S}tiefel--{W}hitney classes of line bundles.
\newblock {\em Journal of Applied and Computational Topology}, pages 1--61,
  2021.

\end{thebibliography}

\end{document}